\newcommand{\onlyPaper}[1]{}
\newcommand{\onlyReport}[1]{#1}
\newcommand{\onlyDvi}[1]{}
\newcommand{\onlyPdf}[1]{#1}
\newcommand{\arsTerms}{\mathcal{O}}
\newcommand{\arsRedexes}{\mathcal{R}}
\newcommand{\arsSource}{\textsf{src}}
\newcommand{\arsTarget}{\textsf{tgt}}
\newcommand{\arsResidual}[1]{\ldbrack #1 \rdbrack}
\newcommand{\arsResidualRel}{\arsResidual{\cdot}}
\newcommand{\arsRedseq}{\mathcal{RS}}
\newcommand{\rsrc}[1]{\arsSource(#1)}
\newcommand{\rtgt}[1]{\arsTarget(#1)}
\newcommand{\arsa}{\mathfrak{A}}
\newcommand{\residu}[3]{#1 \arsResidual{#2} #3}
\newcommand{\residus}[2]{\residu{#1}{#2}{}}
\newcommand{\tredex}{step}
\newcommand{\tredexes}{steps}
\newcommand{\tredexset}{multistep}
\newcommand{\tredexsets}{multisteps}
\newcommand{\redseq}{reduction sequence}
\newcommand{\redseqs}{reduction sequences}
\newcommand{\mredseq}{multireduction}
\newcommand{\mredseqs}{multireductions}
\newcommand{\setp}[1]{\ensuremath{\mathcal{#1}}}
\newcommand{\setpa}{\setp{A}}
\newcommand{\setpb}{\setp{B}}
\newcommand{\setreda}{\setpa}
\newcommand{\setredb}{\setpb}
\newcommand{\setredc}{\setp{C}}
\newcommand{\setredd}{\setp{D}}
\newcommand{\stepa}{\ensuremath{a}}
\newcommand{\stepb}{\ensuremath{b}}
\newcommand{\stepc}{\ensuremath{c}}
\newcommand{\ppcstepa}{\ensuremath{\ppcstep{a}}}
\newcommand{\ppcstepb}{\ensuremath{\ppcstep{b}}}
\newcommand{\ppcstepc}{\ensuremath{\ppcstep{c}}}
\newcommand{\red}[1]{\ensuremath{\underset{#1}{\rightarrow}}}
\newcommand{\redbf}[1]{\rightarrow}
\newcommand{\emptyred}[1]{\nil_{#1}}
\newcommand{\thefree}[1]{#1^F}
\newcommand{\thedomin}[1]{#1^E}
\newcommand{\thefreewrt}[2]{#1^F_{#2}}
\newcommand{\thedominwrt}[2]{#1^E_{#2}}
\newcommand{\freefrom}{\not\sqsupset}
\newcommand{\dominated}{>}
\newcommand{\Domination}{Embedding}
\newcommand{\domination}{embedding}
\newcommand{\domintext}{embedded}
\newcommand{\dominbytext}{embedded by}
\newcommand{\superfree}{independent}
\newcommand{\superfreeness}{independence}
\newcommand{\Superfreeness}{Independence}
\newcommand{\grip}{\ll}
\newcommand{\notgrip}{\not\ll}
\newcommand{\ngrip}{never\mbox{-}gripping}
\newcommand{\tRecatch}{recatching}
\newcommand{\tRecatchStr}{recatching strategy}
\newcommand{\tRecatchStrs}{recatching strategies}
\newcommand{\rstrat}[1]{\ensuremath{\mathcal{#1}}}
\newcommand{\strs}{\rstrat{S}}
\newcommand{\strspos}{\rstrat{S}_{\!\pi}}
\newcommand{\strsp}{\ensuremath{\rstrat{S}{\rstrat{M}}}}
\newcommand{\strspth}{\ensuremath{\strsp_\theta}}
\newcommand{\fnsmth}[2]{\strspth({#2},{#1})}
\newcommand{\strt}{\rstrat{T}}
\newcommand{\emptypair}{\pair{\emptyset}{\emptyset}}
\newcommand{\axFda}{Grip--Instantiation}
\newcommand{\axFdb}{Grip--Density}
\newcommand{\axFdc}{Grip--Convexity}
\newcommand{\axSelfReduction}{Self Reduction}
\newcommand{\axFiniteResiduals}{Finite Residuals}
\newcommand{\axAncestorUniqueness}{Ancestor Uniqueness}
\newcommand{\axFD}{FD}
\newcommand{\axSO}{SO}
\newcommand{\axLinearity}{Linearity}
\newcommand{\axCtxFreeness}{Context-Freeness}
\newcommand{\axEnclaveCreation}{Enclave--Creation}
\newcommand{\axEnclaveEmbedding}{Enclave--Embedding}
\newcommand{\axMisterious}{Pivot}
\newcommand{\axStability}{Stability}
\newcommand{\axiomuse}[1]{\textsf{#1}}
\newcommand{\wid}[1]{\ensuremath{\widehat{#1}}}
\newcommand{\lp}[1]{\l_{#1}}
\newcommand{\lpth}{\lp{\theta}}
\newcommand{\lpset}[1]{\l_{\set{#1}}}
\newcommand{\ppcstep}[1]{\mathfrak{#1}}
\newcommand{\ppcsetred}[1]{\mathfrak{#1}}
\newcommand{\rewto}{\rightarrow}
\newcommand{\kor}{{\tt or}}
\newcommand{\kt}{{\tt tt}}
\newcommand{\kpr}{{\tt p}}
\newcommand{\ka}{{\tt a}}
\newcommand{\kb}{{\tt b}}
\newcommand{\kc}{{\tt c}}
\newcommand{\kd}{{\tt d}}
\newcommand{\kid}{I}
\newcommand{\ca}{{\tt a}}
\newcommand{\cb}{{\tt b}}
\newcommand{\cc}{{\tt c}}
\newcommand{\cd}{{\tt d}}
\newcommand{\ce}{{\tt e}}
\newcommand{\pair}[2]{\ensuremath{\langle {#1},{#2}\rangle}}
\renewcommand{\l}{\lambda}
\newcommand{\fv}[1]{{\tt fv}(#1)}
\newcommand{\fm}[1]{{\tt fm}(#1)}
\newcommand{\Pos}[1]{{\tt Pos}(#1)}
\newcommand{\ROccur}[1]{\ensuremath{{\mathcal Red}(#1)}}
\newcommand{\repl}[3]{\ensuremath{ #1 [#2]_{#3} }}
\newcommand{\tsize}[1]{|#1|}
\newcommand{\rlength}[1]{| #1 |}
\newcommand{\isbm}[2]{bm(#1,#2)}
\newcommand{\subtat}[2]{{#1} \hspace{-1mm}\mid_{#2}\hspace{0.3mm}}
\newcommand{\subtatwide}[2]{{#1} \mid_{#2}}
\newcommand{\subtatnarrow}[2]{{#1} \hspace{-1mm}\mid_{#2}\hspace{-0.3mm}}
\newcommand{\disj}{\parallel}
\newcommand{\subst}[2]{\ensuremath{ \{ #1 \rightarrow #2 \} }}
\newcommand{\emptySubst}{\ensuremath{ \{ \} }} 
\newcommand{\dom}{{\tt dom}}
\newcommand{\set}[1]{\{ #1 \}}
\newcommand{\matchOpt}[3]
  {\ensuremath{\{  #3 /_{#1}\   #2  \}}}
\newcommand{\matchOpth}[2]{\matchOpt{\theta}{#1}{#2}}
\newcommand{\cmatchOp}[3]
   {\ensuremath{\{ \! \! \{  #3 \triangleright_{#1}  #2  \} \! \! \}}}
\newcommand{\cmatchOpth}[2]{\cmatchOp{\theta}{#1}{#2}}
\newcommand{\fail}{\textnormal{\texttt{fail}}}
\newcommand{\wait}{\textnormal{\texttt{wait}}}
\definecolor{OliveGreen}{rgb}{0.439608 0.656863 0.437255}
\definecolor{DarkGreen}{rgb}{0 0.5 0}
\definecolor{OtroGreen}{rgb}{0.30 0.76 0.50}
\definecolor{violetin}{rgb}{0.40 0.15 1.00}
\definecolor{musgo}{rgb}{0.4 0.8 0.76}
\definecolor{requeteBlue}{rgb}{0,0,0.5}
\definecolor{webred}{rgb}{0.75,0,0}
\newcommand{\delia}[1]{#1}
\newcommand{\cdelia}[2]{#2}
\newcommand{\carlos}[1]{{\color{red} #1}}
\newcommand{\edu}[1]{{\color{DarkGreen} #1}}
\newcommand{\ale}[1]{{\color{magenta} #1}}
\newcommand{\theppc}{\texttt{\textbf{PPC}}}
\newcommand{\ih}{\textit{i.h.}}
\newcommand{\Nat}{\mathbbmss{N}}
\newcommand{\ems}{\emptyset}
\newcommand{\sm}{\setminus}
\newcommand{\sym}[1]{{\tt sym}(#1)}
\newcommand{\partsOf}[1]{\mathcal{P}(#1)}
\newcommand{\setName}[1]{\ensuremath{\textit{\textbf{#1}}}}
\newcommand{\DStructs}{\setName{DS}}
\newcommand{\Abstract}{\setName{ABS}}
\newcommand{\MForms}{\setName{MF}}
\newcommand{\capfterms}{\setName{T}}
\newcommand{\NForms}{\setName{NF}}  
\newcommand{\eqdef}{\ensuremath{:=}}
\newcommand{\textif}{\textnormal{ if }}
\newcommand{\textand}{\textnormal{ and }}
\newcommand{\metaexists}{\ensuremath{\pmb{\mathsf{\exists}}}}
\newcommand{\metaforall}{\ensuremath{\pmb{\forall}}}
\newcommand{\sthat}{\textrm{ s.t. }}
\newcommand{\wrt}{w.r.t.}
\newcommand{\minicenter}[1]{\hspace*{\stretch{1}} {#1} \hspace*{\stretch{1}}}
\newcommand{\minitem}{\vspace{-4pt}\item}
\newcommand{\conceptIntro}[1]{\textbf{#1}}
\newtheorem{theorem}{Theorem}[section]
\newtheorem{lemma}[theorem]{Lemma}
\newtheorem{definition}[theorem]{Definition}
\newtheorem{proposition}[theorem]{Proposition}
\newtheorem{notation}[theorem]{Notation}
\newenvironment{proof}[1][Proof]{\begin{trivlist}
\item[\hskip \labelsep {\bfseries #1}]}{\hspace{\stretch{1}}$\blacksquare$\end{trivlist}}
\newcommand{\thelem}{Lem.}
\newcommand{\theLem}{Lem.}
\newcommand{\thethm}{Thm.}
\newcommand{\theprop}{Prop.}
\newcommand{\thesec}{Sec.}
\newcommand{\thefig}{Fig.}
\newcommand{\ignore}[1]{}
\newcommand{\subterm}{\ensuremath{\subseteq}}
\newcommand{\sig}{\sigma}
\newcommand{\avoids}{\ensuremath{\#}}
\newcommand{\longtwoheadrightarrow}{\relbar\joinrel\twoheadrightarrow}
\newcommand{\mreda}{\ensuremath{\Delta}}
\newcommand{\mredb}{\ensuremath{\Gamma}}
\newcommand{\mredc}{\ensuremath{\Pi}}
\newcommand{\mredd}{\ensuremath{\Psi}}
\newcommand{\depth}[1]{\nu(#1)}
\newcommand{\reda}{\ensuremath{\delta}}
\newcommand{\redb}{\ensuremath{\gamma}}
\newcommand{\redc}{\ensuremath{\xi}}
\newcommand{\sstep}[1]{\overset{#1} \rightarrow}
\newcommand{\sstepdec}{\longrightarrow}
\newcommand{\sstepx}[1]{\overset{#1}{\sstepdec}} 
\newcommand{\sred}[1]{\overset{#1} \twoheadrightarrow}
\newcommand{\sredx}[1]{\overset{#1}{\longtwoheadrightarrow}}
\newcommand{\redel}[2]{#1 [#2]}
\newcommand{\mstep}[1]{ %
\mathbin{
{\ooalign{\hfil$\overset{#1}{\longrightarrow}$\hfil\cr\hfil$\circ$\hfil\crcr}}}}
\newcommand{\mred}[1]{ %
\mathbin{{\ooalign{\hfil$\overset{#1}{\longtwoheadrightarrow}
$\hfil\cr\hfil$\circ$\hfil\crcr } } } }
\newcommand{\mredel}[2]{\ensuremath{#1[#2]}} 
\newcommand{\mredsub}[3]{\ensuremath{#1[#2..#3]}} 
\newcommand{\mredunt}[2]{\mredsub{#1}{1}{#2}}
\newcommand{\mredmsf}{\chi}
\newcommand{\mredms}[1]{\mredmsf({#1})}
\newcommand{\ltms}{<_{\scriptsize{lex}}}
\newcommand{\leqms}{\leq_{\scriptsize{lex}}}
\newcommand{\proj}[2]{#1 \, |_{#2}}
\newcommand{\defi}[1]{\textbf{#1}}
\newcommand{\ie}{{\it i.e.}}
\newcommand{\nil}{{\tt nil}}
\definecolor{SuperLightGray}{gray}{.87}
\definecolor{LightGray}{gray}{.9}
\definecolor{MediumGray}{gray}{.68}
\definecolor{DarkGray}{gray}{.50}
\definecolor{Lightbleu}{cmyk}{.3,.1,0,0}
\renewcommand{\l}{\ensuremath{\lambda}}
\newcommand{\arMulti}[1]{\ar[#1]|-*=0@{o}}
\newcommand{\arMultiOp}[2]{\ar@{#2}[#1]|-*=0@{o}}
\newcommand{\Id}{I}
\newcommand{\Om}{\Omega}
\newcommand{\lc}{$\l$-calculus}
\newcommand{\setdisjoint}{pointwise disjoint}
\newcommand{\protoredex}{prestep}
\newcommand{\confer}{{\it cf.}}
\newcommand{\Confer}{{\it Cf.}}
\newcommand{\eg}{{\it e.g.}}
\newcommand{\Eg}{{\it E.g.}}
\newcommand{\thespc}{\texttt{\textbf{SPC}}}
\newcommand{\spcterms}{\setName{T}}
\newcommand{\spcMatchFail}{\ensuremath{\mathfrak{f}}}
\newcommand{\develops}{\ensuremath{\Vdash}}
\newcommand{\rng}{{\tt rng}}
\newcommand{\cng}{{\tt cng}}
\newenvironment{deliae}{\par}{\par}
\newcommand{\CompleteInReport}{See~\cite{BKLR:Long:2014} for further details}
\journal{Theoretical Computer Science}
\begin{document}

\begin{frontmatter}



\title{On abstract normalisation beyond neededness}

\author[1,2]{Eduardo Bonelli}
\ead{eabonelli@gmail.com}
\author[3]{Delia Kesner}
\ead{Delia.Kesner@pps.univ-paris-diderot.fr}
\author[1]{Carlos Lombardi}%
\ead{clombardi@unq.edu.ar}
\author[4]{Alejandro R\'ios}
\ead{rios@dc.uba.ar}
\onlyPaper{\cortext[cor1]{Corresponding author}}
\address[1]{Univ.  Nac. de Quilmes. \\ Roque S\'aenz Pe\~na 352 (1876), Bernal, Prov. de Buenos Aires, Argentina}
\address[2]{CONICET. \\ Av. Rivadavia 1917 (1033) C.A.Buenos Aires, Argentina}
\address[3]{Univ. Paris-Diderot, SPC, PPS, CNRS \\ Case 7014 
75205 PARIS Cedex 13, France}
\address[4]{Univ. de Buenos Aires \\ Pabell\'on I, Ciudad Universitaria (1428) C.A.Buenos Aires, Argentina}

\begin{abstract}
  We study normalisation of multistep strategies, strategies that
  reduce a set of redexes at a time, focussing on the notion of
  \emph{necessary sets}, those which contain at least one redex that
  cannot be avoided in order to reach a normal form. This is
  particularly appealing in the setting of non-sequential rewrite
  systems, in which terms that are not in normal form may not have any
  \emph{needed} redex. We first prove a normalisation theorem for
  abstract rewrite systems (ARS), a general rewriting framework
  encompassing many rewriting systems developed by
  P-A.Melli\`es~\cite{thesis-mellies}. The theorem states that
  multistep strategies reducing so called \emph{necessary} and
  \emph{\ngrip} sets of redexes at a time are normalising in any
  ARS. Gripping refers to an abstract property reflecting the behavior
  of higher-order substitution. We then apply this result to the
  particular case of \theppc, a calculus of patterns and to the
  lambda-calculus with parallel-or.
\end{abstract}

\begin{keyword}
rewriting, normalisation, neededness, sequentiality, pattern calculi
\end{keyword}

\end{frontmatter}


\tableofcontents

\section{Introduction}
\label{sec:intro}
  
This paper is about computing normal forms in rewrite systems. Consider the \l-calculus. Let
$K$ stand for the term $\l x.\l y.x$, $I$ for $\l x.x$ and $\Omega$ for $(\l x. x\,x)\,(\l x. x\, x)$. Then
$s\eqdef K\,I\,\Omega$ admits an infinite reduction sequence of $\beta$-steps, namely the one obtained by
repeatedly reducing $\Omega$, and hence $s$, to itself. However, it also reduces in two $\beta$-steps to the
normal form $I$ by repeatedly reducing the leftmost-outermost redex:
\begin{equation}
\vspace{-1mm}
K\,I\,\Omega \red{\beta} (\l y.I)\, \Omega\red{\beta} I
\label{eq:ex:LOnormalisation:lambda}
\end{equation}
\delia{The reason this strategy normalises is that the redexes it
  selects are unavoidable or \emph{needed} in any reduction to normal
  form.  Indeed, leftmost-outermost redexes are needed in
  \lc~\cite{barendregt}.  This paper studies normalisation for the
  broader case of rewriting systems where needed redexes may not
  exist.  It does so by adapting Melli\`es' abstract rewriting
  framework~\cite{thesis-mellies} to encompass Sekar and
  Ramakrishnan's notion of \emph{needed sets of
    redexes}~\cite{sekar-rama}. In doing so, the relatively unfamiliar
  notion of \emph{gripping}, used only marginally in the work of
  Melli\`es, is shown to play a crucial
  r\^ole, thus giving it an interest of its own.  }

\textbf{Normalisation in TRS.}
Although in the \l-calculus the leftmost-outermost strategy does indeed attain a normal
form (if it exists)~\cite{curry-feys}, the same cannot be said for term rewriting
systems (TRS). For example, consider the TRS:
\begin{center}
$\begin{array}{rcl}
a & \to & b\\
c & \to & c\\
f(x,b) & \to & d
\end{array}$
\end{center}
and the term $t\eqdef f(c,a)$. The leftmost-outermost strategy selects redex $c$ in $t$ producing an infinite reduction sequence. Yet this term admits a normal form:
\begin{equation}
f(c,a) \red{} f(c,b) \red{} d
\label{eq:ex:LOnormalisation:TRS}
\vspace{-2mm}
\end{equation}
For \emph{left-normal} TRS (those in which variables do not precede
function symbols in the left-hand side of rewrite rules), the
leftmost-outermost strategy does indeed
normalise~\cite{ODonnell:1977}; the same applies to left-normal
higher-order rewrite systems~\cite{thesis-klop}. 
Alternatively, one might decide to 
reduce all \emph{outermost} redexes at once:
parallel-outermost reduction is normalising for (almost) orthogonal
TRS~\cite{ODonnell:1977}, \delia{where an \emph{orthogonal} TRS 
is one whose rewrite rules are left-linear and non-overlapping, and an \emph{almost orthogonal} TRS has trivial critical pairs at the root, if at all. 
Parallel-outermost reduction is also normalising for almost orthogonal systems in} higher-order
rewriting~\cite{thesis-vanRaamsdonk}. 

\textbf{Needed Redexes.} However, there is a deeper connection between
redexes reduced in (\ref{eq:ex:LOnormalisation:lambda}) and
(\ref{eq:ex:LOnormalisation:TRS}). They are unavoidable in the sense
that in any reduction sequence from $s$ and $t$ to normal form, they
(or their residuals) must be reduced at some point. Such redexes are
called \emph{needed} and a theory of needed redexes was developed by
Huet and L\'evy~\cite{HL:1991ab} for orthogonal TRS.
In~\cite{HL:1991ab} it is shown that in these TRS, terms that are not in
normal form always have at least one needed redex and that reduction
of needed redexes is normalising. They also showed that determining
whether a redex is needed or not is undecidable in general; this led
them to study restrictions of the class of orthogonal TRS (the
\emph{strongly sequential} TRS) in which needed redexes could be
identified effectively.

\ignore{
Further directions
that have been pursued in an effort to enhance the theory developed by Huet and L\'evy fall roughly
into one of the following categories:
\begin{enumerate}
\item enlarging the
class of orthogonal TRS for which decidability of needed redexes can be obtained~\cite{DBLP:conf/lics/Toyama92,DBLP:journals/siamcomp/Oyamaguchi93};

\item  enlarging the class of systems for which a theory of needed reduction makes
sense (with or without considerations on effective
neededness)~\cite{BKKS:1987,sekar-rama,thesis-mellies,GKZ00}; and \edu{ver referencias m'as recientes}

\item extending the concept of neededness w.r.t. a normal form to other more general
  notions~\cite{DBLP:conf/popl/Middeldorp97,GKZ00}.
\end{enumerate}
 }

A fundamental limitation of the above mentioned theory of neededness
is the requirement of orthogonality. This requirement does have its
reasons though: in non-orthogonal TRS, terms that are not in normal
form may not have needed redexes. A paradigmatic example is the
  ``parallel-or'' TRS:
\begin{deliae}
  \begin{center}
$\begin{array}{rcl}
	\kor(x,\kt) & \to & \kt \\
	\kor(\kt,x) & \to & \kt
\end{array}$
\end{center} 
The term $u\eqdef \kor(\kor(\kt,\kt),\kor(\kt,\kt))$ has four redexes:
the occurrence of $\kor(\kt,\kt)$ on the left is an instance of the
first and second rules of the parallel-or TRS, and the one on the
right is also an instance of both of these rules. None of these is
needed since one can always obtain a normal form without reducing
it. For example, the reduction sequence:
\begin{equation}
\kor(\kor(\kt,\kt),\kor(\kt,\kt)) \red{} \kor(\kor(\kt,\kt),\kt) \red{} \kt
\label{eq:ex:POr: reductionSequence}
\end{equation}
never reduces any of the two redexes on the left. A similar argument
applies to the two redexes on the right in $u$. In fact $u$ seems to
suggest that no sensible normalising \emph{strategy}, picking one
redex at a time by looking solely at the term, can be constructed.
A similar phenomenon occurs even in orthogonal TRSs, a
  paradigmatic example being Gustave's TRS~\cite{gustave}.
\end{deliae}
Any almost orthogonal TRS\footnote{In fact, any almost orthogonal Combinatory Reduction
  Systems~\cite{thesis-klop}.}, such as the parallel-\kor\ example above, does
admit a normalising one-step reduction
strategy~\cite{DBLP:journals/apal/Kennaway89,AntoyMiddeldorp:1996}. There
is a price to pay though, namely that such a strategy has to perform
lookahead (in the form of cycle detection within terms of a given
size).

Another example of the absence of needed redexes in non-orthogonal rewrite systems are
\emph{pattern calculi}. Let \kpr\ be a data constructor representing a person including her/his
name, gender and marital status. For example, $\kpr\,
\mathtt{j}\,{\tt m}\, {\tt s}$ represents the person named \texttt{j} (for ``Jack'') who is male and single. A function
such as $\l \kpr\, x\,{\tt m}\, {\tt s}. x$ returns the name of any person that is male and
single. It computes by matching the \emph{pattern} $\kpr\, x\,{\tt m}\, {\tt s} $ against its
argument: reporting an appropriate substitution, if it is successful, or a distinguished constant
\spcMatchFail,  if it fails (\confer\ Sec.~\ref{sec:spc}).  Consider the following term which results from applying the 
abovementioned function to a person
called \texttt{a} (for ``Alice'') that is female and divorced (recall from above that $I$ is the identity function $\l x.x$):
\begin{equation}
t_0 := (\l \kpr\, x\,{\tt m}\, {\tt s}. x) (\kpr\, \mathtt{a}\,(I {\tt f}) (I {\tt d}))
\label{eq:ex:termWithNoNeededRedexes}
\end{equation}
This term has two redexes, namely $I {\tt f}$ and $I {\tt d}$. Note that the term itself is not a
redex since the success or failure of the match between pattern and argument cannot be
determined. We have two possible reduction sequences to normal form:
\begin{center}
$\begin{array}{cccccc}
(\l \kpr\, x\,{\tt m}\, {\tt s}. x) (\kpr\, \texttt{a}\,(I {\tt f}) (I {\tt d})) 
& \red{} 
& (\l \kpr\,
x\,{\tt m}\, {\tt s}. x) (\kpr\, \texttt{a}\,(I {\tt f})\, {\tt d}) 
& \red{}
& \spcMatchFail
\\
(\l \kpr\, x\,{\tt m}\, {\tt s}. x) (\kpr\, \texttt{a}\,(I {\tt f}) (I {\tt d})) 
& \red{} 
& (\l \kpr\,
x\,{\tt m}\, {\tt s}. x) (\kpr\, \texttt{a}\,{\tt f} (I\, {\tt d})) 
& \red{}
& \spcMatchFail
\end{array}$
\end{center}
The first reduction sequence does not reduce $I {\tt f}$; the second does not reduce $I {\tt
  d}$. Therefore the term $t_0$ does not contain any needed redexes.

\textbf{Beyond Neededness.} This prompts one to consider whether it is possible to obtain
normalisation results for possibly overlapping, and more generally \emph{non sequential} (\cite{HL:1991ab}) rewrite systems. The following avenues
have been pursued in this direction:
\begin{enumerate}
\item Boudol~\cite{Boudol:1985} studies the reduction space of possibly non-orthogonal TRS and
  defines needed reduction for these systems. 

\item Melli\`es~\cite{thesis-mellies} extends the notion of needed redex to that of a needed
  derivation (actually \emph{external} derivation, a generalization of the notion of neededness).

\item van Oostrom~\cite{vanOostrom:1999} proves that outermost-fair reduction is normalising for
  weakly orthogonal fully-extended higher-order pattern rewrite systems (PRS). An outermost fair
  strategy is one in which no outermost redex is ignored (\ie\ not contracted) indefinitely.

\item Sekar and Ramakrishnan~\cite{sekar-rama} extend the notion of a needed redex to that of a \emph{set} of redexes, called a \emph{necessary set}, in the context of first-order rewriting.

\end{enumerate}

The results of the first item above are restricted to first-order rewriting and hence are not
applicable to our pattern calculus example. The second item above suffers from two problems. The
first is that it requires the calculus to verify a property (among others) called \emph{stability}
which fails for some pattern calculi such as the one of our example
(cf. Sec.~\ref{sec:embedding}). Also, it does not seem obvious how to implement the proposed
strategies. For example, in the case of $\kor(\kor(\kt,\kt),\kor(\kt,\kt))$, although there are no
needed redexes, \cite{thesis-mellies} declares the reduction sequence (\ref{eq:ex:POr:
  reductionSequence}) \emph{itself} to be external. It then goes on to show that composition of
these external reduction sequences are normalising. So in order to normalise a term one would have
to identify such reduction sequences. In~\cite{vanOostrom:1999} a number of normalisation
results are proved for PRS, the most relevant being that outermost-fair strategies are
normalising for weakly orthogonal PRS.  There are a number of notable differences with our work however. The
fundamental aspect that sets our paper apart from~\cite{vanOostrom:1999} is the axiomatic
development that we pursue. In~\cite{vanOostrom:1999}, the crucial notions of \emph{contribution}
and \emph{copying} rely heavily on \emph{positions} since it is terms that are rewritten. In
contrast, we propose a number of axioms that are assumed to hold over ``objects'' and ``steps''
whose compliance guarantees normalisation. The nature of the objects that are rewritten is irrelevant.  


We now focus our attention on~\cite{sekar-rama} mentioned above, the starting point of this paper.
As mentioned, terms such as $(\l \kpr\, x\,{\tt m}\, {\tt s}. x) (\kpr\, \texttt{a}\,(I {\tt f}) (I {\tt d}))$ do not contain needed redexes. However, at least one of the two redexes in each of those terms will need to be reduced in order to obtain a normal form. 
We thus declare the set$\{ I {\tt f}, I {\tt d}\}$ to be \emph{necessary} for this term. The intuition is that at
least one redex in a necessary set must be reduced in order to obtain a normal form, assuming that a
normal form exists. Of course, selecting all redexes in a term will indeed yield a necessary set;
the point is whether some given subset of the set of all redexes is a necessary one. These ideas
have been developed in~\cite{sekar-rama} for almost orthogonal TRS where it is shown that repeated contraction
of necessary sets of redexes is normalising. In this paper we extend the normalisation results for
necessary sets to the setting of abstract rewriting described by means of \emph{abstract
  rewrite systems} (ARS)~\cite{thesis-mellies}. This generalization encompasses the first-order
case, the higher-order case (in particular, pattern calculi such as the Pure Pattern Calculus --
\theppc~\cite{jk-ppc,jk-jfp}) or any other system that complies with the appropriate axioms.

\textbf{Towards an Abstract Proof of Normalisation.} In order to convey a more precise idea of
  the abstract nature of the setting in which we develop our proof, we provide a glimpse of
  \conceptIntro{abstract rewriting systems (ARS)}.  An ARS consists of a set $\arsTerms$ of 
  \emph{objects} that are rewritten, a set $\arsRedexes$ of rewriting \emph{steps} each having a
  corresponding source and target object, and the following three relations over rewriting steps:

\begin{center}
\begin{tabular}{|l|r@{\hspace{3pt}}c@{\hspace{3pt}}l|}
\hline
\emph{residual} relation & $\arsResidualRel$ & $\subseteq$ & $\arsRedexes\times \arsRedexes\times
\arsRedexes$\\
\hline
\emph{embedding} relation & $<$ & $\subseteq$ & $\arsRedexes\times \arsRedexes$ \\
\hline
\emph{gripping} relation & $\grip$ & $\subseteq$ & $\arsRedexes\times \arsRedexes$\\
\hline 
\end{tabular}
\end{center}

For instance, $\arsTerms$ could be the set of terms of our pattern
calculus example. A step would then be a pair consisting of a term and
a position such that the subterm at that position may be reduced. For
example, $\underline{(\l \kpr\, x\,{\tt m}\, {\tt s}. x) (\kpr\,
  \texttt{a}\, {\tt f}\, (I {\tt d}))}$, where we have used
underlining for denoting the position (the root position in this
case). The source object of this step is $(\l \kpr\, x\,{\tt m}\, {\tt
  s}. x) (\kpr\, \texttt{a}\, {\tt f}\, (I {\tt d}))$ and the target
$\spcMatchFail$. The \emph{residual relation} $\arsResidualRel$
relates to the tracing of steps. A triple $(a,b,a')\in
\arsResidualRel$, often written $\residu{a}{b}{a'}$, indicates that
after contracting step $b$, step $a$ becomes $a'$ (or, equivalently,
$a'$ is what is left of $a$). Here $a$ and $b$ are assumed to have the
same source.  For example, consider steps $c\eqdef \underline{ (\l
  \kpr\, x\,{\tt m}\, {\tt b}. I\,x) (\kpr\, (I\, \texttt{j})\,{\tt
    m}\,{\tt b})}$ and $d\eqdef (\l \kpr\, x\,{\tt m}\, {\tt
  b}. \underline{I\,x}) (\kpr\,(I\, \mathtt{j})\,{\tt m}\, {\tt b})$
and $d'\eqdef \underline{I\,(I\,\texttt{j})}$. Then
$\residu{d}{c}{d'}$. The \emph{embedding relation} allows steps with
the same source to be partially ordered. It is sometimes referred to
as ``nesting''.  For example, $\underline{(\l \kpr\, x\,{\tt m}\, {\tt
    s}. x) (\kpr\, \texttt{a}\,(I {\tt f}) (I {\tt d}))}$ embeds $(\l
\kpr\, x\,{\tt m}\, {\tt s}. x) (\kpr\, \texttt{a}\,(\underline{I {\tt
    f}}) (I {\tt d})) $ in the tree ordering given that the position
of the former is a prefix of the position of the latter. The
\emph{gripping relation} is an additional partial order on steps that
seeks to capture a typical property of higher-order rewrite systems in
which a reduction step $a$ may cause a step $b$ to be embedded inside
another one $c$. For this to happen, $a$ must embed $c$ and $b$.  In
addition, $c$ must have occurrences of variables that are to be
replaced by the substitution generated from a successful match arising
from the reduction of $a$. In this case we say $c$ grips $a$. For
example, $c\eqdef (\l x. \underline{ I x}) (I y)$ grips $a\eqdef
\underline{(\l x.I x) (I y)}$ since the former is embedded by the
latter and the former has a free occurrence of the bound variable $x$.
Note how reduction of $a$ would embed $b\eqdef (\l x. I x)\,
(\underline{I y})$ in the residual $c'\eqdef (\underline{I (I y)})$ of
$c$.

A number of \emph{axioms} on ARS shall be used to formulate a proof of normalisation of necessary
and \ngrip\  sets. These axioms verse over the above mentioned elements of an ARS and are drawn
from~\cite{thesis-mellies}, except for one of them which is new. They are developed in detail in
Sec.~\ref{sec:axiomsForARS}. Our abstract proof is then applied to concrete cases,
showing how one may obtain normalisation for \theppc\ and the $\l$-calculus with parallel-or.

\textbf{Contributions.}
The primary contributions may be summarized as follows:
 \begin{itemize}

 \item A gentle introduction to ARS and, in particular, to its
  axioms.
   
\item An abstract proof of normalisation that applies to possibly non-orthogonal systems.

\item A concrete normalisation strategy for a \emph{non-sequential} higher-order rewrite system, namely
  \theppc, and also for the $\l$-calculus with parallel-or.

\end{itemize}
Verification of compliance of a system with the axioms of an ARS, although in some cases tedious, provides valuable insight into its computation dynamics.  

This document supersedes~\cite{bklr-rta-2012} by reformulating the 
  normalisation technique, previously  specific
  to \theppc, into an axiomatic one (encompassed in Melli\`es' ARS),  introducing a new axiom along
  the way. It then shows how it may be applied not only to \theppc, but also to any other system
  satisfying the relevant axioms.

\textbf{Structure of the Paper.} We begin by introducing, in
Sec.~\ref{sec:spc}, a simple pattern calculus that shall serve as
study companion for the axiomatic development that
follows. Sec.~\ref{sec:ars} defines the axiomatic framework in which
we develop our results. The axioms themselves are presented in
Sec.~\ref{sec:axiomsForARS}. The concept of multireduction and
necessary multisteps are defined in
Sec.~\ref{sec:multireductionsInARS}. The axiomatic proof of
normalisation is elaborated in Sec.~\ref{sec:reduction-strategies}. 
We instantiate our axiomatic proof in Sec.~\ref{sec:twoCaseStudies}, 
 to obtain normalisation strategies for the Pure Pattern Calculus and for the
\l-calculus with parallel-or. Finally, we conclude and
suggest further avenues to pursue.

\onlyPaper{
\delia{For some results in Sec.~\ref{sec:twoCaseStudies}, we include
  only sketches of the corresponding proofs. An extended version of
  this work, including the full details of all \delia{the} proofs, can
  be found in~\cite{BKLR:Long:2014}.}
}


\section{A Study Companion: the Simple Pattern Calculus}
\label{sec:spc}

The simple pattern calculus (\thespc), an extension of the lambda calculus, is presented for the
sole purpose of serving as our running example in order to illustrate the various notions we shall
be introducing. It is simple enough that we may be informal in our description below. Full
definitions are later supplied in Sec.~\ref{sec:ppc}, where the more general Pure Pattern Calculus
(\theppc), of which \thespc\ is just a fragment, is developed.

Terms (\spcterms) in \thespc\ are given by the following grammar:
\begin{center}
$\begin{array}{llccl}
          &  & t & :: = & x \mid \cc \mid tt \mid \l p.t  \\
\end{array}$
\end{center} 
where $x$ ranges over some set of term variables, \cc\ over some set
of constants, and $p$ ranges over a set of algebraic patterns. We
write $t_1\ldots t_n$ as an abbreviation for $((\ldots(t_1\,
t_2)\,\ldots) t_n)$. An \defi{algebraic pattern} is either a variable
$x$ or an expression of the form $\cc\, p_1\ldots p_n$, \eg\ $\kpr\,
x\,{\tt m}\, {\tt b}$.  The term $tu$ is called an \defi{application}
($t$ is the \defi{function} and $u$ the \defi{argument}) and $\l p.t$
an \defi{abstraction} ($p$ is the \defi{pattern} and $t$ is the
\defi{body}).  All variables in the body that also occur in the
pattern of an abstraction are said to be \emph{bound}. Application
(resp. abstraction) is left (resp. right) associative.  We consider
terms up to \defi{alpha-conversion}, \ie\ up to renaming of bound
variables. 
Positions in terms are extended to terms with patterns (\confer\ Sec.~\ref{sec:ppc-basic-elements}).
$\Pos{t}$ is the
set of positions of $t$; $<$ is the strict prefix relation over
positions; $\epsilon$ denotes the root position.  A term of the form
$\cc\,t_1\ldots t_n$ is called a \defi{data-structure}, \eg\ $\kpr\,
(I\, \texttt{j})\,{\tt m}\,{\tt b}$.

 The reduction semantics is given by the following rewrite rule:
\begin{center}
$
(\l p.s)\,t \red{} \cmatchOp{}{t}{p}(s)
$
\end{center}
\cmatchOp{}{t}{p} is the result of matching $t$ against $p$ and is called a \defi{match}. The
meaning of the expression $\cmatchOp{}{t}{p}(s)$ depends on this match.  The match can be
successful, in which case it denotes a substitution $\sigma$ and $\sigma(s)$ is thus the
application of the substitution to $s$. It can also be the special symbol $\fail$. The question here
is what does 
$\fail(s)$ denote? Following our introduction, it would be  the distinguished constant
$\spcMatchFail$. However, if $\spcMatchFail$ is produced it could block
subsequent computation unless some additional considerations on the behavior of terms such as
$\spcMatchFail\ t$ are taken. In order to encourage other patterns to be tested and avoid
overcomplicating the metatheory, it is natural to return
the identity function $I$ rather than \spcMatchFail. So we set $\fail(s)$ to denote 
the identity function $I$.
In any of these two cases, success or failure, we say that the match is
\defi{decided}. 
If it is not decided, in which case the match is the special symbol $\wait$, then the expression $(\l p.s)\,t$ is not a redex; \eg\ $(\l \cc.s)\, x$ or $(\l \cc.s)\, (\Id \cc)$. A match $\cmatchOp{}{t}{p}$, denoted $\mu$, is computed by applying the following equations in the order of appearance:
\vspace{-7mm}
\begin{center}
$\begin{array}{rcll}
\cmatchOp{}{t}{x}        & := & \subst{x}{t} & \\
\cmatchOp{}{\cc}{\cc}   & := & \emptySubst  & \\
\cmatchOp{}{\cc\,t_1\ldots t_n}{\cc\ p_1\ldots p_n}         & := & \cmatchOp{}{t_1}{p_1} \uplus
\ldots \uplus \cmatchOp{}{t_n}{p_n} 
			                       & n\geq 1 \\
\cmatchOp{}{\l q.t}{p}           & := & \fail
			                       & \\
\cmatchOp{}{t}{p}           & := & \fail
			                       & t \textrm{ data-structure} \\
\cmatchOp{}{t}{p}           & := & \wait
			                       & \textnormal{otherwise} 
\end{array}$
\end{center}

The use of disjoint union in the third clause of this definition restricts successful matching of
compound patterns to the linear ones~\footnote{A pattern $p$ is linear if it has at most one
  occurrence of any variable.}, which is necessary to guarantee
confluence~\cite{thesis-klop}. Indeed, disjoint union of two substitutions fails whenever their
domains are not disjoint.  Thus $\cmatchOp{}{\cc\,v\,w}{\cc\,x\, x}$ gives $\fail$. Other examples
are: $\cmatchOp{}{\cc\,(I \cd)}{\cc\,\cd}$ gives \wait, however $\cmatchOp{}{\cc\,(I\,
  \cd)}{\cd\,\cd}$ gives \fail.  Disjoint union of matches $\mu_1$ and $\mu_2$ is defined as: their
union if both $\mu_i$ are substitutions and $\dom(\mu_1) \cap \dom(\mu_2) = \ems$; \wait\ if either
of the $\mu_i$ is \wait\ and none is \fail; \fail\ otherwise.  Note that this definition of disjoint
union of matches validates the following equations:
\begin{center}
$\fail  \uplus \wait  = \wait
  \uplus\fail =  \fail$
\end{center}

These equations reflect the non-sequential
nature of reduction in \thespc. For example, in $\cmatchOp{}{\cc\,s\,t}{\cc\,\cd\,\ce}$ it is
  unclear whether we should pick $s$ or $t$ in order to obtain a decided match since either may not
  normalise while the other may help decide the match (towards \fail).


\section{Abstract Rewriting Systems}
\label{sec:ars}
This section revisits the definition of \emph{abstract rewriting systems} given in the introduction supplying further details and introduces the axioms that such systems must enjoy in order for the abstract proof of normalisation to be applicable.

\subsection{Basic components}
\label{sec:ars-intro}
Recall from the introduction that an ARS consists of a set of \emph{objects} $\arsTerms$ that are rewritten and a set of \emph{\tredexes}\footnote{Called \emph{redexes} (``radicaux'') in~\cite{thesis-mellies}, hence the reason why we use the letter $\arsRedexes$.}
$\arsRedexes$. 
Each step have a source and target object given by functions $\arsSource, \arsTarget
: \arsRedexes \to \arsTerms$. If $t\in\arsTerms$, then we write $\ROccur{t}$ for the set $\set{a \in
  \arsRedexes \sthat \rsrc{a} = t}$. Two \tredexes\ with the same source are said to be
\defi{coinitial}. We often write $t \sstep{a} u$ for a step $a$ s.t. $\rsrc{a} = t$ and $\rtgt{a} =
u$.

The following relations are given over steps:
\begin{itemize}
\item The \defi{residual} relation $\arsResidualRel \subseteq \arsRedexes\times \arsRedexes\times \arsRedexes$.

  This relation reflects how a step may be traced after some other \emph{coinitial} step is reduced. 
  Whenever $\residu{b}{a}{b'}$ we require $a$ and $b$ to be coinitial, and  $\rsrc{b'} =  \rtgt{a}$.
  When $\residu{b}{a}{b'}$ we say that $b'$ is a residual of $b$ after $a$. By
  $\residus{b}{a}$ we denote the set $\set{b' \sthat \residu{b}{a}{b'}}$ and similarly for
    $\residus{}{a}{b}$. 
Accordingly, we define $\residus{}{a}$ as the relation $\set{(b,b') \sthat \residu{b}{a}{b'}}$.
A step $b$ is said to be \defi{created} by a step $a$, with $\rsrc{b}=\rtgt{a}$, if $\residu{}{a}{b}=\emptyset$.

\item The \defi{embedding} relation $< \; \subseteq \arsRedexes\times \arsRedexes$.

  This relation allows coinitial steps to be strictly ordered\footnote{The embedding relation $<$ is assumed to be irreflexive and transitive.} by a well-founded relation%
		\footnote{Notice that if $\ROccur{t}$ is finite, then any relation on coinitial steps is necessarily well-founded.}%
	. 
	For each pair $a < b$, the steps $a$ and $b$ must be coinitial.  A \tredex\ $a$ is said to be \defi{outermost} iff there is no $b$ 
	such that
	$b < a$.  A step $a$ is \defi{disjoint} from $b$, written $a \disj b$, when $a$ and $b$ are coinitial,  $a \not< b$ and $b \not< a$.

\item The \defi{gripping} relation $\grip \; \subseteq \arsRedexes\times \arsRedexes$.

As mentioned, this additional strict order on steps
seeks to capture a typical property of higher-order rewrite systems in which a step $a$ may
affect two coinitial and disjoint steps by embedding one inside the other in the
target object of $a$.  Just like for embedding, for each pair $a\grip b$,
the steps $a$ and $b$ must be coinitial. 
\end{itemize}

An example of an ARS is the \thespc. Its objects $\arsTerms$ are just the terms \spcterms. 
A step is a pair consisting of a term and a position in the term s.t. the subterm at this position is
of the form $(\l p.s)u$, and $\cmatchOp{}{u}{p}$ is decided. For example, $a\eqdef \underline{(\l
  \kpr\, x\,{\tt m}\, {\tt b}. x) (\kpr\, \mathtt{a}\,{\tt f} (I {\tt d}))}$ is a step, where we
have underlined the relevant position. Then $\rsrc{a}$ is the term $(\l \kpr\, x\,{\tt m}\, {\tt
  b}. x) (\kpr\, \texttt{a}\,{\tt f} (I {\tt d}))$ and $\rtgt{a}$ is $I$ (since matching fails and
hence the identity function is produced). We could also write $(\l \kpr\, x\,{\tt m}\, {\tt b}. x)
(\kpr\, \texttt{a}\,{\tt f} (I {\tt d})) \sstep{a} I$. Also, $b\eqdef (\l \kpr\, x\,{\tt m}\, {\tt
  b}. x) (\kpr\, \mathtt{a}\,{\tt f} (\underline{I {\tt d}}))$ is a step. It has the same source as
$a$ but the target is $(\l \kpr\, x\,{\tt m}\, {\tt b}. x) (\kpr\, \mathtt{a}\,{\tt f}\, {\tt d})$.

For an example of steps related by the residual relation, consider the step $c\eqdef \underline{ (\l \kpr\,
  x\,{\tt m}\, {\tt b}. x) (\kpr\, (I\, \texttt{j})\,{\tt m}\,{\tt b})}$ and $d\eqdef (\l \kpr\,
x\,{\tt m}\, {\tt b}. x) (\kpr\,(\underline{I\, \mathtt{j}})\,{\tt m}\, {\tt b})$ and $d'\eqdef
\underline{I\,\texttt{j}}$. Then $\residu{d}{c}{d'}$. Steps may be erased by other
  steps. For example, $\underline{(\l \kpr\, x\, y\, z.{\tt c})\,(\kpr\, (I\, u)\,{\tt m}\,{\tt
      b})}$ erases the coinitial step $(\l \kpr\, x\,y\, z.{\tt c})\,(\kpr\, \underline{(I\, u)}\,
  {\tt m}\,{\tt b})$. It may also duplicate a coinitial step. For example, $\underline{(\l \kpr\,
    x\, y\, z.x\,x)\,(\kpr\, (I\, u)\,{\tt m}\,{\tt b})}$ duplicates $(\l \kpr\, x\, y\,
  z.x\,x)\,(\kpr\, \underline{(I\, u)}\,{\tt m}\, {\tt b})$ yielding two residuals 
 $\underline{(I\, u)}\, (I\, u)$ and $(I\, u)\, \underline{(I\, u)}$.  

In \thespc\ a step $a$ embeds another step $b$ iff the position of $a$ is a prefix of the position
of $b$. For example, $c$ described above embeds $d$. However, the two steps $(\l \kpr\, x\,{\tt m}\, {\tt b}. x)
(\kpr\, \mathtt{a}\,(\underline{I {\tt f}})\, (\underline{I {\tt d}}))$ are not related by
embedding and are hence disjoint.

An example of gripping was given in the introduction. We revisit gripping in Sec.~\ref{sec:gripping}.

\subsection{Reduction sequences, multisteps and developments}
\label{subsec:ReductionSequencesAndDevelopments}

A \conceptIntro{reduction sequence} (or \conceptIntro{derivation}) is either $\emptyred{t}$, \ie\  an
\emph{empty sequence} indexed by the object $t$, or a (possibly
infinite) sequence $a_1; a_2, \ldots; a_n; \dots$ of
\tredexes\ verifying $\rtgt{a_k} = \rsrc{a_{k+1}}$ for all $k\geq 1$.
In the former case, we define the \conceptIntro{source} as $t$ and in
the latter case as the source of the first step in the sequence.  We
define the \conceptIntro{target} of a finite reduction sequence as
follows: $\rtgt{\emptyred{t}} \eqdef t$, $\rtgt{a_1; \ldots; a_n}
\eqdef \rtgt{a_n}$.  The \conceptIntro{length} of a reduction
sequence, denoted by $\rlength{\cdot}$, is defined as follows:
$\rlength{\emptyred{t}} \eqdef 0$, $\rlength{a_1; \ldots; a_n} \eqdef
n$. The target and length of an infinite sequence are undefined.
We
write $\arsRedseq$ for the set of reduction sequences.  In the
following, reduction sequences are given the names $\reda$, $\reda'$,
$\reda_1$, $\redb$, $\redc$, etc.  We write $t \sred{\reda} u$ to
indicate that $\rsrc{\reda} = t$ and $\rtgt{\reda} = u$.  Also, if
$\reda = a_1; \ldots; a_n$, we denote with $\mredel{\reda}{k}$ the
\tredex\ $a_k$, and write $\mredsub{\reda}{i}{j}$ for the subsequence
$a_i; \ldots; a_j$, if $i\leq j$, and $\emptyred{\rsrc{a_i}}$, if
$i>j$.  We use the symbol $;$ to denote the concatenation of reduction
sequences, allowing to concatenate \tredexes\ and sequences freely,
\eg\ $a;\reda$ or $a;b$ or $\reda;a$ or $\reda ; \redb$, as long as
the concatenation yields a valid reduction sequence. If $\ROccur{t} =
\emptyset$ then we say that $t$ is a \conceptIntro{normal form}. An
object $t$ is \conceptIntro{normalising} iff there exists a reduction
sequence $\reda$ such that $t \sred{\reda} u$ and $u$ is a normal
form.

A \defi{multistep} is a set of coinitial \tredexes, \ie\ a subset of $\ROccur{t}$ for a certain
object $t$.  We denote such sets by the letters $\setreda$, $\setreda'$, $\setredb$, $\setredc$,
$\setredd$, etc. Two multisteps  are \defi{coinitial} if their union is a multistep. 
\defi{Residuals of coinitial \tredexes\ $\setredb$ after $a$} are defined 
by $\residu{\setredb}{a}{b'}$ iff $\residu{b}{a}{b'}$ for some $b \in \setredb$. We also
use the notation $\residus{\setredb}{a}$, defined analogously to $\residus{b}{a}$.  Notice that for
any $a$ and $b$, $\residus{b}{a}$ is a set of coinitial \tredexes; the same happens with
$\residus{\setredb}{a}$ for any $\setredb$.

\defi{Residuals after reduction sequences} $\arsResidualRel\subseteq\arsRedexes \times \arsRedseq
\times \arsRedexes$ are defined as follows: $\residu{b}{\emptyred{t}}{b}$ for all $b \in
\ROccur{t}$, and $\residu{b}{a;\reda}{b'}$ whenever $\residu{b}{a}{b''}$ and
$\residu{b''}{\reda}{b'}$ hold for some $b''$.  
We sometimes use the notation $\residus{b}{\reda}$ for the set of residuals of $b$ after $\reda$, 
and $\residus{}{\reda}$ to denote the relation $\set{(b,b') \sthat \residu{b}{\reda}{b'}}$. 
We also write $\residu{\setredb}{\reda}{b'}$ and $\residus{\setredb}{\reda}$ for the obvious extension of residuals of steps after a reduction sequence to \emph{multisteps}.
Observe that $\residus{\setredb}{a;\reda} = \residus{\residus{\setredb}{a}}{\reda}$.

Next we consider contraction of multisteps. Since, in principle, the order in which the \tredexes\
comprising a multistep $\setreda$ are contracted could affect the target object of $\setreda$ and/or
its residual relation, it becomes necessary to lay out precise definitions on the meaning of
contraction.  This is achieved through the concept of \emph{development}.  Let $\setreda \subseteq
\ROccur{t}$ for some object $t$. The reduction sequence $\reda$ is a \conceptIntro{development} of
$\setreda$ iff $\mredel{\reda}{i} \in
\residus{\setreda}{\mredunt{\reda}{i-1}}$ for all $i \leq \rlength{\reda}$. \Eg\ a development of the multistep $\setreda\eqdef \{a,b\}$ where $a$ is $\underline{(\l x.I x) (I y)}$ and $b$ is $(\l x.I x) (\underline{I y})$ is the reduction sequence $\underline{(\l x.I x) (I y)} \sstep{a} I (\underline{I y}) \sstep{b'} I y$, since $a\in\residu{\setreda}{\emptyred{(\l x.I x) (I y)}}=\setreda$ and $b'\in\residu{\setreda}{a}$ given that $\residu{b}{a}{b'}$. 
The reduction sequence $(\l x.I x) (\underline{I y})\sstep{b} \underline{(\l x.I x) y} \sstep{a'} I y$, where $\residu{a}{b}{a'}$, is also a development of $\setreda$. Note also that the reduction sequence consisting solely of the step $a$ (or the step $b$) is a development of $\setreda$ too.  
A development $\reda$ of $\setreda$ is \conceptIntro{complete} (written $\reda\develops\setreda$) iff $\reda$ is finite and $\residus{\setreda}{\reda} = \emptyset$.  

The \defi{depth} of a multistep $\setreda$, written $\depth{\setreda}$, is the length of its longest complete development.  
If $a \in \setreda$ and $\reda \develops \residus{\setreda}{a}$, then $a ; \reda \develops \setreda$. Consequently, $\depth{\setreda} > \depth{\residus{\setreda}{a}}$, yielding a convenient induction principle for \tredexsets.
\delia{Both the notion of depth and the derived induction principle
  are important tools in 
several proofs of this work.}

Note that it is not a priori clear that a development terminates, nor that the residual relation is finitely branching. Moreover, since there may be
  more than one development of a multistep, it is natural to wonder whether they all have the same
  target and induce the same residual relation. These topics are discussed in the next section
  (\confer\ finite residuals, finite developments and semantic orthogonality axioms). Suffice it to
  say, for now, that complete developments are a valid means of defining contraction of multisteps since the latter do not depend on the complete development chosen (Prop.~\ref{rsl:SOplus}).

Let $\setreda \subseteq \ROccur{t}$ be a multistep. Define $\rsrc{\setreda} \eqdef t$,
$\rtgt{\setreda} \eqdef \rtgt{\reda}$, and $\residu{b}{\setreda}{b'}$ iff $\residu{b}{\reda}{b'}$
where $\reda$ is an arbitrary complete development of $\setreda$.  In order for these definitions to
be coherent, the empty \tredexset\ 
should be indexed by an object, \ie\ $\emptyset_t$, so that $\rsrc{\emptyset_t} \eqdef
\rtgt{\emptyset_t} \eqdef t$.  We will use the notations $\residus{b}{\setreda}$,
$\residu{\setredb}{\setreda}{b'}$, $\residus{\setredb}{\setreda}$ with meanings analogous to those
described for \tredexes. 
Notice that for any $a \in \setreda$, the reduction sequence $a ; \reda'$ is a (complete) development of $\setreda$ iff $\reda'$ is a (complete) development of $\residus{\setreda}{a}$.
As a consequence, $\residus{\setredb}{\setreda} = \residus{\residus{\setredb}{a}}{\residus{\setreda}{a}}$.
\delia{\defi{Multistep contraction} of $\setreda\subseteq \ROccur{t}$, written $t
\mstep{\setreda} u$, where $\rsrc{\setreda} = t$ and $\rtgt{\setreda} = u$, 
denotes an arbitrary complete development $\reda\develops\setreda$, where $t \sred{\reda} u$.}

As a closing comment to this section, it should be mentioned that 
the analysis of contraction of multisteps for higher-order rewriting is non-trivial even for sets of
\emph{pairwise disjoint} \tredexes, since residuals of such sets are not necessarily pairwise
disjoint.
\delia
{Conversely, in first-order rewriting, residuals of pairwise disjoint sets of \tredexes\ are always pairwise disjoint again.
This difference yields the need of a subtle analysis of the behaviour of \tredexsets, which is not required for the first-order case (\confer\ \cite{sekar-rama}), in order to obtain normalisation results applicable to higher-order rewrite systems.
}


\section{Axioms for ARS}
\label{sec:axiomsForARS}

We next introduce the axioms for ARS. These are presented in three groups
(Fig.~\ref{fig:groupsOfAxioms}), together with their associated concepts. The \emph{fundamental
  axioms} deal with basic properties of the residual relation; the \emph{embedding axioms} deal with
the interaction between residuals and embedding; and the \emph{gripping axioms} deal with the basic
properties of the gripping relation on redexes. In what follows, free
variables in the statement of an axiom are assumed implicitly universally quantified. For example,
``$\residus{a}{a} = \emptyset$'' should be read as ``For all $a\in \arsRedexes$, $\residus{a}{a} =
\emptyset$''. Finally, bear in mind that in an
expression such as ``$\residu{a}{b}{a'}$'', steps $a$ and $b$ are assumed coinitial.

\begin{figure}[t]
\begin{center}
\begin{tabular}{||l|p{5cm}|c||}
\hline
\textbf{Axiom group} & \textbf{Axioms} & \textbf{Reference}\\
\hline
Fundamental & \axSelfReduction, \axFiniteResiduals, \axAncestorUniqueness, \axFD, \axSO & Sec.~\ref{subsec:fundamentalAxioms}\\
\hline
Embedding & \axLinearity, \axCtxFreeness, \axEnclaveCreation, \axEnclaveEmbedding,
\axMisterious & Sec.~\ref{sec:embedding}\\
\hline
Gripping & \axFda, \axFdb, \axFdc & Sec.~\ref{sec:gripping} \\
\hline
\end{tabular}
\caption{Axioms for ARS presented in three groups}\label{fig:groupsOfAxioms}
\end{center}
\vspace{-5mm}
\end{figure}

\subsection{Fundamental axioms}
\label{subsec:fundamentalAxioms}

The fundamental axioms of an ARS have to do with the properties of the residual relation over
redexes and derivations. The embedding and gripping relations do not participate in these axioms.
The first is \axSelfReduction\ and states, quite reasonably, that nothing is left of a step $a$ if it is contracted. 
\vspace{-3mm}
\begin{center}
\begin{tabular}{@{}p{43mm}p{74mm}}
\textsf{\axSelfReduction} & 
$\residus{a}{a} = \emptyset$. 
\end{tabular}
\end{center}
 
The second is \axFiniteResiduals\ and states that the residuals of a step $b$ after contraction of
a coinitial (and possibly the same) one $a$ is a finite set. In other words, a step
may erase ($\residus{b}{a}=\emptyset$) or copy other coinitial steps, however only a finite number of copies
can be produced.
\vspace{-3mm}
\begin{center}
\begin{tabular}{@{}p{43mm}p{74mm}}
\textsf{\axFiniteResiduals} & 
$\residus{b}{a}$ is a finite set. 
\end{tabular}
\end{center}

The third one, namely \axAncestorUniqueness, states that a step $a$ cannot ``fuse'' two different
steps $b_1$ and $b_2$, coinitial with $a$, into one. In other words, if we use the term ``ancestor'' to
refer to the inverse of the residual relation, then any step can have at most one ancestor.
\vspace{-3mm}
\begin{center}
\begin{tabular}{@{}p{43mm}p{74mm}}
\textsf{\axAncestorUniqueness} & 
$\residu{b_1}{a}{b'}$ $\land$
$\residu{b_2}{a}{b'}$ $\Rightarrow$ $b_1 = b_2$.
\end{tabular}
\end{center}

As discussed in Sec.~\ref{subsec:ReductionSequencesAndDevelopments}, a
multistep $\setreda$ is contracted by performing any complete
development of $\setreda$. However, as already mentioned, developments
may a priori not terminate and, since there may be more than one
development of a multistep, they may not all have the same target or
induce the same residual relation. Any of these situations would
render the purported notion of \tredexset\  contraction
senseless. The following two axioms FD and SO ensure exactly that
these three properties are met. The first states that any development
of a multistep $\setreda$ necessarily terminates.
\vspace{-3mm}
\begin{center}
\begin{tabular}{@{}p{43mm}p{74mm}}
\textsf{Finite developments (FD)} & 
All developments of $\setreda$ are finite. 
\end{tabular}
\end{center}

Note that, together with \axFiniteResiduals, FD implies (by K\"onig's Lemma) that the notion of
\emph{depth} of a multistep $\setreda$ (\ie\ the length of the longest complete development of
$\setreda$) is well-defined. As already mentioned, this provides us with a convenient measure for proving properties
involving multisteps.

The second axiom states that complete
developments of a multistep $\{a,b\}$, consisting of two coinitial
steps, are joinable and induce the same residual
relation (Fig.~\ref{fig:axiom:semanticOrthogonality}).
It is called  PERM in~\cite{thesis-mellies}.

\begin{center}
\begin{tabular}{@{}p{43mm}p{74mm}}
\textsf{Semantic orthogonality (SO)} &
$\exists \reda,\redb$ s.t. $\reda\develops \residus{a}{b}$ $\land$ $\redb\develops \residus{b}{a}$ $\land$
$\rtgt{a;\redb} = \rtgt{b;\reda}$ $\land$ the relations $\residus{}{a;\redb}$ and $\residus{}{b;\reda}$ coincide.
\end{tabular}
\end{center}

\begin{figure}[t]
\begin{center}
$\begin{array}{c}
\xymatrix{
    s \ar[r]^{a} \ar[d]_{b} & t \ar@{->>}[d]^{\redb} \\
    u \ar@{->>}[r]_{\reda} & v   
}
\end{array}
$
\end{center}
\vspace{-3mm}
\caption{The semantic orthogonality axiom}\label{fig:axiom:semanticOrthogonality}
\vspace{-5mm}
\end{figure}
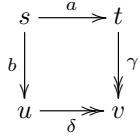

Developments of an \emph{arbitrary} multistep of an ARS are also joinable and induce the same
residual relation. This is reflected in the following result (Lem.~2.18 and Lem.~2.19
in~\cite{thesis-mellies}) which is proved by resorting to all of the above axioms (except for
\axAncestorUniqueness):

\begin{proposition}
\label{rsl:SOplus}
Consider an ARS enjoying the fundamental axioms.
Suppose $\reda\develops \setreda$ and $\redb\develops \setreda$.
Then $\rtgt{\reda} = \rtgt{\redb}$ and the relations $\residus{}{\reda}$ and $\residus{}{\redb}$ coincide.
\end{proposition}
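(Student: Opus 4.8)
The plan is to proceed by well-founded induction on the depth $\depth{\setreda}$, which is a legitimate measure precisely because FD together with \axiomuse{\axFiniteResiduals} make every development finite and finitely branching, so by K\"onig's Lemma the longest complete development has finite length. The base case $\setreda = \emptyset$ is immediate: the only complete development is $\emptyred{t}$, so $\rtgt{\reda} = \rtgt{\redb} = t$ and both induced residual relations are the identity on $\ROccur{t}$.

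For the inductive step, assume $\setreda \neq \emptyset$. A complete development of a non-empty multistep must contract at least one step, and its first step lies in $\residus{\setreda}{\emptyred{t}} = \setreda$, so I may write $\reda = a; \reda_1$ and $\redb = b; \redb_1$ with $a, b \in \setreda$, $\reda_1 \develops \residus{\setreda}{a}$ and $\redb_1 \develops \residus{\setreda}{b}$. If $a = b$, then $\reda_1$ and $\redb_1$ are both complete developments of $\residus{\setreda}{a}$, whose depth is strictly smaller than $\depth{\setreda}$; the induction hypothesis gives $\rtgt{\reda_1} = \rtgt{\redb_1}$ and $\residus{}{\reda_1} = \residus{}{\redb_1}$, and prefixing by the common step $a$ preserves both equalities, settling this subcase.

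The interesting case is $a \neq b$, where I would invoke \axiomuse{\axSO}. Applied to the coinitial pair $\{a,b\}$ it produces reductions $\redc_1 \develops \residus{b}{a}$ and $\redc_2 \develops \residus{a}{b}$ with $\rtgt{a; \redc_1} = \rtgt{b; \redc_2}$ and $\residus{}{a;\redc_1} = \residus{}{b;\redc_2}$. Since $\redc_1$ develops $\residus{b}{a} \subseteq \residus{\setreda}{a}$, the reduction $a; \redc_1$ is a (not necessarily complete) development of $\setreda$, and likewise $b; \redc_2$. Because their targets agree and their induced residual relations coincide, the multisteps $\residus{\setreda}{a;\redc_1}$ and $\residus{\setreda}{b;\redc_2}$ are equal; call this common multistep $\setredc$, living at the common target object. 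Choosing any complete development $\redc_3 \develops \setredc$, both $a; \redc_1; \redc_3$ and $b; \redc_2; \redc_3$ are complete developments of $\setreda$ (completeness follows from $\residus{\setreda}{a;\redc_1;\redc_3} = \residus{\setredc}{\redc_3} = \emptyset$), and they reach the same target with the same induced residual relation, since both extend by the very same $\redc_3$ two reductions already shown to agree.

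It then remains to glue the two given developments onto this common diamond. As $a;\redc_1;\redc_3$ is a complete development of $\setreda$, its tail $\redc_1;\redc_3$ is a complete development of $\residus{\setreda}{a}$; comparing it with $\reda_1$ through the induction hypothesis (the depth of $\residus{\setreda}{a}$ being strictly smaller) yields $\rtgt{\reda_1} = \rtgt{\redc_1;\redc_3}$ and $\residus{}{\reda_1} = \residus{}{\redc_1;\redc_3}$, hence $\reda = a;\reda_1$ and $a;\redc_1;\redc_3$ agree on target and on residual relation. The symmetric argument at $b$ shows $\redb$ and $b;\redc_2;\redc_3$ agree, and chaining the three agreements gives the claim for $\reda$ and $\redb$. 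I expect the main obstacle to be exactly this $a \neq b$ case: one must verify carefully that $a;\redc_1$ and $b;\redc_2$ are genuine developments of the \emph{whole} of $\setreda$ and that completing them by a shared tail $\redc_3$ produces objects at which the induction hypothesis applies to the strictly smaller multisteps $\residus{\setreda}{a}$ and $\residus{\setreda}{b}$. Note that \axiomuse{\axAncestorUniqueness} is not needed anywhere in this argument.
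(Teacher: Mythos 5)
Your proof is correct, and the delicate points you flag are all justified by facts the paper records: monotonicity of residuals in the set argument (immediate from the definition of $\residus{\setredb}{a}$), the equivalence ``$a;\reda'$ is a (complete) development of $\setreda$ iff $\reda'$ is a (complete) development of $\residus{\setreda}{a}$'', and the compositionality $\residus{\setredb}{a;\reda} = \residus{\residus{\setredb}{a}}{\reda}$, which together give that $a;\redc_1$ develops all of $\setreda$, that equality of the relations $\residus{}{a;\redc_1}$ and $\residus{}{b;\redc_2}$ forces $\residus{\setreda}{a;\redc_1} = \residus{\setreda}{b;\redc_2}$, and that appending the shared tail $\redc_3$ yields two complete developments of $\setreda$. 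However, your route is genuinely different from the paper's: the paper gives no inline argument at all, but appeals to Lem.~2.18 and Lem.~2.19 of~\cite{thesis-mellies}, where the statement is obtained via the theory of \emph{permutation equivalence} of reduction sequences generated by the \axiomuse{\axSO} (PERM) diagrams. Yours is instead the classical L\'evy-style argument: well-founded induction on $\depth{\setreda}$ (legitimate by \axiomuse{\axFD} plus \axiomuse{\axFiniteResiduals}, exactly the induction principle the paper sets up in Sec.~\ref{subsec:ReductionSequencesAndDevelopments}), with \axiomuse{\axSO} supplying the elementary diamond and the induction hypothesis gluing it onto the two given developments of the strictly shallower multisteps $\residus{\setreda}{a}$ and $\residus{\setreda}{b}$. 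What each approach buys: yours is self-contained and elementary, needing nothing beyond the fundamental axioms and the depth measure; Melli\`es' detour through permutation equivalence builds reusable machinery (notably for standardisation) of which this proposition falls out as a by-product. Your closing remark that \axiomuse{\axAncestorUniqueness} is nowhere used agrees with the paper's observation that the result is proved resorting to all fundamental axioms \emph{except} \axiomuse{\axAncestorUniqueness}.
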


\ignore{ 

\completar{Maybe we want to comment that the proof of our \theprop~\ref{rsl:SOplus} given by the cited lemmas in \cite{thesis-mellies} is based on the idea of permutation between steps in a reduction sequence, and the corresponding idea of \emph{permutation equivalence}.
Maybe there are other references to proposals of proving semantic orthogonality through permutation equivalence worth citing.
}

} 


\subsection{Embedding axioms}
\label{sec:embedding}

The embedding axioms establish coherence conditions between the embedding relation $<$ and the residual
relation $\arsResidualRel$. In reading these axioms it helps to think of $a<b$ in the setting of
\thespc\ as indicating that the position of the step $a$ is a prefix of the position of $b$. Bear in
mind however, that an ARS does not assume the existence of terms nor positions; this reading is
solely for the purpose of aiding the interpretation of the axioms.

The first axiom, \axLinearity, states that the only way in which a step $a$ can either erase or
produce multiple (two or more) copies of a coinitial step, is if it embeds it.
\vspace{-3mm}
\begin{center}
\begin{tabular}{@{}p{42mm}p{78mm}}
\textsf{\axLinearity} & 
$a\not \leq b$ $\Rightarrow$ $\exists ! b' \sthat \residu{b}{a}{b'}$. 
\end{tabular} 
\end{center}

The second axiom pertains to the invariance of the embedding relation w.r.t. contraction of
steps. Consider three coinitial steps $a, b$ and $c$. Suppose that $\residu{b}{a}{b'}$ and
$\residu{c}{a}{c'}$, for some steps $b'$ and $c'$ (this implies $a\neq c$ and $a\neq b$). If $a$
does not embed $c$ ($a\not<c$), then $a$ cannot grant the ability to $b$ of embedding $c$ ($b \not<
c \ \Rightarrow b' < c'$ cannot happen) or revoke it from $b$ ($b < c \ \Rightarrow b' \not< c'$
cannot happen).
\vspace{-3mm}
\begin{center}
\begin{tabular}{@{}p{35mm}p{85mm}}
\textsf{\axCtxFreeness} & 
$\residu{b}{a}{b'}$ $\land$ $\residu{c}{a}{c'}$ $\Rightarrow$ $a < c \,\lor\, (b < c \ \Leftrightarrow b' < c')$. 
\end{tabular}
\end{center}

The next two axioms assume that two steps $a$ and $b$ are given such that $b<a$. It considers
under what conditions $b'$, the unique residual of $b$ after $a$ ($\residu{b}{a}{b'}$), embeds other
steps $c'$ in the target of $a$. Two cases are considered, first when $c'$ is created by $a$
(\axEnclaveCreation) and then when it is not (\axEnclaveEmbedding).
\vspace{-6mm}
\begin{center}
\begin{tabular}{@{}p{42mm}p{78mm}}
\textsf{\axEnclaveCreation} & 
$b < a$ $\land$ $\residu{b}{a}{b'}$ $\land$ $\residu{\emptyset}{a}{c'}$ $\Rightarrow$ $b' < c'$. \\
\textsf{\axEnclaveEmbedding} &
$\residu{b}{a}{b'}$ $\land$ $\residu{c}{a}{c'}$ $\land$ $b < a < c$ $\Rightarrow$ $b' < c'$. 
\end{tabular} 
 \end{center}

 Finally, the axiom \axMisterious\ is new, in the sense that it does not appear
   in~\cite{thesis-mellies} since it is not required for the results that are proved there. It
     reads as follows:
\vspace{-6mm}
\begin{center}
\begin{tabular}{@{}p{18mm}p{100mm}}
\textsf{\axMisterious} & $a < c$ $\land$ $b < c$ $\land$ $b \not\leq a$ $\land$ $\residu{c}{a}{c'}$ $\Rightarrow$ $\exists b' \sthat \residu{b}{a}{b'} \,\land\, b' < c'$.
\end{tabular} 
\end{center}
 To
 motivate this axiom we illustrate an important property that we shall need to prove for our
 normalisation result (Lem.~\ref{rsl:dominated-stable-contraction}).   We assume given $b<c$ and a
 step $a\neq b$ s.t. $\residu{c}{a}{c'}$ for some $c'$ (\confer\ shaded triangles in the figure). We would like to deduce the existence of $b'$ s.t. (i)
 $\residu{b}{a}{b'}$ and (ii) $b'<c'$. For that we proceed to consider all possible embedding
 relations between $a$, on the one hand, and $b$ and $c$, on the other (see Fig.~\ref{fig:arbolito}): 
\begin{figure}[h]
\begin{center}
\onlyPdf{
\vspace{-1mm}
\includegraphics[scale=0.17]{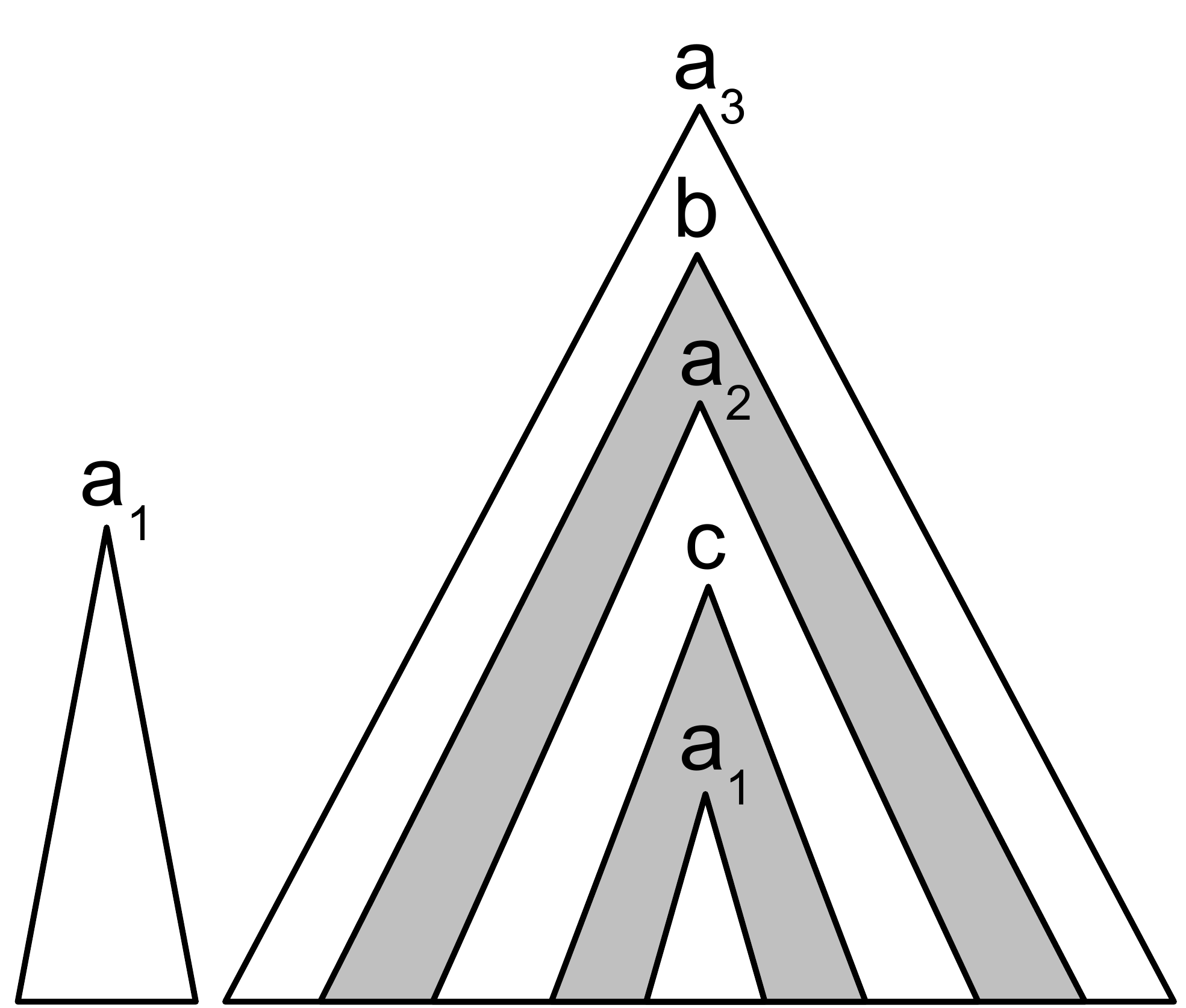}
\vspace{-4mm}
}
\onlyDvi{
\vspace{1cm}
}
\end{center}
\caption{Three redexes $a, b, c$ such that $b<c$ and $a\neq b$.}\label{fig:arbolito}
\vspace{-3mm}
\end{figure}

\begin{itemize}
\item $a\not<c$. This is represented with the two occurrences of $a$ subscripted with 1. We conclude
  (i) and (ii) using \axLinearity\ and \axCtxFreeness.

\minitem $a<c$. 
  \begin{itemize}
  \minitem $b < a$ (hence $b< a < c$). This is represented with the occurrence of $a$ subscripted with
    2. We conclude (i) and (ii) using \axLinearity\ and \axEnclaveEmbedding.
  \minitem $b\not< a$. This is represented with the occurrence of $a$ subscripted with 3. We conclude
    (i) and (ii) using \axMisterious.
  \end{itemize}
\end{itemize}

\ignore{

\medskip
\edu{\textbf{Comentarios de Carlos}}

\completar{Describe the meaning we will assign to the embedding relation, and illustrate by examples in \lc. We could discuss variants of embedding relations, there is material about this in \cite{thesis-mellies}, Sec.~4.10.2. In Sec.~4.10 embedding for other calculi are also discussed.}

\completar{
\medskip
A lot of material to include in this section.
\begin{itemize}
\item 
Explanation of the axioms, giving examples in the lambda-calculus. The presentation of the axioms in \cite{thesis-mellies} includes a fair amount of explanation. At least some of the standardisation axioms are also discussed in the POPL14 paper.
\item
Use of the standardisation axioms in other contexts, particularly the POPL14 paper.
\item
Tiny comment: in \cite{thesis-mellies} the two \axiomuse{enclave} are items of the same axiom, so that we are using three of the four standardisation axioms.
\item
The role of \axiomuse{\axMisterious}: to cover a case which is not covered neither by \axiomuse{context-freeness} nor by \axiomuse{\axEnclaveEmbedding}. I guess that a comparison between the premises of these redexes, specially for \axiomuse{\axEnclaveEmbedding} and \axiomuse{\axMisterious}, could be interesting.
We could also remark that \axiomuse{\axMisterious} includes an existencial in the conclusion, which is not the case for neither of the ``complementing'' axioms.
\item
The relation between \axiomuse{stability} and sequentiality, so that it is expected that the axiom does not hold for non-sequential ARS. 
\item
The fact that \axiomuse{\axLinearity} and \axiomuse{\axCtxFreeness} are enough to obtain an \emph{existence} result, and on the other hand, both \axiomuse{enclave} and particularly \axiomuse{stability} are required to obtain a \emph{uniqueness} result.
This can be related with the fact that for non-sequential systems, a normalising reduction strategy is forced to select a non-singleton set of redexes, so that some real simultaneous reduction is needed. Maybe several different standard reduction sequences could be built by choosing one of the redexes a normalising strategy selects, in the case in which selects more than one. Maybe it is possible that more than one option is valid to obtain a standard reduction sequence, giving rise to different standard reduction sequences, so that it makes sense that uniqueness does not hold.
\end{itemize}
}

} 


\subsection{Gripping axioms}
\label{sec:gripping}

In order to motivate our need for the gripping relation~\cite{thesis-mellies} we provide a brief
glimpse of the approach we take for our abstract proof of normalisation. We shall show that,
starting from a normalising object $t$ and by repeatedly contracting multisteps enjoying certain
properties (let us call such multisteps \emph{judicious}), a normal form will be reached. That this
process does not continue indefinitely, shall be guaranteed by providing an appropriate measure that
decreases with each such judicious multistep. The elements that are measured are certain ``multireductions'',
sequences of multisteps, that originate from each of the sources and targets of judicious
multisteps. 
\begin{wrapfigure}[9]{r}{0.4\textwidth}
\begin{center}
$\xymatrix{
t_i \arMulti{d}_{\setpb} \arMulti{r}_{\setpa} & t_i' \\
t_{i+1} \arMulti{r}_{\residu{\setpa}{\setpb}{}}      & t_{i+1}' 
}
$
\end{center}
\vspace{-3mm}
\end{wrapfigure}
In the particular case that a multireduction consists of a sole multistep $\setreda$, our
measure computes its \emph{depth} (the length of the longest complete
development). This is depicted in the figure where $\setpb$ is the judicious multistep, $\setpa$
is the multireduction consisting of just one multistep that is measured and
$\residu{\setpa}{\setpb}$ is what remains of multistep $\setpa$ after $\setpb$ which will also be
measured and compared with the measure of $\setpa$.
We are interested in showing that the depth of $\setpa$ is greater than that of
$\residu{\setpa}{\setpb}{}$.  \delia{In general, this is not true as the
following example in \l-calculus (suggested by V. van Oostrom and also
applicable to higher-order rewrite systems in general) illustrates, where $\setreda := \set{a_1, a_2}$,
$\setredb := \set{b}$ and $D=\l z. z\, z$}. Note that indeed we have $\depth{\setreda} = 2 < 3 =
\depth{\residus{\setreda}{\setredb}}$.
\begin{center}
$\xymatrix{
\underline{(\l x.\underline{D\ x}_{b})\ (\underline{\Id\ y}_{a_2})}_{a_1} \arMulti{d}_{\setpb}
\ar[r]|-*=0@{o}_>(.8){\setpa} & D\, y \\
\underline{(\l x.x\ x) 
    (\underline{\Id\ y}_{a_2})}_{a_1} \ar[r]|-*=0@{o}_>(.8){\residu{\setpa}{\setpb}{}}& y\, y 
}
$
\end{center}


The problem is that $a_1$ embeds the step $b$ that duplicates $a_1$'s bound variable; if this variable is
substituted by some other step (in this example, $a_2$) then, after contracting $b$ more copies of
$a_2$ have to be contracted in the development of $\residu{\setpa}{\setpb}{}$. When a step $a$
embeds another step $b$ that has a free occurrence of a variable bound by $a$ we say that $b$
\conceptIntro{grips} $a$ and write:
\begin{center}
$a\grip b$
\end{center}
We shall avoid the above situation 
by requiring our judicious multisteps to be \emph{\ngrip} (\confer\
Lem.~\ref{rsl:residual-same-depth}), in other words, that this situation never occurs. Since our ARS
are over abstract objects (hence there is no notion of term, nor variable, nor bound variable) we
must put forward appropriate axioms that capture gripping in an abstract way.
These
axioms were presented in~\cite{thesis-mellies} for the purpose of providing an abstract proof of
finite developments for ARS (see remark at the end of this section). We next present these axioms. 

\ignore{
Informally, a \tredex\ $b$ \textit{grips} another \tredex\ $a$ iff $a < b$ and there are occurrences
of variables inside $b$ which are bound by the abstraction of $a$. The gripping axioms portray de
properties that we assume valid for the gripping relation. There are three such axioms and they are
described in further detail below.
}

The first one, \axFda, states the role gripping plays in the creation of new embeddings. Consider
three coinitial steps $a,b,c$ and steps $b',c'$ s.t. $\residus{b}{a}{b'}$ and
$\residus{c}{a}{c'}$. Suppose $b'$ embeds $c'$ (\ie\ $b'<c'$). Two situations are possible. If
$a\not< c$, then by \axCtxFreeness, we already know that $b<c$. However, if $a<c$ (and $b\not<c$),
then this axiom may be seen to provide further information. It informs us that $b$ grips $a$: this
is the only way in which $a$ can place (the residual of) $c$ under (the residual of) $b$.

\begin{center}
\begin{tabular}{@{}p{35mm}p{78mm}}
\textsf{\axFda} & $\residus{b}{a}{b'}$ $\land$ $\residus{c}{a}{c'}$ $\land$ $b' < c'$ $\Rightarrow$ $b < c \,\lor\, (a \grip b \,\land\, a < c)$. 
\end{tabular}
\end{center}

The second axiom, \axFdb, states that at any moment a step grips some other step, then this can be
traced back to a ``chain'' of grippings over the ancestors of these steps. An example in \thespc\ of
how $b'\grip c'$ may follow from $b\grip a \grip c$ after contracting step $a$ is
$\underline{(\l y. (\underline{(\l x.\underline{I\, x}_{c})\, y}_{a})) u}_{b}$.

\begin{center}
\begin{tabular}{@{}p{35mm}p{78mm}}
\textsf{\axFdb} & $\residus{b}{a}{b'}$ $\land$ $\residus{c}{a}{c'}$ $\land$ $b' \grip c'$ $\Rightarrow$ $b \grip c$ $\lor$ $b \grip a \grip c$. 
\end{tabular}
\end{center}

The third axiom, \axFdc, states if a step $b$ grips another step $a$ (\ie\ $a\grip b$), then any
step that embeds $b$ either grips $a$ or embeds $a$.

\begin{center}
\begin{tabular}{@{}p{35mm}p{78mm}}
\textsf{\axFdc} & $a \grip b \,\land\, c < b$ $\Rightarrow$ $a \grip c \,\lor\, c \leq a$.
\end{tabular}
\end{center}

\delia{
Although in the abstract framework the embedding relation is clearly not included in the
gripping relation, it is worth noticing that the gripping axioms do
not enforce the opposite inclusion. That being said,  in our concrete $\theppc$ framework,
  the gripping relation  between $\theppc$-redexes (on page~\pageref{page-gripping-ppc})  is included in the
 embedding relation.
 
}

As remarked above, the gripping axioms entail FD, Thm. 3.2. in~\cite{thesis-mellies}.
%
%


\subsection{An additional axiom: Stability}


One final axiom which, although not required for our abstract
normalisation proof,  is worthy of mention given the key r\^ole that it
plays in the axiomatic standardisation proof of~\cite{thesis-mellies},
is briefly discussed here.  An ARS satisfying the fundamental axioms
  and the embedding axioms (disregarding both enclave axioms and the
  axiom \axMisterious), enjoys the property of \emph{existence} of
  standard derivations~\cite{thesis-mellies}.  A standard derivation
  is one in which contraction takes place outside-in. The additional
  property of \emph{uniqueness} of such derivations can also be proved
  in this axiomatic framework. For that the ARS must satisfy the
  fundamental axioms, the embedding axioms (disregarding
  \axMisterious, which is not required) and an additional axiom called
  \axStability.  This last axiom states that steps can be created in a
  unique way.

\begin{center}
\begin{tabular}{@{}p{18mm}p{100mm}}
\textsf{Stability} &
Assume $a \disj b$, $\residu{a}{b}{a'}$, $\residu{b}{a}{b'}$, and there exists some $d'$ such that $\residu{d'_1}{b'}{d'}$ and $\residu{d'_2}{a'}{d'}$. 
Then there exists $d$ such that $\residu{d}{a}{d'_1}$, $\residu{d}{b}{d'_2}$, and either $a \not\leq d$ or $b \not\leq d$.
\end{tabular}
\end{center}

As mentioned, \axStability\ is not required for our abstract normalisation proof. This is quite fortunate since
neither the parallel-or TRS nor the \thespc\ of the introduction, enjoy stability. Let us look
at the case of \thespc. 

\vspace{-1mm}
\begin{center}
$\xymatrix@C=10pt@R-8pt{
    &  (\l \kpr\, x\,{\tt m}\, {\tt s}. x) (\kpr\, \mathtt{a}\,(I {\tt f}) (I {\tt d}))
    \ar[rd]^{b}\ar[ld]_{a} & \\
   (\l \kpr\, x\,{\tt m}\, {\tt s}. x) (\kpr\, \mathtt{a}\,{\tt f}\, (I {\tt d})) \ar[d]_{d_1'}
   \ar[dr]_{b'} & & (\l \kpr\,
   x\,{\tt m}\, {\tt s}. x) (\kpr\, \mathtt{a}\,(I {\tt f})\, {\tt d}) \ar[d]_{d_2'}\ar[dl]^{a'}\\
    I & (\l \kpr\, x\,{\tt m}\, {\tt s}. x) (\kpr\, \mathtt{a}\,{\tt f}\,{\tt d}) \ar[d]_{d'} & I \\
    & I & 
}$
\end{center}

\vspace{-2mm}
The steps depicted above meet the antecedent of the statement of the stability axiom. However, the
conclusion is not satisfied since both steps $d_1'$ and $d_2'$ are created (by $a$ and $b$, resp.). 

This concludes our presentation of the axioms of an ARS.  A summary of all three groups is given in Fig.~\ref{fig:summaryOfAxioms}. 

\begin{figure}
\begin{center}
\begin{tabular}{@{}p{42mm}p{78mm}}
\textbf{Fundamental axioms} 
\\
\hline
\\[-2pt]
\textsf{\axSelfReduction} & 
$\residus{a}{a} = \emptyset$. \\
\textsf{\axFiniteResiduals} & 
$\residus{b}{a}$ is a finite set. \\
\textsf{\axAncestorUniqueness} & 
$\residu{b_1}{a}{b'}$ $\land$ $\residu{b_2}{a}{b'}$ $\Rightarrow$ $b_1 = b_2$.\\
\textsf{Finite developments} & 
All developments of a multistep $\setreda$ are finite. \\
\textsf{Semantic orthogonality (SO)} &
$\exists \reda,\redb$ s.t. $\reda\develops \residus{a}{b}$ $\land$ $\redb\develops \residus{b}{a}$ $\land$
$\rtgt{a;\redb} = \rtgt{b;\reda}$ $\land$ the relations $\residus{}{a;\redb}$ and $\residus{}{b;\reda}$ coincide.
\\[4pt]
\textbf{Embedding axioms}  
\\
\hline
\\[-2pt]
\textsf{\axLinearity} & 
$a\not \leq b$ $\Rightarrow$ $\exists ! b' \sthat \residu{b}{a}{b'}$.  \\
\textsf{\axCtxFreeness} & 
$\residu{b}{a}{b'}$ $\land$ $\residu{c}{a}{c'}$ $\Rightarrow$ $a < c \,\lor\, (b < c \ \Leftrightarrow b' < c')$. 
\\
\textsf{\axEnclaveCreation} & 
$b < a$ $\land$ $\residu{b}{a}{b'}$ $\land$ $\residu{\emptyset}{a}{c'}$ $\Rightarrow$ $b' < c'$. \\
\textsf{\axEnclaveEmbedding} &
$\residu{b}{a}{b'}$ $\land$ $\residu{c}{a}{c'}$ $\land$ $b < a < c$ $\Rightarrow$ $b' < c'$. 
\\
\textsf{\axMisterious} & 
$a < c$ $\land$ $b < c$ $\land$ $b \not\leq a$ $\land$ $\residu{c}{a}{c'}$ $\Rightarrow$ $\exists b' \sthat \!\residu{b}{a}{b'} \land b' < c'$.
\\[4pt]
\textbf{Gripping axioms}
\\
\hline
\\[-2pt]
\textsf{\axFda} & $\residus{b}{a}{b'}$ $\land$ $\residus{c}{a}{c'}$ $\land$ $b' < c'$ $\Rightarrow$ $b < c \,\lor\, (a \grip b \,\land\, a < c)$. 
\\
\textsf{\axFdb} & $\residus{b}{a}{b'}$ $\land$ $\residus{c}{a}{c'}$ $\land$ $b' \grip c'$ $\Rightarrow$ $b \grip c$ $\lor$ $b \grip a \grip c$. \\
\textsf{\axFdc} & $a \grip b \,\land\, c < b$ $\Rightarrow$ $a \grip c \,\lor\, c \leq a$.
\end{tabular}
\end{center}
\vspace{-3mm}
\caption{The three groups of axioms of an ARS}\label{fig:summaryOfAxioms}
\vspace{-3mm}
\end{figure}


\section{Multireductions over an ARS}
\label{sec:multireductionsInARS}
Our normalisation result states conditions under which an object can
be reduced to a normal form by repeatedly contracting multisteps,
thus requiring a precise meaning
for sequences of such multisteps. Also,
we must introduce some qualifiers for multisteps that enjoy properties
that are useful for the development that shall follow.

\subsection{Multireductions}

The concept of reduction sequence introduced earlier for \tredexes,
makes sense for \tredexsets\ as well.  A \conceptIntro{multireduction
  sequence}, or just \conceptIntro{multireduction}, is either
$\nil_t$, an \emph{empty multireduction} indexed by the object $t$, or
a sequence of \tredexsets\ $\setreda_1; \ldots; \setreda_n; \ldots$ 
where $\rtgt{\setreda_{k+1}} = \rsrc{\setreda_{k}}$ for all $k \geq 1$.
We use $\mreda$, $\mredb$, $\mredc$, $\mredd$ to denote
multireductions and $\mredel{\mreda}{k}$ and $\mredsub{\mreda}{i}{j}$
with the same meanings given for reduction sequences.  Source, target
and length of multireductions are defined analogously as done before
for reduction sequences.  We write $t \mred{\mreda} u$ to denote that
$\rsrc{\mreda} = t$ and $\rtgt{\mreda} = u$.  Some comments:
\begin{itemize}
\item The \defi{length of a multireduction} is the number of \ale{its} \tredexsets\ , it is not related to the size of the sets.
\item An element of a multireduction can be an empty \tredexset, so that the only corresponding development is the empty reduction sequence indexed by its source. 
\item A multireduction consisting of one or more occurrences of $\emptyset_t$, and $\nil_t$, are
  different multireductions. In particular, $\rlength{\emptyset_t } = 1$ while $\rlength{\nil_t} = 0$.
We will say that a \mredseq\ is \defi{trivial} iff all its elements are empty \tredexsets. Empty \mredseqs\ are trivial.
\end{itemize}

The residual relation is extended from \tredexsets\ to multireductions, exactly as we have extended it from \tredexes\ to reduction sequences. 
We use the notations $\residu{b}{\mreda}{b'}$, $\residus{b}{\mreda}$, $\residu{\setredb}{\mreda}{b'}$, $\residus{\setredb}{\mreda}$ and $\residus{}{\mreda}$ 
as expected.

Let $\mathcal{MR}$ be the set of \tredexsets\ associated to an
ARS. Notice that, in contrast to the notion of residuals for steps, residuals
can be considered as a function on multisteps, \ie\ $\arsResidualRel :
\mathcal{MR} \times \mathcal{MR} \to \mathcal{MR}$, since
$\residus{\setreda}{\setredb}$ is again a \tredexset\ for any
$\setreda, \setredb$.  
This distinguishing feature of \tredexsets\ leads to the definition of the \conceptIntro{residual of a \mredseq\ after a \tredexset}, for which we will (ab)use the notation $\arsResidualRel$.  
If $\rsrc{\setredb} = t$ then $\residus{\nil_t}{\setredb} \eqdef \nil_{\rtgt{\setredb}}$; if
$\setreda$ and $\setredb$ are coinitial
then $\residus{(\setreda;\mreda)}{\setredb} \eqdef
\residus{\setreda}{\setredb} ; (\residus{\mreda}{\residus{\setredb}{\setreda}})$.  
Observe that, in
spite of the name ``residual'' and the notation $\arsResidualRel$, the
above definition corresponds to a partial function $\mathcal{MRS}
\times \mathcal{MR} \to \mathcal{MRS}$, where $\mathcal{MRS}$ stands
for the set of multireductions.  Notice also that
$\rlength{\residus{\mreda}{\setredb}} = \rlength{\mreda}$.

A \conceptIntro{(multistep) reduction strategy} 
for an ARS $\arsa$ is any function $\strs : (\arsTerms \setminus NF) \to \partsOf{\arsRedexes}$ such
that $\strs(t) \neq \emptyset$ and $\strs(t) \subseteq \ROccur{t}$ for all $t$; here $NF$ stands for the set of normal forms of $\arsa$.
A multistep reduction strategy determines, for each object, a \emph{\mredseq}:  if $t \in NF$,
then the associated \mredseq\ is $\emptyred{t}$, otherwise it is $\strs(t_0) ; \strs(t_1) ; \ldots ;
\strs(t_n) ; \ldots$ where $t_0 \eqdef t$ and $t_{n+1} \eqdef \rtgt{\strs(t_{n})}$.
A reduction strategy is \conceptIntro{normalising} iff for any object $t$, the determined \mredseq\
ends in a normal form for all normalising objects. A \defi{single-step reduction strategy} is a
multistep reduction strategy $\strs$ s.t. $\strs(t)$ is a singleton for every $t$ in the domain of
$\strs$. In this case, the multireduction sequence determined by $\strs$ is in fact a \emph{reduction sequence}.

The independence of order of contraction of steps, formalised in \theprop~\ref{rsl:SOplus}, extends
to \tredexsets~\cite[\thelem~2.24]{thesis-mellies} and to \mredseqs. The former is a
  consequence of \theprop~\ref{rsl:SOplus} and  the latter then follows by induction on
  $\mreda$.

\begin{proposition}
\label{rsl:SOredexset}
 Consider an ARS enjoying the group of fundamental axioms.
\begin{enumerate}
\item \label{it:SOredexset}
Let $\setreda, \setredb \subseteq \ROccur{t}$. The target and residual relation of $\setreda ;
\residus{\setredb}{\setreda}$ and $\setredb ; \residus{\setreda}{\setredb}$ coincide.

\item \label{it:SOmredseq}
Let $\mreda$ be a \mredseq, and $\setredb \subseteq \ROccur{t}$. The target and residual relation associated to $\mreda ; \residus{\setredb}{\mreda}$ and $\setredb ; \residus{\mreda}{\setredb}$ coincide.

\end{enumerate}

\end{proposition}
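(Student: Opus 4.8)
The plan is to derive Item~\ref{it:SOredexset} directly from \theprop~\ref{rsl:SOplus}, and then to obtain Item~\ref{it:SOmredseq} from Item~\ref{it:SOredexset} by induction on the length of $\mreda$, exactly as the remark preceding the statement suggests. Throughout, ``coincide'' is read as in \theprop~\ref{rsl:SOplus}: equal target and identical induced residual relation.

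For Item~\ref{it:SOredexset}, the central idea I would use is that $\setreda \cup \setredb$ is again a multistep in $\ROccur{t}$, and that both multireductions in question compute complete developments of this \emph{single} multistep. Concretely, I would first establish the auxiliary fact: if $\reda \develops \setreda$ and $\reda' \develops \residus{\setredb}{\setreda}$, then $\reda; \reda' \develops \setreda \cup \setredb$. This is checked from the definition of (complete) development: every step of $\reda$ is a residual of $\setreda$, hence of $\setreda \cup \setredb$; after $\reda$ the surviving residuals of $\setreda \cup \setredb$ are $\residus{\setreda}{\reda} \cup \residus{\setredb}{\reda} = \residus{\setredb}{\setreda}$, using $\residus{\setreda}{\reda}=\emptyset$ since $\reda$ completely develops $\setreda$; as $\reda'$ completely develops $\residus{\setredb}{\setreda}$, the concatenation $\reda;\reda'$ is a development that leaves no residual of $\setreda\cup\setredb$, i.e. a complete one. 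Symmetrically, every complete development of $\setredb; \residus{\setreda}{\setredb}$ completely develops $\setredb \cup \setreda = \setreda \cup \setredb$. Since, by the definition of multistep contraction, the target and residual relation of $\setreda; \residus{\setredb}{\setreda}$ and of $\setredb; \residus{\setreda}{\setredb}$ are those of their associated complete developments, and both are complete developments of the one multistep $\setreda \cup \setredb$, \theprop~\ref{rsl:SOplus} immediately yields the desired coincidence.

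For Item~\ref{it:SOmredseq} I would proceed by induction on the length of $\mreda$, the statement being universally quantified over the coinitial multistep $\setredb$. First I would record a routine bookkeeping fact: coincidence is stable under prepending a common multistep and under appending a common multireduction, the target being unchanged and the residual relation being composed with $\residus{}{\setreda}$ or with $\residus{}{\mredd}$ respectively. The base case $\mreda = \nil_t$ is immediate, since $\mreda; \residus{\setredb}{\mreda} = \nil_t; \setredb$ and $\setredb; \residus{\mreda}{\setredb} = \setredb; \nil_{\rtgt{\setredb}}$ share target $\rtgt{\setredb}$ and residual relation $\residus{}{\setredb}$. For the step I would write $\mreda = \setreda; \mreda'$, set $\setredb' \eqdef \residus{\setredb}{\setreda}$ and $\setreda' \eqdef \residus{\setreda}{\setredb}$, and unfold the definition of the residual of a multireduction after a multistep to obtain $\mreda; \residus{\setredb}{\mreda} = \setreda; (\mreda'; \residus{\setredb'}{\mreda'})$ and $\setredb; \residus{\mreda}{\setredb} = \setredb; \setreda'; \residus{\mreda'}{\setredb'}$. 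The induction hypothesis applied to $\mreda'$ and the coinitial multistep $\setredb'$ gives that $\mreda'; \residus{\setredb'}{\mreda'}$ coincides with $\setredb'; \residus{\mreda'}{\setredb'}$; prepending $\setreda$ shows the left-hand side coincides with $\setreda; \setredb'; \residus{\mreda'}{\setredb'}$. Finally Item~\ref{it:SOredexset} gives that $\setreda; \setredb'$ coincides with $\setredb; \setreda'$, and appending the common tail $\residus{\mreda'}{\setredb'}$ shows $\setreda; \setredb'; \residus{\mreda'}{\setredb'}$ coincides with the right-hand side, closing the induction.

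The routine parts are the unfolding of the residual-of-a-multireduction definition and the prepend/append bookkeeping; these are mechanical. The one genuinely load-bearing step, and the main thing I would take care to verify in full, is the auxiliary fact in Item~\ref{it:SOredexset} that a complete development of $\setreda$ followed by a complete development of $\residus{\setredb}{\setreda}$ is again a complete development of $\setreda \cup \setredb$; it is precisely this fact that reduces a claim about two successive multisteps to the single-multistep result \theprop~\ref{rsl:SOplus}, which in turn rests on the fundamental axioms.
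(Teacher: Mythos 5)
Your proposal is correct and takes essentially the same route as the paper: item~1 by observing that both $\setreda;\residus{\setredb}{\setreda}$ and $\setredb;\residus{\setreda}{\setredb}$ are realised by complete developments of the single multistep $\setreda\cup\setredb$ and then invoking \theprop~\ref{rsl:SOplus}, and item~2 by induction on $\rlength{\mreda}$ using item~1. The auxiliary fact you flag as load-bearing (a complete development of $\setreda$ followed by one of $\residus{\setredb}{\setreda}$ is a complete development of $\setreda\cup\setredb$) is exactly the observation on which the paper's argument rests, and your verification of it is sound.
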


\ignore{
\begin{proof}
Observe that a development of $\setreda$ followed by one of $\residus{\setredb}{\setreda}$
corresponds to a development of $\setreda \cup \setredb$, and similarly for a development of
$\setredb$ followed by one of $\residus{\setreda}{\setredb}$. Then \theprop~\ref{rsl:SOplus} allows
us to conclude.
\end{proof}
} 

\begin{center}
$\begin{array}{cc}
\xymatrix{
t \arMulti{d}_{\setpb} \arMulti{r}_{\setpa} & s \arMulti{d}^{\residu{\setpb}{\setpa}{}}\\
u \arMulti{r}_{\residu{\setpa}{\setpb}{}}      & v 
}
&
\xymatrix{
t \arMulti{d}_{\setpb} \arMultiOp{rr}{>>}_{\mreda} & & s \arMulti{d}^{\residu{\setpb}{\mreda}{}}\\
u \arMultiOp{rr}{>>}_{\residu{\mreda}{\setpb}{}}      & & v 
}
\end{array}
$
\end{center}

\ignore{

\begin{proposition}
\label{rsl:SOmredseq} 
Consider an ARS enjoying the group of fundamental axioms.
Let $\mreda$ be a \mredseq, and $\setredb \subseteq \ROccur{t}$. Then the target and residual relation of $\mreda ; \residus{\setredb}{\mreda}$ and $\setredb ; \residus{\mreda}{\setredb}$ coincide.
\end{proposition}

\begin{proof}
A simple induction on $\rlength{\mreda}$, resorting to Prop.~\ref{rsl:SOredexset}, suffices.
\Confer\ \thelem~2.33 and \thelem~2.29 in \cite{thesis-mellies}. 
\end{proof}
} 

\ignore{

\completar{Se podr\'ia mostrar la prueba, y/o incluir alg\'un diagrama al estilo

$\xymatrix@R=20pt@C=40pt{
  t \arMulti{r}^{\setreda} \arMulti{d}_{\setredb} & 
  t_0 \arMultiOp{rr}{>>}^{\mreda} \arMulti{d}^{\residus{\setredb}{\setreda}} & & 
  t' \arMulti{d}^{\residus{\setredb}{\setreda ; \mreda}} \\
  s \arMulti{r}_{\residus{\setreda}{\setredb}} &
  s_0 \arMultiOp{rr}{>>}_{\residus{\mreda}{\residus{\setredb}{\setreda}}} & & s'
}$

que puede servir para introducir este tipo de diagramas, que se van a usar en la prueba abstracta. Incluso se pueden citar usos de este tipo de diagramas.
}

Prop.~\ref{rsl:SOplus} also allows us to introduce the \conceptIntro{depth} of $\setreda$, written $\depth{\setreda}$, a notion which will be an ingredient of the abstract normalisation proof.
It is defined as the length of the longest complete development of $\setreda$.
K\"onig's Lemma guarantees the well-definedness of this function, for ARS enjoying \axiomuse{\axFiniteResiduals} and \axiomuse{\axFD}.

} 

\subsection{Key Concepts}
\label{sec:free-domin}

\delia{This section introduces several notions that are crucial in the 
development of our abstract normalization proof. More precisely, 
our normalization result holds for strategies choosing 
\ngrip\ and necessary multisteps, which are introduced as follows. 
First of all, starting from
the embedding relation on redexes, we  define two key relations on multisteps: 
{\it free-from}
and {\it embedded by}. Secondly, starting from the gripping notion on redexes, we 
define a gripping notion on multisteps, together with the associated
concept of {\it \ngrip}\  multisteps. Last but not least, we define the {\it uses} relation
which defines what it means for a multistep to be \emph{necessary}.
}

\begin{deliae}
\medskip\noindent
\textbf{Free-from and \domintext\ \tredexsets} \\ 

Two notions related with embedding and involving \tredexsets\ are
crucial to define the main elements of the abstract normalisation
proof.  In order to introduce these notions, we 
discuss different ways to extend the notion of embedding to \tredexsets.

Two different meanings could be assigned to the notation $a >
\setredb$: either that there exists some $b \in \setredb$ that verifies
$a > b$, or else that $a > b$ for all $b \in \setredb$. 
Since we are going to apply this general framework to {\it terms}, it
seems natural to adopt the
former interpretation.  Conversely, when considering $\setreda >
\setredb$, we take a ``forall'' meaning on the $\setreda$ side: to
consider that $\setredb$ embeds $\setreda$, it must embed each of its
elements.  In the sequel, we use the notations $a > \setredb$ and
$\setreda > \setredb$ with the just given meanings, and we say that
$a$ (or $\setreda$) is \conceptIntro{embedded by} $\setredb$.

On the other hand, we say that a step is \conceptIntro{free from} a coinitial \tredexset, if it is neither equal to nor embedded by a step in $\setredb$. In turn, a \tredexset\ $\setreda$ is free from another, coinitial \tredexset\ $\setredb$, if $a$ is free from $\setredb$ \emph{for all}%
\footnote{\delia{Observe that given the just discussed meanings, ``$\setreda \not> \setredb$'' and ``$\setreda$ free from $\setredb$'' are different predicates.
}} %
$a \in \setreda$.
The notion also extends to \mredseqs, as defined below.
\end{deliae}
Formally, given $a,
\setreda, \mreda$ coinitial with $\setredb$:
\begin{itemize}
\item 
$a$ is \textbf{free from} \setredb, written $a \freefrom \setredb$, iff $a \not\geq b$ for all $b \in \setredb$.
\item 
$\setreda$ is \textbf{free from} \setredb, written $\setreda \freefrom \setredb$, iff $a \freefrom \setredb$ for all $a \in \setreda$.
\item $\mreda$ is \textbf{free from} \setredb, written 
$\mreda \freefrom \setredb$,  iff either $\mreda = \nil_{\rsrc{\setredb}}$ or $\mreda = \setreda; \mreda'$, $\setreda \freefrom \setredb$ and $\mreda' \freefrom \residus{\setredb}{\setreda}$.
\item 
$a$ is \textbf{embedded by} \setredb, written $a \dominated \setredb$, iff $a \notin \setredb$ and
$\metaexists b \in \setredb \sthat a > b$.
\item $\setreda$ is \textbf{embedded by} \setredb, written
$\setreda \dominated \setredb$, iff $a \dominated \setredb$ for all $a \in \setreda$.
\end{itemize}

Notice that being free from and \dominbytext\ $\setredb$ are complementary for a single (coinitial) \tredex\ $a$, unless $a \in \setredb$, \ie\ exactly one of $a \in \setredb$, $a \freefrom \setredb$ and $a \dominated \setredb$ holds.  This need not be the case for a \tredexset\ $\setreda$: even if $\setreda \cap \setredb = \emptyset$, it could well be the case that neither $\setreda \freefrom \setredb$ nor $\setreda \dominated \setredb$ hold, if some elements of $\setreda$ are free from $\setredb$ while others are \dominbytext\ it.

\delia{On the other hand, any $\setreda$ verifying $\setreda \cap \setredb =
\emptyset$ can be \textbf{partitioned} into a free
subset $\thefree{\setreda}$ and an \domintext\ subset
$\thedomin{\setreda}$ \wrt\ $\setredb$, \ie\ $\setreda =
\thefree{\setreda} \uplus \thedomin{\setreda}$, $\thefree{\setreda}
\freefrom \setredb$, and $\thedomin{\setreda} \dominated \setredb$.
The partition of a \tredexset\ into a free and a\edu{n}
  \domintext\ part \wrt\ another, coinitial \tredexset, is a relevant
  notion for the development of the abstract proof we describe in
  Sec.~\ref{sec:reduction-strategies}.}

Consider the following \mredseq\ $\Delta$ in the \lc:
\begin{center}
$
\begin{array}{@{}l}
\overbrace{(\l x . x (\underbrace{I 5}_{a})) (\underbrace{I 3}_{b})}^{d} 
	\ 	(\overbrace{I (\underbrace{I 4}_{c})}^{e}  
\ 
	\ \mstep{\set{e}} \ 
\overbrace{(\l x . x (\underbrace{I 5}_{a'})) (\underbrace{I 3}_{b'})}^{d'} 
	\ 	(\underbrace{I 4}_{c'})
	\ \mstep{\set{d', c'}} \ 
(\underbrace{I 3}_{b''}) (\underbrace{I 5}_{a''}) \ 4
\end{array}
$
\end{center}
In this case, we have $\set{c,d,e} \freefrom \set{a,b}$, $\set{a,b} \freefrom \set{c,e}$,
$\set{a,b,c} \dominated \set{d,e}$.  Also, we have $\mreda \freefrom \set{a,b}$, because $\set{e} \freefrom \set{a,b}$ and $\set{d', c'} \freefrom \set{a', b'}$.
If we define $\setreda = \set{b,c,e}$ and $\setredb = \set{a,d}$, we observe that neither  $\setreda
\freefrom \setredb$ nor $\setreda \dominated \setredb$ hold.  The partition of $\setreda$ \wrt\
$\setredb$ gives $\thefree{\setreda} = \set{c,e}$ and $\thedomin{\setreda} = \set{b}$.

Observe also that being free from a \tredexset\ extends to parts of a \mredseq, namely%
\onlyReport{\footnote{Note that given the formal definition of the free from relation, it is not immediate that $\mreda_1 ; \mreda_2 \freefrom \setredb$ implies $\mreda_1 \freefrom \setredb$. In fact, a proof of this statement would follow the same lines of that we give for the more general Lemma~\ref{rsl:free-from-parts-mredseq}. This is the motivation for the statement of this Lemma. }}:
\begin{lemma}
\label{rsl:free-from-parts-mredseq}
Assume $\mreda_1 ; \mreda_2 ; \mreda_3 \freefrom \setredb$.
Then $\mreda_2 \freefrom \residus{\setredb}{\mreda_1}$.
\end{lemma}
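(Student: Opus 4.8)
The plan is to reduce the statement to a single \emph{decomposition} property of the free-from relation and then apply it twice. Concretely, I would first establish the auxiliary claim: for all multireductions $\mredc, \mredd$ with $\rtgt{\mredc} = \rsrc{\mredd}$ and every multistep $\setredc$ coinitial with $\mredc$,
\[
\mredc ; \mredd \freefrom \setredc
\ \Longrightarrow\
\mredc \freefrom \setredc \ \land\ \mredd \freefrom \residus{\setredc}{\mredc}.
\]
Granting this, the lemma is immediate. Applying the claim with $\mredc := \mreda_1$ and $\mredd := \mreda_2 ; \mreda_3$ (legitimate since $\setredb$ is coinitial with $\mreda_1$) yields $\mreda_2 ; \mreda_3 \freefrom \residus{\setredb}{\mreda_1}$. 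Applying it a second time with $\mredc := \mreda_2$, $\mredd := \mreda_3$ and the multistep $\residus{\setredb}{\mreda_1}$ in place of $\setredc$ gives, from the first conjunct of the conclusion, exactly $\mreda_2 \freefrom \residus{\setredb}{\mreda_1}$, which is what we want.

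Next I would prove the auxiliary claim by induction on the length of $\mredc$, unfolding the recursive definition of $\freefrom$ on multireductions. In the base case $\mredc = \nil_{\rsrc{\setredc}}$ we have $\mredc ; \mredd = \mredd$ and $\residus{\setredc}{\mredc} = \setredc$, since residuals after the empty multireduction leave a multistep unchanged; thus $\mredc \freefrom \setredc$ holds vacuously and $\mredd \freefrom \residus{\setredc}{\mredc}$ is just the hypothesis. In the inductive step $\mredc = \setreda ; \mredc'$, the concatenation is $\setreda ; (\mredc' ; \mredd)$, and by the defining clause of $\freefrom$ the hypothesis $\mredc ; \mredd \freefrom \setredc$ unfolds to $\setreda \freefrom \setredc$ together with $(\mredc' ; \mredd) \freefrom \residus{\setredc}{\setreda}$. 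Applying the induction hypothesis to $\mredc'$ with the multistep $\residus{\setredc}{\setreda}$ gives $\mredc' \freefrom \residus{\setredc}{\setreda}$ and $\mredd \freefrom \residus{\residus{\setredc}{\setreda}}{\mredc'}$. Re-assembling $\setreda \freefrom \setredc$ with $\mredc' \freefrom \residus{\setredc}{\setreda}$ through the same definitional clause recovers $\mredc \freefrom \setredc$, the first conclusion.

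The one delicate point — and the only genuine obstacle — is reconciling the two presentations of the residual that appear in the second conclusion: the induction hypothesis produces $\mredd \freefrom \residus{\residus{\setredc}{\setreda}}{\mredc'}$, whereas the target is $\mredd \freefrom \residus{\setredc}{\mredc}$. These agree by compositionality of residuals along concatenation, $\residus{\residus{\setredc}{\setreda}}{\mredc'} = \residus{\setredc}{\setreda ; \mredc'} = \residus{\setredc}{\mredc}$, which is the multistep/multireduction analogue of the identity $\residus{\setredb}{a;\reda} = \residus{\residus{\setredb}{a}}{\reda}$ already recorded in Sec.~\ref{subsec:ReductionSequencesAndDevelopments}. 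Once this identity is invoked the second conclusion follows, closing the induction. Throughout I would keep the coinitiality bookkeeping explicit — in particular that $\residus{\setredc}{\mredc}$ is coinitial with $\mredd$ because its source is $\rtgt{\mredc} = \rsrc{\mredd}$ — so that every occurrence of $\freefrom$ is between coinitial arguments and hence well-formed.
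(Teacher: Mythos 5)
Your proof is correct. The ingredients are the same as in the paper's own proof --- unfolding the recursive definition of $\freefrom$ and using compositionality of residuals along concatenation, $\residus{\setredc}{\setreda;\mredc'} = \residus{\residus{\setredc}{\setreda}}{\mredc'}$ --- but the architecture differs. The paper proves the three-part statement directly by induction on the lexicographic pair $\pair{\rlength{\mreda_1}}{\rlength{\mreda_2}}$: when $\mreda_1$ is empty it peels multisteps off $\mreda_2$ (which is, in effect, the prefix conjunct $\mredc \freefrom \setredc$ of your claim), and when $\mreda_1$ is nonempty it peels multisteps off $\mreda_1$ (your suffix conjunct $\mredd \freefrom \residus{\setredc}{\mredc}$). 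You instead factor out a binary prefix/suffix decomposition lemma, prove it by a simple induction on $\rlength{\mredc}$ alone, and apply it twice. Your route buys modularity: the prefix half of your auxiliary claim is exactly the two-part statement ($\mreda_1;\mreda_2 \freefrom \setredb$ implies $\mreda_1 \freefrom \setredb$) that the paper itself remarks is not immediate from the definition, and your claim yields the three-part lemma --- and its evident $n$-part generalisations --- as corollaries with no lexicographic measure; the paper's route avoids introducing an auxiliary statement at the cost of a slightly more intricate induction. One cosmetic remark: in your base case, $\mredc \freefrom \setredc$ does not hold ``vacuously'' but by the first clause of the definition of $\freefrom$, namely $\mredc = \nil_{\rsrc{\setredc}}$; this does not affect the argument.
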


\begin{proof}
We proceed by induction on $\pair{\rlength{\mreda_1}}{\rlength{\mreda_2}}$. Let $\mreda$ be $\mreda_1 ; \mreda_2 ; \mreda_3$.

The base case is when $\mreda_1 = \mreda_2 = \emptyred{\rsrc{\setredb}}$.
In this case $\residus{\setredb}{\mreda_1} = \setredb$. Then the definition of $\freefrom$ suffices to conclude.

Suppose that $\mreda_1 = \emptyred{\rsrc{\setredb}}$ and $\mreda_2 = \setreda ; \mreda'_2$.
In this case, $\mreda = \setreda ; \mreda'_2 ; \mreda_3$, so that $\mreda \freefrom \setredb$ implies $\setreda \freefrom \setredb$ and $\mreda'_2 ; \mreda_3 = \emptyred{\rtgt{\setreda}} ; \mreda'_2 ; \mreda_3 \freefrom \residus{\setredb}{\setreda}$.
We observe that $\pair{\rlength{\emptyred{\rtgt{\setreda}}}}{\rlength{\mreda'_2}} < \pair{\rlength{\mreda_1}}{\rlength{\mreda_2}}$, therefore we can apply \ih, obtaining that $\mreda'_2 \freefrom \residus{\residus{\setredb}{\setreda}}{\emptyred{\rtgt{\setreda}}} = \residus{\setredb}{\setreda}$. Recalling that $\setreda \freefrom \setredb$, we get $\mreda_2 \freefrom \setredb = \residus{\setredb}{\mreda_1}$.

If $\mreda_1 = \setreda ; \mreda'_1$, then $\mreda \freefrom \setredb$ implies $\setreda \freefrom \setredb$ and $\mreda'_1 ; \mreda_2 ; \mreda_3 \freefrom \residus{\setredb}{\setreda}$.
Observe $\pair{\rlength{\mreda'_1}}{\rlength{\mreda_2}} < \pair{\rlength{\mreda_1}}{\rlength{\mreda_2}}$, then \ih\ yields $\mreda_2 \freefrom \residus{\residus{\setredb}{\setreda}}{\mreda'_1} = \residus{\setredb}{\mreda_1}$.
\end{proof}

\delia{
The axiom \axiomuse{\axLinearity} can be extended to the
residuals of a step \emph{after a \tredexset} from which
  it is free from, as proved in the following Lemma. 
This fact is used in Sec.~\ref{sec:reduction-strategies}.
} 
\begin{lemma}[\delia{Linearity after a multistep}]
\label{rsl:free-then-unique-residual}
Consider an ARS enjoying the fundamental axioms and \axiomuse{\axLinearity}; and $a, \setredb$ such that $a \freefrom \setredb$. Then $\residus{a}{\setredb}$ is a singleton.
\end{lemma}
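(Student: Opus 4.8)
The plan is to prove the statement by induction on the depth $\depth{\setredb}$ of the multistep, using the induction principle recorded in Sec.~\ref{subsec:ReductionSequencesAndDevelopments}, namely that $\depth{\setredb} > \depth{\residus{\setredb}{b}}$ whenever $b \in \setredb$. For the base case $\depth{\setredb} = 0$ we have $\setredb = \emptyset_{\rsrc{a}}$, whose only complete development is the empty reduction sequence, so $\residus{a}{\setredb} = \set{a}$ is trivially a singleton.

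For the inductive step, with $\setredb \neq \emptyset$, I would fix some $b \in \setredb$ and contract it first: any complete development of $\setredb$ can be written $b ; \reda'$ with $\reda' \develops \residus{\setredb}{b}$, so that $\residus{a}{\setredb} = \residus{(\residus{a}{b})}{\residus{\setredb}{b}}$. Since $a \freefrom \setredb$ gives $b \not\leq a$, the axiom \axiomuse{\axLinearity} yields a unique $a'$ with $\residu{a}{b}{a'}$, that is $\residus{a}{b} = \set{a'}$. It then remains to apply the induction hypothesis to $a'$ and $\residus{\setredb}{b}$, whose depth is strictly smaller; the one missing ingredient is that freeness is preserved, i.e. $a' \freefrom \residus{\setredb}{b}$, after which $\residus{a}{\setredb} = \residus{a'}{\residus{\setredb}{b}}$ is a singleton by \ih.

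I expect this preservation of freeness to be the main obstacle, and I would establish it as follows. Take any $c' \in \residus{\setredb}{b}$; by definition it is a residual $\residu{c}{b}{c'}$ of some $c \in \setredb$, and I must show $c' \not\leq a'$, i.e. both $c' \neq a'$ and $c' \not< a'$. For the equality, if $c' = a'$ then $\residu{c}{b}{a'}$ and $\residu{a}{b}{a'}$, so \axiomuse{\axAncestorUniqueness} forces $c = a$, contradicting $a \freefrom \setredb$ (which gives $a \neq c$). For the strict embedding, suppose $c' < a'$; applying \axiomuse{\axCtxFreeness} with contracted step $b$ and traced steps $c$ and $a$ gives $b < a \,\lor\, (c < a \Leftrightarrow c' < a')$. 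Now $b \not\leq a$ rules out $b < a$, hence $c < a \Leftrightarrow c' < a'$; combined with the assumed $c' < a'$ this yields $c < a$, again contradicting $a \freefrom \setredb$ (which gives $c \not< a$). Thus $c' \not\leq a'$ for every $c' \in \residus{\setredb}{b}$, establishing $a' \freefrom \residus{\setredb}{b}$ and closing the induction.
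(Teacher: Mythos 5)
Your proposal is correct and follows essentially the same route as the paper's proof: induction on $\depth{\setredb}$, using \axiomuse{\axLinearity} to get the unique residual $a'$ after one step $b \in \setredb$, then establishing $a' \freefrom \residus{\setredb}{b}$ via \axiomuse{\axAncestorUniqueness} (equality case) and \axiomuse{\axCtxFreeness} (strict embedding case) before invoking the induction hypothesis. The only difference is cosmetic: you spell out the instantiation of \axiomuse{\axCtxFreeness} and the decomposition $\residus{a}{\setredb} = \residus{\residus{a}{b}}{\residus{\setredb}{b}}$ slightly more explicitly than the paper does.
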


\begin{proof}
  By induction on $\depth{\setredb}$.  If $\setredb = \emptyset$, then we conclude by observing that
  $\residus{a}{\emptyset} = \set{a}$.  Otherwise assume some $b \in \setredb$.  Then $a \freefrom
  \setredb$ implies $b \not\leq a$, thus \axiomuse{\axLinearity} yields $\residus{a}{b} = \set{a'}$.
  Let us show that $a' \freefrom \residus{\setredb}{b}$.  Take $b'_0$ such that
  $\residu{b_0}{b}{b'_0}$ for some $b_0 \in \setredb$.  Assume $b'_0 < a'$.  Then $b \not\leq a$ and
  \axiomuse{\axCtxFreeness} imply $b_0 < a$ thus contradicting $a \freefrom \setredb$. On the other
  hand, $b'_0 = a'$ would contradict \axiomuse{\axAncestorUniqueness}.  Consequently, $a' \freefrom
  \residus{\setredb}{b}$.  The \ih\ can then be applied to obtain that $\residus{a'}{\residus{\setredb}{b}{}}$
  is a singleton.  We conclude by observing that $\residus{a}{\setredb} =
  \residus{\residus{a}{b}}{\residus{\setredb}{b}} = \residus{a'}{\residus{\setredb}{b}{}}$.
\end{proof}

\medskip\noindent
\delia{\textbf{Gripping for Multisteps} \\}%
Now we discuss the extension of the gripping relation to multisteps.
We say that:
\begin{itemize}
\item $\setredb$ \defi{grips} $a$, written $a \grip \setredb$, iff $a \grip b$ for some $b \in \setredb$.

\item $\setredb$  \defi{grips} $\setreda$, written $\setreda \grip \setredb$, iff $a \grip \setredb$ for at least one $a \in \setreda$.
\end{itemize}
\begin{deliae}
We define  a multistep $\setredb$ to be \conceptIntro{\ngrip} iff for any finite
multireduction $\mredd$, if $\mredd$ is coinitial with $\setredb$, then 
$\ROccur{\rtgt{\mredd}} \notgrip \residus{\setredb}{\mredd}$. Notice
that $\setredb$ being \ngrip\ implies that every residual of 
$\setredb$ (after a coinitial step, multistep or multiderivation) is.

The extension of gripping to \tredexsets\ leads to a strengthened variant of the free from relation. Given two coinitial \tredexsets\ $\setreda$ and $\setredb$, we say that $\setreda$ is \conceptIntro{\superfree} from $\setredb$, iff $\setreda \freefrom \setredb$ and $\setreda \notgrip \setredb$. Analogously, if $\mreda$ and $\setredb$ are coinitial, we say that $\mreda$ is \conceptIntro{\superfree} from $\setredb$, iff $\mreda \freefrom \setredb$ and $\mredel{\mreda}{k} \notgrip \residus{\setredb}{\mredunt{\mreda}{k-1}}$ for all $k$.
\end{deliae}

\medskip
\begin{deliae}
It is worth noticing that an alternative definition of \ngrip\ can be
given {\it coinductively} as follows: a multistep $\setredb$ is
\conceptIntro{\ngrip} iff $\ROccur{\rsrc{\setredb}} \notgrip
\setredb$, and for any multistep $\setreda$ coinitial with $\setredb$, the set $\residus{\setredb}{\setreda}$ is \ngrip.

It is not difficult to show that both definitions are equivalent. 
For that, 
let us write $\rng$ for our first definition of  \ngrip\ while
$\cng$ is used for the coinductive definition. 

\begin{lemma}
\label{l:equivalencia-ng}
A multistep $\setredb$ is $\rng$ iff $\setredb$ is $\cng$.
\end{lemma}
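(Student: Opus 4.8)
The plan is to prove the two implications separately, exploiting the asymmetry between the two formulations: $\rng$ quantifies universally over all finite multireductions, whereas $\cng$ is a greatest fixed point. The single algebraic fact I would use throughout is the compositionality of residuals at the multireduction level, namely $\residus{\setredb}{\setreda ; \mredd'} = \residus{\residus{\setredb}{\setreda}}{\mredd'}$ (the multireduction analogue of the already-noted identity $\residus{\setredb}{a;\reda} = \residus{\residus{\setredb}{a}}{\reda}$), together with the fact that $\rtgt{\setreda ; \mredd'} = \rtgt{\mredd'}$ and that $\setreda ; \mredd'$ is coinitial with $\setredb$ whenever $\setreda$ is coinitial with $\setredb$ and $\mredd'$ is coinitial with $\residus{\setredb}{\setreda}$.

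First I would establish $\rng \Rightarrow \cng$ by coinduction. Let $P$ be the collection of all $\rng$ multisteps; to show $P$ is contained in the greatest fixed point defining $\cng$, it suffices to check that $P$ is a post-fixed point of the underlying operator, i.e.\ that every $\setredb \in P$ satisfies the two defining clauses of $\cng$ relative to $P$ itself. For the first clause, $\ROccur{\rsrc{\setredb}} \notgrip \setredb$, I would instantiate the $\rng$ property of $\setredb$ at the empty multireduction $\mredd = \nil_{\rsrc{\setredb}}$, for which $\rtgt{\mredd} = \rsrc{\setredb}$ and $\residus{\setredb}{\mredd} = \setredb$. For the second clause I must show $\residus{\setredb}{\setreda} \in P$ for every coinitial multistep $\setreda$: given an arbitrary finite multireduction $\mredd'$ coinitial with $\residus{\setredb}{\setreda}$, I instantiate the $\rng$ property of $\setredb$ at the composite $\setreda ; \mredd'$ and then apply the compositionality identity to rewrite $\residus{\setredb}{\setreda ; \mredd'}$ as $\residus{\residus{\setredb}{\setreda}}{\mredd'}$ and $\rtgt{\setreda ; \mredd'}$ as $\rtgt{\mredd'}$, obtaining precisely the $\rng$ condition for $\residus{\setredb}{\setreda}$ evaluated at $\mredd'$.

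Conversely, for $\cng \Rightarrow \rng$ I would argue by induction on $\rlength{\mredd}$, but proving the slightly generalised statement: \emph{for every $\cng$ multistep $\setredc$ and every finite multireduction $\mredd$ coinitial with $\setredc$, one has $\ROccur{\rtgt{\mredd}} \notgrip \residus{\setredc}{\mredd}$.} The base case $\mredd = \nil_{\rsrc{\setredc}}$ is immediate from the first clause of $\cng$. In the inductive step, writing $\mredd = \setreda ; \mredd'$, the second clause of $\cng$ yields that $\residus{\setredc}{\setreda}$ is again $\cng$, so the \ih\ applies to $\residus{\setredc}{\setreda}$ and $\mredd'$; the compositionality identity then converts the conclusion into the desired $\ROccur{\rtgt{\mredd}} \notgrip \residus{\setredc}{\mredd}$. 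Generalising over $\setredc$ rather than fixing $\setredb$ is exactly what makes the induction go through, since the recursive appeal is made on the different multistep $\residus{\setredc}{\setreda}$.

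I do not anticipate a substantial obstacle: once the residual-compositionality identity for multireductions is available, both directions are routine. The two points requiring care are the correct use of the coinduction principle in the first direction (verifying that the class of $\rng$ multisteps is closed under the defining clauses of $\cng$, so that it sits inside the greatest fixed point) and the strengthening of the induction hypothesis in the second direction to range over an arbitrary $\cng$ multistep, so that it can be reinvoked on $\residus{\setredb}{\setreda}$.
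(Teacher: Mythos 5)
Your proposal is correct and follows essentially the same route as the paper's own proof: coinduction for $\rng \Rightarrow \cng$ (checking the two clauses of $\cng$, with the empty multireduction handling the first and residual compositionality $\residus{\setredb}{\setreda;\mredd'} = \residus{\residus{\setredb}{\setreda}}{\mredd'}$ handling the second), and induction on the length of the multireduction for $\cng \Rightarrow \rng$. Your two points of care are well placed — the paper invokes the closure of $\rng$ under residuals as a previously noted remark and applies its inductive hypothesis to $\residus{\setredb}{\setreda}$ without spelling out the generalisation over all $\cng$ multisteps, so your explicit formulation of the post-fixed-point check and of the strengthened induction statement makes the same argument fully rigorous.
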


\begin{proof} \mbox{}

\noindent $\Rightarrow$)
The proof is by coinduction.

Take $\mredd = \nil_{\rsrc{\setredb}}$. Then 
$\ROccur{\rsrc{\setredb}} = \ROccur{\rsrc{\nil_{\rsrc{\setredb}}}} =  $ \linebreak $
\ROccur{\rtgt{\nil_{\rsrc{\setredb}}}}  
\notgrip  \residus{\setredb}{\nil_{\rsrc{\setredb}}} = \setredb$. 

Take any multistep $\setreda$ which is coinitial with 
$\setredb$. By hypothesis  $\setredb$ is \rng\ so that
$\residus{\setredb}{\setreda}$ is also \rng. 
The coinductive hypothesis gives $\residus{\setredb}{\setreda}$ \cng\ and we can thus
conclude $\setredb$ \cng. 

\noindent $\Leftarrow$)
Let $\mredd$ be a finite multiderivation which is coinitial
with $\setredb$.  We want to show $\ROccur{\rtgt{\mredd}} \notgrip \residus{\setredb}{\mredd}$.
We proceed by induction on $\mredd$. 

If $\mredd = \nil_{\rsrc{\setredb}}$, then  
$\ROccur{\rtgt{\nil_{\rsrc{\setredb}}}} = \ROccur{\rsrc{\nil_{\rsrc{\setredb}}}}
= \ROccur{\rsrc{\setredb}} \notgrip  \setredb  = \residus{\setredb}{\nil_{\rsrc{\setredb}}}$. 

If $\mredd = \setreda; \mredd'$, then we want to show 
$\ROccur{\rtgt{\setreda;\mredd'}} \notgrip \residus{\setredb}{\setreda;\mredd'}$.  

Since $\setredb$ is \cng, then $\residus{\setredb}{\setreda}$ is \cng\
by definition, so that by \ih\ we have 
$\ROccur{\rtgt{\mredd'}} \notgrip  \residus{\residus{\setredb}{\setreda}}{\mredd'}$. We conclude since
$\ROccur{\rtgt{\mredd'}} =
\ROccur{\rtgt{\setreda;\mredd'}} $
and $\residus{\residus{\setredb}{\setreda}}{\mredd'} = \residus{\setredb}{\setreda;\mredd'} $. 
\end{proof}
\end{deliae}

\delia{The choice between the use of the predicate $\rng$ or $\cng$ remains a
matter of taste, as the (forthcoming) proofs relying on $\rng$
are not simplified in any notable way by adopting instead $\cng$.  Still, 
since $\ngrip$ plays  a central r\^ole  in this paper, 
explicitly  spelling out its coinductive nature 
provides one more way of understanding it.  
}

\medskip
\begin{deliae}
Another interesting remark concerns the relation between
our $\ngrip$ predicate and that of  {\it universally
  $<$-external}~\cite{DBLP:conf/popl/AccattoliBKL14}: 
a redex $a$ is said to be \defi{universally $<$-external}, \ie\
external with respect to any reduction step (and thus wrt any
derivation)
 if $a$ is $<$-minimal in $\ROccur{\rsrc{a}}$, and $\residus{a}{b}{a'}$ implies $a'$ universally $<$-external for all $b$ coinitial with $a$. 
A redex which is universally $<$-external is in particular $\ngrip$,
but the converse does not necessarily hold: a $\ngrip$ redex is not always
universally $<$-external. For example, in the case of the
$\lambda$-calculus, the redex $b$ in $\underline{I(\underline{I(\underline{II}_{c})}_{b})}_{a}$
is $\ngrip$, but is not
universally $<$-external as $b$ is not
$<$-minimal. Another example can be found in Section~\ref{sec:ppc-strategy},
where we define a
reduction strategy for $\theppc$, which selects
$\ngrip$ redexes which are not necessarily universally $<$-external. 

\end{deliae}

\delia{
\medskip\noindent
\textbf{Uses, Needed Step and Necessary Multisteps} \\ 
}%
Given a multireduction and some coinitial \tredexset, a further property the abstract normalisation
proof is interested in is whether the \tredexset\ is at least partially contracted along the
multireduction, or if it is otherwise completely ignored.  We will say that a \tredexset\ is
\conceptIntro{used} in a multireduction, iff at least one residual of the former is included (\ie\
contracted) in the latter.  Formally, let $b$ be a \tredex, $\setreda$ and $\setredb$ two
\tredexsets, and $\mreda$ a \mredseq, such that all of them are coinitial.
\begin{itemize}
\item \setreda\ \conceptIntro{uses} $b$ iff $b \in \setpa$;

\item  \mreda\ \conceptIntro{uses}  $b$ iff $\mredel{\mreda}{k} \cap
  (\residus{b}{\mredunt{\mreda}{k-1}}) \neq \emptyset$ for at least one $k$; and

\item \setreda\ (resp. \mreda) \conceptIntro{uses} \setredb\ iff it uses at least one $b \in
\setpb$. 

\end{itemize}

A \tredex\ $a$ is \conceptIntro{needed} iff for every multireduction $\rsrc{a}
\mred{\mreda} u$ such that $u$ is a normal form, $\mreda$ uses $a$.  A \tredexset\ $\setreda$ is
\conceptIntro{necessary}, iff for every multireduction $\rsrc{\setreda} \mred{\mreda} u$ such that
$u$ is a normal form, $\mreda$ uses $\setreda$.  The notion of necessary \tredexset\ generalises that of
needed redex (notice that any singleton whose only element is a needed redex is a necessary set). As
mentioned in the introduction, there is an important difference: while not all terms admit a needed
redex, any term admits at least one necessary set, \ie\ the set of \emph{all} its redexes.


\section{Necessary normalisation for ARS}
\label{sec:reduction-strategies}
We prove in this section that, for any ARS verifying  the fundamental axioms, 
the embedding axioms, 
and the gripping axioms, the systematic contraction of \emph{necessary} and \emph{\ngrip} \tredexsets\ is normalising.

The overall structure of the proof is inspired by the work on
first-order term rewriting systems by Sekar and Ramakrishnan in~\cite{sekar-rama}.
Assume that $\strs$ is a reduction strategy selecting always necessary and \ngrip\ multisteps.
Consider an initial \mredseq\
%
\begin{wrapfigure}[12]{r}{0.45\textwidth}
\minicenter{
$\xymatrix@R=20pt@C=80pt{
  t_0 \arMulti{d}_{\strs(t_0)} \arMultiOp{r}{>>}^{\mreda_0} & u \ar@{=}[d] \\ 
  t_1 \ar@{.}[d]               \arMultiOp{r}{>>}^{\mreda_1} & u \ar@{.}[d] \\
  t_n \arMulti{d}_{\strs(t_n)} \arMultiOp{r}{>>}^{\mreda_n} & u \ar@{=}[d] \\
  u  \ar@{=}[r]^{\mreda_{n+1}} & u
}$
}
\caption{Proof idea}\label{fig:normalisationProofIdea}
\end{wrapfigure}
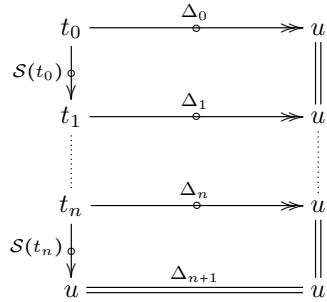
$t_0 \mred{\mreda_0} u \in \NForms$, and $t_1$ the target term of the multistep selected by $\strs$ for $t_0$, \ie\ $t_0 \mstep{\strs(t_0)} t_1$.
We construct a \mredseq\ $t_1 \mred{\mreda_1} u$, such that the \mredseq\ $\mreda_1$ is strictly smaller than the original one \wrt\ a convenient well-founded ordering $<$ on multireductions.
We have thus transformed the original $t_0 \mred{\mreda_0} u$ in $t_0 \mstep{\strs(t_0)} t_1 \mred{\mreda_1} u$.
Well-foundedness of $<$ entails
  that by iterating this procedure one may deduce that repeated contraction of the \tredexsets\ selected by the strategy $\strs$ yields the normal form $u$.
This is depicted in Fig.~\ref{fig:normalisationProofIdea} where $\mreda_{k+1}$ is strictly smaller than $\mreda_k$ for all $k$ and $\mreda_{n+1}$ is a trivial multireduction. The original multireduction $\mreda_0$ is first transformed into $\strs(t_0);\mreda_1$, then successively into $\strs(t_0);\ldots;\strs(t_k);\mreda_{k+1}$; and finally into $\strs(t_0);\ldots;\strs(t_n)$.

\medskip
Several notions contribute to this proof.  We define a
\textbf{measure} inspired from \cite{sekar-rama, vanOostrom:1999},
based on the {\it  depths} of the \tredexsets\ composing a multireduction. 
\begin{deliae}
More precisely, the measure of a \mredseq\ is
  defined as the sequence of the depths of its elements, \emph{taken
    in reversed order}, \ie\ given a \mredseq\ $\mreda = \mredunt{\mreda}{n}$, 
the measure of $\mreda$, written $\mredms{\mreda}$, is the $n$-tuple $\langle
\depth{\mredel{\mreda}{n}},\ldots,$ \linebreak $\depth{\mredel{\mreda}{1}} \rangle$.
Then, the \textbf{lexicographic order} $\ltms$ is used to compare (measures of)
multireductions, where $\ltms$ is  defined on $n$-tuples of natural numbers as follows:
\[ \langle x_1, \ldots, x_n \rangle \ltms \langle y_1, \ldots, x_n \rangle \mbox{ iff }
   \exists 1 \leq j \leq n\ x_j <  y_j \mbox{ and } 
   \forall 1 \leq  i  < j\ x_j = y_j \] 
Therefore, $\mredms{\mreda} < \mredms{\mredb}$ implies $\mredms{\mredc;\mreda} <\ \mredms{\mredd;\mredb}$ for any $\mredc$, $\mredb$ that verify $\rtgt{\mredc} = \rsrc{\mreda}$, $\rtgt{\mredd} = \rsrc{\mredb}$, and $\rlength{\mredc} = \rlength{\mredd}$.  
Notice that \mredseqs\ comparable by $\ltms$ may not be coinitial.
\end{deliae} 

This (well-founded) ordering allows only to compare \mredseqs\ having the
same length; the minimal elements are the $n$-tuples of the form
$\langle 0, \ldots, 0 \rangle$ which corresponds exactly to the
trivial \mredseqs.    As
remarked in~\cite{vanOostrom:1999}, the measure used
in~\cite{sekar-rama}, based on sizes of multisteps rather than depths,
is not well-suited for a higher-order setting.

\medskip 
To construct $\mreda_{k+1}$, we observe that the fact that
$\strs(t_k)$ is a \emph{necessary} set, implies that it is used along
$\mreda_k$ at least once.  Therefore, we can consider the last element
of $\mreda_k$ that includes (some residual of) an element of
$\strs(t_k)$. Let us call this element $\setreda$.  We build
the diagram shown in
\thefig~\ref{fig:normalisation-proof-iteration-1}, where $\mreda_k =
\mreda' ; \setreda ; \mreda''$, $\setreda \cap
\residus{\strs(t_k)}{\mreda'} \neq \emptyset$, and $\mreda''$ does not
use $\residus{\strs(t_k)}{\mreda' ; \setreda}$.
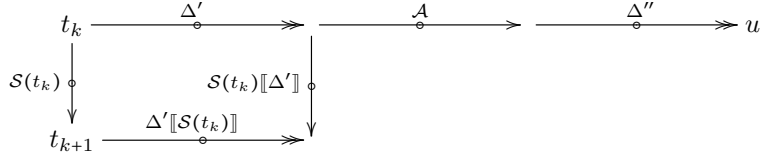
\begin{figure}[H]
\vspace{-1mm}
\begin{center}
$\xymatrix@C=35pt@R=30pt{ 
t_k \arMultiOp{rr}{>>}^{\mreda'} \arMulti{d}_{\strs(t_k)\,}
& & \arMulti{rr}^{\setreda} \arMulti{d}_{\,\residus{\strs(t_k)}{\mreda'}\, }
& & \arMultiOp{rr}{>>}^{\mreda''} & & u \\
t_{k+1} \arMultiOp{rr}{>>}^{\residus{\mreda'}{\strs(t_k)}} 
& & 
}$
\vspace{-3mm}
\end{center}
\caption{Construction of $\mreda_{k+1}$, starting point}\label{fig:normalisation-proof-iteration-1}
\end{figure}
By setting $\setreda_1 = \setreda \cap \residus{\strs(t_k)}{\mreda'} \neq
  \emptyset$, $\setreda_2 = \residus{(\setreda \setminus \setreda_1)}{\setreda_1}$, and $\setredb =
  \residus{\strs(t_k)}{\mreda' ; \setreda_1}$, we    can refine the previous diagram
as depicted in  \thefig~\ref{fig:normalisation-proof-iteration-3}. Now $\setreda_2 ; \mreda''$ does not use $\setredb$. Notice that $\setreda_1 \neq \emptyset$ implies $\depth{\setreda_2} < \depth{\setreda}$. Observe also that $\setreda_1 \subseteq \residus{\strs(t_k)}{\mreda'}$, implying $\residus{\setreda_1}{\residus{\strs(t_k)}{\mreda'}} = \emptyset$.
\begin{figure}[H]
\begin{center}
$\xymatrix@C=28pt@R=35pt{ 
t_k \arMultiOp{rr}{>>}^{\mreda'} \arMulti{d}_{\strs(t_k)\,}
& & \arMulti{rrr}^{\setreda_1} \arMulti{d}_{\,\residus{\strs(t_k)}{\mreda'}\, }
& & & s \arMulti{r}^{\setreda_2} \arMulti{d}^{\,\setredb}
& \arMultiOp{rr}{>>}^{\mreda''} & & \ar@{=}[d] u \\
t_{k+1} \arMultiOp{rr}{>>}_{\residus{\mreda'}{\strs(t_k)}} 
& & \ar@{=}[rrr]_{\residus{\setreda_1}{\residus{\strs(t_k)}{\mreda'}} = \emptyset} 
& & & s' \arMultiOp{rrr}{>>}_{\mredb'} 
& & & u
}$
\vspace{-3mm}
\end{center}
\caption{Construction of $\mreda_{k+1}$, finished}\label{fig:normalisation-proof-iteration-3}
\end{figure}
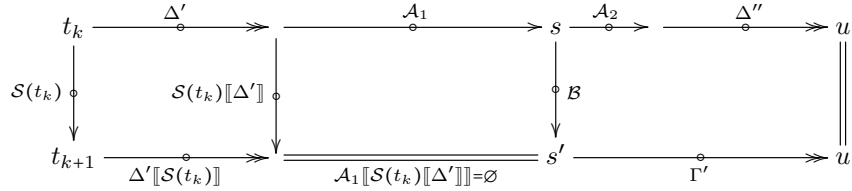
To conclude the construction of $\mreda_{k+1}$, it suffices to obtain a \mredseq\ $\mredb'$ such that $s' \mred{\mredb'} u$ and $\mredms{\mredb'} \leqms \mredms{\setreda_2 ; \mreda''} \ltms \mredms{\setreda ; \mreda''}$; 
taking the multisteps of a \mredseq\ in \emph{reversed} order in the measure allows
to assert $\mredms{\mreda_{k+1}} \ltms \mredms{\mreda_k}$. Building such a $\mredb'$ is the most demanding part of the proof. Following~\cite{sekar-rama}, this construction is based on the following observations:
\begin{deliae}
\begin{itemize}
\item 
\textbf{Partition of each multistep in free and \domintext\ parts.}
Each multistep comprising $\setreda_2 ; \mreda''$ can be partitioned into a free and an \domintext\ part \wrt\ $\setredb$, as remarked in Sec.~\ref{sec:free-domin} after the definition of free and \domintext\ multisteps.

\item 
\textbf{Postponement of \domintext\ parts.}  We prove that each
\domintext\ part can be postponed, \ie\ permuted with a subsequent
free part, preserving the free and \domintext\ nature of the permuted
multisteps, and the depth of the free part
(\confer\ Lemmas~\ref{rsl:dominated-stable-contraction},
\ref{rsl:free-setred-from-tgt-to-src} and \ref{rsl:postponement}). We
describe this phenomenon in more detail at the beginning of
Sec.~\ref{sec:main-proof}.  
\item 
\textbf{Irrelevance of postponed \domintext\ parts.}
Since $\setredb$ is not used and $u \in \NForms$ implies $\residus{\setredb}{\setreda_2 ; \mreda''} = \emptyset$, we prove that the (postponed) \domintext\ part  can be simply ignored when defining $\mredb'$ (\confer\ Lemmas~\ref{rsl:dominated-erasure} and \ref{rsl:free-equiv-mred}).
\item
\textbf{Measure of free multireduction does not increase in projection.}
Since  $\strs(t_k)$ is \ngrip, hence also $\setredb$ is \ngrip, the depth of the free part of each \tredexset\ can be proven greater or equal than that of its residual after (the corresponding residual of) $\setredb$ (\confer\ Lemmas~\ref{rsl:residual-same-depth} and \ref{rsl:residual-same-depth-mred}).
This is the reason for the introduction of gripping, which then allows to apply the general structure of the proof in \cite{sekar-rama} in the abstract setting of this work.
\end{itemize}
\end{deliae}

 We describe the details in the remainder of this section.


\subsection{Relevance of gripping}
In this section we develop the abstract normalisation proof up to the result showing the relevance of the notion of \emph{gripping}, \ie\ the invariance of the depth of a \tredexset\ $\setreda$ after the contraction of a \tredexset\ $\setredb$, if
$\setreda$ is independent from $\setredb$, \ie\  $\setreda \freefrom \setredb$ and $\setreda \notgrip \setredb$, \confer\ \thelem~\ref{rsl:residual-same-depth}; 
and the extension of such invariance to the measure of \mredseqs\ after the contraction of a \ngrip\ set, \confer\ \thelem~\ref{rsl:residual-same-depth-mred}.

\begin{lemma}[\delia{\Superfreeness\ preservation}]
\label{rsl:free-grip-preservation-by-free-contraction}
Consider $\setreda, \setredb$ such that $\setreda \freefrom \setredb$, $\setreda \notgrip \setredb$, and $d \in \setreda$.
Then $\residus{\setreda}{d} \freefrom \residus{\setredb}{d}$ and $\residus{\setreda}{d} \notgrip \residus{\setredb}{d}$.
\end{lemma}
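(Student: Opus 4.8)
The plan is to reduce both claims to element-wise statements and prove each by contradiction, pulling the residual relation back through the contracted step $d$ by means of the gripping and embedding axioms. Concretely, any member of $\residus{\setreda}{d}$ is a residual $a'$ of some $a \in \setreda$ after $d$ (note that by \axiomuse{\axSelfReduction} the element $d$ itself contributes no residual), and any member of $\residus{\setredb}{d}$ is a residual $b'$ of some $b \in \setredb$, where $\residu{a}{d}{a'}$ and $\residu{b}{d}{b'}$. Since $d$, $a$ and $b$ share a common source they are coinitial, and $a'$, $b'$ are coinitial in $\rtgt{d}$, so the relations $\freefrom$ and $\notgrip$ between $\residus{\setreda}{d}$ and $\residus{\setredb}{d}$ are meaningful. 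From the hypotheses I record, for use below, that $a \freefrom \setredb$ (hence $a \neq b$ and $b \not< a$) and that no element of $\setreda$ grips any element of $\setredb$.

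For the claim $\residus{\setreda}{d} \freefrom \residus{\setredb}{d}$ I must show $a' \not\geq b'$, i.e. $a' \neq b'$ and $b' \not< a'$. First, $a' = b'$ together with $\residu{a}{d}{a'}$ and $\residu{b}{d}{a'}$ would force $a = b$ by \axiomuse{\axAncestorUniqueness}, contradicting $a \freefrom \setredb$; hence $a' \neq b'$. Next, assume towards a contradiction that $b' < a'$. Instantiating \axiomuse{\axFda} with $d$ as the contracted step and with $b, a$ traced to $b', a'$ (so that the assumption $b' < a'$ matches its premise) yields $b < a \,\lor\, (d \grip b \,\land\, d < a)$. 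The left disjunct contradicts $a \freefrom \setredb$, which gives $b \not< a$; the right disjunct gives $d \grip b$ with $d \in \setreda$ and $b \in \setredb$, i.e. $\setreda \grip \setredb$, contradicting $\setreda \notgrip \setredb$. Hence $b' \not< a'$, and the first claim follows.

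For the claim $\residus{\setreda}{d} \notgrip \residus{\setredb}{d}$ I assume towards a contradiction that $a' \grip b'$. Instantiating \axiomuse{\axFdb} with $d$ as the contracted step and with $a, b$ traced to $a', b'$ (so that $a' \grip b'$ matches its premise) yields $a \grip b \,\lor\, a \grip d \grip b$. In the first case $a \in \setreda$ grips $b \in \setredb$; in the second case the sub-conjunct $d \grip b$ has $d \in \setreda$ and $b \in \setredb$. Either way some element of $\setreda$ grips some element of $\setredb$, i.e. $\setreda \grip \setredb$, again contradicting $\setreda \notgrip \setredb$. This establishes the second claim.

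The proof is short because the three axioms do the real work; the one delicate point, and where I expect to spend the most care, is the bookkeeping in the instantiations of \axiomuse{\axFda} and \axiomuse{\axFdb}: one must fix $d$ in the role of the contracted step and assign the traced steps so that the respective premises ($b' < a'$ and $a' \grip b'$) line up with the conclusions of the axioms, while keeping straight the reading of $<$ (outer embeds inner, so that $a \freefrom \setredb$ supplies $b \not< a$) and of $\grip$ (so that every disjunct produced names a gripping of an element of $\setredb$ by an element of $\setreda$). Neither \axiomuse{\axFdc} nor the enclave axioms are needed.
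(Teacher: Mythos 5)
Your proof is correct and follows essentially the same route as the paper's: reduce to element-wise claims on residuals $a'$, $b'$, rule out $a'=b'$ by \axiomuse{\axAncestorUniqueness}, rule out $b'<a'$ by \axiomuse{\axFda}, and rule out $a'\grip b'$ by \axiomuse{\axFdb}, each disjunct contradicting either $\setreda \freefrom \setredb$ or $\setreda \notgrip \setredb$ (the paper just frontloads the degenerate cases $\setredb=\emptyset$, $a=d$, or empty residual sets, which your element-wise formulation handles vacuously). One cosmetic remark: your English gloss of $\grip$ is inverted relative to the paper's convention ($a \grip b$ reads ``$b$ grips $a$''), but since all your derivations are carried out in the symbolic notation, this does not affect the argument.
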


\begin{proof}
   If $\setredb=\emptyset$, then the result holds trivially since also
   $\residus{\setredb}{d}=\emptyset$. So assume $b \in
   \setredb$. Next, we may assume some $a \in \setreda$ s.t. $a\neq
   d$. Otherwise $\residus{\setreda}{d} = \emptyset$ and the result
   also holds trivially. For the same reason, we may assume $\residu{a}{d}{a'}$ for
   some $a'$. Similarly, we may assume there exists $b'$
   s.t. $\residu{b}{d}{b'}$.

  The hypotheses implies the following: $b \not\leq a$, $b \not\leq d$, $a \notgrip b$, and $d
  \notgrip b$.

Observe $b' = a'$ would contradict \axiomuse{\axAncestorUniqueness}.
On the other hand, $b' < a'$ would imply $b < a \,\lor\, (d \grip b \land d < a)$ by \axiomuse{\axFda}, while $a' \grip b'$ would imply $a \grip b \,\lor\, a \grip d \grip b$ by \axiomuse{\axFdb}.
Therefore, either case would contradict the hypotheses.
Thus we conclude.
\end{proof}

\begin{lemma}[\delia{Depth preservation}]
\label{rsl:residual-same-depth}
Let $\setreda, \setredb \subseteq \ROccur{t}$ such that $\setreda \freefrom \setredb$ and $\setreda \notgrip \setredb$. Then $\depth{\setreda} = \depth{\residus{\setreda}{\setredb}}$.
\end{lemma}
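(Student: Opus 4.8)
The plan is to argue by induction on $\depth{\setreda}$, relying on the recursive shape of the depth function and on \thelem~\ref{rsl:free-grip-preservation-by-free-contraction} to propagate the \superfreeness\ hypotheses down the induction. First I would record the basic recursion: since every complete development of a non-empty multistep starts with one of its own steps, one has $\depth{\setreda} = 1 + \max_{d \in \setreda} \depth{\residus{\setreda}{d}}$ for $\setreda \neq \emptyset$ and $\depth{\emptyset} = 0$, the maximum being attained because depths are finite by \axiomuse{\axFiniteResiduals}, \axiomuse{\axFD}\ and König's Lemma. The base case $\setreda = \emptyset$ is immediate, as then $\residus{\setreda}{\setredb} = \emptyset$ and both depths are $0$.

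For the inductive step I fix $d \in \setreda$. Because $d \freefrom \setredb$, \thelem~\ref{rsl:free-then-unique-residual} provides a unique residual $\residus{d}{\setredb} = \set{d'}$; hence $\residus{\setreda}{\setredb} \neq \emptyset$ and the assignment $d \mapsto d'$ sends $\setreda$ onto $\residus{\setreda}{\setredb}$. The technical core is then the commutation identity
\[ \residus{\residus{\setreda}{d}}{\residus{\setredb}{d}} = \residus{\residus{\setreda}{\setredb}}{d'} . \]
I would obtain it by applying the first item of \theprop~\ref{rsl:SOredexset} to the coinitial multisteps $\set{d}$ and $\setredb$: it asserts that $\set{d} ; \residus{\setredb}{d}$ and $\setredb ; \set{d'}$ induce the same residual relation, and rewriting each side with the composition law $\residus{\cdot}{x ; y} = \residus{\residus{\cdot}{x}}{y}$ gives exactly the displayed equation.

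Next, \thelem~\ref{rsl:free-grip-preservation-by-free-contraction} applied at $d$ guarantees $\residus{\setreda}{d} \freefrom \residus{\setredb}{d}$ and $\residus{\setreda}{d} \notgrip \residus{\setredb}{d}$, so the pair $\residus{\setreda}{d}, \residus{\setredb}{d}$ still satisfies the hypotheses; since $\depth{\residus{\setreda}{d}} < \depth{\setreda}$, the \ih\ applies and, through the identity above, yields $\depth{\residus{\setreda}{d}} = \depth{\residus{\residus{\setreda}{\setredb}}{d'}}$. Reading the depth recursion twice then closes both inequalities: for each $d$ the recursion for $\residus{\setreda}{\setredb}$ gives $\depth{\residus{\setreda}{\setredb}} \geq 1 + \depth{\residus{\setreda}{d}}$, whence $\depth{\residus{\setreda}{\setredb}} \geq \depth{\setreda}$ by maximising over $d$; and since every $d' \in \residus{\setreda}{\setredb}$ arises as some $\residus{d}{\setredb}$, the recursion for $\setreda$ gives $\depth{\setreda} \geq 1 + \depth{\residus{\residus{\setreda}{\setredb}}{d'}}$, whence $\depth{\setreda} \geq \depth{\residus{\setreda}{\setredb}}$. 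Equality follows.

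The hard part is the commutation identity, where I must be sure that contracting $\setredb$ and then the residual $d'$ of $d$ leaves the very same multistep as contracting $d$ first and then $\residus{\setredb}{d}$; this is exactly what semantic orthogonality (\theprop~\ref{rsl:SOredexset}) supplies. The gripping axioms, by contrast, intervene only indirectly, through \thelem~\ref{rsl:free-grip-preservation-by-free-contraction}, whose sole purpose here is to keep $\setreda$ \superfree\ from $\setredb$ after the contraction of $d$, so that the induction hypothesis remains applicable; this is precisely where the hypothesis $\setreda \notgrip \setredb$ is indispensable, as the van Oostrom example in Sec.~\ref{sec:gripping} shows that depth may strictly increase once gripping is allowed.
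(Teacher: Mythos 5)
Your proof is correct and takes essentially the same route as the paper's: induction on $\depth{\setreda}$, with \thelem~\ref{rsl:free-grip-preservation-by-free-contraction} preserving the hypotheses for the inductive call, \thelem~\ref{rsl:free-then-unique-residual} supplying the unique residual $d'$, and \theprop~\ref{rsl:SOredexset} giving the commutation identity. The only difference is presentational: the paper derives the two inequalities by taking the first step of a longest complete development on each side, whereas you state the depth recursion once and maximise over the correspondence $d \mapsto d'$; the underlying argument is the same.
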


\begin{proof}
By induction on $\depth{\setreda}$.
If $\setreda = \emptyset$, then $\residus{\setreda}{\setredb} = \emptyset$ and we conclude.
Otherwise, let $\reda = d ; \reda'$ such that $\reda \develops \setreda$ and $\depth{\setreda} = \rlength{\reda}$.
Observe that $\reda' \develops \residus{\setreda}{d}$, implying $\depth{\setreda} = \depth{\residus{\setreda}{d}} + 1$.
\theLem~\ref{rsl:free-grip-preservation-by-free-contraction} allows to apply the \ih, obtaining $\depth{\residus{\setreda}{d}} = \depth{\residus{\residus{\setreda}{d}}{\residus{\setredb}{d}}}$, so that \theprop~\ref{rsl:SOredexset}:(\ref{it:SOredexset}) yields 
$\depth{\residus{\setreda}{d}} = \depth{\residus{\residus{\setreda}{\setredb}}{\residus{d}{\setredb}}}$.
In turn, \thelem~\ref{rsl:free-then-unique-residual} implies $\residus{d}{\setredb} = \set{d'}$ for some step $d'$.
\begin{center}
$\xymatrix@C=52pt@R-5pt{ 
t \ar[r]^{d} \arMulti{d}_{\setredb\,}
& \arMulti{rr}^{\residus{\setreda}{d}}  \arMulti{d}^{\,\residus{\setredb}{d}}
& &  \\
\ar[r]_{\residus{d}{\setredb} = \set{d'}} & \arMulti{rr}_{\residus{\residus{\setreda}{d}}{\residus{\setredb}{d}} = \residus{\residus{\setreda}{\setredb}}{\residus{d}{\setredb}}} & & 
}$
\end{center}
Therefore, for any $\redb$ such that $\redb \develops \residus{\residus{\setreda}{\setredb}}{\residus{d}{\setredb}}$, we have $d'; \redb \develops \residus{\setreda}{\setredb}$.
Consequently,
$\depth{\setreda}
		\leq \depth{\residus{\setreda}{d}} + 1
		= \depth{\residus{\residus{\setreda}{\setredb}}{\residus{d}{\setredb}}} + 1
		\leq \depth{\residus{\setreda}{\setredb}}$.

Conversely, consider $\redb = e' ; \redb'$ such that $\redb \develops \residus{\setreda}{\setredb}$ and $\depth{\residus{\setreda}{\setredb}} = \rlength{\redb}$; observe that $\redb' \develops \residus{\residus{\setreda}{\setredb}}{e'}$.
Let $e \in \setreda$ such that $\residu{e}{\setredb}{e'}$.
\theLem~\ref{rsl:free-then-unique-residual} implies $\residus{e}{\setredb} = \set{e'}$, implying that $\redb' \develops \residus{\residus{\setreda}{\setredb}}{\residus{e}{\setredb}}$, so that \theprop~\ref{rsl:SOredexset}:(\ref{it:SOredexset}) yields $\redb' \develops \residus{\residus{\setreda}{e}}{\residus{\setredb}{e}}$.
Therefore, $\depth{\residus{\setreda}{\setredb}} \leq \depth{\residus{\residus{\setreda}{e}}{\residus{\setredb}{e}}} + 1$.
Again, \thelem~\ref{rsl:free-grip-preservation-by-free-contraction} allows to apply the \ih, to obtain $\depth{\residus{\setreda}{e}} = \depth{\residus{\residus{\setreda}{e}}{\residus{\setredb}{e}}}$.
In turn, for any $\reda$ such that $\reda \develops \residus{\setreda}{e}$, we get $e ; \reda \develops \setreda$.
Consequently, 
$\depth{\residus{\setreda}{\setredb}}
	\leq \depth{\residus{\residus{\setreda}{e}}{\residus{\setredb}{e}}} + 1 
	= \depth{\residus{\setreda}{e}} + 1
	\leq \depth{\setreda}$.
Thus we conclude.
\end{proof}

\delia{
The following result lifts Lemma~\ref{rsl:residual-same-depth} to \mredseqs.
By replacing local absence of gripping to the hereditary \ngrip\ property, we enforce \superfreeness\ of the \mredseq\ from the given \tredexset. Hence, invariance of depth for a \tredexset\ can be lifted to invariance of measure for a \mredseq.
}

\begin{lemma}[\delia{Measure preservation}]
\label{rsl:residual-same-depth-mred}
Let $\mreda$ be a multireduction and $\setredb$ a \tredexset, such
that $\mreda$ and $\setredb$ are coinitial, $\setredb$ is
\ngrip\ and $\mreda \freefrom \setredb$.  Then $\mredms{\mreda} =
\mredms{\residus{\mreda}{\setredb}}$.
\end{lemma}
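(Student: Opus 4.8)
The plan is to proceed by induction on $\rlength{\mreda}$, exploiting that both the residual operation $\residus{(\cdot)}{\setredb}$ and the \emph{free from} relation on multireductions are defined by recursion on the structure of $\mreda$, and that (as observed in Sec.~\ref{sec:multireductionsInARS}) $\rlength{\residus{\mreda}{\setredb}} = \rlength{\mreda}$, so the two measures being compared are tuples of the same length. Since equality of measures amounts to componentwise equality of depths, the reversed order in which the depths are listed is harmless here; it will only matter later when the strict ordering $\ltms$ is used. The crux is therefore to match, positionwise, the depth of each multistep of $\mreda$ with that of the corresponding multistep of $\residus{\mreda}{\setredb}$, and each such match is exactly an instance of the depth-preservation \theLem~\ref{rsl:residual-same-depth}.

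The base case $\mreda = \nil_{\rsrc{\setredb}}$ is immediate, as $\residus{\mreda}{\setredb} = \nil_{\rtgt{\setredb}}$ and both have the empty tuple as measure. For the inductive step I write $\mreda = \setreda; \mreda'$, so that $\residus{\mreda}{\setredb} = \residus{\setreda}{\setredb}; \residus{\mreda'}{\residus{\setredb}{\setreda}}$ by definition. From $\mreda \freefrom \setredb$, the definition of \emph{free from} for multireductions gives both $\setreda \freefrom \setredb$ and $\mreda' \freefrom \residus{\setredb}{\setreda}$. To apply \theLem~\ref{rsl:residual-same-depth} to the head I still need $\setreda \notgrip \setredb$, and this is where the \ngrip hypothesis enters: instantiating the \ngrip definition with $\mredd = \nil_{\rsrc{\setredb}}$ yields $\ROccur{\rsrc{\setredb}} \notgrip \setredb$, and since $\setreda$ and $\setredb$ are coinitial we have $\setreda \subseteq \ROccur{\rsrc{\setredb}}$, hence $\setreda \notgrip \setredb$. \theLem~\ref{rsl:residual-same-depth} then delivers $\depth{\setreda} = \depth{\residus{\setreda}{\setredb}}$. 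For the tail, $\residus{\setredb}{\setreda}$ is again \ngrip (every residual of a \ngrip multistep is \ngrip), it is coinitial with $\mreda'$, and $\mreda' \freefrom \residus{\setredb}{\setreda}$; so the \ih\ applies and gives $\mredms{\mreda'} = \mredms{\residus{\mreda'}{\residus{\setredb}{\setreda}}}$. Because the measure lists depths in reversed order, $\mredms{\setreda; \mreda'}$ is $\mredms{\mreda'}$ extended with $\depth{\setreda}$ as its last component, and likewise $\mredms{\residus{\mreda}{\setredb}}$ is $\mredms{\residus{\mreda'}{\residus{\setredb}{\setreda}}}$ extended with $\depth{\residus{\setreda}{\setredb}}$; combining the two equalities yields $\mredms{\mreda} = \mredms{\residus{\mreda}{\setredb}}$.

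The difficulties here are bookkeeping rather than conceptual. The one point that must be handled with care is that the two hypotheses of \theLem~\ref{rsl:residual-same-depth} have to be available at \emph{every} position, not merely the first; the recursive shape of the \emph{free from} relation and of the \ngrip\ property guarantees this, the latter being made fully transparent by its coinductive characterisation (\theLem~\ref{l:equivalencia-ng}), which states precisely that $\ROccur{\rsrc{\setredb}} \notgrip \setredb$ and that $\residus{\setredb}{\setreda}$ is again \ngrip --- exactly the two facts consumed in the inductive step. Should one prefer to avoid induction, an equivalent route is to argue componentwise: the $k$-th multistep of $\residus{\mreda}{\setredb}$ is $\residus{\mredel{\mreda}{k}}{\residus{\setredb}{\mredunt{\mreda}{k-1}}}$, one obtains $\mredel{\mreda}{k} \freefrom \residus{\setredb}{\mredunt{\mreda}{k-1}}$ from $\mreda \freefrom \setredb$ via \theLem~\ref{rsl:free-from-parts-mredseq}, and $\mredel{\mreda}{k} \notgrip \residus{\setredb}{\mredunt{\mreda}{k-1}}$ by instantiating the \ngrip\ definition with $\mredd = \mredunt{\mreda}{k-1}$, after which \theLem~\ref{rsl:residual-same-depth} equates the $k$-th depths for each $k$.
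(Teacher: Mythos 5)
Your proof is correct and follows essentially the same route as the paper's: induction on $\rlength{\mreda}$, decomposing $\mreda = \setreda;\mreda'$, applying \theLem~\ref{rsl:residual-same-depth} to the head and the induction hypothesis to the tail with the residual $\residus{\setredb}{\setreda}$ remaining \ngrip. The only difference is that you spell out explicitly (via the instantiation $\mredd = \nil_{\rsrc{\setredb}}$) how $\setreda \notgrip \setredb$ follows from the \ngrip\ hypothesis, a step the paper compresses into a single ``Observe''.
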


\begin{proof}
By induction on $\rlength{\mreda}$.
If $\mreda = \nil_{\rsrc{\setredb}}$, then $\residus{\mreda}{\setredb} = \nil_{\rtgt{\setredb}}$, so
we conclude immediately. Assume, therefore, $\mreda = \setreda; \mreda'$, so that $\residus{\mreda}{\setredb} = \residus{\setreda}{\setredb} ; \residus{\mreda'}{\residus{\setredb}{\setreda}}$.
Observe $\setreda \freefrom \setredb$, $\setreda \notgrip \setredb$, $\mreda' \freefrom \residus{\setredb}{\setreda}$ and $\residus{\setredb}{\setreda}$ is \ngrip.
Then Lem.~\ref{rsl:residual-same-depth} implies $\depth{\setreda} = \depth{\residus{\setreda}{\setredb}}$, and 
the \ih\ on $\mreda'$ yields $\mredms{\mreda'} = \mredms{\residus{\mreda'}{\residus{\setredb}{\setreda}}}$.
Thus we conclude.
\end{proof}


\subsection{Normalisation proof}
\label{sec:main-proof}

\begin{figure}[t]
\begin{center}
$\begin{array}{c@{\qquad}c}
\xymatrix@R=18pt@C=35pt{
     & & & \\
    & t \ar[d]|{\circ}_{\setredc \dominated \setredb \,}\arMulti{ul}_{\setredb} \ar@{.>}[r]|{\circ}_{\setreda \, \freefrom \, \setredb} 
		& \ar@{.>}[d]|{\circ}^{\ \residus{\setredc}{\setreda} \,\dominated\, \residus{\setredb}{\setreda}} \arMulti{ur}^{\residus{\setredb}{\setreda}} & \\
     & s \ar[r]|{\circ}_{\setreda' \,\freefrom \, \residus{\setredb}{\setredc}}\arMulti{dl}^{\residus{\setredb}{\setredc}}
     & u              &\\
    & \ar@{}[r]^{\setreda' = \residus{\setreda}{\setredc}} & &
}
&
\xymatrix@R=18pt@C=35pt{
     & & & & \\
    & t \ar[d]|{\circ}_{\setredc \dominated \setredb \,} \arMulti{ul}_{\setredb} \ar@{.>>}[rr]|{\circ}_{\mreda \, \freefrom \, \setredb} 
		& & \ar@{.>}[d]|{\circ}^{\ \residus{\setredc}{\mreda} \,\dominated\, \residus{\setredb}{\mreda}} \arMulti{ur}^{\residus{\setredb}{\mreda}} & \\
     & s \ar@{->>}[rr]|{\circ}_{\mreda' \,\freefrom \, \residus{\setredb}{\setredc}}\arMulti{dl}^{\residus{\setredb}{\setredc}}
     & & u       &\\
    & \ar@{}[rr]^{\mreda' = \residus{\mreda}{\setredc}} & & &
}
\end{array}$
\end{center}
\vspace{-3mm}
\caption{Postponement of \domintext\ multisteps: the one step and multiple step cases; \thelem~\ref{rsl:free-setred-from-tgt-to-src} and \thelem~\ref{rsl:postponement} respectively.}
\label{fig:postponementOfDominatedMultisteps}
\vspace{-3mm}
\end{figure}
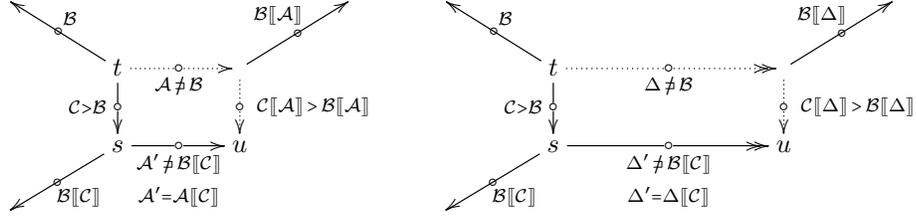

The next ingredient in the normalisation proof is the ability to
\emph{postpone}\index{postponement} an \domintext\ \tredexset\ after a
free \tredexset\ or \mredseq.  The situation is described in
Fig.~\ref{fig:postponementOfDominatedMultisteps}.  The diagram on the
left shows that an \domintext\ (by $\setredb$) multistep can be
\emph{postponed} after a free (from $\residus{\setredb}{\setredc}$)
one, yielding a \mredseq\ in which the free \tredexset\ precedes the
\domintext\ one (Lem.~\ref{rsl:free-setred-from-tgt-to-src}).
\delia{Moreover, the \emph{depth of the free \tredexset\ is preserved}
  by the postponement. To enforce this we show, resorting to
  \axiomuse{\axFdc}, that the \domintext\ \tredexset\ does not grip
  the (ancestor of the) free one, so that
  Lemma~\ref{rsl:residual-same-depth} can be applied. This is the only
  r\^ole of \axiomuse{\axFdc} in the normalisation proof.}

The diagram on the right shows that a \domintext\ \tredexset\ can
be postponed after a free \emph{\mredseq} as well (Lem.~\ref{rsl:postponement}).

We observe that the only r\^ole of the added \axiomuse{\axMisterious} axiom in the normalisation proof, is to verify that $\residus{\setredc}{\setreda} \dominated \residus{\setredb}{\setreda}$ in the left-hand side diagram.

\begin{lemma}[\delia{\Domination\ preservation}]
\label{rsl:dominated-stable-contraction}
Let $\setreda, \setredb, \setredc \subseteq \ROccur{t}$ such that $\setreda \cap\setredb = \emptyset$ and $\setredc \dominated \setredb$.
Then $\residus{\setredc}{\setreda} \dominated \residus{\setredb}{\setreda}$.
\end{lemma}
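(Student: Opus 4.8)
The plan is to argue by induction on $\depth{\setreda}$, reducing the statement to a single-step claim that is settled by the case analysis already sketched when motivating \axiomuse{\axMisterious} (Fig.~\ref{fig:arbolito}). In the base case $\setreda = \emptyset$ we have $\residus{\setredc}{\emptyset} = \setredc$ and $\residus{\setredb}{\emptyset} = \setredb$, so the conclusion is exactly the hypothesis $\setredc \dominated \setredb$. For the inductive step I would fix any $a \in \setreda$; the heart of the argument is the single-step claim $\residus{\setredc}{a} \dominated \residus{\setredb}{a}$, and granting it I push the result through $\residus{\setreda}{a}$, which has strictly smaller depth than $\setreda$.

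To establish the single-step claim, take $c \in \setredc$ and a residual $c' \in \residus{c}{a}$; since $\setredc \dominated \setredb$ there is $b \in \setredb$ with $b < c$, and from $\setreda \cap \setredb = \emptyset$ we get $a \neq b$. Two things must be checked: that $c' \notin \residus{\setredb}{a}$, and that some $b'$ with $\residu{b}{a}{b'}$ satisfies $b' < c'$. The first is \axiomuse{\axAncestorUniqueness}: were $c'$ also a residual of some $b_0 \in \setredb$ after $a$, then $c = b_0 \in \setredb$, contradicting $c \dominated \setredb$ (which forces $c \notin \setredb$). For the existence of $b'$ I would run the case analysis on the embedding relations among $a$, $b$, $c$. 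If $a \not< c$, then $a \not< b$ (otherwise $a < b < c$ would give $a < c$), so $a \not\leq b$ and \axiomuse{\axLinearity} yields a unique $b'$ with $\residu{b}{a}{b'}$; \axiomuse{\axCtxFreeness}, whose disjunct $a < c$ fails, then turns $b < c$ into $b' < c'$. If $a < c$ and $b < a$, then again $a \not\leq b$, so \axiomuse{\axLinearity} produces $b'$, and \axiomuse{\axEnclaveEmbedding} applied to $b < a < c$ gives $b' < c'$. Finally, if $a < c$ and $b \not< a$, then $b \not\leq a$, and \axiomuse{\axMisterious} applied to $a < c$, $b < c$, $b \not\leq a$, $\residu{c}{a}{c'}$ directly yields $b'$ with $\residu{b}{a}{b'}$ and $b' < c'$. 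Since $b' \in \residus{\setredb}{a}$, in every case $c' \dominated \residus{\setredb}{a}$, which proves the single-step claim.

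To close the induction I would combine the single-step claim with the induction hypothesis applied to the triple $\residus{\setredc}{a}$, $\residus{\setreda}{a}$, $\residus{\setredb}{a}$. Its hypotheses hold: $\residus{\setreda}{a} \cap \residus{\setredb}{a} = \emptyset$ follows from $\setreda \cap \setredb = \emptyset$ together with \axiomuse{\axAncestorUniqueness}, while $\residus{\setredc}{a} \dominated \residus{\setredb}{a}$ is the single-step claim. Since $\depth{\residus{\setreda}{a}} < \depth{\setreda}$, the induction hypothesis gives $\residus{\residus{\setredc}{a}}{\residus{\setreda}{a}} \dominated \residus{\residus{\setredb}{a}}{\residus{\setreda}{a}}$, and rewriting both sides with the identities $\residus{\setredc}{\setreda} = \residus{\residus{\setredc}{a}}{\residus{\setreda}{a}}$ and $\residus{\setredb}{\setreda} = \residus{\residus{\setredb}{a}}{\residus{\setreda}{a}}$ (valid for $a \in \setreda$) yields $\residus{\setredc}{\setreda} \dominated \residus{\setredb}{\setreda}$. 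The main obstacle is the single-step claim, and within it the third case $a < c$, $b \not< a$: it is covered by neither \axiomuse{\axCtxFreeness} (which needs $a \not< c$) nor \axiomuse{\axEnclaveEmbedding} (which needs $b < a$), and this gap is precisely what the new \axiomuse{\axMisterious} axiom is introduced to fill.
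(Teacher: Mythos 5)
Your proposal is correct and takes essentially the same route as the paper's own proof: induction on $\depth{\setreda}$, a single-step claim settled by the identical three-case analysis (\axiomuse{\axLinearity} with \axiomuse{\axCtxFreeness} when $a \not< c$; \axiomuse{\axLinearity} with \axiomuse{\axEnclaveEmbedding} when $b < a < c$; \axiomuse{\axMisterious} when $a < c$ and $b \not\leq a$), followed by the inductive step using \axiomuse{\axAncestorUniqueness} for disjointness and the composition identity $\residus{\setredc}{\setreda} = \residus{\residus{\setredc}{a}}{\residus{\setreda}{a}}$. If anything, you are more explicit than the paper in verifying the non-membership half of the embedded-by relation ($c' \notin \residus{\setredb}{a}$, via \axiomuse{\axAncestorUniqueness}), a point the paper's proof leaves implicit.
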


\begin{proof}
  By induction on $\depth{\setreda}$. If $\setreda = \emptyset$, then
  $\residus{\setredc}{\setreda} = \setredc$ and $\residus{\setredb}{\setreda} = \setredb$, so that
  we conclude immediately. Otherwise, consider $a \in \setreda$ and $c' \in \residus{\setredc}{a}$
  (if $\residus{\setredc}{a}=\emptyset$, then $\residus{\setredc}{\setreda}=\emptyset$ and
    $\residus{\setredc}{\setreda} \dominated \residus{\setredb}{\setreda}$ holds trivially).
  Let $c \in \setredc$ such that $c'
  \in \residus{c}{a}$. Note that $a\neq c$ for otherwise $\residus{c}{a}=\emptyset$. We will
  verify the existence of some $b' \in \residus{\setredb}{a}$ such that $b' < c'$, so that
  $\residus{\setredc}{a} \dominated \residus{\setredb}{a}$. Let $b \in \setredb$ be such that $b <
  c$, as follows from the hypothesis. Observe that $a = b$ or $a = c$ would contradict,
  respectively, the hypotheses of this lemma or our observation above on the existence of
  $c'$. Therefore $a
  \neq b$ and $a \neq c$. We consider two cases.

\begin{enumerate}
\item \textbf{Case $a \not < c$.} Then $b < c$ implies $a \not< b$, so that
\axiomuse{\axLinearity} implies $\residus{b}{a} = \set{b'}$, and then
\axiomuse{\axCtxFreeness} applies to obtain $b' < c'$. 


\item \textbf{Case $a < c$. }If $b < a$, \ie\ $b < a < c$, then \axiomuse{\axLinearity} implies $\residus{b}{a} = \set{b'}$ (since $a \not < b$), and therefore \axiomuse{\axEnclaveEmbedding} applies to obtain $b' < c'$. Otherwise, we have $a < c$, $b < c$ and $b \not\leq a$, then \axiomuse{\axMisterious} applies to obtain $\residu{b}{a}{b'}$ and $b' < c'$ for some $b'$.
\end{enumerate}

Hence, we have verified $\residus{\setredc}{a} \dominated \residus{\setredb}{a}$. Moreover,
\axiomuse{\axAncestorUniqueness} yields $\residus{\setreda}{a} \cap \residus{\setredb}{a} =
\emptyset$. Consequently, we can apply the \ih\ on $\residus{\setreda}{a}$, obtaining $\residus{\residus{\setredc}{a}}{\residus{\setreda}{a}} \dominated \residus{\residus{\setredb}{a}}{\residus{\setreda}{a}}$. Thus we conclude.
\end{proof}

\begin{lemma}[\delia{Postponement after a multistep}]
\label{rsl:free-setred-from-tgt-to-src}
Let $\setredb \subseteq \ROccur{t}$, $t \mstep{\setredc} s \mstep{\setreda'} u$, such that $\setredc \dominated \setredb$, $\setreda' \freefrom \residus{\setredb}{\setredc}$ and $\setredb$ is \ngrip. 
Then there exists $\setreda \subseteq \ROccur{t}$ s.t. $\setreda' = \residus{\setreda}{\setredc}$,
$\setreda \freefrom \setredb$ and $\depth{\setreda} = \depth{\setreda'}$.
\end{lemma}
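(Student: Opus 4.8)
I would prove the statement by induction on $\depth{\setredc}$, peeling one step off a complete development of $\setredc$ at a time and invoking the induction hypothesis, thereby reducing the whole lemma to a \emph{single-step pullback}. The base case $\setredc=\emptyset$ is immediate: take $\setreda:=\setreda'$, since then $\residus{\setredb}{\setredc}=\setredb$ and $\residus{\setreda}{\setredc}=\setreda$. For the inductive step I pick $c\in\setredc$, so that $c\notin\setredb$ (since $\setredc\dominated\setredb$) and $b<c$ for some $b\in\setredb$, and write the development of $\setredc$ as $c$ followed by a development of $\setredc_1:=\residus{\setredc}{c}$, with $\depth{\setredc_1}<\depth{\setredc}$. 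By \thelem~\ref{rsl:dominated-stable-contraction} (contracting $\set{c}$) I get $\setredc_1\dominated\residus{\setredb}{c}$; moreover $\residus{\setredb}{c}$ is again \ngrip, and $\setreda'\freefrom\residus{\residus{\setredb}{c}}{\setredc_1}=\residus{\setredb}{\setredc}$. The induction hypothesis at $\rtgt{c}$ then furnishes a \tredexset\ $\setreda_1$ with $\residus{\setreda_1}{\setredc_1}=\setreda'$, $\setreda_1\freefrom\residus{\setredb}{c}$ and $\depth{\setreda_1}=\depth{\setreda'}$; it remains to pull $\setreda_1$ back along the single step $c$.

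\textbf{The single-step pullback.} I would take $\setreda$ to be the set of ancestors of $\setreda_1$ along $c$. First these ancestors must exist: if some $a_1\in\setreda_1$ were created by $c$, then from $b<c$ and $\residus{b}{c}=\set{b'}$ (\axiomuse{\axLinearity}, as $c\not\leq b$) the axiom \axiomuse{\axEnclaveCreation} gives $b'<a_1$ with $b'\in\residus{\setredb}{c}$, contradicting $a_1\freefrom\residus{\setredb}{c}$; hence $a_1$ has an ancestor, unique by \axiomuse{\axAncestorUniqueness}. Freeness $\setreda\freefrom\setredb$ is then checked for each ancestor $a$ (with $\residu{a}{c}{a_1}$): the case $a=b_0\in\setredb$ is impossible since it would put $a_1\in\residus{\setredb}{c}$, and the case $b_0<a$ for some $b_0\in\setredb$ is impossible because \thelem~\ref{rsl:dominated-stable-contraction} applied \emph{in reverse} (contracting $\set{c}$ with $\set{a}\dominated\set{b_0}$) yields $\residus{a}{c}\dominated\residus{b_0}{c}$, so $a_1$ would be embedded by a residual of $b_0$. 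Finally $\residus{\setreda}{c}=\setreda_1$: using $\setreda\freefrom\setredb$ and $b<c$, transitivity rules out $c<a$, so $c\not\leq a$ and \axiomuse{\axLinearity} makes $a_1$ the unique residual of $a$; together with \axiomuse{\axAncestorUniqueness} this gives a bijection $a\mapsto a_1$ between $\setreda$ and $\setreda_1$.

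\textbf{Depth preservation.} To obtain $\depth{\setreda}=\depth{\setreda_1}$ I would apply \thelem~\ref{rsl:residual-same-depth} with contracting set $\set{c}$, which requires $\setreda\freefrom\set{c}$ and $\setreda\notgrip\set{c}$. The former holds because $c<a$ together with $b<c$ would give $b<a$, contradicting $\setreda\freefrom\setredb$. The latter is exactly where the \ngrip\ property is used: if $a\grip c$, then from $b<c$ the axiom \axiomuse{\axFdc} yields $a\grip b$ or $b\leq a$; the first contradicts $\setredb$ being \ngrip\ (as $a\in\ROccur{\rsrc{\setredb}}$), the second contradicts $\setreda\freefrom\setredb$. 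Hence $\depth{\setreda}=\depth{\residus{\setreda}{c}}=\depth{\setreda_1}$, and composing across the induction gives $\depth{\setreda}=\depth{\setreda'}$; the identity $\residus{\setreda}{\setredc}=\residus{\residus{\setreda}{c}}{\setredc_1}=\residus{\setreda_1}{\setredc_1}=\setreda'$ closes the inductive step.

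\textbf{Main obstacle.} The delicate part is the single-step pullback: establishing that ancestors exist at all (the \axiomuse{\axEnclaveCreation} argument, which crucially exploits that $c$ is embedded by $\setredb$) and that the pullback is residual-exact, for which \axiomuse{\axLinearity} and \axiomuse{\axAncestorUniqueness} must be combined and \thelem~\ref{rsl:dominated-stable-contraction} reused in the reverse direction to transport freeness from $\setreda_1$ back to $\setreda$. The depth equality is the other subtle point, and it is precisely here that the \ngrip\ property of $\setredb$ (through \axiomuse{\axFdc}) is indispensable to license \thelem~\ref{rsl:residual-same-depth}.
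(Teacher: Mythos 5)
Your proof is correct and follows the same skeleton as the paper's: induction on $\depth{\setredc}$, reduction to a single-step pullback along some $c\in\setredc$ after applying the induction hypothesis to $\residus{\setredc}{c}$, \axiomuse{\axEnclaveCreation} for the existence of ancestors, and \axiomuse{\axFdc} together with \thelem~\ref{rsl:residual-same-depth} for depth preservation. Where you genuinely deviate is in the two remaining sub-claims of the pullback, and in their order. The paper first proves residual-exactness $\residus{\setreda}{c}=\setreda''$ (your $\setreda_1$), ruling out $c<a$ by \axiomuse{\axEnclaveEmbedding} (via $b_0<c<a$) and $c=a$ by \axiomuse{\axSelfReduction}, and only afterwards proves $\setreda\freefrom\setredb$, by a case split on whether $b$ is $<$-minimal in $\setredb$, using \axiomuse{\axCtxFreeness} for minimal $b$ and well-foundedness of $<$ for the rest. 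You instead prove freeness \emph{first}, by re-applying \thelem~\ref{rsl:dominated-stable-contraction} to the singletons $\set{a}\dominated\set{b_0}$ under contraction of $\set{c}$ --- a legitimate instantiation, since $\setredc\dominated\setredb$ gives $c\notin\setredb$ and hence the disjointness hypothesis --- and then obtain residual-exactness cheaply, ruling out $c<a$ by transitivity ($b<c<a$ would contradict the freeness just established). This buys a local economy of means: your argument never invokes \axiomuse{\axCtxFreeness}, \axiomuse{\axEnclaveEmbedding}, or well-foundedness directly (they remain hidden inside \thelem~\ref{rsl:dominated-stable-contraction}), replacing the paper's minimality case split by a one-line lemma reuse; the paper's ordering, conversely, never has to worry about which sub-claim may depend on which. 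The only point you should make explicit is $c\neq a$, needed to pass from $c\not<a$ to $c\not\leq a$ before invoking \axiomuse{\axLinearity} (and likewise for $\setreda\freefrom\set{c}$ in the depth step); as in the paper, this follows from \axiomuse{\axSelfReduction}, since $a$ has a residual after $c$ while $c$ has none.
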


\begin{proof}
If $\setreda' = \emptyset_{s}$, then taking $\setreda = \emptyset_{t}$ suffices to conclude.
So we assume $\setreda' \neq \emptyset_{s}$ and proceed by induction on $\depth{\setredc}$.
If $\setredc = \emptyset$, \ie\  $s = t$, then we conclude by taking $\setreda' \eqdef \setreda$; observe that in this case $\residus{\setredb}{\setredc} = \setredb$.

Consider $c \in \setredc$ and $t \sstep{c} t_0 \mstep{\residus{\setredc}{c}} s$. Since $\setredc
\dominated \setredb$, $c \notin \setredb$ and hence $\set{c} \cap \setredb = \emptyset$; we can apply Lem.~\ref{rsl:dominated-stable-contraction} to obtain $\residus{\setredc}{c} \dominated \residus{\setredb}{c}$.
Moreover $\residus{\setredb}{\setredc} = \residus{\residus{\setredb}{c}}{\residus{\setredc}{c}}$, and $\setredb$ \ngrip\ implies $\residus{\setredb}{c}$ \ngrip.
Therefore, the \ih\ on $\residus{\setredc}{c}$ yields the existence of some $\setreda'' \subseteq \ROccur{t_0}$ such that $\setreda' = \residus{\setreda''}{\residus{\setredc}{c}}$, $\setreda'' \freefrom \residus{\setredb}{c}$ and $\depth{\setreda''} = \depth{\setreda'}$.
Hence, to conclude the proof, it suffices to verify the existence of some $\setreda \subseteq \ROccur{t}$ verifying 
(1) $\setreda'' = \residus{\setreda}{c}$, (2) $\setreda \freefrom \setredb$ and (3) $\depth{\setreda} = \depth{\setreda''}$.
Observe that $\setreda' \neq \emptyset_{s}$ and $\depth{\setreda''} = \depth{\setreda'}$ imply $\setreda'' \neq \emptyset_{t_0}$.

\begin{enumerate}
\item Let $b_0 \in \setredb$ such that $b_0 < c$, so that \axiomuse{\axLinearity} implies $\residus{b_0}{c} = \set{b_0''}$.
Let $a'' \in \setreda''$. Then $a''$ being created by $c$ would imply $b_0'' < a''$ by \axiomuse{\axEnclaveCreation}, contradicting $\setreda'' \freefrom \residus{\setredb}{c}$. Therefore, $\residu{a}{c}{a''}$ for some $a$.
Let $\setreda \eqdef \set{a \ \in \ROccur{t} \sthat \exists \,a'' \in \setreda'' \,.\ \residu{a}{c}{a''}}$. Observe that $\setreda'' \subseteq \residus{\setreda}{c}$ and let us show that also $\residus{\setreda}{c} \subseteq \setreda''$. 

Let $a_0 \in \residus{\setreda}{c}$, $a \in \setreda$ such that $\residu{a}{c}{a_0}$, $a''\in \setreda''$ such that $\residu{a}{c}{a''}$.
Observe that $c < a$ would imply $b_0 < c < a$, and then $b''_0 < a''$ by \axiomuse{\axEnclaveEmbedding}, contradicting $\setreda'' \freefrom \residus{\setredb}{c}$.
Moreover, $c = a$ would contradict $\residu{a}{c}{a''}$, \confer\ \axiomuse{\axSelfReduction}.
Therefore $c \not\leq a$, so that \axiomuse{\axLinearity} applies yielding that $\residus{a}{c}$ is a singleton, hence $a_0 = a'' \in \setreda''$.
Consequently, $\residus{\setreda}{c} \subseteq \setreda''$, and then $\residus{\setreda}{c} = \setreda''$.

\item 
Let $a \in \setreda$ and $b \in \setredb$. If $b$ is minimal in $\setredb$ \wrt\ $<$, then $\setredc \dominated \setredb$ implies 
$b_0 < c$ for some $b_0 \in \setredb$, therefore $c \not\leq b$ (since $c \leq b$ would contradict minimality of $b$), hence $\residus{b}{c} = \set{b''}$ by \axiomuse{\axLinearity}. 
Let $a'' \in \setreda''$ such that $\residu{a}{c}{a''}$. Observe that we have already verified that $c \not < a$.
Then $b < a$ would imply $b'' < a''$ by \axiomuse{\axCtxFreeness}, contradicting $\setreda'' \freefrom \residus{\setredb}{c}$; hence, $b \not< a$. 
In turn, if $b$ is not minimal in $\setredb$ \wrt\ $<$, then well-foundedness of $<$ implies the existence of some $b_0$ such that $b_0 < b$ and $b_0$ is minimal in $\setredb$ \wrt\ $<$, so that $b_0 \not < a$ as we have just shown, and therefore $b \not < a$.
Consequently, $\setreda \freefrom \setredb$.

\item 
Consider $b_0 \in \setredb$ such that $b_0 < c$ and $a \in \setreda$. 
Observe that $a \grip c$ would imply either $a \grip b_0$ or $b_0 \leq a$ by \axiomuse{\axFdc}, contradicting $\setredb$ being \ngrip\ and $\setreda \freefrom \setredb$ respectively. Therefore $\setreda \notgrip c$, and moreover $\setreda \freefrom c$ (recall $c \not\leq a$ for any $a \in \setreda$). Hence we can apply Lem.~\ref{rsl:residual-same-depth} to obtain $\depth{\setreda} = \depth{\setreda''}$.
Thus we conclude.
\end{enumerate}
\vspace{-6mm}
\end{proof}

The next result extends Lem.~\ref{rsl:free-setred-from-tgt-to-src} to multireductions: a
  multistep \dominbytext\ $\setredb$ may be \emph{postponed} after a multireduction
free from the same $\setredb$, without affecting neither the free-from and \domination\ relations
\wrt\ (the corresponding residual of) $\setredb$, nor the measure of the ``free'' multireduction.

\begin{lemma}[\delia{Postponement after a \mredseq}]
\label{rsl:postponement}
Let $t \mstep{\setredc} s \mred{\mreda'} u$ and $\setredb \subseteq \ROccur{t}$ such that $\setredb$
is \ngrip, $\setredc \dominated \setredb$, and $\mreda' \freefrom \residus{\setredb}{\setredc}$.
Then there exists some multireduction $\mreda$ verifying $\mreda' = \residus{\mreda}{\setredc}$, so
that \theprop~\ref{rsl:SOredexset}:(\ref{it:SOmredseq}) yields
$t \mred{\mreda} s' \mstep{\residus{\setredc}{\mreda}} u$ for some object $s'$; and moreover $\mreda \freefrom \setredb$,
$\residus{\setredc}{\mreda} \dominated \residus{\setredb}{\mreda}$, and $\mredms{\mreda} =
\mredms{\mreda'}$. 
\end{lemma}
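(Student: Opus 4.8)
The plan is to argue by induction on $\rlength{\mreda'}$, peeling the first multistep off $\mreda'$ with the single-multistep postponement result \thelem~\ref{rsl:free-setred-from-tgt-to-src} and treating the tail with the induction hypothesis. In the base case $\mreda' = \nil_s$ I take $\mreda \eqdef \nil_t$: then $\residus{\mreda}{\setredc} = \nil_s = \mreda'$, while $\residus{\setredc}{\mreda} = \setredc$ and $\residus{\setredb}{\mreda} = \setredb$, so $\mreda \freefrom \setredb$ and $\mredms{\mreda} = \mredms{\mreda'}$ hold trivially and $\residus{\setredc}{\mreda} \dominated \residus{\setredb}{\mreda}$ reduces to the hypothesis $\setredc \dominated \setredb$.

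For the inductive step write $\mreda' = \setreda_1' ; \mreda_2'$, so $s \mstep{\setreda_1'} s_1 \mred{\mreda_2'} u$. The definition of $\freefrom$ for multireductions splits $\mreda' \freefrom \residus{\setredb}{\setredc}$ into $\setreda_1' \freefrom \residus{\setredb}{\setredc}$ and $\mreda_2' \freefrom \residus{\residus{\setredb}{\setredc}}{\setreda_1'}$. Applying \thelem~\ref{rsl:free-setred-from-tgt-to-src} to $t \mstep{\setredc} s \mstep{\setreda_1'} s_1$ yields $\setreda_1 \subseteq \ROccur{t}$ with $\setreda_1' = \residus{\setreda_1}{\setredc}$, $\setreda_1 \freefrom \setredb$ and $\depth{\setreda_1} = \depth{\setreda_1'}$; and \theprop~\ref{rsl:SOredexset}:(\ref{it:SOredexset}) turns the commuting square into $t \mstep{\setreda_1} t_1 \mstep{\residus{\setredc}{\setreda_1}} s_1$.

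Now set $\setredc_1 \eqdef \residus{\setredc}{\setreda_1}$ and $\setredb_1 \eqdef \residus{\setredb}{\setreda_1}$ and verify the three hypotheses of the induction hypothesis on $t_1 \mstep{\setredc_1} s_1 \mred{\mreda_2'} u$ against $\setredb_1$. First, $\setredb_1$ is \ngrip\ since $\setredb$ is and being \ngrip\ is inherited by residuals. Second, $\setreda_1 \freefrom \setredb$ gives $\setreda_1 \cap \setredb = \emptyset$, so \thelem~\ref{rsl:dominated-stable-contraction} yields $\setredc_1 \dominated \setredb_1$. Third, I must identify the residual through which $\mreda_2'$ is free: using $\setreda_1' = \residus{\setreda_1}{\setredc}$ and the fact that, by \theprop~\ref{rsl:SOredexset}:(\ref{it:SOredexset}), the multisteps $\setredc ; \residus{\setreda_1}{\setredc}$ and $\setreda_1 ; \residus{\setredc}{\setreda_1}$ induce the same residual relation, one computes $\residus{\residus{\setredb}{\setredc}}{\setreda_1'} = \residus{\setredb}{\setredc ; \residus{\setreda_1}{\setredc}} = \residus{\setredb}{\setreda_1 ; \setredc_1} = \residus{\setredb_1}{\setredc_1}$, whence $\mreda_2' \freefrom \residus{\setredb_1}{\setredc_1}$. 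The induction hypothesis then returns $\mreda_2$ with $\mreda_2' = \residus{\mreda_2}{\setredc_1}$, $t_1 \mred{\mreda_2} s' \mstep{\residus{\setredc_1}{\mreda_2}} u$, $\mreda_2 \freefrom \setredb_1$, $\residus{\setredc_1}{\mreda_2} \dominated \residus{\setredb_1}{\mreda_2}$ and $\mredms{\mreda_2} = \mredms{\mreda_2'}$.

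Finally I set $\mreda \eqdef \setreda_1 ; \mreda_2$ and discharge each conclusion with the concatenation identities for residuals. From $\residus{(\setreda_1 ; \mreda_2)}{\setredc} = \residus{\setreda_1}{\setredc} ; \residus{\mreda_2}{\setredc_1} = \setreda_1' ; \mreda_2' = \mreda'$ I recover $\mreda' = \residus{\mreda}{\setredc}$, and \theprop~\ref{rsl:SOredexset}:(\ref{it:SOmredseq}) supplies the object $s'$ together with $s' \mstep{\residus{\setredc}{\mreda}} u$, where $\residus{\setredc}{\mreda} = \residus{\setredc_1}{\mreda_2}$. The relation $\mreda \freefrom \setredb$ is exactly $\setreda_1 \freefrom \setredb$ together with $\mreda_2 \freefrom \residus{\setredb}{\setreda_1} = \setredb_1$; the domination $\residus{\setredc}{\mreda} = \residus{\setredc_1}{\mreda_2} \dominated \residus{\setredb_1}{\mreda_2} = \residus{\setredb}{\mreda}$ is the induction hypothesis; and the measure equality holds because $\mredms{\setreda_1 ; \mreda_2}$ appends $\depth{\setreda_1}$ to $\mredms{\mreda_2}$, matching $\mredms{\setreda_1' ; \mreda_2'}$ since $\depth{\setreda_1} = \depth{\setreda_1'}$ and $\mredms{\mreda_2} = \mredms{\mreda_2'}$. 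The bulk of the work sits in the already-available single-step \thelem~\ref{rsl:free-setred-from-tgt-to-src}; the one delicate point is the third hypothesis, where the two routes from $t$ to $s_1$ must be shown to induce the same residuals of $\setredb$ via semantic orthogonality.
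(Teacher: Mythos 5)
Your proof is correct, but it runs the induction in the opposite direction from the paper: you peel off the \emph{first} multistep $\setreda_1'$ of $\mreda'$, apply the one-multistep postponement (\thelem~\ref{rsl:free-setred-from-tgt-to-src}) at the head of the diagram, and then invoke the induction hypothesis on the tail $\mreda_2'$ over the residual data $\setredc_1 = \residus{\setredc}{\setreda_1}$, $\setredb_1 = \residus{\setredb}{\setreda_1}$; the paper instead writes $\mreda' = \mreda'_0;\setreda'$, applies the induction hypothesis to the \emph{prefix} $\mreda'_0$, and only then postpones $\residus{\setredc}{\mreda_0}$ past the final multistep $\setreda'$. The trade-off is visible in which auxiliary facts each argument needs. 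Because the free-from relation on multireductions is defined by recursion on the head, your decomposition of $\mreda' \freefrom \residus{\setredb}{\setredc}$ and your recombination $\mreda = \setreda_1;\mreda_2 \freefrom \setredb$ are both definitional, so you never need \thelem~\ref{rsl:free-from-parts-mredseq} nor the ``simple induction on $\rlength{\mreda_0}$'' with which the paper closes; conversely, you must do work \emph{before} invoking the induction hypothesis: \thelem~\ref{rsl:dominated-stable-contraction} to get $\setredc_1 \dominated \setredb_1$, and the semantic-orthogonality identity $\residus{\residus{\setredb}{\setredc}}{\setreda_1'} = \residus{\setredb_1}{\setredc_1}$ (via \theprop~\ref{rsl:SOredexset}) to recognize the tail hypothesis in the right form, whereas in the paper these same two ingredients appear only as concluding steps. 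Your handling of the measure (the reversed-order tuple gains $\depth{\setreda_1} = \depth{\setreda_1'}$ in the same slot on both sides) and of the residual concatenation identities is sound, so both arguments are of essentially the same length and rely on the same toolkit, just deployed in mirror order.
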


\begin{proof}
By induction on $\rlength{\mreda'}$.
If $\mreda' = \nil_s$, \ie\  $u = s$, then it suffices to take $\mreda \eqdef \nil_t$, so that $s' = t$.

Assume $\mreda' = \mreda'_0; \setreda'$, so that $t \mred{\setredc} s \mred{\mreda'_0} u' \mstep{\setreda'} u$.
\thelem~\ref{rsl:free-from-parts-mredseq} \delia{on $\emptyred{s} ; \mreda'_0 ; \setreda'$} implies $\mreda'_0 \freefrom \residus{\setredb}{\setredc}$.
Then we can apply the \ih\ on $\mreda'_0$ obtaining $\mreda'_0 = \residus{\mreda_0}{\setredc}$ for
some multireduction $\mreda_0$, such that $t \mred{\mreda_0} s''
\mstep{\residus{\setredc}{\mreda_0}} u'$ for some object $s''$, and moreover $\mreda_0 \freefrom
\setredb$, $\residus{\setredc}{\mreda_0} \dominated \residus{\setredb}{\mreda_0}$, and
$\mredms{\mreda_0} = \mredms{\mreda'_0}$.
We can thus build the following diagram:
\begin{center}
$\xymatrix@C=45pt{ 
t \arMultiOp{r}{>>}^{\mreda_0} \arMulti{d}_{\setredc\,}
& s'' \arMulti{d}^{\,\residus{\setredc}{\mreda_0}}  \\
s \arMultiOp{r}{>>}_{\mreda'_0} & u' \arMulti{r}_{\setreda'} & u
}$
\end{center}

On the other hand, $\mreda' \freefrom \residus{\setredb}{\setredc}$ implies $\setreda' \freefrom \residus{\setredb}{\setredc ; \mreda'_0}$ (\confer\ again \thelem~\ref{rsl:free-from-parts-mredseq}\delia{, now on $\mreda'_0 ; \setreda' ; \nil_u$}), therefore \theprop~\ref{rsl:SOredexset}:(\ref{it:SOmredseq}) yields $\setreda' \freefrom \residus{\setredb}{\mreda_0; \residus{\setredc}{\mreda_0}} = \residus{\residus{\setredb}{\mreda_0}}{\residus{\setredc}{\mreda_0}}$.
Moreover, $\setredb$ \ngrip\ implies $\residus{\setredb}{\mreda_0}$ \ngrip.
Hence we can apply Lem.~\ref{rsl:free-setred-from-tgt-to-src} to $s'' \mred{\residus{\setredc}{\mreda_0}} u'$, obtaining that $\setreda' = \residus{\setreda}{\residus{\setredc}{\mreda_0}}$ for some $\setreda \subseteq \ROccur{s''}$ verifying $\setreda \freefrom \residus{\setredb}{\mreda_0}$ and $\depth{\setreda} = \depth{\setreda'}$. 
Consequently, we can complete the previous diagram as follows.

\smallskip\noindent
\begin{tabular}{@{}p{45mm}p{70mm}}
$\xymatrix@C=40pt{ 
t \arMultiOp{r}{>>}^{\mreda_0} \arMulti{d}_{\setredc\,}
& s'' \arMulti{r}^{\,\setreda} \arMulti{d}_{\,\residus{\setredc}{\mreda_0}} 
& s' \arMulti{d}_{\,\residus{\setredc}{\mreda_0; \setreda}} \\
s \arMultiOp{r}{>>}_{\mreda'_0} & u' \arMulti{r}_{\setreda'} & u
}$
&
We define $\mreda \eqdef \mreda_0 ; \setreda$. Given $\residus{\setredc}{\mreda_0} \dominated \residus{\setredb}{\mreda_0}$ and $\setreda \freefrom \residus{\setredb}{\mreda_0}$, so that $\setreda \,\cap\, \residus{\setredb}{\mreda_0} = \emptyset$, Lem.~\ref{rsl:dominated-stable-contraction} applied on $\setreda$ implies $\residus{\setredc}{\mreda} \dominated \residus{\setredb}{\mreda}$. 
\end{tabular}

\noindent
Moreover, given $\mreda_0 \freefrom \setredb$ and $\setreda \freefrom \residus {\setredb}{\mreda_0}$, a simple induction on $\rlength{\mreda_0}$ yields $\mreda \freefrom \setredb$. 
Finally, $\mredms{\mreda} = \mredms{\mreda'}$ is immediate. Thus we conclude.
\end{proof}

The postponement result is used to show that, whenever $t \mred{\mreda} u$ and $\setredb \subseteq
\ROccur{t}$ is \ngrip\ and not used in $\mreda$, and $\residus{\setredb}{\mreda} = \emptyset$, all
activity \dominbytext\ (the successive residuals of) $\setredb$ is irrelevant, \ie\ it can be omitted
without compromising the target object $u$, and moreover without increasing the measure.  Therefore,
the \domintext\ part of each \tredexset\ in $\setreda_2 ; \mreda''$ can just be discarded in the
construction of $\mreda_{k+1}$, \confer\ Fig.~\ref{fig:normalisation-proof-iteration-3} on
page~\pageref{fig:normalisation-proof-iteration-3}.

\begin{lemma}[\delia{Irrelevance of one multistep}]
\label{rsl:dominated-erasure}
Let $t \mstep{\setredc} s \mred{\mreda'} u$ and $\setredb \subseteq \ROccur{t}$, such that $\setredb$ is \ngrip, $\setredc \dominated \setredb$, $\mreda' \freefrom \residus{\setredb}{\setredc}$, and $\residus{\setredb}{\setredc; \mreda'} = \emptyset$.
Then there is a multireduction $\mreda$ such that $\mreda' = \residus{\mreda}{\setredc}$, $t \mred{\mreda} u$, $\mreda \freefrom \setredb$, $\residus{\setredb}{\mreda} = \emptyset$ and $\mredms{\mreda} = \mredms{\mreda'}$.
\end{lemma}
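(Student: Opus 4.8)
The plan is to reduce almost everything to the postponement lemma \thelem~\ref{rsl:postponement} and to isolate a single new, purely local argument. Applying \thelem~\ref{rsl:postponement} directly to the hypotheses $t \mstep{\setredc} s \mred{\mreda'} u$, $\setredb$ \ngrip, $\setredc \dominated \setredb$ and $\mreda' \freefrom \residus{\setredb}{\setredc}$ already produces a multireduction $\mreda$ with $\mreda' = \residus{\mreda}{\setredc}$ and $t \mred{\mreda} s' \mstep{\residus{\setredc}{\mreda}} u$ for some $s'$, together with $\mreda \freefrom \setredb$, $\residus{\setredc}{\mreda} \dominated \residus{\setredb}{\mreda}$ and $\mredms{\mreda} = \mredms{\mreda'}$. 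Three of the five desired conclusions ($\mreda' = \residus{\mreda}{\setredc}$, $\mreda \freefrom \setredb$, $\mredms{\mreda} = \mredms{\mreda'}$) are then immediate, and it remains only to prove $\residus{\setredb}{\mreda} = \emptyset$ and to upgrade $t \mred{\mreda} s'$ to $t \mred{\mreda} u$.

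The second step is to bring the extra hypothesis $\residus{\setredb}{\setredc; \mreda'} = \emptyset$ to bear. Since $\setredc; \mreda' = \setredc; \residus{\mreda}{\setredc}$, \theprop~\ref{rsl:SOredexset}:(\ref{it:SOmredseq}) tells us this multireduction induces the same residual relation as $\mreda; \residus{\setredc}{\mreda}$; hence, writing $\setredb_1 := \residus{\setredb}{\mreda}$ and $\setredc_1 := \residus{\setredc}{\mreda}$, compositionality of residuals gives $\residus{\setredb_1}{\setredc_1} = \residus{\setredb}{\mreda; \setredc_1} = \residus{\setredb}{\setredc; \mreda'} = \emptyset$. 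The problem thus collapses to the local claim: if $\setredc_1 \dominated \setredb_1$ and $\residus{\setredb_1}{\setredc_1} = \emptyset$, then $\setredb_1 = \emptyset$.

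This local claim is the crux of the proof. I would argue by contradiction, assuming $\setredb_1 \neq \emptyset$ and choosing a $<$-minimal element $b \in \setredb_1$, which exists by well-foundedness of $<$. The key observation is that $b$ is free from $\setredc_1$: no $c \in \setredc_1$ equals $b$ (domination gives $c \notin \setredb_1$), and no $c \in \setredc_1$ satisfies $c < b$, for domination supplies some $b_c \in \setredb_1$ with $b_c < c$, whence $c < b$ would give $b_c < b$ by transitivity, contradicting minimality of $b$; thus $b \freefrom \setredc_1$. Then \thelem~\ref{rsl:free-then-unique-residual} forces $\residus{b}{\setredc_1}$ to be a singleton, in particular nonempty, contradicting $\residus{b}{\setredc_1} \subseteq \residus{\setredb_1}{\setredc_1} = \emptyset$. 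Hence $\setredb_1 = \residus{\setredb}{\mreda} = \emptyset$, the fourth conclusion.

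Finally, since $\setredc_1 = \residus{\setredc}{\mreda}$ is embedded by $\setredb_1 = \emptyset$, and a nonempty multistep cannot be embedded by the empty set (domination requires an element strictly below each $c \in \setredc_1$), we get $\setredc_1 = \emptyset$; the trailing multistep $s' \mstep{\residus{\setredc}{\mreda}} u$ is therefore empty, so $s' = u$ and $t \mred{\mreda} u$, completing the argument. I expect the bookkeeping in the second step---the correct reading of \theprop~\ref{rsl:SOredexset}:(\ref{it:SOmredseq}) and the decomposition of residuals across $\mreda$ followed by the multistep $\residus{\setredc}{\mreda}$---to be routine but requiring care, whereas the minimality argument is the genuinely new ingredient that forces $\setredb$ to have been entirely consumed.
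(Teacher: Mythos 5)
Your proposal is correct and follows essentially the same route as the paper's proof: apply the postponement lemma, use Prop.~\ref{rsl:SOredexset}:(\ref{it:SOmredseq}) to transfer the hypothesis $\residus{\setredb}{\setredc;\mreda'} = \emptyset$ into $\residus{\residus{\setredb}{\mreda}}{\residus{\setredc}{\mreda}} = \emptyset$, then pick a $<$-minimal element of $\residus{\setredb}{\mreda}$, deduce it is free from $\residus{\setredc}{\mreda}$ via the \domination\ relation, and contradict emptiness through Lem.~\ref{rsl:free-then-unique-residual}, finally concluding $\residus{\setredc}{\mreda} = \emptyset$ because a nonempty set cannot be \dominbytext\ the empty set. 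Your write-up merely makes explicit the minimality reasoning that the paper states in one line; otherwise the two proofs coincide step by step.
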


\begin{proof}
Lem.~\ref{rsl:postponement} implies the existence of $\mreda$ such that $\mreda' = \residus{\mreda}{\setredc}$, $t \mred{\mreda} s' \mstep{\residus{\setredc}{\mreda}} u$, $\mreda \freefrom \setredb$, $\residus{\setredc}{\mreda} \dominated \residus{\setredb}{\mreda}$, and $\mredms{\mreda} = \mredms{\mreda'}$.
Then $\residus{\residus{\setredb}{\mreda}}{\residus{\setredc}{\mreda}} = \residus{\setredb}{\mreda; \residus{\setredc}{\mreda}} = \residus{\setredb}{\setredc; \mreda'} = \emptyset$; \confer\ \theprop~\ref{rsl:SOredexset}:(\ref{it:SOmredseq}).
We will show that $\residus{\setredb}{\mreda} = \emptyset$, and also that $\residus{\setredc}{\mreda} = \emptyset$ implying $t \mred{\mreda} u$.

Assume for contradiction the existence of some $b \in \residus{\setredb}{\mreda}$, we assume wlog that $b$ is minimal in $\residus{\setredb}{\mreda}$ \wrt\ $<$ (recall that $<$ is assumed well-founded).
Then $\residus{\setredc}{\mreda} \dominated \residus{\setredb}{\mreda}$ implies $b \freefrom \residus{\setredc}{\mreda}$, so that Lem.~\ref{rsl:free-then-unique-residual} yields $\residus{b}{\residus{\setredc}{\mreda}} = \set{b'}$, contradicting $\residus{(\residus{\setredb}{\mreda})}{\residus{\setredc}{\mreda}} = \emptyset$.
Therefore $\residus{\setredb}{\mreda} = \emptyset$.
In turn, the existence of some $c \in \residus{\setredc}{\mreda}$ would imply the existence of some $b \in \residus{\setredb}{\mreda}$ such that $b < c$, contradicting \residus{$\setredb}{\mreda} = \emptyset$. Therefore $\residus{\setredc}{\mreda} = \emptyset$, implying $u = s'$ so that $t \mred{\mreda} u$. 
Thus we conclude.
\end{proof}

\begin{lemma}[\delia{Irrelevance of many multisteps}]
\label{rsl:free-equiv-mred}
Let $t \mred{\mreda} u$ and $\setredb \subseteq \ROccur{t}$, such that $\setredb$ is \ngrip, $\mreda$ does not use $\setredb$, and $\residus{\setredb}{\mreda} = \emptyset$.
Then there exists a multireduction $\mredb$ such that $t \mred{\mredb} u$, $\mredb \freefrom \setredb$, $\residus{\setredb}{\mredb} = \emptyset$ and $\mredms{\mredb} \leqms \mredms{\mreda}$.
\end{lemma}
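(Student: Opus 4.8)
The plan is to argue by induction on $\rlength{\mreda}$, peeling off the leading multistep of $\mreda$ while first normalising the tail with the inductive hypothesis and then discarding the \emph{embedded} part of the head via \thelem~\ref{rsl:dominated-erasure}. The base case $\mreda = \nil_t$ is immediate: the hypothesis $\residus{\setredb}{\mreda} = \emptyset$ forces $\setredb = \emptyset$, so $\mredb \eqdef \nil_t$ satisfies every requirement.

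For the inductive step write $\mreda = \setredc ; \mreda'$, with $t \mstep{\setredc} s \mred{\mreda'} u$. Because $\mreda$ does not use $\setredb$, unfolding the definition gives $\setredc \cap \setredb = \emptyset$ and that $\mreda'$ does not use $\residus{\setredb}{\setredc}$; moreover $\residus{\residus{\setredb}{\setredc}}{\mreda'} = \residus{\setredb}{\mreda} = \emptyset$ and $\residus{\setredb}{\setredc}$ is again \ngrip. First I would apply the inductive hypothesis to $\mreda'$ and $\residus{\setredb}{\setredc}$, obtaining a multireduction $\mredb'$ with $s \mred{\mredb'} u$, $\mredb' \freefrom \residus{\setredb}{\setredc}$, $\residus{\residus{\setredb}{\setredc}}{\mredb'} = \emptyset$, and $\mredms{\mredb'} \leqms \mredms{\mreda'}$.

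The decisive step is to separate the head. Since $\setredc \cap \setredb = \emptyset$, partition it as $\setredc = \thefree{\setredc} \uplus \thedomin{\setredc}$ with $\thefree{\setredc} \freefrom \setredb$ and $\thedomin{\setredc} \dominated \setredb$, and contract the free part first, so that $t \mstep{\thefree{\setredc}} t' \mstep{\residus{\thedomin{\setredc}}{\thefree{\setredc}}} s \mred{\mredb'} u$. I then apply \thelem~\ref{rsl:dominated-erasure} to the suffix $t' \mstep{\residus{\thedomin{\setredc}}{\thefree{\setredc}}} s \mred{\mredb'} u$, taking the \ngrip\ set to be $\residus{\setredb}{\thefree{\setredc}}$. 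Its hypotheses hold: $\residus{\thedomin{\setredc}}{\thefree{\setredc}} \dominated \residus{\setredb}{\thefree{\setredc}}$ by \thelem~\ref{rsl:dominated-stable-contraction} (using $\thefree{\setredc} \cap \setredb = \emptyset$ and $\thedomin{\setredc} \dominated \setredb$); and since $\thefree{\setredc}$ followed by a development of $\residus{\thedomin{\setredc}}{\thefree{\setredc}}$ is a complete development of $\setredc$, \theprop~\ref{rsl:SOplus} gives $\residus{\residus{\setredb}{\thefree{\setredc}}}{\residus{\thedomin{\setredc}}{\thefree{\setredc}}} = \residus{\setredb}{\setredc}$, so that $\mredb' \freefrom \residus{\setredb}{\setredc}$ and $\residus{\residus{\setredb}{\setredc}}{\mredb'} = \emptyset$ are exactly the freeness and emptiness conditions supplied by the inductive hypothesis. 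The lemma returns $\mreda_0$ with $\mredb' = \residus{\mreda_0}{\residus{\thedomin{\setredc}}{\thefree{\setredc}}}$, $t' \mred{\mreda_0} u$, $\mreda_0 \freefrom \residus{\setredb}{\thefree{\setredc}}$, $\residus{\residus{\setredb}{\thefree{\setredc}}}{\mreda_0} = \emptyset$, and $\mredms{\mreda_0} = \mredms{\mredb'}$.

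Finally I set $\mredb \eqdef \thefree{\setredc} ; \mreda_0$, whence $t \mred{\mredb} u$. Freeness $\mredb \freefrom \setredb$ follows from $\thefree{\setredc} \freefrom \setredb$ and $\mreda_0 \freefrom \residus{\setredb}{\thefree{\setredc}}$; emptiness is $\residus{\setredb}{\mredb} = \residus{\residus{\setredb}{\thefree{\setredc}}}{\mreda_0} = \emptyset$; and for the measure, $\mredms{\mreda_0} = \mredms{\mredb'} \leqms \mredms{\mreda'}$ together with $\depth{\thefree{\setredc}} \leq \depth{\setredc}$ (any complete development of $\thefree{\setredc}$ extends to one of $\setredc$) yields $\mredms{\mredb} \leqms \mredms{\mreda}$, because the depths enter the measure in reversed order and hence the tail comparison dominates, ties being broken by the head. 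The main obstacle is precisely the gap between \emph{not using} and \emph{being free from}: the head $\setredc$ genuinely may carry an embedded part, and the tail is only guaranteed not to use $\residus{\setredb}{\setredc}$, so the two premises of \thelem~\ref{rsl:dominated-erasure} have to be engineered in the correct order — first make the tail free by induction, then split off and erase the embedded part of the head.
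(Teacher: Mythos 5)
Your proof is correct and follows essentially the same route as the paper's: induction on $\rlength{\mreda}$, applying the inductive hypothesis to the tail first, then partitioning the head multistep into its free and embedded parts, using Lemma~\ref{rsl:dominated-stable-contraction} and Proposition~\ref{rsl:SOplus} to discharge the hypotheses of Lemma~\ref{rsl:dominated-erasure}, and concluding with the same measure argument. The only difference is notational (you write the postponed embedded part explicitly as $\residus{\thedomin{\setredc}}{\thefree{\setredc}}$, whereas the paper bakes the residual into the definition of $\thedomin{\setreda}$), which is immaterial.
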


\begin{proof}
By induction on $\rlength{\mreda}$.
If $\mreda = \nil_t$, then it suffices to take $\mredb \eqdef \mreda$.

Assume $\mreda = \setreda; \mreda_0$, so that $t \mstep{\setreda} s \mred{\mreda_0} u$ for some object $s$. 
Observe $\residus{\setredb}{\setreda}$ is \ngrip, $\mreda_0$ does not use $\residus{\setredb}{\setreda}$ and $\residus{\residus{\setredb}{\setreda}}{\mreda_0} = \residus{\setredb}{\mreda} = \emptyset$. 
Then we can apply the \ih\ on $s \mred{\mreda_0} u$, thus obtaining $s \mred{\mredb'_0} u$ for some $\mredb'_0$ verifying $\mredb'_0 \freefrom \residus{\setredb}{\setreda}$, $\residus{\residus{\setredb}{\setreda}}{\mredb'_0} = \emptyset$ and $\mredms{\mredb'_0} \leqms \mredms{\mreda_0}$.

\delia{We partition $\setreda$ in its free and \domintext\ parts \wrt\ $\setredb$, according to the idea described in Sec.~\ref{sec:free-domin} after the definition of free and \domintext\ \tredexsets. Formally, we} define $\thefree{\setreda} \eqdef \set{a \in \setreda \sthat a \freefrom \setredb}$ and $\thedomin{\setreda} \eqdef \residus{(\setreda \setminus \thefree{\setreda})}{\thefree{\setreda}}$, so that $t \mstep{\thefree{\setreda}} t' \mstep{\thedomin{\setreda}} s \mred{\mredb'_0} u$ for some object $t'$.
It is easy to check $\thefree{\setreda} \freefrom \setredb$ and $(\setreda \setminus \thefree{\setreda}) \dominated \setredb$; since $\setreda$ does not use $\setredb$ and then $\setreda \cap \setredb = \emptyset$.
As moreover $\thefree{\setreda} \cap \setredb = \emptyset$, then Lem.~\ref{rsl:dominated-stable-contraction} yields $\thedomin{\setreda} \dominated \residus{\setredb}{\thefree{\setreda}}$.
Observe that $\setredb$ \ngrip\ implies $\residus{\setredb}{\thefree{\setreda}}$ \ngrip,
$\mredb'_0 \freefrom \residus{\setredb}{\setreda} = \residus{\residus{\setredb}{\thefree{\setreda}}}{\thedomin{\setreda}}$,
and $\residus{\residus{\setredb}{\thefree{\setreda}}}{\thedomin{\setreda}; \mredb'_0} = \residus{\residus{\setredb}{\setreda}}{\mredb'_0} = \emptyset$; \confer\  \theprop~\ref{rsl:SOplus}.
Therefore Lem.~\ref{rsl:dominated-erasure} applies to $t' \mstep{\thedomin{\setreda}} s \mred{\mredb_0'} u$, implying the existence of some $\mredb_0$ verifying $t' \mred{\mredb_0} u$, $\mredb_0 \freefrom \residus{\setredb}{\thefree{\setreda}}$, $\residus{\residus{\setredb}{\thefree{\setreda}}}{\mredb_0} = \emptyset$ and $\mredms{\mredb_0} = \mredms{\mredb'_0} \leqms \mredms{\mreda_0}$.
Hence we conclude by taking $\mredb \eqdef \thefree{\setreda}; \mredb_0$
since  $\thefree{\setreda} \subseteq \setreda$ implies in particular  that  $\depth{\thefree{\setreda}} \leq \depth{\setreda}$.
\end{proof}

\delia{
\medskip
The following propositions describe the construction of the \mredseq\ $\mreda_{k+1}$
(\confer\ Fig.~\ref{fig:normalisation-proof-iteration-3} on
page~\pageref{fig:normalisation-proof-iteration-3}). 
In terms of the general proof structure described at the beginning of \thesec~\ref{sec:reduction-strategies}, we can consider $t_k$, $t_{k+1}$ and $\strs(t_k)$ as $t$, $s$ and $\setredb$ respectively in the statement of Prop.~\ref{rsl:sekar-rama-8} and Prop.~\ref{rsl:sekar-rama-10}. and $\mreda_k$ as $\mreda$ in the latter Proposition.
}

\begin{proposition}[\delia{Projection over non-used \tredexset}]
\label{rsl:sekar-rama-8}
Let $t \mred{\mreda} u$ and $\setredb \subseteq \ROccur{t}$ s.t.  \setredb\ is \ngrip, \mreda\ does not use \setredb, $\residus{\setredb}{\mreda} = \emptyset$ and $t \mstep{\setredb} s$.
Then there exists a multireduction $\mredb$ s.t. $s \mred{\mredb} u$ and $\mredms{\mredb} \leqms \mredms{\mreda}$.
\end{proposition}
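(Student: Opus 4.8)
The plan is to reduce this statement to the two heavy lemmas already proved in this section, \emph{Irrelevance of many multisteps} (\thelem~\ref{rsl:free-equiv-mred}) and \emph{Measure preservation} (\thelem~\ref{rsl:residual-same-depth-mred}), using the permutation property of \theprop~\ref{rsl:SOredexset} to turn the projection of a free multireduction over $\setredb$ into a multireduction that starts at $s$. No new induction is needed: all the genuine work has been packaged into those lemmas, so the proposition is essentially an assembly step.

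First I would apply \thelem~\ref{rsl:free-equiv-mred} to $t \mred{\mreda} u$. Its hypotheses hold verbatim ($\setredb$ is \ngrip, $\mreda$ does not use $\setredb$, and $\residus{\setredb}{\mreda} = \emptyset$), so it yields a multireduction $\mredb_0$ with $t \mred{\mredb_0} u$, $\mredb_0 \freefrom \setredb$, $\residus{\setredb}{\mredb_0} = \emptyset$, and $\mredms{\mredb_0} \leqms \mredms{\mreda}$. The purpose of this first move is to replace $\mreda$ by a multireduction that is \emph{free from} $\setredb$ — precisely the shape required by measure preservation — while not increasing the measure.

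Next I would set $\mredb \eqdef \residus{\mredb_0}{\setredb}$ and argue that $s \mred{\mredb} u$. Since $\setredb$ and $\mredb_0$ are coinitial (both start at $t$), \theprop~\ref{rsl:SOredexset}:(\ref{it:SOmredseq}) tells us that $\setredb ; \residus{\mredb_0}{\setredb}$ and $\mredb_0 ; \residus{\setredb}{\mredb_0}$ have the same target. As $\residus{\setredb}{\mredb_0} = \emptyset$, the latter is $\mredb_0 ; \emptyset_u$, whose target is $u$; hence the former has target $u$ as well. Because contracting $\setredb$ produces $s$, \ie\ $\rtgt{\setredb} = s$, this identifies $\residus{\mredb_0}{\setredb}$ as a multireduction from $s$ to $u$, so indeed $s \mred{\mredb} u$.

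Finally, to control the measure I would invoke \thelem~\ref{rsl:residual-same-depth-mred}: $\mredb_0$ is coinitial with $\setredb$, $\setredb$ is \ngrip, and $\mredb_0 \freefrom \setredb$, so $\mredms{\mredb} = \mredms{\residus{\mredb_0}{\setredb}} = \mredms{\mredb_0}$. Chaining this with $\mredms{\mredb_0} \leqms \mredms{\mreda}$ gives $\mredms{\mredb} \leqms \mredms{\mreda}$, which completes the argument. I expect no real obstacle: the only subtle point is the bookkeeping that makes the projected multireduction begin at $s$ and still end at $u$, and this hinges precisely on the hypothesis $\residus{\setredb}{\mreda} = \emptyset$ (carried over as $\residus{\setredb}{\mredb_0} = \emptyset$) — without it the target of $\setredb ; \residus{\mredb_0}{\setredb}$ need not coincide with $u$.
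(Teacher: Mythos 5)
Your proposal is correct and follows exactly the paper's own proof: apply Lemma~\ref{rsl:free-equiv-mred} to obtain the free multireduction $\mredb_0$, define $\mredb \eqdef \residus{\mredb_0}{\setredb}$ and justify $s \mred{\mredb} u$ via \theprop~\ref{rsl:SOredexset}:(\ref{it:SOmredseq}), then conclude with Lemma~\ref{rsl:residual-same-depth-mred}. Your spelled-out argument for why the projection ends at $u$ (using $\residus{\setredb}{\mredb_0} = \emptyset$) is precisely what the paper's diagram encodes implicitly.
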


\begin{proof}
Lem.~\ref{rsl:free-equiv-mred} implies the existence of some $\mredb_0$ such that $t \mred{\mredb_0} u$, $\mredb_0 \freefrom \setredb$, $\residus{\setredb}{\mredb_0} = \emptyset$ and $\mredms{\mredb_0} \leqms \mredms{\mreda}$.
We define $\mredb \eqdef \residus{\mredb_0}{\setredb}$. Then we can build the following diagram; \confer\ \theprop~\ref{rsl:SOredexset}(2).

$$ 
\xymatrix@C=48pt{
t \arMulti{r}^{\mredb_0} \arMulti{d}_{\setredb \ } &
u \ar@{=}[d] 
\\
s \arMulti{r}_{\mredb} & 
u
}
$$ 

Lem.~\ref{rsl:residual-same-depth-mred} implies $\mredms{\mredb} = \mredms{\mredb_0} \leqms \mredms{\mreda}$. Thus we conclude.
\end{proof}

\begin{proposition}[\delia{Projection over used \tredexset}]
\label{rsl:sekar-rama-10}
Let $t \mred{\mreda} u$ and $\setredb \subseteq \ROccur{t}$, s.t. $\setredb$ is \ngrip, \mreda\ uses $\setredb$, $\residus{\setredb}{\mreda} = \emptyset$ and $t \mstep{\setredb} s$.
Then there exists a multireduction $\mredb$ such that $s \mred{\mredb} u$ and $\mredms{\mredb} \ltms \mredms{\mreda}$.
\end{proposition}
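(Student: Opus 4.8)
The plan is to mirror the construction of $\mreda_{k+1}$ sketched around Figs.~\ref{fig:normalisation-proof-iteration-1} and~\ref{fig:normalisation-proof-iteration-3}, thereby reducing the present statement to the already-established non-used case, Prop.~\ref{rsl:sekar-rama-8}. Since $\mreda$ is finite and uses $\setredb$, I would first single out the \emph{last} multistep of $\mreda$ that contracts a residual of $\setredb$, writing $\mreda = \mreda' ; \setreda ; \mreda''$ with $t \mred{\mreda'} w \mstep{\setreda} w' \mred{\mreda''} u$, where $\setreda \cap \residus{\setredb}{\mreda'} \neq \emptyset$ and, by maximality of the chosen index, $\mreda''$ does not use $\residus{\setredb}{\mreda' ; \setreda}$. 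I would then split $\setreda$ exactly as in the main text by setting $\setreda_1 \eqdef \setreda \cap \residus{\setredb}{\mreda'}$ (nonempty), $\setreda_2 \eqdef \residus{(\setreda \setminus \setreda_1)}{\setreda_1}$, and $\setredc \eqdef \residus{\setredb}{\mreda' ; \setreda_1}$, so that a development of $\setreda$ factors as $w \mstep{\setreda_1} p \mstep{\setreda_2} w'$ and $p \mstep{\setredc} p'$ for suitable objects $p, p'$.

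Next I would verify that the tail $p \mred{\setreda_2 ; \mreda''} u$ together with $\setredc$ meets the hypotheses of Prop.~\ref{rsl:sekar-rama-8}. The set $\setredc$ is \ngrip, being a residual of the \ngrip\ set $\setredb$. Disjointness $(\setreda \setminus \setreda_1) \cap \residus{\setredb}{\mreda'} = \emptyset$ holds by construction, and \axiomuse{\axAncestorUniqueness} propagates it through the contraction of $\setreda_1$ to yield $\setreda_2 \cap \setredc = \emptyset$; hence $\setreda_2$ does not use $\setredc$. Moreover $\residus{\setredc}{\setreda_2} = \residus{\setredb}{\mreda' ; \setreda}$, which is not used by $\mreda''$, so $\setreda_2 ; \mreda''$ does not use $\setredc$. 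Finally $\residus{\setredc}{\setreda_2 ; \mreda''} = \residus{\setredb}{\mreda' ; \setreda ; \mreda''} = \residus{\setredb}{\mreda} = \emptyset$. Prop.~\ref{rsl:sekar-rama-8} then provides $\mredb'$ with $p' \mred{\mredb'} u$ and $\mredms{\mredb'} \leqms \mredms{\setreda_2 ; \mreda''}$.

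It remains to reach $p'$ from $s$ and to assemble the measure. Applying Prop.~\ref{rsl:SOredexset}:(\ref{it:SOmredseq}) to $t \mstep{\setredb} s$ and $t \mred{\mreda' ; \setreda_1} p \mstep{\setredc} p'$ shows that $s \mred{\residus{(\mreda';\setreda_1)}{\setredb}} p'$; and since $\setreda_1 \subseteq \residus{\setredb}{\mreda'}$ forces $\residus{\setreda_1}{\residus{\setredb}{\mreda'}}$ to be empty, this reduction is already $s \mred{\residus{\mreda'}{\setredb}} p'$, of length $\rlength{\mreda'}$. I would then set $\mredb \eqdef \residus{\mreda'}{\setredb} ; \mredb'$, giving $s \mred{\mredb} u$ with $\rlength{\mredb} = \rlength{\mreda'} + \rlength{\setreda_2 ; \mreda''} = \rlength{\mreda}$. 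For the measure, $\setreda_1 \neq \emptyset$ gives $\depth{\setreda_2} < \depth{\setreda}$, whence $\mredms{\setreda_2 ; \mreda''} \ltms \mredms{\setreda ; \mreda''}$; combined with $\mredms{\mredb'} \leqms \mredms{\setreda_2 ; \mreda''}$ this yields $\mredms{\mredb'} \ltms \mredms{\setreda ; \mreda''}$ (both of length $1 + \rlength{\mreda''}$). Since $\residus{\mreda'}{\setredb}$ and $\mreda'$ have the same length $\rlength{\mreda'}$, the reversed-order, length-respecting property of $\mredms{\cdot}$ lets me prepend them to conclude $\mredms{\mredb} \ltms \mredms{\mreda' ; \setreda ; \mreda''} = \mredms{\mreda}$.

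The genuinely delicate work — postponing \domintext\ parts while preserving depths, which requires \ngrip\-ness and the gripping axioms — is entirely encapsulated in Prop.~\ref{rsl:sekar-rama-8}, so here the only real difficulties are bookkeeping. The two points I expect to need care are (i) that the projection $\residus{\mreda'}{\setredb}$ lands exactly on $p'$, which hinges on $\setreda_1$ being a subset of $\residus{\setredb}{\mreda'}$ so that its residual vanishes, letting the length stay equal to $\rlength{\mreda}$; and (ii) that the strict decrease survives prepending the common prefix, which is precisely where storing the depths in \emph{reversed} order in $\mredms{\cdot}$ pays off, the single strict drop being localized at the last multistep that uses $\setredb$.
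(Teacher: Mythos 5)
Your proposal is correct and follows essentially the same route as the paper's own proof: decompose $\mreda$ at the \emph{last} multistep using a residual of $\setredb$, split it into $\setreda_1 = \setreda \cap \residus{\setredb}{\mreda'}$ and $\setreda_2$, use \axiomuse{\axAncestorUniqueness} to show the tail does not use $\setredc = \residus{\setredb}{\mreda';\setreda_1}$, invoke Prop.~\ref{rsl:sekar-rama-8}, and conclude via the reversed-order lexicographic measure with the strict drop $\depth{\setreda_2} < \depth{\setreda}$. The only (immaterial) difference is bookkeeping: you project the prefix by one application of Prop.~\ref{rsl:SOredexset}:(\ref{it:SOmredseq}) to $\mreda';\setreda_1$ and discard the resulting trailing empty multistep, whereas the paper projects $\mreda_1$ and $\setreda_1$ in two separate diagram steps.
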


\begin{proof}
The hypotheses indicate $\mreda$ uses $\setredb$, therefore the ``last'' element of $\mreda$ which uses the corresponding residual of $\setredb$ can be determined, \ie\
$\mreda$ can be written as $\mreda_1; \setreda; \mreda_2$, such that $\setreda$ uses $\residus{\setredb}{\mreda_1}$ (\ie\ $\setreda \cap \residus{\setredb}{\mreda_1} \neq \emptyset$) and $\mreda_2$ does not use $\residus{\setredb}{\mreda_1; \setreda}$. Observe $\rlength{\mreda} = \rlength{\mreda_1} + \rlength{\mreda_2} + 1$.

Let $\setredb' \eqdef \residus{\setredb}{\mreda_1}$, $\setreda_1 \eqdef \setreda \cap \setredb'$, and $\setreda_2 \eqdef \residus{(\setreda \setminus \setreda_1)}{\setreda_1}$.
Observe that $\setreda_1 \neq \emptyset$. %
\delia{To verify that $\depth{\setreda_2} < \depth{\setreda}$, let $\reda$, $\redb$ such that $\reda \develops \setreda_1$, $\redb \develops \setreda_2$, and particularly $\rlength{\redb} = \depth{\setreda_2}$. Observe $\reda ; \redb \develops \setreda$. We obtain $\rlength{\reda} > 0$ since $\setreda_1 \neq \emptyset$. Then $\depth{\setreda} \geq \rlength{\reda ; \redb} > \depth{\setreda_2}$. 

}
Therefore, $\mredms{\setreda_2; \mreda_2} \ltms \mredms{\setreda; \mreda_2}$.
Moreover $\residus{\setreda_1}{\setredb'} = \emptyset$.
We can build the following diagram:
\begin{center}
$ \xymatrix@C=40pt{ t
    \arMultiOp{r}{>>}^{\mreda_1} \arMulti{d}_{\setredb} &
    t_0 \arMulti{r}^{\setreda_1} \arMulti{d}_{\setredb'} &
    t_1 \arMulti{r}^{\setreda_2} \arMulti{d}_{\residus{\setredb'}{\setreda_1}} & t_2
    \arMultiOp{r}{>>}^{\mreda_2} & u \\ s & s_0
    \ar@{=}[r] & s_0 
		} 
$
\end{center}
Suppose $\setreda_2
  $ uses $\residus{\setredb'}{\setreda_1}$. Notice that the existence of some $b' \in \setreda_2 \cap \residus{\setredb'}{\setreda_1}$
  would in turn imply the existence of some $b_1\in\setredb'$ s.t. $\residu{b_1}{\setreda_1}{b'}$
  and also the existence of some $b_2\in \setreda\setminus\setreda_1$
  s.t. $\residu{b_2}{\setreda_1}{b'}$.
		Consider an arbitrary $\reda \develops \setreda_1$. Then, by a simple induction on
                $\rlength{\reda}$ and resorting to \axiomuse{\axAncestorUniqueness}, one deduces $b_1=b_2$.
  Therefore
  $b_1=b_2\in \setredb'\cap(\setreda\setminus\setreda_1)$. But then, by definition of $\setreda_1$,
  $b_1=b_2\in\setreda_1$, which is absurd.  Therefore $\setreda_2
  $ does not use $\residus{\setredb'}{\setreda_1}$ and hence, since $\mreda_2$ does not use
  $\residus{\setredb}{\mreda_1; \setreda}$, 
  we obtain that $\setreda_2;\mreda_2$ does not use $\residus{\setredb'}{\setreda_1}$.
Moreover, $\setredb$ \ngrip\ implies $\residus{\setredb'}{\setreda_1}$ \ngrip.  Hence
Prop.~\ref{rsl:sekar-rama-8} yields the existence of some $\mredb_2$ verifying $s_0 \mred{\mredb_2}
u$ and $\mredms{\mredb_2} \leqms \mredms{\setreda_2 ; \mreda_2} \ltms \mredms{\setreda ;
  \mreda_2}$. 
Remark that, by definition of $\mredmsf$, $\rlength{\mredb_2} = \rlength{\setreda; \mreda_2} = \rlength{\mreda_2} + 1$.
\begin{center}
$ \xymatrix@C=40pt{ t
    \arMultiOp{r}{>>}^{\mreda_1} \arMulti{d}_{\setredb} &
    t_0 \arMultiOp{r}{>>}^{\setreda ; \mreda_2} \arMulti{d}_{\setredb'} & u \\
		s \arMultiOp{r}{>>}^{\residus{\mreda_1}{\setredb}} & 
		s_0 \arMultiOp{r}{>>}^{\mredb_2} & u
		} 
$
\end{center}
Thus if we define $\mredb \eqdef
\residus{\mreda_1}{\setredb} ; \mredb_2$, then $\rlength{\mredb} = \rlength{\mreda_1} + \rlength{\mreda_2} + 1 = \rlength{\mreda}$, and $\mredms{\mredb_2} \ltms \mredms{\setreda ; \mreda_2}$ implies $\mredms{\mredb} \ltms \mredms{\mreda}$ independently of the relative measures of $\residus{\mreda_1}{\setredb}$ and $\mreda_1$, since elements of multireductions are considered in \emph{reversed order} when building measures.
Thus we conclude.
\end{proof}

\medskip
\delia{As we already remarked, \theprop~\ref{rsl:sekar-rama-10} shows the existence of an adequate $\mreda_{k+1}$ following the general proof structure described at the beginning of \thesec~\ref{sec:reduction-strategies}. Therefore, we can prove the main result of this work.}

\begin{theorem}[\delia{The abstract normalisation result}]
\label{rsl:necessary-and-ngrip-then-normalises}
Let $\arsa = \langle \arsTerms, \arsRedexes, \arsSource, \arsTarget, \arsResidualRel, <, \grip
\rangle$ be an ARS enjoying all the axioms listed in Fig.~\ref{fig:summaryOfAxioms}. Repeated contraction of necessary and \ngrip\ \tredexsets\ on $\arsa$ normalises.
\end{theorem}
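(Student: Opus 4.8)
The plan is to package the work already done---principally Prop.~\ref{rsl:sekar-rama-10}---into a single well-founded induction, so that the theorem itself reduces to checking that the hypotheses of that proposition hold at each stage. Let $\strs$ be any reduction strategy selecting only necessary and \ngrip\ multisteps, and fix a normalising object $t_0$; if $t_0 \in \NForms$ the determined multireduction is $\emptyred{t_0}$ and there is nothing to prove, so assume otherwise. By definition of normalising there is a reduction sequence $t_0 \sred{\reda_0} u$ with $u \in \NForms$, which I read as a multireduction $t_0 \mred{\mreda_0} u$ by taking each step as a singleton multistep. The goal is to show that the multireduction determined by $\strs$ from $t_0$ terminates at a normal form.

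The heart of the argument is an iteration. Suppose that at stage $k$ I have a non-normal object $t_k$ together with a multireduction $t_k \mred{\mreda_k} u$. Put $\setredb := \strs(t_k)$ and $t_{k+1} := \rtgt{\setredb}$, so $t_k \mstep{\setredb} t_{k+1}$. Three hypotheses of Prop.~\ref{rsl:sekar-rama-10} are then met: $\setredb$ is \ngrip\ by the choice of $\strs$; $\mreda_k$ uses $\setredb$ because $\setredb$ is necessary and $t_k \mred{\mreda_k} u$ ends in the normal form $u$; and $\residus{\setredb}{\mreda_k} = \emptyset$, since any such residual would lie in $\ROccur{u}$, which is empty as $u \in \NForms$. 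Prop.~\ref{rsl:sekar-rama-10} thus produces a multireduction $t_{k+1} \mred{\mreda_{k+1}} u$ with $\mredms{\mreda_{k+1}} \ltms \mredms{\mreda_k}$; moreover the construction there preserves length, so $\rlength{\mreda_{k+1}} = \rlength{\mreda_k}$.

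Iterating from $\mreda_0$ yields a sequence of multireductions $\mreda_0, \mreda_1, \ldots$, all of the common length $n := \rlength{\mreda_0}$ and with $\mredms{\mreda_{k+1}} \ltms \mredms{\mreda_k}$ at every stage $k$ where $t_k$ is not normal. Since $\ltms$ is the lexicographic order on $n$-tuples of naturals, it is well-founded on tuples of fixed length, so this $\ltms$-descending chain cannot be infinite. Hence the iteration cannot pass through infinitely many non-normal terms: there is an $n$ with $t_n \in \NForms$. For that $n$ the multireduction $t_n \mred{\mreda_n} u$ starts at a normal form, so it is trivial and $u = t_n$; therefore the multireduction determined by $\strs$ reaches $u$ after finitely many multisteps, which is exactly normalisation.

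I expect the only genuinely delicate points at this level to be administrative rather than conceptual: verifying that $\residus{\setredb}{\mreda_k}$ vanishes at a normal form, invoking necessity to guarantee that $\mreda_k$ uses $\setredb$, and observing that Prop.~\ref{rsl:sekar-rama-10} preserves length, so that the fixed-length well-foundedness of $\ltms$ delivers termination. All the real difficulty---the postponement of \domintext\ multisteps, the invariance of depth under projection along \ngrip\ sets (where the gripping axioms are indispensable, \confer\ \thelem~\ref{rsl:residual-same-depth}), and the bookkeeping of the measure in reversed order---has already been discharged in the lemmas culminating in Prop.~\ref{rsl:sekar-rama-10}.
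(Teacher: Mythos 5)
Your proof is correct and takes essentially the same route as the paper's: both reduce the theorem to Prop.~\ref{rsl:sekar-rama-10}, checking its hypotheses exactly as you do (necessity gives that $\mreda_k$ uses $\setredb$, normality of $u$ gives $\residus{\setredb}{\mreda_k}=\emptyset$, the strategy gives \ngrip), and then descend along the well-founded order $\ltms$. The paper merely packages the descent as a well-founded induction on $\mredms{\mreda_0}$ instead of your explicit iteration-with-no-infinite-descending-chain, which is the same argument in different clothing.
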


\begin{proof}
Let $t_0 \in \arsTerms$ a normalising object in $\arsa$.
Then there exists some multireduction $\mreda_0$ such that $t_0 \mred{\mreda_0} u$ where $u$ is a normal form.
We proceed by induction on $\mredms{\mreda_0}$, \ie\ using the well-founded ordering defined at the beginning of \thesec~\ref{sec:reduction-strategies}.
If $\mredms{\mreda_0}$ is minimal, \ie\ either $\mreda_0 = \nil_{t_0}$ or $\mreda_0 = \langle \emptyset_{t_0}, \ldots, \emptyset_{t_0} \rangle$, then $t_0$ is a normal form, and therefore there is nothing to prove.
Otherwise, let $\setredb$ be a necessary and \ngrip\ \tredexset\ such that $t_0 \mstep{\setredb} t_1$. Then $\mreda_0$ uses $\setredb$, and $u$ being a normal form implies $\residus{\setredb}{\mreda_0} = \emptyset$. 
Therefore Prop.~\ref{rsl:sekar-rama-10} implies the existence of a multireduction $\mreda_1$ such that $t_1 \mred{\mreda_1} u$ and $\mredms{\mreda_1} \ltms \mredms{\mreda_0}$.
The \ih\ on $\mreda_1$ suffices to conclude.
\end{proof}


\section{Applications}
\label{sec:twoCaseStudies}
\subsection{The Pure Pattern Calculus (and the Simple Pattern Calculus)}
\label{sec:ppc}

\theppc\ is a pattern calculus which extends \thespc\ and stands out for the novel forms of
polymorphism it supports. Since arbitrary terms may be used as patterns and hence reduction inside
patterns is allowed, \theppc\ models \emph{pattern polymorphism} where functions over patterns that
are computed at runtime may be defined. Another language feature is \emph{path polymorphism},
which permits functions that are generic in the sense that they operate over arbitrary data
structures.
 
This section has four parts. We first present a brief overview of \theppc\
following~\cite{jk-jfp}. Then we show that \theppc\ fits the ARS framework, including all the
axioms.  The third part formulates a multistep strategy \strs. The final part shows that \strs\
computes necessary and \ngrip\ multisteps. In view of the results of the previous section,
\confer\ \thethm~\ref{rsl:necessary-and-ngrip-then-normalises}, these last three parts -- taken
together -- imply that \strs\ is normalising for this calculus.

\subsubsection{Overview of \theppc}
\label{sec:ppc-basic-elements}

%

Consider a countable set of \defi{symbols} $f, g, \ldots, x, y, z$.
Sets of symbols are denoted by meta-variables $\theta$, $\phi$, \ldots.
The  syntax  of  \theppc\ is summarised  by the following grammar: 
\[ \begin{array}{llccl}
\mbox{{\bf Terms}}           & (\capfterms) & t & :: = & x \mid \wid{x} \mid tt \mid \lpth\ t.t  \\
\mbox{{\bf Data-Structures}} & (\DStructs)  & D & :: = & \wid{x} \mid Dt \\
\mbox{{\bf Abstractions}}   & (\Abstract)   & A & :: = & \lpth\ t.t \\
\mbox{{\bf Matchable-forms}} & (\MForms)    & F & :: = & D \mid A 
\end{array} \]

The term $x$ is called a \defi{variable}, $\wid{x}$ a
\defi{matchable}, $tu$ an \defi{application} ($t$ is the
\defi{function} and $u$ the \defi{argument}) and $\lpth\ p.u$ an
\defi{abstraction} ($\theta$ is the set of \defi{binding symbols}, $p$
is the \defi{pattern} and $u$ is
the \defi{body}).  Application (resp. abstraction) is left
(resp. right) associative.  

 A
$\l$-abstraction $\l x.t$ can be defined by
$\lp{\set{x}}\ \wid{x}. t$. The \defi{identity function}
$\lp{\set{x}}\ \wid{x}. x$ is abbreviated $\Id$. \onlyReport{The notation
  $\tsize{t}$ is used for the \defi{size} of $t$, defined as
  expected. }

A binding symbol $x \in \theta$ of an abstraction $\lpth\ p.s$
\emph{binds} matchable occurrences of $x$ in $p$ 
and  variable occurrences of $x$ in $s$. The derived
notions of \defi{free variables} and \defi{free matchables} 
are respectively denoted by $\fv{\_}$ and $\fm{\_}$.
\onlyPdf{This is illustrated in Fig.~\ref{fig:free-bound}.}

\onlyPdf{\begin{wrapfigure}[5]{l}{0.4\textwidth}
\vspace{-1mm}
\minicenter{
\newcommand{\testigo}{}
\newcommand{\alturasub}{-0.1}
\newcommand{\alturanorm}{0}
\newcommand{\anchomult}{0.85}
\newcommand{\lebend}{90}
\begin{tikzpicture}
  \node (formula) {$\lp{\set{x}} \ x \ \wid{x} \ . \ x \ \wid{x}$};
  \node (subxuno) at (-0.73 * \anchomult,\alturasub) {\testigo};
  \node (subxdos) at (-0.7 * \anchomult,\alturasub) {\testigo};
  \node (matx) at (0.17 * \anchomult,\alturanorm) {\testigo};
  \node (varx) at (0.7 * \anchomult,\alturanorm) {\testigo};
  \draw[->, thick] (subxuno) to[bend right=80] (matx);
  \draw[->, thick] (subxdos) to[bend right=\lebend] (varx);
\end{tikzpicture}
}
\caption{Binding in \theppc}\label{fig:free-bound}
\end{wrapfigure}
}
\onlyDvi{
\smallskip
\fbox{aca va un dibujo}
\smallskip
}

Formally, \defi{free
    variables} and \defi{free matchables} of terms are defined by:
  $\fv{x} := \set{x}$, $\fv{\wid{x}} := \ems$, $\fv{tu} := \fv{t} \cup
  \fv{u}$, $\fv{\lpth\ p.u} := (\fv{u} \sm \theta) \cup \fv{p}$,
  $\fm{x} := \ems$, $\fm{\wid{x}} := \set{x}$, $\fm{tu} := \fm{t} \cup
  \fm{u}$, $\fm{\lpth\ p.u} := (\fm{p} \sm \theta) \cup
  \fm{u}$. 
As usual, we consider
  terms up to \defi{alpha-conversion}, \ie\ up to renaming of
  bound matchables and variables.     
\defi{Constructors} are  matchables which are not bound
  and, to ease the presentation,
  they are often denoted in typewriter fonts $\ca,\cb,\cc,\cd, \ldots$,  
  thus for example $\lp{\set{x,y}}\ \wid{x}\ y\ \ca.y$
  denotes $\lp{\set{x,y}}\ \wid{x}\ y\ \wid{z}.y$.
  The distinction between matchables and variables is 
unnecessary for standard  (static) patterns which do not contain
free variables.

A \defi{position} is either $\epsilon$ (the empty position), or $na$,
where $n \in \set{ 1, 2 }$ and $a$ is a position.  We use $a, b, \ldots$ 
to denote positions.
The \defi{set $\Pos{t}$ of positions} of $t$ is defined as expected, provided that for abstractions
$\lpth\ p.s$ positions inside both $p$ and $s$ are considered.  Here is
an example $\Pos{\lp{\set{x}}\ \ca\ \wid{x}.\ca\ x\ x} =
\set{\epsilon, 1, 2, 11, 12, 21, 22, 211, 212}$. We write $a \leq b$ (resp. $a \disj b$) when the position $a$ is a
\defi{prefix} of (resp. \defi{disjoint} from) the position $b$.
Notice that $a \disj b$ and $a \leq c$ imply $c \disj b$.
All these notions are defined as expected~\cite{baader-nipkow}
and extended to sets of positions as well.  
In particular, given a position $a$ and a set of positions \setpb, we will say that $a \leq \setpb$ iff $a \leq b$ for all $b \in \setpb$, and analogously for $<$, $\disj$, etc..
\ignore{
Particularly, given two
sets of positions \setpa\ and \setpb, they are said to be
\defi{\setdisjoint}, written $\setpa \disj \setpb$,  iff $\metaforall a \in \setpa\ \metaforall b \in
\setpb\ a \disj b$. 
}

We write $\subtat{t}{a}$ for the \defi{subterm of $t$ at position $a$}
and $t[s]_a$ for the \defi{replacement} of the subterm at position $a$
in $t$ by $s$. Finally, we write $s\subterm t$ if $s$ is a
subterm of $t$ (note in particular $s\subterm s$).
 Notice that replacement may capture variables.  An
\emph{occurrence} of a term $s$ in a term $t$ is any position $p \in
\Pos{t}$ verifying $\subtat{t}{p} = s$. Particularly, \emph{variable
  occurrences} are defined this way.


\subparagraph{Substitution and Matching.}
A \defi{substitution} $\sigma$ is a mapping from variables to terms with
finite domain $\dom(\sigma)$. We write $\set{x_1 \rewto t_1, \ldots,
  x_n \rewto t_n}$ for a substitution with domain $\set{x_1, \ldots,
  x_n}$. 
A \defi{match} $\mu$ is either a substitution or a special
constant in the set $\set{\fail, \wait}$.  A \defi{match} is
\defi{positive} if it is a substitution; it is \defi{decided} if it is
either positive or \fail. 
The set of free variables of a match $\mu$ are defined as follows: $\fv{\sig} = \bigcup_{x \in \dom(\sig)} \fv{\sig x}$, $\fv{\fail} = \ems$ and $\fv{\wait}$ is undefined. Similarly for $\fm{\mu}$. 
We also define $\dom(\fail) = \ems$, whereas $\dom(\wait)$ is undefined.
The \defi{symbols} of $\mu$ are $\sym{\mu}: = \dom(\mu) \cup \fv{\mu} \cup \fm{\mu}$.  A set of symbols $\theta$ \defi{avoids} a match $\mu$, written $\theta \avoids \mu$, iff $\forall x \in \theta, x \notin
\sym{\sig}$.  
The \defi{application of a substitution} $\sig$ to a
term is written and defined as usual on alpha-equivalence classes; in particular $\sig (\lpth\ p.s) := \lpth\ \sigma(p).\sigma(s)$, if $
\theta \avoids \sigma$.
Notice that data structures and matchable forms are
stable by substitution.
The \defi{application of a match} $\mu$ to a
term $t$, written $\mu t$, is defined as follows: if $\mu$ is a
substitution, then it is applied as explained above; if $\mu = \wait$, then
$\mu t$ is undefined; if $\mu = \fail$, then $\mu t$ is the identity
function $\Id$. Other \emph{closed terms in normal form} could be taken to
define the last case, this one allows in particular to encode
pattern-matching definitions given by alternatives~\cite{jk-jfp}.

The \defi{restriction} of a substitution $\sig$ to a set of
  variables $\{x_1,\ldots,x_n\}\subseteq\dom(\sig)$ is written
  $\sig|_{\{x_1,\ldots,x_n\}}$. This notion is extended to matchings
  by defining $\wait|_{\{x_1,\ldots,x_n\}}=\wait$ and
  $\fail|_{\{x_1,\ldots,x_n\}}=\fail$, for any set of variables
  $\{x_1,\ldots,x_n\}$. 
	The \defi{composition} $\sig \circ \eta$ of two substitutions
  $\sig$ and $\eta$ is defined by $(\sig \circ \eta)x = \sig(\eta
  x)$. Furthermore, if $\mu_1$ and $\mu_2$ are
  matches of which at least one is $\fail$, then $\mu_2 \circ \mu_1$
  is defined to be $\fail$.  Otherwise, if $\mu_1$ and $\mu_2$ are
  matches of which at least one is $\wait$, then $\mu_2 \circ \mu_1$
  is defined to be $\wait$.  Thus, in particular, $\fail \circ \wait$
  is $\fail$.

The \defi{disjoint union} of two matches $\mu_1$ and $\mu_2$ is as in \thespc. In particular, the
equation from \thespc\ also holds
\begin{center}
$\fail  \uplus \wait  = \wait
  \uplus\fail =  \fail$
\end{center}
and is the culprit for the non-sequential nature of \theppc\ (just as in \thespc)\footnote{Sequentiality can be recovered (see \eg~\cite{Jaybook,Balhor10,Balppdp10}) by simplifying the equations of disjoint union, however, some meaningful terms will no longer be normalising. 
\Eg\ if in particular $\wait \uplus \fail = \wait$, then 
then $(\lp{\ems}\ \ca\ \cb\ \cb\ . \wid{y}) (\ca\ \Om\ \cc)$, where $\Om$ is a non-terminating term, would never fail as we expect.}

\ignore{  
The \defi{disjoint union} of two matches $\mu_1$ and $\mu_2$ is a crucial operation
used to define the
operational semantics of \theppc.  Disjoint union is written $\mu_1
  \uplus  \mu_2$ and  is  defined  as:  their union  if  both $\mu_i$  are
  substitutions  and  $\dom(\mu_1)   \cap  \dom(\mu_2)  =  \ems$;
  \wait\  if  either of  the  $\mu_i$ is  \wait\  and  none is  \fail;
  \fail\  otherwise. 
This definition of disjoint union of matches validates the following equations which are responsible for the non-sequential nature of \theppc:
\begin{center}
$\fail  \uplus \wait  = \wait
  \uplus\fail =  \fail$
\end{center}
We return to these equations immediately after the definition of the
operational semantics of the calculus.
} 

The \defi{compound matching operation} takes a term, a set of binding symbols 
and a pattern and returns a match, it is defined by applying  the following 
equations in order: 
\begin{center}
$\begin{array}{rcll}
\cmatchOpth{t}{\wid{x}}        & := & \subst{x}{t} & \textif x \in \theta \\
\cmatchOpth{\wid{x}}{\wid{x}}  & := & \emptySubst  & \textif x \notin \theta \\
\cmatchOpth{tu}{pq}         & := & \cmatchOpth{t}{p} \uplus \cmatchOpth{u}{q} 
			                       & \textif tu, pq \in \MForms  \\
\cmatchOpth{t}{p}           & := & \fail
			                       & \textif p,t \in \MForms \\
\cmatchOpth{t}{p}           & := & \wait
			                       & \textnormal{otherwise} \\
\end{array}$
\end{center}

The use of disjoint union in the third case
of the previous definition restricts compound matching to linear patterns, as in \thespc.
\ignore{ 
~\footnote{A pattern $p$ is
  linear w.r.t. $\theta$ if for every $x$ 
  in $\theta$, the matchable $\wid{x}$ appears at most once in $p$.}, 
which is  necessary to guarantee confluence. Indeed, disjoint union 
of two substitutions fails whenever their 
domains are not disjoint. 
} 
The result of the \defi{matching operation}\footnote{Note that the notation for (compound) matching we have just given differs from \cite{jk-ppc} and \cite{jk-jfp}: the pattern and argument appear in reversed order there.}
$\matchOpth{t}{p}$ 
is defined to be the \emph{check} of $\cmatchOpth{t}{p}$ on $\theta$; 
where the \defi{check} of a match $\mu$  on $\theta$ 
is \fail\ if  $\mu$ is a substitution whose domain is not $\theta$, 
$\mu$ otherwise. Notice that $\matchOpth{t}{p}$ is never positive if
$p$ is not linear with respect to  $\theta$.
We now give some examples: 
$\matchOpt{\set{x}}{uv}{\wid{x}\wid{x}}$ gives $\fail$
because $\wid{x}\wid{x}$ is not linear; 
$\matchOpt{\set{x,y,z}}{uv}{\wid{x}\wid{y}}$ gives $\fail$
because $\set{x,y,z} \neq \set{x,y}$, 
$\matchOpt{\ems}{u}{\wid{x}}$ gives $\fail$
because $\ems \neq \set{x}$;
$\matchOpt{\set{x}}{\wid{y}}{\wid{y}}$ gives $\fail$
because $\set{x} \neq \ems$;
$\matchOpt{\set{x}}{u \wid{z}}{\wid{x}\wid{y}}$ gives $\fail$
because $\cmatchOp{\set{x}}{\wid{z}}{\wid{y}}$ is $\fail$; 
$\matchOpt{\ems}{u \wid{z}}{\wid{x}\wid{y}}$ gives $\fail$
since both $\matchOpt{\ems}{u}{\wid{x}}$ and $\cmatchOp{\ems}{\wid{z}}{\wid{y}}$ are $\fail$.


\subsubsection{\theppc\ as an ARS} 
\label{sec:ppc-ars}

\theppc\ can be described as an ARS. Its objects $\arsTerms$ are the \emph{terms} of \theppc. The
\emph{\tredexes} are the pairs $\pair{t}{a}$ where $t$ is a term, $a \in \Pos{t}$, $\subtat{t}{a} =
(\lpth p.s)u$, and $\matchOpth{u}{p}$ is decided. In this case $\rsrc{\pair{t}{a}} \eqdef t$ and $\rtgt{\pair{t}{a}} \eqdef t[\matchOpth{u}{p} s]_a$.
If $\matchOpth{u}{p} = \fail$, then we say that the \tredex\ is a \defi{matching failure}.
\delia{We will often
denote by $\ppcstepa$ a given step $\pair{t}{a}$; analogously, 
we will often denote by $\ppcsetred{D}$
the set $\set{\pair{t}{d} \mid d \in D}$ where $D \subseteq \Pos{t}$.
Conversely, whenever $\ppcstepa$ is a step, we often refer to its position as $a$, even without specifying explicitly that $\ppcstepa = \pair{t}{a}$ for some term $t$, and similarly, whenever $\ppcsetred{D}$ is a set of steps, we refer to the corresponding set of positions as $D$.}
This notation shall prove convenient when we address the compliance of \theppc\ w.r.t. the axioms of an ARS. Regarding the relations over objects and steps:

\begin{itemize}

\item \textbf{Residual relation}.
\label{dfn:ppc-residual}
If $\ppcstep{a} = \pair{t}{a}$, $\ppcstep{b} = \pair{t'}{b}$ and $\ppcstep{b'} = \pair{u}{b'}$ are \tredexes, then $\residu{\ppcstep{b}}{\ppcstep{a}}{\ppcstep{b'}}$ iff $t' = t$, $u = \rtgt{\ppcstep{a}}$, and one of the following cases apply, where $\subtat{t}{a} = (\lpth p.s) u$: \\
\begin{tabular}{@{$\ \ \bullet\quad$}p{.8\textwidth}}
$a \not\leq b$ and $b' = b$. \\
$b = a12n$, $b' = an$ and $\matchOpth{u}{p} \neq \fail$. \\
$b = a2mn$, $b' = akn$, $\matchOpth{u}{p} \neq \fail$, and there is a variable $x \in \theta$ such that $\subtat{t}{a11m} = \subtat{p}{m} = \wid{x}$ and $\subtat{t}{a12k} = \subtat{s}{k} = x$.
\end{tabular}

\item \textbf{Embedding relation}. 
\delia{We define the embedding relation between redexes as the \emph{tree order}~\cite{thesis-mellies}.} 
Namely,  $\ppcstep{a} < \ppcstep{b}$ iff $\ppcstep{a} = \pair{t}{a}$, $\ppcstep{b} = \pair{t}{b}$, and $a < b$.
Notice that 
whenever $\ppcstep{a} < \ppcstep{c}$ and $\ppcstep{b} < \ppcstep{c}$, then $\ppcstep{a}$ and $\ppcstep{b}$ are comparable \wrt\ the embedding,
\ie\  either $\ppcstep{a} = \ppcstep{b}$, $\ppcstep{a} < \ppcstep{b}$ or $\ppcstep{b} < \ppcstep{a}$.

\label{page-gripping-ppc}
\item \textbf{Gripping relation}.
\ignore{
The intent of the \emph{gripping} relation is exactly to prevent the phenomenon described in \thesec~\ref{sec:gripping}, taking into account that an abstraction $\lpth p.s$ in \theppc\ bounds the \emph{set} of variables $\theta$.
This observation leads to the following definition: 
} 
Let $\ppcstep{a} = \pair{t}{a}$ and $\ppcstep{b} = \pair{t}{b}$ be \tredexes\ and let $\subtat{t}{a} = (\lpth p.s) u$. 
Then $\ppcstep{a} \grip \ppcstep{b}$ iff $\matchOpth{u}{p} \neq \fail$, $b = a12n$, and $\theta \cap \fv{\subtat{s}{n}} \neq \emptyset$.
\end{itemize}

We now address the axioms of Fig.~\ref{fig:summaryOfAxioms}. A word on
notation: if $t$ and $\theta$ are a term and a set of symbols
respectively, then we will write $\isbm{t}{\theta}$ when $t = \wid{x}$
for some $x \in \theta$.

\ignore{
Observe that the name given to \tredexes\ reflects the relationship with a position: a \tredex\ for a position $a$ is given the name $\ppcstep{a}$. This allows to relate quickly each \tredex\ to its corresponding position, which comes in handy when describing or proving properties of \tredexes\ which are derived from their positions, as it is the case with the definitions of residuals, embedding and gripping just introduced. We will adhere to this naming convention from now on.

\medskip
\carlos{I know that the following paragraph needs polishing, and maybe some illustrating examples. I just want to indicate that I think some comments about the residual definition could help its digestion by a reader, specially since I find the notion of residual, and particularly a position-based characterisation of residuals, less popular than I thought some years ago. The comment about the compound matching operation could be moved to the presentation of \theppc, if it survives.} 

A brief explanation of the three cases in the definition of residuals follows. \\
The first case corresponds to a \tredex\ $\ppcstep{b}$ which is not globally affected by (the contraction of) $\ppcstep{a}$, then we will found its only residual at the same position in $u$. \\
The second case corresponds to $\ppcstep{b}$ being inside the body in $\ppcstep{a}$, so that it simply gets closer to the root in the $u$. \\
The third case corresponds to $\ppcstep{b}$ being inside the argument in $\ppcstep{a}$, so that there can be several (or no) residuals in $u$, corresponding to the occurrences, in the body, of the variable which ``catches'' the \tredex\ inside the argument in the match $\matchOpth{u}{p}$.
We remark that the way the compound matching operation is defined guarantees that, for any positive match, any subterm of the argument being a \tredex\ will be completely ``catch'' by a variable. This fact is crucial in order to preserve confluence in \theppc.

\bigskip
As remarked in \thesec~\ref{sec:multisteps}, modeling \theppc\ as an ARS allows to apply to that calculus, several concepts defined in an abstract way. 
These concepts include: normal form, \redseq, residuals after a \redseq, (complete) development, \mredseq, residuals after a \tredexset\ and a \mredseq, residual of a \mredseq\ after a \tredexset.
} 

\ignore{
\subsection{Non-sequentiatlity of \theppc}
\completar{We can show the non-sequential behaviour of \theppc\ by means of an example. I think that the following sentence taken from the RTA 2012 article would be adequate here.}

It is worth noticing that sequentiality of  \theppc\ can be recovered (see \eg~\cite{Jaybook,Balhor10,Balppdp10}) by modifying the equations of disjoint union, however, some meaningful terms will no longer be normalising. 
Thus for example, if $\fail \uplus \mu$ is defined to be $\fail$, while $\wait \uplus \fail = \wait$ and $\sig \uplus \fail = \fail$, then $(\lp{\ems}\ \ca\ \cb\ \cb\ . \wid{y}) (\ca\ \Om\ \cc)$, where $\Om$ is a non-terminating term, would never fail as expected.  
}


\bigskip\noindent\textbf{Fundamental axioms.} \\
\axiomuse{\axSelfReduction} is immediate from the definition of residuals for \theppc: none of the cases there applies for $\residus{\ppcstep{a}}{\ppcstep{a}}$. 
\axiomuse{\axFiniteResiduals} follows from the fact that terms are finite. 
Axiom \axiomuse{\axAncestorUniqueness} is proved below.

\begin{lemma}[\axiomuse{\axAncestorUniqueness}]
Let $\ppcstep{b_1}, \ppcstep{b_2}, \ppcstep{a}, \ppcstep{b'}$ be \tredexes\ verifying $\residu{\ppcstep{b_1}}{\ppcstep{a}}{\ppcstep{b'}}$ and $\residu{\ppcstep{b_2}}{\ppcstep{a}}{\ppcstep{b'}}$. Then $\ppcstep{b_1} = \ppcstep{b_2}$.
\end{lemma}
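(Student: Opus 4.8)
The plan is to read off the source position of a residual from its target position, and to show this read-off is deterministic. Write $\ppcstep{a} = \pair{t}{a}$ with $\subtat{t}{a} = (\lpth p.s)u$, and let $\ppcstep{b_1} = \pair{t}{b_1}$, $\ppcstep{b_2} = \pair{t}{b_2}$, $\ppcstep{b'} = \pair{\rtgt{\ppcstep{a}}}{b'}$. Since a step is determined by its position together with the (common) source, it suffices to prove $b_1 = b_2$ as positions. The first split is on whether $a \leq b'$. If $a \not\leq b'$, then only the first clause of the residual definition can apply, because the second and third clauses produce $b' = an$ and $b' = akn$ respectively, both of which satisfy $a \leq b'$; hence $b_1 = b' = b_2$. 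So assume $a \leq b'$ and write $b' = ar$. Now the first clause is excluded (it forces $a \not\leq b'$), so each $\ppcstep{b_i}$ must arise from the second or third clause.

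Two structural facts drive the rest. First, a variable occurrence is a leaf of the term tree, so if $\subtat{s}{k} = x$ then no position strictly extends $k$ inside $s$. Second, in the second clause $\ppcstep{b_i} = \pair{t}{a12r}$ is itself a step, so $\subtat{s}{r} = \subtat{t}{a12r}$ is a redex, in particular an application and not a variable. If both $\ppcstep{b_1}$ and $\ppcstep{b_2}$ come from the second clause, then $b_1 = a12r = b_2$ and we are done. The mixed case (one from the second clause, one from the third) is impossible: it would give $r = kn$ with $k \leq r$ and $\subtat{s}{k}$ a variable occurrence, whence by the leaf property $k = r$, so $\subtat{s}{r}$ is a variable, contradicting that $\subtat{s}{r}$ is an application.

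It remains to treat the case where both ancestors come from the third clause. Then $r = k_1 n_1 = k_2 n_2$ with $\subtat{s}{k_1}$ and $\subtat{s}{k_2}$ variable occurrences, and $b_i = a2m_in_i$ with $\subtat{p}{m_i} = \wid{x_i}$ and $\subtat{s}{k_i} = x_i \in \theta$. As prefixes of $r$ the positions $k_1,k_2$ are comparable, and the leaf property forbids one from properly extending the other, so $k_1 = k_2$; hence $x_1 = x_2 =: x$ and $n_1 = n_2$. To conclude $b_1 = b_2$ we only need $m_1 = m_2$, with $\subtat{p}{m_1} = \wid{x} = \subtat{p}{m_2}$. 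This is exactly where pattern linearity enters: the third clause requires $\matchOpth{u}{p} \neq \fail$, i.e.\ a positive match, and a match is never positive when $p$ is not linear with respect to $\theta$; therefore $\wid{x}$ occurs at most once in $p$, forcing $m_1 = m_2$ and thus $b_1 = a2m_1n_1 = a2m_2n_2 = b_2$.

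The main obstacle is precisely this last subcase: a priori two distinct matchable occurrences of the same bound symbol in $p$ could route the same argument redex to the same body position, producing the same residual from two different sources. This is ruled out by linearity of the pattern, which is in turn guaranteed by the positivity of the match built into the third clause. Everything else is bookkeeping on the prefix structure of positions, using that bound variables are leaves of the body.
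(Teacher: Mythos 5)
Your proof is correct and takes essentially the same route as the paper's: a case analysis on which clause of the residual definition produced each ancestor, using the fact that variable occurrences are leaves of the body to exclude the mixed and nested cases, and linearity of the pattern (forced by positivity of the match) to conclude $m_1 = m_2$ in the clause-three/clause-three case. The only cosmetic difference is that you organize the cases by the shape of the target position $b'$ (whether $a \leq b'$) rather than by the clause applying to $b_1$, but the key facts invoked are identical.
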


\begin{proof}
Let $\ppcstep{b_1} = \pair{t}{b_1}$, $\ppcstep{b_2} = \pair{t}{b_2}$ and $\ppcstep{b'} = \pair{t'}{b'}$, where $t \sstep{\ppcstep{a}} t'$.  
We prove that $b_1 = b_2$. Let $\subtat{t}{a} = (\lpth p.s) u$. We consider three cases according to the definition of
$\residu{\ppcstep{b_1}}{\ppcstep{a}}{\ppcstep{b'}}$.
\begin{itemize}
\item If $a \not\leq b_1$, then $ b_1 = b'$ so that $a \not \leq b'$.  A straightforward case
  analysis on the definition of residuals yields $a \not\leq b_2$, therefore $b_1 = b_2 = b'$.

\item If $b_1 = a2mn$ and $b' = akn$, then $\subtat{s}{k} = x$ and $\subtat{p}{m} = \wid{x}$ for
  some $x \in \theta$.  Observe that $a < b'$ implies $a < b_2$. We consider two cases.  If $b_2 =
  a12n'$ and $b' = an'$, then $kn = n'$.  This would imply $\subtat{t}{b_2} = \subtat{s}{kn}$ has
  the form $(\lp{\theta'} p'.s') u'$, contradicting $\subtat{s}{k}$ being a variable.  Therefore,
  $akn = b' = ak'n'$ and $b_2 = a2m'n'$, where $\subtat{s}{k'} = y$ and $\subtat{p}{m'} = \wid{y}$
  for some $y \in \theta$.  Observe that $k < k'$, \ie\ $k' = kc$ where $c \neq \epsilon$, would
  imply $kc \in \Pos{s}$, contradicting the fact that $\subtat{s}{k}$ is a variable; so that $k
  \not< k'$. We obtain $k' \not< k$ analogously. On the other hand, $k \disj k'$ would contradict
  $kn = k'n'$. Hence $k = k'$, implying $n = n'$ and also $y = x$.  In turn, $\matchOpth{u}{p}$
  being positive implies that $p$ is linear, and then $m = m'$. Thus we conclude.

\item If $b_1 = a12n$ and $b' = an$, then we have again that $a < b'$ implies $a < b_2$.  On the
  other hand, assuming $b_2 = a2m'n'$, so that $an = b' = akn'$, would yield a contradiction as
  already stated. Therefore $b_2 = a12n'$ and $an = b' = an'$, implying $n = n'$ and consequently
  $b_1 = b_2$.
\end{itemize}
\vspace{-6mm}
\end{proof}

Finally, \axiomuse{\axFD} and \axiomuse{\axSO}  are left for the end of this section.

\bigskip\noindent\textbf{The \axiomuse{\axEnclaveCreation} axiom.} \\
To verify \axiomuse{\axEnclaveCreation} involves a rather long technical development, including some preliminary lemmas, particularly a creation lemma indicating the creation cases for \theppc.

\begin{lemma}
\label{rsl:reduction-mantains-decided-match} 
Let $p \sred{} p'$ and $u \sred{} u'$. Then, 
\begin{enumerate}[(i)]
\minitem \label{it:reduction-mantains-positive-cmatch}
$\cmatchOpth{u}{p}$ positive implies $\cmatchOpth{u'}{p'}$ positive,
\minitem \label{it:reduction-mantains-fail-cmatch}
$\cmatchOpth{u}{p} = \fail$ implies $\cmatchOpth{u'}{p'} = \fail$.
\minitem \label{it:reduction-mantains-positive-match}
$\matchOpth{u}{p}$ positive implies $\matchOpth{u'}{p'}$ positive,
\minitem \label{it:reduction-mantains-fail-match}
$\matchOpth{u}{p} = \fail$ implies $\matchOpth{u'}{p'} = \fail$.
\end{enumerate}
\end{lemma}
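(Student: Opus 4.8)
The plan is to reduce all four statements to a single structural induction on the pattern $p$. The crucial preliminary fact is that matchable forms are closed under reduction: if $u \in \MForms$ and $u \sred{} u'$, then $u'$ is a matchable form of the same kind as $u$ (either both data structures with the same head matchable and the same arity, or both abstractions), and the reduction acts componentwise, so that $\wid{x}\,u_1\cdots u_n \sred{} \wid{x}\,u_1'\cdots u_n'$ with $u_i \sred{} u_i'$. This holds because the root of a data structure $\wid{x}\,u_1\cdots u_n$ and the root of an abstraction $\lpth q.r$ are never redexes, and matchables are irreducible, so every step occurs strictly inside an immediate subterm. I would establish this first by a routine induction on the length of $u \sred{} u'$ and the structure of $u$.

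Next I would prove (i) and (ii) simultaneously by induction on the structure of $p$, strengthening (i) to also assert $\dom(\cmatchOpth{u'}{p'}) = \dom(\cmatchOpth{u}{p})$; this extra invariant is what makes the argument close. Inspection of the matching equations shows that a \emph{decided} compound match forces $p \in \MForms$, since both $\fail$ and any substitution can only arise through the first four equations, each of which requires $p$ to be a matchable form; in the application and constructor subcases the same inspection forces $u \in \MForms$ as well. The base cases are $p = \wid{x}$ with $x \in \theta$ (the pattern is irreducible, the match is $\subst{x}{u}$, which becomes $\subst{x}{u'}$ with unchanged domain $\set{x}$, and it is never $\fail$) and $p = \wid{x}$ with $x \notin \theta$ (here $p$ is an irreducible constructor: positivity forces $u = \wid{x}$, itself irreducible, while a $\fail$ forces $u$ to be a matchable form distinct from $\wid{x}$, a property preserved by stability). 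In the inductive case $p = p_1 p_2$, the relevant equation is $\cmatchOpth{u_1 u_2}{p_1 p_2} = \cmatchOpth{u_1}{p_1} \uplus \cmatchOpth{u_2}{p_2}$; stability decomposes $p,u,p',u'$ compatibly, and the inductive hypotheses on $p_1$ and $p_2$, combined with the definition of $\uplus$, yield the conclusion.

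The heart of the failure case, and the reason for tracking domains, is the \emph{non-linear} source of failure: when $\cmatchOpth{u_1}{p_1}$ and $\cmatchOpth{u_2}{p_2}$ are both substitutions but their domains overlap, so that $\uplus$ returns $\fail$. By the strengthened hypothesis the reduced sub-matches stay positive with unchanged domains, hence still overlap, so the disjoint union is again $\fail$; the other failure source, where some sub-match is already $\fail$, is handled directly by the hypothesis on that subterm using $\fail \uplus \mu = \fail$. I expect this domain-clash case — together with the elementary check that the verdict of $\uplus$ (positive versus $\fail$) is determined by positivity and domains alone — to be the main obstacle, since everything else is bookkeeping on the shape of matchable forms.

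Finally, (iii) and (iv) follow mechanically from (i) and (ii) through the definition of $\matchOpth{u}{p}$ as the check of $\cmatchOpth{u}{p}$ on $\theta$: the matching operation is positive exactly when $\cmatchOpth{u}{p}$ is a substitution with domain $\theta$, and equals $\fail$ exactly when $\cmatchOpth{u}{p} = \fail$ or is a substitution with domain different from $\theta$. Because the strengthened (i) preserves both positivity and the domain and (ii) preserves $\fail$, the check returns the same verdict for $\cmatchOpth{u'}{p'}$, which is precisely what (iii) and (iv) assert.
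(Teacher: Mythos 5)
Your route is essentially the paper's: a structural induction on $p$, powered by the observations that a decided compound match forces $p \in \MForms$ and that matchable forms reduce componentwise and kind-preservingly, with (iii) and (iv) then read off from (i) and (ii) through the check on $\theta$. The differences are real but modest. You strengthen (i) to preserve domains and run (i)--(ii) as one simultaneous induction; the paper instead first notes that a positive compound match forces $p$ to be a normal form (so $p'=p$), and recovers domain information only later, for (iii)--(iv), from the separate fact that $\dom(\cmatchOpth{u}{p})=\fm{p}$. Your domain-tracking pays off in the failure case: the non-linear source of $\fail$ (both sub-matches positive but with overlapping domains, e.g.\ $p=\wid{x}\,\wid{x}$) is handled explicitly by your invariant, whereas the paper's own proof of (ii) in the application case only treats the subcase where some component match is already $\fail$ --- so on this point your argument is actually more careful than the published one.

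There is, however, a concrete hole in your case analysis for (ii). ``Decided implies $p \in \MForms$'' leaves three shapes for $p$, not two: besides matchables and applications, $p$ may be an \emph{abstraction}, and then $\cmatchOpth{u}{p} = \fail$ holds exactly when $u \in \MForms$ (fourth matching equation); your enumerated cases never treat this. Likewise, inside your inductive case $p = p_1 p_2$ you take the compound equation to be ``the relevant equation'', i.e.\ you assume $u$ is itself an application in $\MForms$; but the match can be $\fail$ by shape mismatch (fourth equation again) when $u \in \MForms$ is an abstraction or a matchable, and then there is no decomposition $u = u_1 u_2$ to which the inductive hypotheses apply. Both omissions are discharged in one line by your own preliminary stability fact --- reduction preserves the kind of a matchable form, so an abstraction stays an abstraction and a shape mismatch persists --- which is exactly what the paper's explicit clause for ``different syntactic forms, or abstractions'' accomplishes; but these clauses must be added for your induction to be exhaustive.
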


\onlyPaper{
\begin{proof}
By induction and case analysis on $p$. \CompleteInReport .
\end{proof}
}

\onlyReport{
\begin{proof}
  We prove item~(\ref{it:reduction-mantains-positive-cmatch}).  Given $\cmatchOpth{u}{p}$ is
  positive, a straightforward induction on $p$ yields that $p$ is a normal form, implying $p' = p$.
  If $\isbm{p}{\theta}$, then $\cmatchOpth{u'}{p}$ is positive for any term $u'$.  If $p$ is a
  matchable and $\neg \isbm{p}{\theta}$, then $\cmatchOpth{u}{p}$ positive implies $u = p$, \ie\ $u$
  is a normal form, and therefore $u' = u$, which suffices to conclude.  Assume $p = p_1 p_2$. Then
  hypotheses imply $p \in \MForms$, $u = u_1 u_2 \in \MForms$, and $\cmatchOpth{u_i}{p_i}$ positive
  for $i = 1,2$. In turn, $u \in \MForms$ implies $u' = u'_1 u'_2$ and $u_i \sred{} u'_i$ for $i =
  1,2$. Hence, the \ih\ can be applied for each $u_i \sred{} u'_i$, which suffices to conclude.
  Finally, any other case would contradict $\cmatchOpth{u}{p}$ positive.

  We prove item~(\ref{it:reduction-mantains-fail-cmatch}).  Observe $\cmatchOpth{p}{u} = \fail$
  implies $p,u \in \MForms$, and therefore $p', u' \in \MForms$. Therefore, $p$ and $p'$ share their
  syntactic form (\ie\ they are either both matchables, both applications or both abstractions), and
  similarly for $u$ and $u'$.  If $p$ and $u$, and therefore $p'$ and $u'$, have different syntactic
  forms, or else if $p, p', u, u'$ are abstractions, then it suffices to observe that
  $\cmatchOpth{p'}{u'} = \fail$ for any such $p'$ and $u'$.  If $p, p', u, u'$ are matchables, then
  $p = p'$ and $u = u'$, thus we immediately conclude.  Assume $p = p_1 p_2$, $p' = p'_1 p'_2$, $u =
  u_1 u_2$ and $u' = u'_1 u'_2$. In this case, hypotheses imply $\cmatchOpth{u_i}{p_i} = \fail$ for
  some $i \in \set{1,2}$, and moreover $p, u \in \MForms$ imply $p_i \sred{} p'_i$ and $u_i \sred{}
  u'_i$. Therefore, we conclude by applying the \ih, and recalling that $\fail \uplus R = \fail$ for
  any possible $R$.

To prove items~(\ref{it:reduction-mantains-positive-match}) and
(\ref{it:reduction-mantains-fail-match}), we observe that a straightforward induction on $p$ yields that $\cmatchOpth{u}{p} = \sigma$ implies $dom(\sigma) = \fm{p}$, and therefore in this case $\matchOpth{u}{p}$ is positive iff $\theta = \fm{p}$, and $\matchOpth{u}{p} = \fail$ otherwise.
Recall also that $\cmatchOpth{u}{p}$ positive implies $p$ being a normal form, and then $p' = p$. 
For item~(\ref{it:reduction-mantains-positive-match}): $\matchOpth{u}{p}$ positive implies
$\cmatchOpth{u}{p} = \sigma$ where $\theta = \fm{p} = \fm{p'}$. On the other hand, item~(\ref{it:reduction-mantains-positive-cmatch}) just proved implies $\cmatchOpth{u'}{p'} = \sigma'$, which suffices to conclude.
For item~(\ref{it:reduction-mantains-fail-match}): assume $\matchOpth{u}{p} = \fail$. If
$\cmatchOpth{u}{p} = \fail$, then item~(\ref{it:reduction-mantains-fail-cmatch}) just proved implies
$\cmatchOpth{u'}{p'} = \fail$, thus we conclude. Otherwise, $\cmatchOpth{u}{p} = \sigma$ and $\sigma
\neq \fm{p} = \fm{p'}$, and item~(\ref{it:reduction-mantains-positive-cmatch}) just proved implies $\cmatchOpth{u'}{p'} = \sigma'$, which suffices to conclude.
\end{proof}
}

\begin{lemma}[Creation cases]
\label{rsl:ppc-creation}
Let $t \sstep{\ppcstep{a}} t'$, and $\residu{\emptyset}{\ppcstep{a}}{\ppcstep{b}}$, \ie\ $\ppcstep{b}$ is \emph{created} by (the contraction of) $\ppcstep{a}$.
Say $\subtat{t}{a} = (\lpth p.s) u$ and $\subtat{t'}{b} = (\lp{\theta'} p'.s') u'$.
Then one of the following holds: 
\begin{description}
\item [Case I.]
the contraction of $\ppcstep{a}$ 
contributes to the creation of $\ppcstep{b}$ from below, \ie, 
$b \in \Pos{t}$, $a = b 1$ implying $\subtat{t}{b} = (\lpth p.s) u u'$, and either
\begin{enumerate}[(i)]
\item \label{it:create-below-var}
$s = x$ where $x \in \theta$ and $\wid{x}$ occurs in $p$, $\matchOpth{u}{p} = \sigma$, $\sigma x = (\lp{\theta'} p'.s')$.
\item \label{it:create-below-abstr}
$s = \lp{\theta'} p''.s''$, $\matchOpth{u}{p} = \sigma$, $p' = \sigma p''$, $s' = \sigma s''$.
\item \label{it:create-below-fail}
$\matchOpth{u}{p} = \fail$, $\lp{\theta'} p'.s' = I$.
\end{enumerate}
\item [Case II.]
the contraction of $\ppcstep{a}$ 
contributes to the creation of $\ppcstep{b}$ from above, \ie, 
$b = an$, $\subtat{s}{n} = xu''$, $\matchOpth{u}{p} = \sigma$, $\sigma x = (\lp{\theta'} p'.s')$, $u' = \sigma u''$. 
\item [Case III.]
The argument of a redex pattern becomes decided. We have three such situations:
\begin{enumerate}[(i)]
\item \label{it:create-decided-both}
$b = an$, $\subtat{s}{n} = (\lp{\theta'} p''.s'') u''$, $\matchOpt{\theta'}{u''}{p''} = \wait$, $\matchOpth{u}{p} = \sigma$, $p' = \sigma p''$, $s' = \sigma s''$ and $u' = \sigma u''$.
\item \label{it:create-decided-argument}
$a = b2n$, $\subtat{t}{b} = (\lp{\theta'} p'.s') u''$ and $\matchOpt{\theta'}{u''}{p'} = \wait$.
\item \label{it:create-decided-pattern}
$a = b11n$, $\subtat{t}{b} = (\lp{\theta'} p''.s') u'$ and $\matchOpt{\theta'}{u'}{p''} = \wait$.
\end{enumerate}
\end{description}
\end{lemma}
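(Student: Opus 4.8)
The plan is to run a case analysis on the position $b$ relative to $a$, inspecting in each configuration the three clauses defining the PPC residual relation in order to decide whether $\ppcstep{b}$ admits an ancestor. Recall that $\residu{\emptyset}{\ppcstep{a}}{\ppcstep{b}}$ states exactly that no step of $t$ residuates to $\ppcstep{b}$; so for each configuration I would either exhibit such an ancestor (discarding the configuration) or read off the structural reason for creation and verify it is one of the listed cases. Write $\subtat{t}{a} = (\lpth p.s)u$ and $\mu = \matchOpth{u}{p}$, so that $t' = t[\mu s]_a$ and $\subtat{t'}{a} = \mu s$. By the trichotomy of the prefix order, exactly one of $a \disj b$, $b < a$, $a \leq b$ holds, and I treat these in turn, recording along the way the routine substitution identities $\subtat{\sigma s}{n} = \sigma\subtat{s}{n}$ on skeleton positions and $\subtat{\sigma s}{kn'} = \subtat{\sigma\subtat{s}{k}}{n'}$ inside substituted variables.

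If $a \disj b$ then $\subtat{t'}{b} = \subtat{t}{b}$, so $\pair{t}{b}$ is a step of $t$ and the first residual clause ($a \not\leq b$, $b'=b$) makes it an ancestor, contradicting creation; this configuration is impossible. If $b < a$, the contraction happens strictly inside $\subtat{t}{b}$, so $\subtat{t}{b}$ and $\subtat{t'}{b} = (\lp{\theta'}p'.s')u'$ share their application root; write $\subtat{t}{b} = vw$ with $v = \subtat{t}{b1}$, $w = \subtat{t}{b2}$. Were $\pair{t}{b}$ already a step, the first clause would again supply an ancestor; hence $\subtat{t}{b}$ is not a redex in $t$, although the residual of $v$ is an abstraction $\lp{\theta'}p'.s'$. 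I then locate $a$: if $v$ is not an abstraction it can only become one by contracting its own root, forcing $a = b1$, $v = \subtat{t}{a}$ and $\mu s = \lp{\theta'}p'.s'$; analysing whether $s$ is a binding variable, an abstraction, or whether $\mu = \fail$ (so $\mu s = \Id$) yields exactly Case~I(i)--(iii). If instead $v = \lp{\theta'}p''.s''$ is already an abstraction, then $\pair{t}{b}$ not being a redex forces $\matchOpt{\theta'}{w}{p''} = \wait$; since the match depends only on pattern and argument, contracting in the body of $v$ ($a = b12\ldots$) would leave it undecided, contradicting that $\ppcstep{b}$ is a step, so $a$ lies in the argument ($a = b2n$, Case~III(ii)) or in the pattern ($a = b11n$, Case~III(iii)).

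Finally suppose $a \leq b$ and write $b = an$, so $\subtat{t'}{b} = \subtat{\mu s}{n}$. If $\mu = \fail$ then $\mu s = \Id$ contains no redex, so no created step arises; hence $\mu$ is a positive substitution $\sigma$ and $\subtat{t'}{b} = \subtat{\sigma s}{n}$. Here I invoke the decomposition of $\Pos{\sigma s}$: either $n$ reaches strictly inside some substituted $\sigma x$ (with $x = \subtat{s}{k}$ a binding variable, $k < n$), in which case $\subtat{\sigma s}{n}$ is literally a redex of the argument $u$ and the third residual clause (variable copying) exhibits an ancestor, contradicting creation; or $n \in \Pos{s}$ with no binding-variable proper prefix. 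In this structural case $\subtat{\sigma s}{n} = (\lp{\theta'}p'.s')u'$ forces $\subtat{s}{n}$ to be an application $s_1 s_2$ whose functional part becomes an abstraction under $\sigma$: if $s_1$ is a binding variable then $\subtat{s}{n} = x\,u''$ and $\sigma x = \lp{\theta'}p'.s'$, which is Case~II; if $s_1 = \lp{\theta'}p''.s''$ is already an abstraction, then $\subtat{s}{n}$ being a redex in $t$ would give an ancestor via the second residual clause, so $\matchOpt{\theta'}{u''}{p''} = \wait$, which is Case~III(i). The degenerate readings (e.g. $\subtat{s}{n}$ itself a binding variable, so $\subtat{\sigma s}{n} = \sigma x$ is an argument redex copied by the third clause) are precisely those that carry an ancestor and hence never occur for a created step.

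The main obstacle is the positional bookkeeping: making the decomposition of $\Pos{\sigma s}$ precise and matching the copied-redex positions $akn'$ of the third residual clause against the candidate ancestors, so as to certify simultaneously exhaustiveness and the absence of ancestors in each genuinely creating configuration. A lighter, secondary point is the match-decidedness reasoning for Case~III, where I use that a match is decided iff it is positive or $\fail$ (so "not decided" means $\wait$) and, where convenient, Lemma~\ref{rsl:reduction-mantains-decided-match} to track decided matches under the interior reductions.
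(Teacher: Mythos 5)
Your proposal is correct and takes essentially the same route as the paper's proof: a case analysis on the relative positions of $a$ and $b$, using the three residual clauses to exhibit an ancestor (contradicting creation) in every non-creating configuration, and reading off Cases I--III in the remaining ones, including the same key step of splitting $\Pos{\sigma s}$ into positions inside substituted variables versus structural positions of $s$. The only cosmetic difference is that in the $b<a$ branch you organize the analysis by the shape of the function part $\subtat{t}{b1}$ rather than directly by which of $b1$, $b11$, $b12$, $b2$ prefixes $a$, but this produces exactly the same sub-cases.
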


\begin{proof}
We proceed by comparing $a$ with $b$.
\begin{itemize}
\minitem
If $a \disj b$, then $\subtat{t}{b} = \subtat{t'}{b}$ so that $\residu{\pair{t}{b}}{\ppcstep{a}}{\ppcstep{b}}$, contradicting the hypotheses.

\minitem
Assume $a \leq b$, \ie\ $b = ac$. 

In this case, $\matchOpth{u}{p} = \fail$ would imply $\subtat{t'}{a} =
I$, contradicting $\subtat{t'}{b}$ being a redex. Then
$\matchOpth{u}{p} = \sigma$, implying $\subtat{t'}{b} = \subtat{\sigma
  s}{c}$.  Now the redex at position $c$ of $\sigma s$ is either entirely contained in $\sigma$ or otherwise it occurs at a non-variable position of $s$. Observe that $c = kn$, $\subtat{s}{k} = x$ and
$\subtat{t'}{b} = \subtat{\sigma x}{n}$ for some variable $x$ would
imply $\residu{\pair{t}{a2mn}}{\ppcstep{a}}{\ppcstep{b}}$ where $\subtat{p}{m} =
\wid{x}$. This is not possible since $\ppcstep{b}$ is created. Therefore $\subtat{s}{c} = t_1 u''$ and $\subtat{t'}{b} =
(\lp{\theta'} p'.s')u' = (\sigma t_1) \sigma u''$.  If $t_1$ is a
variable, so that $\sigma t_1 = \lp{\theta'} p'.s'$, then case II
applies, otherwise case III.(\ref{it:create-decided-both}) applies.

\minitem
Assume $b < a$. 

If $a = b1$, \ie\ $\subtat{t}{b} = (\lpth p.s) u u'$, then observe $\matchOpth{u}{p} s = \subtat{t'}{a} = \lp{\theta'} p'.s'$.
If $\matchOpth{u}{p} = \fail$, then case I.(\ref{it:create-below-fail}) applies.
If $s$ is a variable, then case I.(\ref{it:create-below-var}) applies.
Otherwise, $s$ is an abstraction, so that case I.(\ref{it:create-below-abstr}) applies.

If $b11 \leq a$, \ie\ $\subtat{t}{b} = (\lp{\theta'} p''.s') u'$, then observe $\residu{\emptyset}{\ppcstep{a}}{\ppcstep{b}}$ implies $\matchOpt{\theta'}{u'}{p''} = \wait$. Then case III.(\ref{it:create-decided-pattern}) applies. 
If $b2 \leq a$, a similar argument yields that case III.(\ref{it:create-decided-argument}) applies.

Finally, $b12 \leq a$ implies $\subtat{t}{b} = (\lp{\theta'} p'.s'') u'$, and $\subtat{t'}{b}$ being a redex implies $\matchOpt{\theta'}{u'}{p'}$ decided so that $\residu{\pair{t}{b}}{\ppcstep{a}}{\ppcstep{b}}$, contradicting the hypothesis.
\end{itemize}
\vspace{-6mm}
\end{proof}

\begin{lemma}
\label{rsl:ppc-turns-decided-then-outermost}
\hspace*{1pt} \\[-4mm]
\begin{enumerate}
\minitem 
\label{it:ppc-not-mf-to-mf-then-outermost}
Let $t \sstep{\ppcstep{a}} t'$ such that $t \notin \MForms$ and $t' \in \MForms$. Then $\ppcstep{a}$ is outermost.
\minitem 
\label{it:ppc-argument-match-turns-decided-then-outermost}
Let $t \sstep{\ppcstep{a}} t'$ such that $\cmatchOpth{t}{p} = \wait$ and $\cmatchOpth{t'}{p}$ is decided for some $\theta$, $p$. Then $\ppcstep{a}$ is outermost.
\minitem 
\label{it:ppc-pattern-match-turns-decided-then-outermost}
Let $p \sstep{\ppcstep{a}} p'$ such that $\cmatchOpth{t}{p} = \wait$ and $\cmatchOpth{t}{p'}$ is decided for some $\theta$, $t$. Then $\ppcstep{a}$ is outermost.
\end{enumerate}
\end{lemma}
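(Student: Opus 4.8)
The plan is to prove the three items in order, obtaining the second and third from the first. Two facts about \theppc\ are used throughout. \textbf{(A)} A \emph{structural} observation: if $t_1 \in \DStructs$ then any internal step $t_1 \rightarrow t'_1$ keeps $t'_1 \in \DStructs$ (the matchable head is untouched), and an application $t_1 t_2$ with $t_1 \in \DStructs$ is never a redex at its root, since its function part is not an abstraction; likewise reducing inside an abstraction yields an abstraction. \textbf{(B)} The \emph{monotonicity} already recorded in Lemma~\ref{rsl:reduction-mantains-decided-match}: a positive match stays positive and a $\fail$ match stays $\fail$ under reduction of either argument. Hence $\wait$ is the only match a reduction can turn into a decided one, and a positive match can never become $\fail$; it can only stay positive, with unchanged domain $\fm{p}$.

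For the first item I argue by induction on $\tsize{a}$. The base case $a = \epsilon$ is immediate, as no position is a strict prefix of $\epsilon$. If $a \neq \epsilon$ then, since $t \notin \MForms$ and $t$ carries a step, $t$ is neither a variable, a matchable, nor an abstraction, so $t = t_1 t_2$ with $t_1 \notin \DStructs$. Write $a = i a'$, $i \in \set{1,2}$. The case $i = 2$ is impossible: it leaves $t_1$ untouched, yet $t' = t_1 t'_2 \in \MForms$ forces $t_1 \in \DStructs$. Hence $i = 1$, $t' = t'_1 t_2$, and $t'_1 \in \DStructs \subseteq \MForms$. By (A) a reduct of an abstraction is an abstraction, so $t_1$ is not an abstraction; it is not a data structure by assumption and it contains a step, so $t_1$ is an application with $t_1 \notin \MForms$. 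The induction hypothesis on $t_1 \rightarrow t'_1$ gives that the step at $a'$ is outermost in $t_1$, in particular $t_1$ is not a redex. To lift this to $t$ it suffices that no strict prefix of $1a'$ carries a redex: prefixes $1b$ with $b$ a strict prefix of $a'$ are handled by the induction hypothesis, and the root $\epsilon$ is not a redex because $t_1$, being an application, is not an abstraction.

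The second and third items are symmetric and I treat them together by induction on the structure of the pattern $p$, reusing the first item (whose statement concerns only a reduction carrying a non-matchable form to a matchable form, and so applies verbatim when the reduced object is a pattern). If $p$ is a binding matchable the match is positive, contradicting the hypothesis. If $p$ is a constructor matchable it carries no step, so the third item is vacuous, while for the second item $\cmatchOpth{t}{p} = \wait$ forces $t \notin \MForms$ and the decided hypothesis forces $t' \in \MForms$, so the first item applies. If $p$ is an abstraction, then $\cmatchOpth{\cdot}{p}$ equals $\fail$ on matchable forms and $\wait$ otherwise, so no reduction of $t$ or of $p$ can turn $\wait$ into a decided match: vacuous. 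If $p = p_1 p_2 \notin \MForms$, then $\cmatchOpth{\cdot}{p} = \wait$ identically, so the second item is vacuous; for the third item a decided outcome forces the reduced pattern into $\DStructs$, putting us either at a root redex of $p$ (trivially outermost) or, via a step inside $p_1$, in the first-item situation. Finally, if $p = p_1 p_2 \in \DStructs$, the matched term must be a data structure $t = t_1 t_2$ (a matchable or abstraction would give $\fail$, whereas a non-matchable form reduces to the first item for the second statement, or keeps the outcome $\wait$ for the third), and $\cmatchOpth{t}{p} = \cmatchOpth{t_1}{p_1} \uplus \cmatchOpth{t_2}{p_2}$. Fact (A) places the step strictly below the root, and fact (B) pins down that the reduced component is the one whose submatch was $\wait$: reducing a positive component keeps it positive with the same domain (so no new clash), and reducing a $\fail$ component is impossible, hence $\uplus$ can pass from $\wait$ to decided only by resolving a $\wait$ submatch. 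The induction hypothesis on the smaller pattern $p_i$ then gives outermostness inside the corresponding $t_i$ or $p_i$, which lifts to $t$, resp. $p$, exactly as in the first item, the root again being safe since a data structure has a non-abstraction head.

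The main obstacle is the compound data-structure case of the last two items: a $\wait$ result of $\uplus$ does not by itself indicate which side is responsible, and a single step could a priori touch either component. The monotonicity Lemma~\ref{rsl:reduction-mantains-decided-match} is exactly what resolves this, guaranteeing that reducing a positive or $\fail$ component cannot change the decided-ness of the combination, so the step must lie on the unique $\wait$-carrying branch of the matching recursion; tracking the positions along that branch together with fact (A) is what makes the lift to global outermostness go through.
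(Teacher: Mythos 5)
Your overall strategy is sound and stays close in spirit to the paper's own proof: the paper also proves the first item by a structural induction (on $t'$ rather than on the length of the position $a$), derives the other two items by an induction that descends through the application case of compound matching, routes the non-matchable-form situations through the first item, and uses Lemma~\ref{rsl:reduction-mantains-decided-match} to pin the step to the unique $\wait$ component of the disjoint union. Your two genuine variations are fine: you run a single induction on the pattern $p$ for both items 2 and 3 (the paper inducts on $t$ for item 2 and on $p$ for item 3), and in the pattern case you rule out a step inside a positive component by monotonicity, where the paper instead observes that a positive compound match forces the pattern to be a normal form, so it cannot reduce at all.

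However, one case of your analysis is wrong as stated: when $p$ is an abstraction you claim that ``no reduction of $t$ or of $p$ can turn $\wait$ into a decided match'' and declare the case vacuous. For item 2 this is false. With $p$ an abstraction, $\cmatchOpth{t}{p}$ is $\fail$ when $t \in \MForms$ and $\wait$ otherwise, and a step can perfectly well move $t$ from outside $\MForms$ into $\MForms$ --- that possibility is precisely what the first item is about. Concretely, take $p = \lp{\ems}\,\wid{y}.\wid{y}$ and $t = \Id\,\wid{\cc} \sstep{} \wid{\cc} = t'$: then $\cmatchOpth{t}{p} = \wait$ while $\cmatchOpth{t'}{p} = \fail$, which is decided. (For item 3 the case \emph{is} vacuous, since a reduct of an abstraction pattern is again an abstraction and $t$ does not change.) The lemma's conclusion still holds in this case, and the repair is exactly the appeal you already make for constructor matchables: $t \notin \MForms$ and $t' \in \MForms$, so the first item applies and the step is outermost; but as written the vacuity claim is an error, not a shortcut. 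A smaller cosmetic point in your proof of the first item: ``in particular $t_1$ is not a redex'' is false when $a' = \epsilon$ (then $t_1$ \emph{is} the redex); fortunately your lifting argument never uses that remark, relying only on outermostness inside $t_1$ plus the fact that $t_1$, being an application, cannot make the root of $t$ a redex.
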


\onlyPaper{
\begin{proof}
By induction on $t'$, $t$ and $p$, respectively.  We use item \ref{it:ppc-not-mf-to-mf-then-outermost} in the proof of the other items. \CompleteInReport .
\end{proof}
}

\onlyReport{
\begin{proof}
We prove item~\ref{it:ppc-not-mf-to-mf-then-outermost} by induction on $t'$.

If $t'$ is a variable or a matchable, then $a = \epsilon$, thus we conclude.

If $t'$ is an abstraction, then $a \neq \epsilon$ implies $t$ is  an abstraction contradicting $t \notin \MForms$. 
Thus $a = \epsilon$ and  we conclude. 

If $t' = t'_1 t'_2$, then $t' \in \MForms$ implies $t'_1 \in \DStructs$. We consider three cases.
(i) If $a = \epsilon$ then we  immediately conclude.
(ii) If $2 \leq a$, then we contradict $t \notin \MForms$.
(iii) If $1 \leq a$, \ie\ $a = 1a'$, then $t = t_1 t'_2$ and
$t_1 \sstep{\ppcstep{a'}} t'_1$. Observe that $t_1 \in \DStructs$ would
contradict $t \notin \MForms$, and $t_1 \in \Abstract$ would
imply $t'_1 \in \Abstract$, contradicting
$t'_1 \in \DStructs$. Therefore, $t_1 \notin \MForms$, and hence the \ih\ 
yields that $\pair{t_1}{a'}$ is outermost. We conclude by
observing that $t_1 \notin \MForms$ implies that $\pair{t}{\epsilon}$
is not a \tredex.

\medskip
We prove item~\ref{it:ppc-argument-match-turns-decided-then-outermost} by induction on $t$. 
Observe that $\cmatchOpth{t}{p} = \wait$ implies $\neg \isbm{p}{\theta}$. In turn, $\cmatchOpth{t'}{p}$ decided implies $p \in \MForms$, and moreover $\neg \isbm{p}{\theta}$ implies $t' \in \MForms$.
If $t \notin \MForms$ then item~\ref{it:ppc-not-mf-to-mf-then-outermost} suffices to conclude.
Therefore, assume $t \in \MForms$. In this case, $\cmatchOpth{t}{p} = \wait$ implies $p = p_1 p_2$, $t = t_1 t_2$, and $\cmatchOpth{t_i}{p_i} \neq \fail$ for $i = 1,2$. 
Furthermore, $t \in \MForms$ implies $a \neq \epsilon$. 
Assume $a = 1 a'$, implying $t' = t'_1 t_2$ and $t_1 \sstep{\ppcstep{a'}} t'_1$. 
Notice that $\cmatchOpth{t_1}{p_1}$ decided would imply $\cmatchOpth{t_1}{p_1}$ positive (since it is not $\fail$), and then $\cmatchOpth{t'_1}{p_1}$ positive by \thelem~\ref{rsl:reduction-mantains-decided-match}; therefore, either possibility for $\cmatchOpth{t_2}{p_2}$ (given that it is not $\fail$) would contradict some hypothesis.
Moreover, $\cmatchOpth{t'_1}{p_1} = \wait$ would contradict $\cmatchOpth{t'}{p}$ decided (again, since $\cmatchOpth{t_2}{p_2} \neq \fail$).
Hence \ih\ can be applied to obtain $\pair{t_1}{a'}$ outermost, which suffices to conclude (given $\pair{t}{\epsilon}$ not being a \tredex). The case $a = 2a'$ admits an analogous argument.

\medskip
We prove item~\ref{it:ppc-pattern-match-turns-decided-then-outermost} by induction on $p$. 
Observe that $\cmatchOpth{t}{p'}$ decided implies $p' \in \MForms$. If $p \notin \MForms$ then item~\ref{it:ppc-not-mf-to-mf-then-outermost} suffices to conclude. Therefore, assume $p \in \MForms$. This implies $p'$ is not a matchable, and consequently $\cmatchOpth{t}{p'}$ decided implies $t \in \MForms$.
In turn, $\cmatchOpth{t}{p} = \wait$ yields $t = t_1 t_2$, $p = p_1 p_2$, $\cmatchOpth{t_i}{p_i} \neq \fail$ for $i = 1,2$, and $a \neq \epsilon$. 
Assume $a = 1 a'$, implying $p' = p'_1 p_2$ and $p_1 \sstep{\ppcstep{a'}} p'_1$. 
In this case, $\cmatchOpth{t_1}{p_1}$ decided, then positive, would imply $p_1$ to be a normal form; while $\cmatchOpth{t_1}{p'_1} = \wait$ would contradict $\cmatchOpth{t}{p'}$ decided (recall that $\cmatchOpth{t_2}{p_2} \neq \fail$). Therefore the \ih\ can be applied to obtain $\pair{p_1}{a'}$ outermost, which suffices to conclude since $\pair{p}{\epsilon}$ is not a \tredex. 
The case $a = 2a'$ admits an analogous argument.
\end{proof}
}

\begin{lemma}[\axiomuse{\axEnclaveCreation}]
Let $\ppcstep{a}$, $\ppcstep{b}$ be \tredexes\ such that $\ppcstep{b} < \ppcstep{a}$, $\residu{\ppcstep{b}}{\ppcstep{a}}{\ppcstep{b'}}$, and $\residu{\emptyset}{\ppcstep{a}}{\ppcstep{c'}}$. Then $\ppcstep{b'} < \ppcstep{c'}$.
\end{lemma}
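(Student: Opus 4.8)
The plan is to first pin down the term and position of the residual $\ppcstep{b'}$, then reduce the goal to a purely positional statement, and finally run a case analysis driven by the Creation Lemma (\thelem~\ref{rsl:ppc-creation}). Write $\ppcstep{a}=\pair{t}{a}$, $\ppcstep{b}=\pair{t}{b}$, $t'=\rtgt{\ppcstep{a}}$ and $\subtat{t}{a}=(\lpth p.s)u$. Since $\ppcstep{b}<\ppcstep{a}$ means $b<a$ as positions, we have $a\not\leq b$, so among the three clauses defining the residual relation for \theppc\ only the first can yield $\ppcstep{b'}$; hence $\ppcstep{b'}=\pair{t'}{b}$. Because $\ppcstep{c'}$ is created by $\ppcstep{a}$ it has the form $\pair{t'}{c'}$ for some $c'\in\Pos{t'}$. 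As both $\ppcstep{b'}$ and $\ppcstep{c'}$ are steps of $t'$, the conclusion $\ppcstep{b'}<\ppcstep{c'}$ is, by definition of the embedding relation of \theppc, exactly the statement $b<c'$, which is what I would establish.

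A uniform remark to record first is that $b\neq c'$: were $b=c'$, then $\ppcstep{b'}=\pair{t'}{b}=\pair{t'}{c'}=\ppcstep{c'}$, and $\residu{\ppcstep{b}}{\ppcstep{a}}{\ppcstep{b'}}$ would exhibit an ancestor of $\ppcstep{c'}$, contradicting $\residu{}{\ppcstep{a}}{\ppcstep{c'}}=\emptyset$. Next I would apply \thelem~\ref{rsl:ppc-creation} to $\ppcstep{c'}$, splitting according to how $c'$ sits relative to $a$. In Case~II and Case~III(i) the lemma gives $c'=an$, hence $a\leq c'$; together with $b<a$ this yields $b<c'$ immediately. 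In the remaining cases the lemma gives $c'<a$, so $b$ and $c'$ are both prefixes of $a$ and therefore comparable, and by the remark $b\neq c'$; it only remains to rule out $c'<b$.

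The positional bookkeeping handles Case~I directly: there $a=c'1$, whose only strict prefix that is $\geq c'$ is $c'$ itself, so $b<a$ and $b\neq c'$ force $b<c'$. The genuinely delicate subcases are III(ii) and III(iii), where the created redex lies strictly above the contracted one. In III(ii) we have $a=c'2n$ and the match of the argument $\subtat{t}{c'2}$ against the pattern passes from $\wait$ to decided upon contracting $\ppcstep{a}$; by \thelem~\ref{rsl:ppc-turns-decided-then-outermost} (the clause for reduction inside the argument) the step $\ppcstep{a}$ is outermost inside that argument. Now $c'<b$ together with $b<a=c'2n$ forces $c'2\leq b$, so $\ppcstep{b}$ would be a redex lying strictly above $\ppcstep{a}$ within the argument, contradicting outermostness; hence $b<c'$. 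Case~III(iii) is symmetric, using $a=c'11n$ and the clause of \thelem~\ref{rsl:ppc-turns-decided-then-outermost} for reduction inside the pattern. In every case $b<c'$, i.e.\ $\ppcstep{b'}<\ppcstep{c'}$.

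The main obstacle is precisely this last pair of subcases, where $c'<a$: one cannot conclude by prefix arithmetic alone, and must invoke the fact that a single contraction can turn an undecided match into a decided one only when performed at an \emph{outermost} position of the relevant argument or pattern, which is exactly what forbids an intervening redex $\ppcstep{b}$ between $c'$ and $a$. A minor but necessary check is the passage between $\matchOpth{u}{p}$ and $\cmatchOpth{u}{p}$, so that \thelem~\ref{rsl:ppc-turns-decided-then-outermost}, stated for the compound match, can be applied to the matches appearing in the creation cases (here \thelem~\ref{rsl:reduction-mantains-decided-match} and the definition of the check of a match are what one would use to transfer ``decided'' between the two).
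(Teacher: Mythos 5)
Your proof is correct and takes essentially the same route as the paper's: establish $\ppcstep{b'}=\pair{t'}{b}$ from $a\not\leq b$, run the case analysis of the Creation Lemma (\thelem~\ref{rsl:ppc-creation}), dispatch Cases~II and III(i) by prefix arithmetic, and use \thelem~\ref{rsl:ppc-turns-decided-then-outermost} to exclude a redex strictly between $c'$ and $a$ in Cases~III(ii)/(iii). The only cosmetic difference is that you rule out $b=c'$ uniformly via createdness of $\ppcstep{c'}$, whereas in Case~I the paper instead observes that $\pair{t}{c'}$ cannot be a redex; both arguments are valid.
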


\begin{proof}
Observe that $a \not\leq b$ implying $b' = b$. Say $t \sstep{\ppcstep{a}} t'$, $\subtat{t}{a} = (\lpth p.s) u$, and $\subtat{t'}{c'} = (\lp{\theta'} p'.s') u'$.
We proceed by case analysis \wrt\ \thelem~\ref{rsl:ppc-creation}.
\begin{description}
\item [Case I]
In this case $c' \in \Pos{t}$ and $a = c' 1$, so that $\subtat{t}{c'} = (\lpth p.s) u u'$.
Therefore, it suffices to observe that $b = c'$ would contradict $\ppcstep{b}$ to be a \tredex, then $b < a$ implies $b < c'$.

\minitem [Cases II or III.(\ref{it:create-decided-both})]
In either case $c' = an$, thus $b < a$ implies $b < c'$.

\minitem [Case III.(\ref{it:create-decided-argument})]
In this case $a = c' 2 n$ and $\subtat{t}{c'} = (\lp{\theta'}p'.s') u''$.
Then, $b < a$ implies either $b < c'$, $b = c'$ or $b = c' 2 n'$ where $n' < n$. We conclude by observing that the second and third cases would contradict $\residu{\emptyset}{\ppcstep{a}}{\ppcstep{c'}}$ and \thelem~\ref{rsl:ppc-turns-decided-then-outermost}:(\ref{it:ppc-argument-match-turns-decided-then-outermost}) respectively.

\minitem [Case III.(\ref{it:create-decided-pattern})]
In this case $a = c' 11 n$ and $\subtat{t}{c'} = (\lp{\theta'}p''.s') u'$.
A similar analysis applies, resorting to \thelem~\ref{rsl:ppc-turns-decided-then-outermost}:(\ref{it:ppc-pattern-match-turns-decided-then-outermost}) instead of \thelem~\ref{rsl:ppc-turns-decided-then-outermost}:(\ref{it:ppc-argument-match-turns-decided-then-outermost}).
\end{description}
\vspace{-7mm}
\end{proof}

\bigskip\noindent\textbf{The other embedding and gripping axioms.} \\
\axiomuse{\axLinearity} is immediate from the definition of residuals.
The remaining embedding axioms, and also \axiomuse{\axFda}, are related with the invariance of embedding \wrt\ residuals. 
The following result characterises those situations in which the embedding relation between two
  steps fails to be preserved \wrt\ the contraction of a third one.

\begin{lemma}
\label{rsl:ppc-embedding-invariance-exceptions}
Suppose $\residu{\ppcstepb}{\ppcstepa}{\ppcstepb'}$ and $\residu{\ppcstepc}{\ppcstepa}{\ppcstepc'}$, such that $\neg (\ppcstepb < \ppcstepc \,\Leftrightarrow\, \ppcstepb' < \ppcstepc')$.
Then:
\begin{itemize}
\minitem  $(\ppcstepa < \ppcstepb) \land (\ppcstepa < \ppcstepc)$, and moreover;
\minitem  either 
$(\ppcstepb < \ppcstepc) \land (\ppcstepb' \disj \ppcstepc')$, or 
$(\ppcstepa \grip \ppcstepb) \land (\ppcstepb \disj \ppcstepc) \land (\ppcstepb' < \ppcstepc') \land (a2 \leq c)$.
\end{itemize}
%
\end{lemma}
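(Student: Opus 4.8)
The plan is to argue entirely at the level of \emph{positions}, exploiting that for \theppc\ the embedding $\ppcstepb<\ppcstepc$ holds exactly when the position $b$ is a strict prefix of $c$, while the residual relation $\arsResidualRel$ rewrites the positions of $\ppcstepb$ and $\ppcstepc$ in one of the three prescribed ways of its definition. Writing $\subtat{t}{a}=(\lpth p.s)u$, each coinitial step that admits a residual after $\ppcstepa$ falls into exactly one of three \emph{zones} relative to $a$: it is \emph{unaffected}, in which case its residual keeps the same position; it lies in the \emph{body}, with position $a12n$ and residual $an$; or it lies in the \emph{argument}, with position $a2mn$ and residual $akn$ for an occurrence $k$ of the matched variable. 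Note that $a$ can be neither $b$ nor $c$ by \axiomuse{\axSelfReduction}, and that a step whose position lies inside the pattern (or is $a1$) has no residual, so these three zones are exhaustive for $\ppcstepb$ and $\ppcstepc$. Since the residuals exist, $\matchOpth{u}{p}$ is positive, hence $p$ is linear; this linearity will be used below.

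First I would isolate an auxiliary observation: if $\ppcstepa\not<\ppcstepb$ (so $a\not\leq b$ and $b'=b$), then $\ppcstepb<\ppcstepc\Leftrightarrow\ppcstepb'<\ppcstepc'$, and symmetrically if $\ppcstepa\not<\ppcstepc$. This is a short prefix computation: with $b'=b$, the position $c'$ is either $c$ or of the form $ad$ with $a\leq c'$; since $a\not\leq b$ forces $b<a$ or $b\disj a$, the stated fact ``$a\disj b$ and $a\leq c$ imply $c\disj b$'' lets one check that $\ppcstepb<\ppcstepc$ and $\ppcstepb'<\ppcstepc'$ are simultaneously true or simultaneously false. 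Contrapositively, the hypothesis $\neg(\ppcstepb<\ppcstepc\Leftrightarrow\ppcstepb'<\ppcstepc')$ forces $\ppcstepa<\ppcstepb$ and $\ppcstepa<\ppcstepc$, which is the first conclusion. It then remains to analyse the four combinations in which each of $\ppcstepb,\ppcstepc$ lies in the body or in the argument.

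The two ``inert'' combinations are dispatched quickly. If both are in the body, $b=a12n_b$, $c=a12n_c$, $b'=an_b$, $c'=an_c$, so both $\ppcstepb<\ppcstepc$ and $\ppcstepb'<\ppcstepc'$ reduce to $n_b<n_c$ and the equivalence holds; this combination is therefore vacuous. If $\ppcstepb$ is in the argument and $\ppcstepc$ in the body, their positions begin with $a2$ and $a1$, so $\ppcstepb\disj\ppcstepc$; moreover $\ppcstepb'<\ppcstepc'$ would require $k_b n_b$ to be a strict prefix of $n_c\in\Pos{s}$, impossible because $k_b$ is a leaf (a variable occurrence) of $s$, so nothing in $\Pos{s}$ lies strictly below it. Hence no change occurs and this combination is vacuous as well. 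The surviving combinations produce precisely the two disjuncts.

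For the (argument, argument) combination I would use linearity of $p$: the pattern-leaf positions $m_b,m_c$ are disjoint unless equal, and equal leaves carry the same binding variable. If $m_b\neq m_c$ then $\ppcstepb\disj\ppcstepc$ and the two variables differ, so their body occurrences $k_b,k_c$ are distinct leaves of $s$, giving $\ppcstepb'\disj\ppcstepc'$ as well, so no change. If $m_b=m_c$, then $\ppcstepb<\ppcstepc\Leftrightarrow n_b<n_c$, and a genuine change forces the chosen residuals to sit at two \emph{distinct} occurrences $k_b\neq k_c$ of the matched variable in $s$ (otherwise $\ppcstepb'<\ppcstepc'\Leftrightarrow n_b<n_c$); these being distinct leaves are disjoint, so we obtain $\ppcstepb<\ppcstepc$ together with $\ppcstepb'\disj\ppcstepc'$, i.e.\ the first disjunct. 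For the (body, argument) combination, $b=a12n_b$ and $c=a2m_cn_c$ give $\ppcstepb\disj\ppcstepc$ and $a2\leq c$, so a change must be $\ppcstepb'<\ppcstepc'$, i.e.\ $an_b$ a strict prefix of $ak_cn_c$, which, since $k_c$ is a leaf of $s$, amounts to $n_b<k_c$ inside $s$. This says exactly that the matched variable $x_c\in\theta$ occurs inside $\subtat{s}{n_b}$, whence $\theta\cap\fv{\subtat{s}{n_b}}\neq\emptyset$ and therefore $\ppcstepa\grip\ppcstepb$; together with $\ppcstepb\disj\ppcstepc$, $\ppcstepb'<\ppcstepc'$ and $a2\leq c$ this is the second disjunct. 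I expect the \textbf{main obstacle} to be the bookkeeping in these last two cases: correctly accounting for the fact that a single argument step may have several residuals, one per occurrence of the matched variable (so the hypothesis fixes a specific pair $\ppcstepb',\ppcstepc'$ among them), and faithfully translating the purely positional condition ``$n_b<k_c$ in $s$'' into the gripping predicate through its definition.
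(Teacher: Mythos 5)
Your proposal is correct and takes essentially the same approach as the paper's proof: a position-based case analysis in which all configurations with $\ppcstepa \not< \ppcstepb$ or $\ppcstepa \not< \ppcstepc$ are shown to preserve the embedding equivalence (the paper enumerates these as separate cases, whereas you collapse them into one prefix computation), followed by the same four body/argument combinations for $\ppcstepa < \ppcstepb$, $\ppcstepa < \ppcstepc$, using leafness of variable and matchable occurrences, linearity of the pattern, and the translation of $n_b < k_c$ into $\theta \cap \fv{\subtat{s}{n_b}} \neq \emptyset$, i.e.\ gripping. One cosmetic caveat: your blanket claim that existence of the residuals forces $\matchOpth{u}{p}$ to be positive is only justified once some residual arises from the body or argument zone (in the purely unaffected case residuals exist even under \fail), but since you invoke positivity and linearity only inside the $\ppcstepa < \ppcstepb$, $\ppcstepa < \ppcstepc$ analysis, this does not affect the argument.
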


\onlyPaper{
\begin{proof}
By case analysis of $\stepa$, $\stepb$ and $\stepc$. 
The most challenging case is when $\stepa < \stepb$ and $\stepa < \stepc$. In this case, a detailed analysis of the possible cases \wrt\ the residual relation is required. \CompleteInReport .
\end{proof}
}

\onlyReport{
\begin{proof}
By case analysis of $\stepa$, $\stepb$ and $\stepc$. Say $t \sstep{\stepa} v$ and $\subtat{t}{a} = (\lpth p.s) u$.
\begin{itemize}
\minitem 
$\stepa = \stepb$ or $\stepa = \stepc$: 
either case would contradict the existence of $\ppcstepb'$ and $\ppcstepc'$.
\minitem 
$\stepa \not\leq \stepb$ and $\stepa \not\leq \stepc$:
in this case $\stepb' = \stepb$ and $\stepc' = \stepc$, thus we conclude.
\minitem 
$\stepa \disj \stepb$ and $\stepa < \stepc$: 
implies $\stepb \disj \stepc$ and $\stepb' = \stepb \disj \stepa \leq \stepc'$, thus we conclude.
\minitem 
$\stepa < \stepb$ and $\stepa \disj \stepc$: analogous to the previous case.
\minitem
$\stepb < \stepa < \stepc$: 
implies $\stepb < \stepc$ and $\stepb' = \stepb < \stepa \leq \stepc'$, thus we conclude.
\minitem
$\stepc < \stepa < \stepb$: we obtain analogously $\stepc < \stepb$ and $\stepc' < \stepb'$, which suffices to conclude.
\minitem
$\stepa < \stepb$ and $\stepa < \stepc$: this is the interesting case. We analyse the possible cases \wrt\ the residual relation, recalling that all cases suppose $\matchOpth{u}{p} \neq \fail$, and therefore that $p$ is linear. 
\begin{itemize}
\item 
$b = a12n$ and $c = a12n'$. In this case $b' = an$ and $c' = an'$, thus we conclude immediately.
\item
$b = a12n$ and $c = a2m'n'$. 
In this case $b' = an$ and $c' = ak'n'$, where $\subtat{p}{m'} = \wid{x}$ and $\subtat{s}{k'} = x$ for some $x \in \theta$. 
Observe $b \disj c$. If $b' \not< c'$ then we conclude immediately, so that assume $b' < c'$, implying $n < k'n'$.
In turn, $\subtat{s}{n}$ and $\subtat{s}{k'}$ being a redex and a variable resp. imply $n < k'$. 
Therefore $x \in \theta \cap \fv{\subtat{s}{n}}$, implying $\ppcstepa \grip \ppcstepb$. Thus we conclude.
\item
$b = a2mn$ and $c = a12n'$.
In this case, $b' = akn$ and $c' = an'$, where $\subtat{p}{m} = \wid{x}$ and $\subtat{s}{k} = x$ for some $x \in \theta$.
Observe $b \disj c$. Moreover $\subtat{s}{n'}$ being a redex while $\subtat{s}{k}$ is a variable implies $k \not < n'$, then $kn \not < n'$, hence $b' \not < c'$. Thus we conclude.
\item
$b = a2mn$ and $c = a2m'n'$.
In this case $b' = akn$ and $c' = ak'n'$, where $\subtat{p}{m} = \wid{x}$, $\subtat{s}{k} = x$, $\subtat{p}{m'} = \wid{y}$ and $\subtat{s}{k'} = y$ for some $x, y \in \theta$. Both $\subtat{s}{k}$ and $\subtat{s}{k'}$ being variable occurrences implies $k = k'$ or $k \disj k'$. An analogous argument yields $m = m'$ or $m \disj m'$.

Assume $b \not < c$ and $c \not < b$; \ie, $b \disj c \,$ or $\, b = c$. 
If $k \disj k'$ then we get immediately $b' \disj c'$.
Otherwise we have $k = k'$, implying $x = y$ and therefore $m = m'$ by linearity of $p$. 
In this case, $n = n'$ yields $b' = c'$, and otherwise, it must be $b \disj c$ implying $n \disj n'$, and then $b' \disj c'$.
In any of these cases, we conclude immediately.

If $b < c$, then $m = m'$ implying $x = y$, and $n < n'$. If $k = k'$, then $b' < c'$, otherwise, $b' \disj c'$.
Thus we conclude.
	
Finally, if $b' < c'$, then $k = k'$ and $n < n'$. But $k = k'$ implies $x = y$, and then $m = m'$ by linearity of $p$. Then $b < c$.
\end{itemize}

\end{itemize}
\vspace{-7mm}
\end{proof}
}

It is easy to obtain \axiomuse{\axCtxFreeness}, \axiomuse{\axEnclaveEmbedding} and \axiomuse{\axFda} as corollaries of \thelem~\ref{rsl:ppc-embedding-invariance-exceptions}. 

%


\begin{lemma}[\axiomuse{\axMisterious}]
Let $\ppcstepa, \ppcstepb, \ppcstepc, \ppcstepc'$ \tredexes\ verifying $\ppcstepa < \ppcstepc$, $\ppcstepb < \ppcstepc$, $\ppcstepb \not\leq \ppcstepa$, and $\residu{\ppcstepc}{\ppcstepa}{\ppcstepc'}$.
Then there exists a \tredex\ $\ppcstepb'$ such that $\residu{\ppcstepb}{\ppcstepa}{\ppcstepb'}$ and $\ppcstepb' < \ppcstepc'$.
\end{lemma}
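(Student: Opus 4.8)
The plan is to reduce everything to an analysis of positions. First I would record that, since $a < c$ and $b < c$, the positions $a$ and $b$ are both prefixes of $c$ and hence comparable (this is exactly the observation made when the embedding relation for \theppc\ was defined). Combined with the hypothesis $\ppcstepb \not\leq \ppcstepa$, this forces $\ppcstepa < \ppcstepb$, so that $a < b < c$. Next, writing $\subtat{t}{a} = (\lpth p.s)u$, I would use $\residu{\ppcstepc}{\ppcstepa}{\ppcstepc'}$ together with $a \leq c$ to observe that the residual of $\ppcstepc$ must arise from either the second clause ($\ppcstepc$ inside the body $s$) or the third clause ($\ppcstepc$ inside the argument $u$) of the residual definition, and that in either case $\matchOpth{u}{p}$ is a substitution $\sigma$ rather than $\fail$.

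The proof then splits into these two cases. If $\ppcstepc$ is in the body, then $c = a12n_c$ and $c' = an_c$; since $a < b < c$, the position $b$ equals $a\,r$ for a non-empty strict prefix $r$ of $12n_c$, which must begin with $1$. I would rule out $r = 1$ (the subterm $\subtat{t}{a1} = \lpth p.s$ is an abstraction, not a redex), leaving $b = a12n'$ with $n'$ a strict prefix of $n_c$; the second clause then yields the residual at position $an' < an_c = c'$, so $\ppcstepb' < \ppcstepc'$. If $\ppcstepc$ is in the argument, then $c = a2 m_c n_c$ with $\subtat{p}{m_c} = \wid{x_c}$ and $\subtat{s}{k_c} = x_c$ for some $x_c \in \theta$, and $c' = ak_cn_c$; here $b = a2r'$ for a strict prefix $r'$ of $m_c n_c$, and I would take the residual through the same occurrence $k_c$ of $x_c$ that produced $c'$, obtaining $b' = ak_cn'$ with $n'$ a strict prefix of $n_c$, whence $ak_cn' < ak_cn_c = c'$.

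The main obstacle is the argument case, specifically showing that $b$ cannot sit strictly above the matching position $m_c$ (i.e.\ ruling out $r' < m_c$, which would leave $b$ unaffected in a way not covered by the third clause). The key is that $\sigma$ is positive: compound matching of $u$ against $p$ recurses through the application structure down to each bound matchable, so every proper ancestor of a matchable position in $u$ is an application lying in $\MForms$, hence a data structure. A data structure is headed by a matchable and is therefore never a redex, contradicting that $\ppcstepb$ is a step. This rules out $r' < m_c$, forces $m_c \leq r'$ (the positions $r'$ and $m_c$ being comparable as prefixes of $m_c n_c$), and places $b$ under the matchable $\wid{x_c}$, so that the third residual clause applies as described. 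I expect this data-structure observation, provable by a routine induction on $p$ using the positivity of $\sigma$, to be the only non-mechanical ingredient; everything else is prefix bookkeeping.
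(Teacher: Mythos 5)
Your proposal is correct and follows essentially the same route as the paper's proof: the reduction to $a < b < c$, the observation that $\matchOpth{u}{p}$ must be positive, the split into the body case ($c = a12n$) and the argument case ($c = a2mn$), and the concluding prefix bookkeeping all match. Your key data-structure observation in the argument case is exactly the paper's ``simple induction on $p$'' showing that, under a positive match, any redex position in $u$ must decompose through a matchable position of $p$, so the two arguments differ only in phrasing.
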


\begin{proof}
Observe that $a < c$, $b < c$ and $b \not \leq a$ implies $a < b < c$. 
\onlyPaper{
A case analysis \wrt\ the definition of residuals, considering $\ppcstepa < \ppcstepc$, allows to conclude. \CompleteInReport .
}
\onlyReport{
We proceed by case analysis on the definition of residuals, considering $\ppcstepa < \ppcstepc$. Say $\subtat{t}{a} = (\lpth p.s) u$. Observe that $a < c$ and $\residu{\ppcstepc}{\ppcstepa}{\ppcstepc'}$ imply that $\matchOpth{u}{p}$ is positive.
\begin{itemize}
\item If $c = a 12 n'$, so that $c' = an'$, then $b < c$ implies $b = a12n$ and $n < n'$ (recall $\subtat{t}{a1} \,\in \Abstract$). Hence, taking $b' = an$ suffices to conclude.

\item If $c = a2mn$, then $b = a 2 b''$ and $b'' < mn$. Observe that $\subtat{p}{m} = \wid{x}$ where $x \in \theta$, and $c' = akn$ where $\subtat{s}{k} = x$. 
Noticing that $\matchOpth{u}{p}$ is positive and $\subtat{u}{b''}$ is a redex, a simple induction on $p$ yields $b'' = b_1 b_2$ where $\subtat{p}{b_1} = \wid{y}$.
In turn, $b_1 b_2 < m n$, along with both $\subtat{p}{b_1}$ and $\subtat{p}{m}$ being matchable
occurrences, imply that $b_1 = m$, then $x = y$, and also $b_2 < n$. Hence we conclude by taking $b'
= a k b_2$. 
\end{itemize}
\vspace{-6mm}
}
\end{proof}


\begin{lemma}
\label{rsl:fv-residuals}
Suppose $t = (\lpth p.s) u \sstep{\ppcstepa} t'$, $\residu{\ppcstepc}{\ppcstepa}{\ppcstepc'}$, and $x \in \fv{\subtat{t'}{c'}}$.
Then $x \in \fv{\subtat{t}{c}}$, or $\ppcstepa \grip \ppcstepc$ and $x \in \fv{u}$.
\end{lemma}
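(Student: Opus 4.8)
The plan is to read off the statement from the definitions of the residual and gripping relations for \theppc\ given in \thesec~\ref{sec:ppc-ars}. First I would observe that, since $t = (\lpth p.s)u$ and the conclusion is phrased in terms of exactly this $p,s,u$, the step $\ppcstepa$ must be the root redex, i.e.\ $\ppcstepa = \pair{t}{\epsilon}$ and $\subtat{t}{\epsilon} = t$. Because $\epsilon$ is a prefix of every position, the first clause of the residual definition ($a\not\leq b$) never applies, so the existence of $\ppcstepc'$ with $\residu{\ppcstepc}{\ppcstepa}{\ppcstepc'}$ forces one of the other two clauses; both require $\matchOpth{u}{p}\neq\fail$, so the match is a positive substitution $\sigma$ with $\dom(\sigma)=\theta$ and $t' = \sigma(s)$. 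I would then split on these two clauses.

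I would dispatch the argument case (the clause with $\ppcstepc$ inside $u$) first. Here $c = 2mn$ and $c' = kn$, with $\subtat{p}{m}=\wid{x'}$ and $\subtat{s}{k}=x'$ for some $x'\in\theta$. Following the positions, $\subtat{t}{c} = \subtat{u}{mn} = \subtat{\sigma x'}{n}$, since the matchable $\wid{x'}$ at position $m$ of $p$ binds $x'$ to $\sigma x' = \subtat{u}{m}$; and $\subtat{t'}{c'} = \subtat{\sigma(s)}{kn} = \subtat{\sigma x'}{n}$, because the variable occurrence $\subtat{s}{k}=x'$ is replaced by $\sigma x'$ under the substitution. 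Hence $\subtat{t'}{c'} = \subtat{t}{c}$, so $\fv{\subtat{t'}{c'}} = \fv{\subtat{t}{c}}$ and the first disjunct of the conclusion holds.

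Next I would treat the body case (the clause with $\ppcstepc$ inside $s$): $c = 12n$ and $c' = n$, whence $\subtat{t}{c} = \subtat{s}{n}$. Since position $n$ sits at a redex of $s$ it is not below any substituted variable, so $\subtat{t'}{c'} = \subtat{\sigma(s)}{n} = \sigma(\subtat{s}{n})$ (up to $\alpha$, which suffices as we only track free variables). I would then expand $\fv{\sigma(\subtat{s}{n})} = (\fv{\subtat{s}{n}}\setminus\theta) \cup \bigcup_{y\in\fv{\subtat{s}{n}}\cap\theta}\fv{\sigma y}$. If $x$ lies in the first set then $x\in\fv{\subtat{s}{n}} = \fv{\subtat{t}{c}}$, giving the first disjunct. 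If instead $x\in\fv{\sigma y}$ for some $y\in\fv{\subtat{s}{n}}\cap\theta$, then $\theta\cap\fv{\subtat{s}{n}}\neq\emptyset$, which together with $c=12n$ and $\matchOpth{u}{p}\neq\fail$ is precisely the definition of $\ppcstepa\grip\ppcstepc$; and since $\fv{\sigma y}\subseteq\fv{u}$ we obtain $x\in\fv{u}$, giving the second disjunct.

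The main obstacle is the position bookkeeping under substitution, namely justifying $\subtat{\sigma(s)}{kn} = \subtat{\sigma x'}{n}$ in the argument case and $\subtat{\sigma(s)}{n} = \sigma(\subtat{s}{n})$ in the body case, together with the supporting fact $\fv{\sigma y}\subseteq\fv{u}$ for $y\in\theta$. This last fact holds because the compound matching operation traverses only the data-structure spine of $u$ (it recurses through applications of matchable forms but never descends under an abstraction), so each subterm $\sigma y$ of $u$ produced by the match is reached along a binder-free path and therefore contributes only free variables of $u$. Everything else is a direct reading of the definitions of residuals and gripping for \theppc.
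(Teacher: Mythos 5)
Your handling of the two cases you consider is correct and is, in essence, the paper's own argument: for $c = 2mn$ (argument side) the two subterms coincide, $\subtat{t'}{c'} = \subtat{u}{mn} = \subtat{t}{c}$; for $c = 12n$ (body side) you commute substitution with subterm extraction, split $\fv{\sigma(\subtat{s}{n})}$ into the part coming from $\fv{\subtat{s}{n}}\setminus\theta$ and the part coming from the range of $\sigma$, and recognise the second alternative as exactly $\ppcstepa \grip \ppcstepc$ together with $x \in \fv{u}$, using that the range of a positive compound match consists of subterms of $u$ reached along binder-free paths. This matches the paper's proof step for step.

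The divergence --- and the gap --- is your opening claim that $\ppcstepa$ ``must be'' the root step. The paper reads the hypothesis as $\ppcstepa = \pair{t}{a}$ for an \emph{arbitrary} position $a$ with $\subtat{t}{a} = (\lpth p.s)u$: this is how the lemma is invoked in the verification of \axiomuse{\axFdb} (there $a$ is arbitrary and $u$ denotes the argument of the redex at $a$), and it is visible in the paper's own proof, whose first case is ``$a \not\leq c$'' --- vacuous under your reading. So what you prove is strictly weaker than what the paper proves and later uses, and the missing case is not entirely free of content: if $c \disj a$ then indeed $\subtat{t'}{c'} = \subtat{t}{c}$, but if $c < a$ (which the first clause of the residual relation allows) the subterm at $c$ properly contains the contracted redex, so equality fails and one needs instead that contraction in \theppc\ never creates free variables, \ie\ $\fv{\matchOpth{u}{p}\,s} \subseteq \fv{(\lpth p.s)u}$ (clear for positive matches by your binder-free-path observation, trivial for \fail\ since the contractum is $\Id$), which gives $x \in \fv{\subtat{t'}{c}} \subseteq \fv{\subtat{t}{c}}$. (The paper itself glosses this subcase, asserting subterm equality where only the free-variable inclusion holds.) Note finally that your justification for the root reading cuts the other way: if $\ppcstepa$ were allowed to sit at an arbitrary position while $p, s, u$ were still read off the root of $t$, the lemma would be false --- take $\ppcstepa$ inside $u$, with the term substituted by its match containing a variable bound by a binder of $u$ lying above $\ppcstepa$; then $x$ can be free at $c'$ yet bound in $u$ and absent from $\fv{\subtat{t}{c}}$. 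The consistent (and intended) reading is that $p, s, u$ are the components of $\subtat{t}{a}$, and under that reading the case $a \not\leq c$ must be argued, not discarded.
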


\onlyPaper{
\begin{proof}
By case analysis on the definition of residuals. \CompleteInReport.
\end{proof}
}

\onlyReport{
\begin{proof}
If $a \not\leq c$ or $c = a2mn$, then $\subtat{t}{c} = \subtat{t'}{c'}$, implying $x \in \subtat{t}{c}$.
Otherwise, \ie\ if $c = a12n$, $c' = an$, and $\matchOpth{u}{p} \neq \fail$, let us consider $d$ such that $\subtat{t'}{c'd} = \subtat{(\matchOpth{u}{p} s)}{nd} = x$. Given $n \in \Pos{s}$, it is easy to obtain 
$\subtat{(\matchOpth{u}{p} s)}{nd} = \subtat{(\matchOpth{u}{p} (\subtat{s}{n}))}{d} = \subtat{(\matchOpth{u}{p} (\subtat{t}{c}))}{d}$. 
In turn, $x \in \fv{\matchOpth{u}{p} (\subtat{t}{c})}$ yields easily $x \in \fv{\subtat{t}{c}}$ or $\, x \in \fv{u} \, \land \, \subtat{t}{c} \cap \  \theta \neq \emptyset$. We conclude by observing that the latter case implies $\ppcstepa \grip \ppcstepc$.
\end{proof}
}

\begin{lemma}[\axiomuse{\axFdb}]
Consider \tredexes\ $\ppcstepa, \ppcstepb, \ppcstepb', \ppcstepc, \ppcstepc'$ verifying $\residu{\ppcstepb}{\ppcstepa}{\ppcstepb'}$, $\residu{\ppcstepc}{\ppcstepa}{\ppcstepc'}$, and $\ppcstepb' \grip \ppcstepc'$.
Then $\ppcstepb \grip \ppcstepc \,\lor\, \ppcstepb \grip \ppcstepa \grip \ppcstepc$.
\end{lemma}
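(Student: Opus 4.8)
The plan is to exploit that $\ppcstep{b'} \grip \ppcstep{c'}$ packages two independent pieces of information: an \emph{embedding} fact and a \emph{free-variable} fact. Writing $\subtat{t'}{b'} = (\lp{\theta'} p'.s') u'$, the hypothesis gives $c' = b'12n'$ for some $n'$ (so in particular $\ppcstep{b'} < \ppcstep{c'}$), a positive match $\matchOpt{\theta'}{u'}{p'}$, and a symbol $z \in \theta' \cap \fv{\subtat{t'}{c'}}$. Since the set of binding symbols attached to a step is invariant under residuals, $\theta'$ is exactly the binder $\theta_{\ppcstepb}$ of the abstraction of $\ppcstepb$; this invariance is what will let me transport the free-variable fact back along $\ppcstepa$.

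First I would use the embedding fact. From $\ppcstep{b'} < \ppcstep{c'}$, $\residu{\ppcstepb}{\ppcstepa}{\ppcstep{b'}}$ and $\residu{\ppcstepc}{\ppcstepa}{\ppcstep{c'}}$, \thelem~\ref{rsl:ppc-embedding-invariance-exceptions} leaves only two possibilities: either $\ppcstepb < \ppcstepc$, or the exceptional case $\ppcstepa \grip \ppcstepb \,\land\, \ppcstepb \disj \ppcstepc \,\land\, a2 \leq c$ (the other exceptional branch is discarded because it forces $\ppcstep{b'} \disj \ppcstep{c'}$). I claim the exceptional case is incompatible with the free-variable fact: there $\ppcstepb$ lies in the body of $\ppcstepa$ and $\ppcstepc$ in its argument, so $c'$ sits inside a substituted copy of a subterm of the argument $u$ of $\ppcstepa$; up to $\alpha$-conversion the binder $\theta_{\ppcstepb}$ is fresh for $u$, whence $\theta_{\ppcstepb} \cap \fv{\subtat{t'}{c'}} = \emptyset$, contradicting $z \in \theta' \cap \fv{\subtat{t'}{c'}}$. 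This reduces everything to the case $\ppcstepb < \ppcstepc$.

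Second, assuming $\ppcstepb < \ppcstepc$, I would run a case analysis on the clause of the residual definition that produces $b'$ from $b$ and $c'$ from $c$, using that $\ppcstepb < \ppcstepc$ forces $a$, $b$, $c$ into compatible positions (e.g. a body-redex of $\ppcstepa$ and an argument-redex of $\ppcstepa$ cannot be nested, which discards the mixed clauses). In each surviving case a short position computation shows that $c' = b'12n'$ pulls back to $c = b\,12\,n''$, i.e. $\ppcstepc$ lies in the \emph{body} of $\ppcstepb$; positivity of the match of $\ppcstepb$ is then inherited from that of $\ppcstep{b'}$ by the contrapositive of \thelem~\ref{rsl:reduction-mantains-decided-match}:(\ref{it:reduction-mantains-fail-match}). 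It remains to certify the free-variable requirement $\theta_{\ppcstepb} \cap \fv{\subtat{t}{c}} \neq \emptyset$ of $\ppcstepb \grip \ppcstepc$, for which I transport $z$ from $\subtat{t'}{c'}$ back to $\subtat{t}{c}$.

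The heart of the argument, and the step I expect to be the main obstacle, is this transport of the free variable, since it is exactly where the chain $\ppcstepb \grip \ppcstepa \grip \ppcstepc$ appears. Applying a local (non-root) version of \thelem~\ref{rsl:fv-residuals} at $\ppcstepa$, from $z \in \fv{\subtat{t'}{c'}}$ one gets either $z \in \fv{\subtat{t}{c}}$, which yields $\ppcstepb \grip \ppcstepc$ directly, or else $\ppcstepa \grip \ppcstepc$ together with $z \in \fv{u}$, where $u$ is the argument of $\ppcstepa$. In this latter branch $z \in \theta_{\ppcstepb}$ occurs free in the argument of $\ppcstepa$ while $\ppcstepa$ sits in the body of $\ppcstepb$, which is precisely $\ppcstepb \grip \ppcstepa$; combined with $\ppcstepa \grip \ppcstepc$ this produces the chain. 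The delicate points to get right are (i) the fresh-variable bookkeeping under the matching substitution of $\ppcstepa$, so that $z$, being in the binder of $\ppcstepb$ and hence $\alpha$-renamable away from $\theta$ of $\ppcstepa$ and from $u$, is neither captured nor created by that substitution; and (ii) promoting \thelem~\ref{rsl:fv-residuals}, stated for a root redex, to an arbitrary position $a$, which is routine but must be stated carefully to license its use inside the nested cases above.
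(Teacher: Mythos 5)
Your proposal follows essentially the same route as the paper's proof: extract from $\ppcstepb' \grip \ppcstepc'$ the embedding fact $b'12 \leq c'$ together with a witness $z \in \theta' \cap \fv{\subtat{t'}{c'}}$ (where $\theta'$ is the binder of $\ppcstepb$, hence also of $\ppcstepb'$), reduce to $\ppcstepb < \ppcstepc$ via \thelem~\ref{rsl:ppc-embedding-invariance-exceptions} (your freshness argument against the exceptional branch is correct and equivalent to the paper's), establish $b12 \leq c$ and positivity of the match of $\ppcstepb$, and finally transport $z$ backwards through \thelem~\ref{rsl:fv-residuals}. Up to the transport step this is sound and matches the paper.

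The genuine gap sits at exactly the step that produces the chain disjunct. In the branch where \thelem~\ref{rsl:fv-residuals} yields $\ppcstepa \grip \ppcstepc$ and $z \in \fv{u}$ (with $u$ the argument of $\ppcstepa$), you assert that ``$\ppcstepa$ sits in the body of $\ppcstepb$'' and read off $\ppcstepb \grip \ppcstepa$; but nothing you have established implies $b12 \leq a$, and this is precisely what $\ppcstepb \grip \ppcstepa$ requires. It can be proved in two ways: (a) the paper's route, a case analysis on the relative positions of $\ppcstepa, \ppcstepb, \ppcstepc$ showing that this branch can only fire when $b < a < c$ --- when $a12 \leq b$ the variable convention gives $\theta' \cap \fv{u} = \emptyset$, and when $a2 \leq c$ or $c < a$ the relation $\ppcstepa \grip \ppcstepc$ is impossible --- so that $b12 \leq a$ holds in the one surviving case; or (b) a direct scope argument: a free occurrence in $u$ of a variable $z \in \theta'$ can, by the variable convention, only lie inside the body of $\ppcstepb$, whence $b12 \leq a2$ and so $b12 \leq a$ (equality $b12 = a2$ would force $a = b1$, an abstraction, not a step). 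Moreover, your own ``delicate point (i)'' works against this branch rather than supporting it: if $z$ really were $\alpha$-renamable away from $u$, then $z \in \fv{u}$ could never hold, the branch would be vacuous, and the lemma would collapse to the single conclusion $\ppcstepb \grip \ppcstepc$ --- which is false in general, as the paper's own example $\underline{(\l y. (\underline{(\l x.\underline{I\, x}_{c})\, y}_{a})) u}_{b}$ shows, where $\ppcstepb \notgrip \ppcstepc$ and only the chain $\ppcstepb \grip \ppcstepa \grip \ppcstepc$ holds. The renaming of $\theta'$ apart from $u$ is legitimate only when $\ppcstepa$ does not lie in the body of $\ppcstepb$; in the chain case $u$ is inside the scope of $\theta'$ and cannot be renamed apart. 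So the proof is repairable with tools you already have in hand, but as written the step that yields the chain --- the whole point of the lemma --- is unjustified.
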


\begin{proof}
Let $t \sstep{\ppcstepa} t'$, and say $\subtat{t}{a} = (\lpth p.s) u$, $\subtat{t'}{b'} = (\lp{\theta'} p'.s') u'$, and $\subtat{t}{b} = (\lp{\theta'} p''.s'') u''$; notice that the set $\theta'$ is invariant \wrt\ the contraction of $\ppcstepa$. 
Recall that $\ppcstepb' \grip \ppcstepc'$ implies $\matchOpt{\theta'}{u''}{p''}$ positive, $b'12 \leq c'$ and $\theta' \cap \fv{\subtat{t'}{c'}} \neq \emptyset$. 
Observe that $\matchOpt{\theta'}{u''}{p''}$ positive and $\matchOpt{\theta'}{u'}{p'}$ decided imply $\matchOpt{\theta'}{u'}{p'}$ positive; \confer\ \thelem~\ref{rsl:reduction-mantains-decided-match}. 
Let $x \in \theta' \cap \fv{\subtat{t'}{c'}}$. 
Then \thelem~\ref{rsl:fv-residuals} implies $x \in \fv{\subtat{t}{c}} \lor (\ppcstepa \grip \ppcstepc \,\land\, x \in \fv{u})$.

Given $b' < c'$, \thelem~\ref{rsl:ppc-embedding-invariance-exceptions} implies $b < c$ or $(b \disj c \land a2 \leq c)$.
The latter case implies $\ppcstepa \notgrip \ppcstepc$ (since $a2 \leq c$) and  $\theta'  \cap \fv{\subtat{t}{c}} = \emptyset$ (since $b\disj c$ and $\subtat{t}{b} = (\lp{\theta'} p''.s'') u''$), contradicting $x \in \fv{\subtat{t}{c}} \lor \ppcstepa \grip \ppcstepc$.
Hence $b < c$. %
\onlyPaper{%
An analysis of the three possible cases \wrt\ $\stepa$, \ie\ $\stepa < \stepb < \stepc$, $\stepb < \stepa < \stepc$ and $\stepb < \stepc < \stepa$, allows to conclude. \CompleteInReport.
}%
\onlyReport{%
There are three cases to analyse, depending on $a$.

\begin{enumerate}
\minitem $a < b < c$. \\
Assume $b = a12n$, $c = a12n'$  and $n < n'$, so that $b' = an$ and $c' = an'$. 
Then $b'12 \leq c'$ implies $n12 \leq n'$, and therefore $b12 \leq c$. Moreover, $a12 \leq b$ implies $\theta' \cap \fv{u} = \emptyset$, so that $x \in \fv{\subtat{t}{c}}$. Consequently, $\ppcstepb \grip \ppcstepc$.

Assume $b = a2mn$, $c = a2m'n'$, $mn < m'n'$, $\subtat{p}{m} = \wid{y}$, $\subtat{p}{m'} = \wid{z}$, and $y,z \in \theta$. 
In this case, both $\subtat{p}{m}$ and $\subtat{p}{m'}$ being variable occurrences, along with $mn < m'n'$, imply $m = m'$, then $y = z$. 
Therefore $b' = akn$ and $c' = ak'n'$, where $\subtat{s}{k} = \subtat{s}{k'} = y$. 
In turn, the last assertion along $b'12 \leq c'$ imply $k = k'$, then $n12 \leq n'$, therefore $b12 \leq c$. Moreover, in this case $\ppcstepa \notgrip \ppcstepc$ implying $x \in \fv{\subtat{t}{c}}$. Thus $\ppcstepb \grip \ppcstepc$.

\minitem $b < a < c$. \\
We have $b12 = b'12 \leq c'$ and $a \leq c'$, then $b < a$ implies $b12 \leq a < c$. The existence of $\ppcstepc'$ yields $\matchOpth{u}{p} \neq \fail$. 
If $x \in \fv{\subtat{t}{c}}$, then $\ppcstepb \grip \ppcstepc$; otherwise, $\ppcstepa \grip \ppcstepc$ and $x \in \fv{u} \subseteq \fv{\subtat{t}{a}}$ imply $\ppcstepb \grip \ppcstepa$. Thus we conclude.

\minitem $b < c < a$. \\
We have $b12 = b'12 < c' = c$, and $\ppcstepa \notgrip \ppcstepc$ implies $x \in \fv{\subtat{t}{c}}$. Therefore $\ppcstepb \grip \ppcstepc$.
\end{enumerate}
\vspace{-4mm}
}
\end{proof}

\begin{lemma}[\axiomuse{\axFdc}]
Let $\ppcstepa, \ppcstepb, \ppcstepc \in \ROccur{t}$ such that $\ppcstepa \grip \ppcstepb$ and $\ppcstepc < \ppcstepb$. 
Then $\ppcstepa \grip \ppcstepc \,\lor\, \ppcstepc \leq \ppcstepa$.
\end{lemma}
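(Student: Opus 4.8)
The plan is to reduce the statement to a case analysis on the position of $\ppcstepc$ by unfolding the PPC-specific definitions of gripping and embedding. First I would record what the hypothesis $\ppcstepa \grip \ppcstepb$ gives concretely: writing $\subtat{t}{a} = (\lpth p.s)u$, it means $\matchOpth{u}{p} \neq \fail$, the position $b$ has the form $a12n$ (so that $\subtat{t}{b} = \subtat{s}{n}$ lies inside the body $s$), and there is some $x \in \theta \cap \fv{\subtat{s}{n}}$. Since the embedding relation is just the strict prefix order on positions, the hypothesis $\ppcstepc < \ppcstepb$ says precisely that $c$ is a strict prefix of $a12n$.

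Next I would split according to how far $c$ reaches. If $c$ is a prefix of $a$ or equals $a$, then $c \leq a$ and the right disjunct $\ppcstepc \leq \ppcstepa$ holds immediately. The remaining possibility is $a < c < a12n$, so that $c = a\,w$ with $w$ a nonempty proper prefix of $12n$. A prefix of $12n$ of length one is $1$, giving $c = a1$; but $\subtat{t}{a1} = \lpth p.s$ is an abstraction, hence $\pair{t}{a1}$ is not a step, contradicting $\ppcstepc \in \ROccur{t}$, so this subcase is vacuous. Every longer prefix of $12n$ begins with $12$, so necessarily $c = a12m$ for some proper prefix $m$ of $n$ (possibly $m = \epsilon$, i.e.\ $c = a12$ with $\subtat{t}{c} = s$), and in all these cases $\subtat{t}{c} = \subtat{s}{m}$.

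It then remains to establish $\ppcstepa \grip \ppcstepc$ in this last subcase, i.e.\ $\theta \cap \fv{\subtat{s}{m}} \neq \emptyset$, since the match condition $\matchOpth{u}{p} \neq \fail$ and the required shape $c = a12m$ are already in place. This is where the only genuine work lies: I must transport the witness $x \in \theta \cap \fv{\subtat{s}{n}}$ up from the subterm $\subtat{s}{n}$ to the enclosing $\subtat{s}{m}$, using that $m$ is a prefix of $n$ and hence $\subtat{s}{n}$ is a subterm of $\subtat{s}{m}$. In general a free variable of a subterm need not remain free in an enclosing term, but here $x$ is one of the binding symbols $\theta$ of the outer abstraction; invoking the $\alpha$-conversion convention (choosing a representative in which the binders occurring inside $s$ use symbols disjoint from $\theta$), no binder on the path from $m$ down to $n$ can capture $x$, so $x \in \fv{\subtat{s}{n}}$ forces $x \in \fv{\subtat{s}{m}}$. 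This yields $\theta \cap \fv{\subtat{s}{m}} \neq \emptyset$ and therefore $\ppcstepa \grip \ppcstepc$.

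The main obstacle I anticipate is exactly this free-variable preservation step: applying the $\alpha$-convention carefully so that the chosen witness $x \in \theta$ stays free as one passes from the inner subterm $\subtat{s}{n}$ to the enclosing $\subtat{s}{m}$, and checking that the degenerate value $m = \epsilon$ (the case $c = a12$, where $\subtat{t}{c}$ is the whole body $s$) is handled uniformly by the same argument. The rest is routine bookkeeping of positions, and the match premise $\matchOpth{u}{p} \neq \fail$ is simply inherited from the gripping hypothesis on $\ppcstepa$ and $\ppcstepb$.
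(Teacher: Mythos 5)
Your proposal is correct and follows essentially the same route as the paper's proof: both reduce to the comparability of the prefixes $a,c$ of $b$, dispose of the case $c \leq a$ immediately, use the fact that $\subtat{t}{a1}$ is an abstraction (hence not a step) to force $a12 \leq c$, and then conclude $\theta \cap \fv{\subtat{t}{b}} \subseteq \theta \cap \fv{\subtat{t}{c}}$ via the variable convention. The only difference is presentational — you spell out the position bookkeeping and the transport of the witness $x$ in more detail than the paper does.
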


\begin{proof}
Observe that $a < b$ and $c < b$ implies that either $c \leq a$ or $a < c$. In the former case we  immediately conclude. Otherwise, it suffices to notice that $a < c < b$, $a12 \leq b$ and $\subtat{t}{c}$ being a redex imply $a12 \leq c$, and that $c < b$, along with the variable convention, implies $\theta \cap \fv{\subtat{t}{b}} \subseteq \theta \cap \fv{\subtat{t}{c}}$, where $\subtat{t}{a} = (\lpth p.s) u$. Therefore $\emptyset\neq \theta \cap \fv{\subtat{t}{c}}$ so that we conclude $\ppcstepa \grip \ppcstepc$.
\end{proof}

\bigskip\noindent\textbf{The axioms \axiomuse{\axFD} and \axiomuse{\axSO}.} \\
\axiomuse{\axFD} is a consequence of the gripping axioms. Thm. 3.2. in~\cite{thesis-mellies} states that an ARS satisfying the gripping axioms along with \axiomuse{\axSelfReduction}, \axiomuse{\axFiniteResiduals} and \axiomuse{\axLinearity}, and whose embedding and gripping relations are acyclic, also enjoys \axiomuse{\axFD}. 
For the ARS modeling \theppc, we have verified all the required axioms. The embedding relation being an order, and the gripping relation being included in the former, imply immediately that both are acyclic. Hence we obtain \axiomuse{\axFD}.

For the axiom \axiomuse{\axSO}, the interesting case is when the \tredexes\ are nested, \ie\ $\ppcstepa < \ppcstepb$. Let $\subtat{t}{a} = (\lpth p.s)u$, $t \sstep{\ppcstepb} t'$, and $\subtat{t'}{a} = (\lpth p'.s')u'$.
If $\matchOpth{u}{p} = \fail$ is a matching failure, it suffices to observe that $\matchOpth{u'}{p'} = \fail$, \confer\ \thelem~\ref{rsl:reduction-mantains-decided-match}.
If $\matchOpth{u}{p}$ is positive, then 
a simple, yet extensive, analysis resorting to various properties related to substitutions (\eg\ that reduction steps, their targets, and residuals, are preserved by substitutions), suffices to conclude.

\subsubsection{A reduction strategy for \theppc}
\label{sec:ppc-strategy}

This section introduces a normalising strategy \strs\ for \theppc.  A \defi{\protoredex} is a term
of the form $(\lpth\ p.t)u$, regardless of whether the match $\matchOpth{u}{p}$ is decided or not.
The rationale behind the definition of \strs\ can be described through two observations.  First, it
focuses on the leftmost-outermost (LO) prestep of $t$, entailing that when \theppc\ is restricted to
the $\l$-calculus it behaves exactly as the LO strategy for the $\l$-calculus.  Second, if the match
corresponding to the LO occurrence of a prestep is not decided, then the strategy selects only the
(outermost) step, or steps, in that subterm which should be contracted to get it ``closer'' to a
decided match. \Eg\ in the term $(\lp{\set{x,y}}\ \ca\ \wid{x}\ (\cc\ \wid{y}).y\ x) \ (\ca\ r_1 \
r_2)$, where all the $r_i$'s are steps,  the match $\matchOpt{\set{x,y}}{\ca\ r_1\ r_2}{\ca\ \wid{x}\ (\cc\ \wid{y})}$ is not decided
and the r\^ole played by $r_1$ is different from that of $r_2$ in obtaining a decided match.
Replacing $r_1$ by an arbitrary term $t_1$ does not yield a decided match, \ie\
$\matchOpt{\set{x,y}}{\ca\ t_1\ r_2}{\ca\ \wid{x}\ (\cc\ \wid{y})}$ is not decided.  However,
replacing $r_2$ by $\cc\ s_2$ (resp. by $\cd\ s_2$) does: $\matchOpt{\set{x,y}}{\ca\ r_1\ (\cc\
  s_2)}{\ca\ \wid{x}\ (\cc\ \wid{y})} = \set{x \rewto r_1, y \rewto s_2}$ (resp.
$\matchOpt{\set{x,y}}{\ca\ r_1\ (\cd\ s_2)}{\ca\ \wid{x}\ (\cc\ \wid{y})} = \fail$). Hence,
contraction of $r_2$ can contribute towards obtaining a decided match, while contraction of $r_1$
does not. 
A different example, in which multiple steps (including one in the pattern) are selected, is $(\lp{\set{x,y}}\ \ca\ (\cb\ \wid{x})\ r_1.r_2) \ (\ca\ r_3\ (\cd\ r_4))$, where all the $r_i$'s are steps. The strategy selects $r_1$ and $r_3$.  Moreover, notice that contraction of $r_4$ is delayed since the match operation is not decided when the pattern is a redex
(if the contractum of $r_1$ were \eg\ either $\cd\ \wid{y}$ or
$\ca$, then the match w.r.t. $\cd\ r_4$ would be decided without the need of reducing $r_4$).
We note that in both examples, the decision made by the strategy (namely, to select $r_2$ in the first case and $\set{r_1,r_3}$ in the second one) coincides for any term having the indicated form.
This decision is based solely on the structure of the term, in order to avoid the need for history or lookahead.

\delia{Formally, we define the \defi{reduction strategy \strs} as a function
  from terms to sets of steps by means of an auxiliary function
  $\strspos$. This auxiliary function gives the \emph{positions}
  of the steps to be selected:  
$\strs(t) \eqdef \set{\pair{t}{p} \ \sthat p \in \strspos(t)}$.

In turn, the definition of $\strspos$ resorts to an additional auxiliary function, called \strsp, that formulates the simultaneous structural analysis of the argument and pattern of a \protoredex. The arguments of \strsp\ are the pattern and the argument of a \protoredex. Its outcome is a pair of sets of positions, corresponding to steps inside the pattern
and argument respectively, which could contribute to turning a non-decided match into a decided one.

The formal definition of $\strspos$ and \strsp\ follows. 
} %
Recall that we write $\isbm{t}{\theta}$ when $t = \wid{x}$ for some $x \in \theta$.

\medskip\noindent
{\small
$\begin{array}{@{}r@{\hspace{1pt}}c@{\hspace{1pt}}l@{\hspace{4mm}}l}
	\strspos(x)   & := & \emptyset \\
	\strspos(\wid{x}) & := & \emptyset \\
	\strspos(\lpth\ p.t) & := & 1\strspos(p) & \textif p \notin \NForms \\
	\strspos(\lpth\ p.t) & := & 2\strspos(t) & \textif p \in \NForms \\
	
	\strspos((\lpth\ p.t) u) & := & \{ \epsilon \} 
			& \textif
			\matchOpth{u}{p} \textnormal{ decided} \\

	\strspos((\lpth\ p.t) u) & := & 11G \cup 2D
			& \textif \matchOpth{u}{p} = \wait,
                         \fnsmth{u}{p} = \pair{G}{D}\neq \emptypair,  \\

	\strspos((\lpth\ p.t) u) & := & 11\strspos(p) 
			& \textif \matchOpth{u}{p} = \wait,
                          \fnsmth{u}{p} = \emptypair, 
                          p \notin \NForms \\

	\strspos((\lpth\ p.t) u) & := & 12\strspos(t) 
			& \textif \matchOpth{u}{p} = \wait,
                         \fnsmth{u}{p} = \emptypair, 
                          p \in \NForms, t \notin \NForms \\

	\strspos((\lpth\ p.t) u) & := & 2\strspos(u)
			& \textif \matchOpth{u}{p} = \wait,
                          \fnsmth{u}{p} = \emptypair, 
                          p \in \NForms, t \in \NForms \\

	\strspos(t u) & := & 1\strspos(t) & \textif
			t \mbox{ is not an abstraction and } t \notin \NForms   \\

	\strspos(t u) & := & 2\strspos(u) & \textif
	        t \mbox{ is not an abstraction and } t \in \NForms   \\ \\
\end{array}$

\noindent
$\begin{array}{@{}r@{\hspace{1pt}}c@{\hspace{1pt}}l@{\hspace{8pt}}l}
	\fnsmth{t}{\wid{x}} & := & \emptypair & \textif x \in \theta \\

	\fnsmth{\wid{x}}{\wid{x}} & := & \emptypair & \textif x \notin \theta \\

	\fnsmth{t_1 t_2}{p_1 p_2} & := & 
			\langle 1G_1 \cup 2G_2,1D_1 \cup 2D_2 \rangle
			& \textif t_1 t_2, p_1 p_2 \in \MForms, 
                      \fnsmth{t_i}{p_i} = \pair{G_i}{D_i} \\

	\fnsmth{t}{p} & := & \pair{\strspos(p)}{\emptyset}
			& \textif
			p \notin \MForms \\

      \fnsmth{t}{p} & := & \pair{\emptyset}{\strspos(t)}
			& \textif
			p \in \MForms\ \&\ t \notin \MForms\ \&\  
                        \neg \isbm{p}{\theta}
\end{array}$
} 

Notice the similarities between the first three clauses in the definition of \strsp\ and those of
the definition of the matching operation (\confer\ \thesec~\ref{sec:ppc-basic-elements}).
Also notice that whenever a non-decided match can be turned into a decided one, the function \strsp\ chooses at least one (contributing) step.
Formally, it can be proved that, given $p$ and $u$ such that $\matchOpth{u}{p} = \wait$, if $p'$ and $u'$ exist such that $p \mred{} p'$, $u \mred{} u'$ and $\matchOpth{u'}{p'}$ is decided, then $\fnsmth{u}{p} \neq \emptypair$. 

%
%

\medskip
Let us analyse briefly the clauses in the definition of $\strspos$.
The focus on the LO \protoredex\ of a term is formalised in the first four and the last two clauses.
If the LO \protoredex\ is in fact a step, then the strategy selects exactly that step; this is the meaning of the fifth clause. 
If the LO \protoredex\ is not a step, then \strsp\ is used.
If it returns some steps which could contribute towards a decided match, then the strategy selects them (sixth clause).
Otherwise, as we already remarked, the \protoredex\ will never turn into a step, so that the strategy looks for the LO \protoredex\ inside the components of the term (seventh, eighth and ninth clauses).

While the strategy focuses on the obtention of a decided match for the LO prestep, it can select more \tredexes\ than needed for that aim. 
\Eg, for the term 
$(\lp{\set{y}}\ \ca\ \cb\ \cc\ \wid{y}.y)\ (\ca\ (\Id\ \cc)\ (\Id\ \cb)\ (\Id\ \ca))$,
the set selected by the strategy $\strs$ is $\set{\Id\ \cc, \Id\ \cb}$, even if  the contraction of just one step of the set suffices to make the head match decided.

Notice that \strs\
  collapses to the LO-strategy
when considering the subset of \theppc\ terms given by the terms of the \lc.

\medskip
The reduction strategy \strs\ is complete, \ie, if $t$ is not a normal form, then $\strs(t) \neq \emptyset$. Moreover, all steps in $\strs(t)$ are \emph{outermost}.
On the other hand, notice that \strs\ is not \textit{outermost
  fair}~\cite{vanRaamsdonk:1997}. Indeed, given $(\l\, \cc\,x.s)\,
\Omega$, where $\Omega$ is a non-terminating term, \strs\ continuously
contracts $\Omega$, even when $s$ contains a step.

\delia{ Additionally, the steps in $\strs(t)$ are not always
  hereditarily outermost, \ie, universally $<$-external in the sense
  of~\cite{DBLP:conf/popl/AccattoliBKL14} (\confer\ Sec.~\ref{sec:free-domin}).  
  Thus for example,  given the term $t = (\l_{x} \ca \cb
  \wid{x}.t_1) ((\Id \cd) (\Id \cb) t_2)$,  the strategy $\strs$ selects
  the set of redexes $\set{\Id \cd, \Id \cb}$. 
  By contracting only $\Id \cd$,
  we get $t \sstep{} t' = (\l_{x} \ca \cb \wid{x}.t_1) (\cd (\Id \cb)
  t_2)$, where $t'$ contains a (created) redex that embeds (the
  residual of the original) $\Id \cb$.  Note that the created,
  embedding redex is a \emph{matching failure}. Such is always the case
  whenever a redex embeds a residual of $\strs(t)$, observation
  which is used to prove that $\strs(t)$ is \ngrip.  }


\subsubsection{Properties of the reduction strategy \strs}
\label{sec:ppc-strategy-properties}

In this section we prove that \strs\ computes necessary (Prop.~\ref{rsl:ppc-strategy-necessary}) and non-gripping (Prop.~\ref{rsl:strs-non-gripping}) sets. 
These proofs rely on the notion of \emph{projection} of a \mredseq\ \emph{w.r.t. a position}. We
describe briefly this notion in the following. \onlyPaper{\CompleteInReport}.

Let $a$ be a position. Given $\ppcstep{b} = \pair{t}{b}$, we say that $a \leq \ppcstep{b}$ iff $a \leq b$. 
This definition is extended to multisteps and \redseqs: $a \leq \ppcsetred{B}$ iff $a \leq \ppcstep{b}$ for all $\ppcstep{b} \in \ppcsetred{B}$, $a \leq \reda$ is defined similarly.

If $a \leq \ppcstep{b} = \pair{t}{b}$, implying $b = a b'$, then we define the \defi{projection} of $\ppcstep{b}$ \wrt\ $a$, as follows: 
$\subtat{\ppcstep{b}}{a} = \pair{\subtat{t}{a}}{b'}$. 
If $a \leq \ppcsetred{B}$, then the projection $\subtat{\ppcsetred{B}}{a}$ is defined as expected. We define similarly $\subtat{\reda}{a}$ if $a \leq \reda$.
The targets of steps and \redseqs, the residual relation, and the developments of a multistep, are compatible with these projections. 

\medskip
A multistep $\ppcsetred{B}$ \defi{preserves} $a$ iff all $\ppcstep{b} \in
\ppcsetred{B}$ verify $b \not < a$ (or equivalently $a \leq b$ or $a \disj
b$).  If $\ppcsetred{B}$ preserves $a$, then this set can be
partitioned\footnote{\delia{The relation \emph{preserves}
      is similar to  \emph{free-from} 
      ({\it c.f.}\ Sec.~\ref{sec:free-domin}). 
      Morover, the
      partition given by $\ppcsetred{B} = \thefreewrt{\ppcsetred{B}}{a} \uplus
      \thedominwrt{\ppcsetred{B}}{a}$ bears some similarity to that
      described after the definition of free-from, albeit the former
      is restricted to \tredexsets\ that preserves some position,
      while the latter applies to any \tredexset.
Additionally, free-from is a
relation on abstract steps, \tredexsets\ and \mredseq, while preserves
is a relation between \theppc\ \tredexsets\ and \emph{positions} (not necessarily redexes). 
}} into two parts, say $\thefreewrt{\ppcsetred{B}}{a}$ and
$\thedominwrt{\ppcsetred{B}}{a}$, such that $b \disj a$ if $\ppcstep{b} \in
\thefreewrt{\ppcsetred{B}}{a}$, and $a \leq b$ if $\ppcstep{b} \in
\thedominwrt{\ppcsetred{B}}{a}$. Observe that $\ppcsetred{B} =
\thefreewrt{\ppcsetred{B}}{a} \uplus \thedominwrt{\ppcsetred{B}}{a}$.%

If $t \mstep{\ppcsetred{B}} t'$ preserves $a$, then $\subtat{t'}{a}$ is determined by $\thedominwrt{\ppcsetred{B}}{a}$, \ie\ $\subtat{t'}{a} = \subtat{t''}{a}$ where $t \mred{\thedominwrt{\ppcsetred{B}}{a}} t''$.
Therefore we can extend the definition of the \defi{projection} $\subtat{\ppcsetred{B}}{a}$ to any $\ppcsetred{B}$ preserving $a$: $\thefreewrt{\ppcsetred{B}}{a}$ is simply ignored.

In turn, a \emph{\mredseq} $\mreda$ \defi{preserves} $a$ iff all its elements do. 
Suppose $\mreda$ preserves $a$, and let $t \mstep{\mredel{\mreda}{1}} t_1 \mstep{\mredel{\mreda}{2}} t_2 \mred{\mredsub{\mreda}{3}{}} t'$. 
Observe that $\subtat{t}{a} \mred{\subtatwide{\mredel{\mreda}{1}}{a}} \subtat{t_1}{a} \mred{\subtatwide{\mredel{\mreda}{2}}{a}} \subtat{t_2}{a} \ldots$. 
This observation leads to define $\subtat{\mreda}{a}$, the \defi{projection} of $\mreda$ \wrt $a$, as expected.

Some notions related to \mredseqs\ are compatible with projections:
\begin{lemma}
\label{rsl:mred-projection-good-behavior}
Let $t \mred{\mreda} t'$ and assume $\mreda$ preserves $a$.
Then: 
\begin{enumerate}[(i)]
\minitem \label{it:mred-projection-good-tgt}
$\subtat{t}{a} \mred{\subtatwide{\mreda}{a}} \ \subtat{t'}{a}$.
\minitem \label{it:mred-projection-good-residuals}
If $\ppcstep{ac} \in \ROccur{t}$, then $\residu{\ppcstep{ac}}{\mreda}{\ppcstep{d}}$ iff $d = a d_1$ and $\residu{\ppcstep{c}}{\subtat{\mreda}{a}}{\ppcstep{d_1}}$.
\minitem \label{it:mred-projection-good-uses}
If $\ppcstep{ac} \in \ROccur{t}$, then $\mreda$ uses $\ppcstep{ac}$ iff $\subtat{\mreda}{a}$ uses $\ppcstep{c}$.
\end{enumerate}
\end{lemma}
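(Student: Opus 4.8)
The plan is to prove all three items together by induction on $\rlength{\mreda}$, taking as granted the compatibility of projection with a \emph{single} step and a \emph{single} multistep that was announced in the paragraph preceding the statement (``the targets of steps and reduction sequences, the residual relation, and the developments of a multistep, are compatible with these projections''). The first thing to record is that ``preserves $a$'' is inherited by decompositions: if $\mreda = \ppcsetred{B} ; \mreda'$ preserves $a$, then both $\ppcsetred{B}$ and $\mreda'$ preserve $a$, directly from the definition. This is exactly what lets the induction hypothesis fire on the suffix $\mreda'$, and it also guarantees that $\subtat{\mreda}{a} = \subtat{\ppcsetred{B}}{a} ; \subtat{\mreda'}{a}$ is well defined.

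For item~(\ref{it:mred-projection-good-tgt}) the base case $\mreda = \nil_t$ is immediate, since $\subtat{\nil_t}{a} = \nil_{\subtat{t}{a}}$. For the inductive step with $t \mstep{\ppcsetred{B}} t_1 \mred{\mreda'} t'$, the single-multistep case gives $\subtat{t}{a} \mstep{\subtat{\ppcsetred{B}}{a}} \subtat{t_1}{a}$ (recalling that $\subtat{t_1}{a}$ is determined by $\thedominwrt{\ppcsetred{B}}{a}$, the free part being ignored), while the induction hypothesis on $\mreda'$ gives $\subtat{t_1}{a} \mred{\subtat{\mreda'}{a}} \subtat{t'}{a}$. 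Concatenating these yields the claim.

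For item~(\ref{it:mred-projection-good-residuals}) I would again induct on $\rlength{\mreda}$, the base case being trivial because residual after $\nil$ is the identity on both sides. For $\mreda = \ppcsetred{B} ; \mreda'$, I expand $\residu{\ppcstep{ac}}{\mreda}{\ppcstep{d}}$ as the existence of an intermediate $\ppcstep{e} \in \residus{\ppcstep{ac}}{\ppcsetred{B}}$ with $\residu{\ppcstep{e}}{\mreda'}{\ppcstep{d}}$. The single-multistep residual compatibility identifies $\residus{\ppcstep{ac}}{\ppcsetred{B}}$ with the set $\set{ \pair{t_1}{a d'} \sthat \residu{\ppcstep{c}}{\subtat{\ppcsetred{B}}{a}}{\ppcstep{d'}}}$; in particular every such $\ppcstep{e}$ sits at a position $\geq a$ and lies in $\ROccur{t_1}$, so the induction hypothesis applies to it along $\mreda'$, giving $d = a d_1$ and $\residu{\ppcstep{d'}}{\subtat{\mreda'}{a}}{\ppcstep{d_1}}$. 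Composing the two tracings along $\subtat{\ppcsetred{B}}{a} ; \subtat{\mreda'}{a} = \subtat{\mreda}{a}$ yields $\residu{\ppcstep{c}}{\subtat{\mreda}{a}}{\ppcstep{d_1}}$; the converse direction is symmetric, since the single-multistep identification is itself a biconditional. Item~(\ref{it:mred-projection-good-uses}) is then a corollary: fixing $k$ and applying~(\ref{it:mred-projection-good-residuals}) to the prefix $\mredunt{\mreda}{k-1}$ (which also preserves $a$), the residuals $\residus{\ppcstep{ac}}{\mredunt{\mreda}{k-1}}$ are exactly the steps $\pair{}{a d'}$ with $\ppcstep{d'} \in \residus{\ppcstep{c}}{\subtat{\mredunt{\mreda}{k-1}}{a}}$; as these all lie at positions $\geq a$, intersecting with $\mredel{\mreda}{k}$ coincides with intersecting their projections with $\subtat{\mredel{\mreda}{k}}{a} = \mredel{\subtat{\mreda}{a}}{k}$. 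Hence $\mreda$ uses $\ppcstep{ac}$ at index $k$ iff $\subtat{\mreda}{a}$ uses $\ppcstep{c}$ at index $k$, and taking the disjunction over $k$ concludes.

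The genuine obstacle is not the induction but the single-multistep residual compatibility on which item~(\ref{it:mred-projection-good-residuals}) rests, namely that for one multistep $\ppcsetred{B}$ preserving $a$ the residual relation commutes with $\subtat{\cdot}{a}$. This has to be reduced to the single-step version and then lifted through complete developments of $\ppcsetred{B}$, inspecting the three clauses of the \theppc\ residual definition — most importantly verifying that a step at a position $\geq a$ produces only residuals at positions $\geq a$, which is precisely where the hypothesis ``$\ppcsetred{B}$ preserves $a$'' is consumed (no step dips below $a$ to relocate or duplicate $\ppcstep{ac}$ outside the subterm at $a$). Since that building block is already asserted in the excerpt, the remaining content of the lemma is the routine concatenation argument sketched above.
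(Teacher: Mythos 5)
Your proposal is correct and follows essentially the same route as the paper: item~(i) by induction on $\rlength{\mreda}$ using the single-multistep target compatibility, item~(ii) by the same induction with the double-implication composition through the intermediate residual, and item~(iii) as a corollary of~(ii) applied to the prefix before the using multistep. The building block you defer to — compatibility of targets and residuals with the projection of a single multistep preserving $a$ — is exactly what the paper isolates as a separate preceding lemma (proved via the free/\domintext\ decomposition of the multistep), so your division of labour matches the paper's.
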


\onlyPaper{
\begin{proof}
This result admits a simple, though long, proof. \CompleteInReport.
\end{proof}
}

\onlyReport{
\begin{proof}
See the Appendix.
\end{proof}
}


\medskip
In the remainder of this section, we show that \strs\ always selects \emph{necessary} and \emph{\ngrip} sets of redexes, along with the needed auxiliary results.

\begin{lemma}
\label{rsl:strsp-empty-if-positive-match}
If $\cmatchOpth{u}{p}$ is positive, then  $\fnsmth{u}{p} = \pair{\emptyset}{\emptyset}$. 
\end{lemma}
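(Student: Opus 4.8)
The plan is to prove the statement by structural induction on the pattern $p$, exploiting the fact that the recursive clauses defining $\cmatchOpth{\cdot}{\cdot}$ and $\strsp$ descend on the pattern and the argument in exactly the same lockstep manner. The guiding observation is that a \emph{positive} compound match can be produced only by the first three clauses of the definition of $\cmatchOpth{u}{p}$: the fourth clause returns $\fail$ and the fifth returns $\wait$, neither of which is positive. Hence the induction only needs to examine the three positive-producing clauses, and in each of them I will check that the matching clause of $\strsp$ fires and returns $\emptypair$.

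First I would dispatch the two base cases, in which $p = \wid{x}$. If $x \in \theta$, then $\cmatchOpth{u}{\wid{x}} = \subst{x}{u}$ is positive for every $u$, and the first clause of $\strsp$ gives $\fnsmth{u}{\wid{x}} = \emptypair$ immediately. If $x \notin \theta$, then positivity forces $u = \wid{x}$, since for $u \in \MForms$ different from $\wid{x}$ the compound match is $\fail$ and for $u \notin \MForms$ it is $\wait$; then the second clause of $\strsp$ applies and again returns $\emptypair$.

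The inductive case is $p = p_1 p_2$. Here the only clause of compound matching capable of yielding a positive result is the third one, so necessarily $u = u_1 u_2$ with $u_1 u_2, p_1 p_2 \in \MForms$ and $\cmatchOpth{u}{p} = \cmatchOpth{u_1}{p_1} \uplus \cmatchOpth{u_2}{p_2}$. Because a disjoint union of matches is a substitution only when both of its arguments are substitutions, both $\cmatchOpth{u_i}{p_i}$ are positive; the induction hypothesis therefore yields $\fnsmth{u_i}{p_i} = \emptypair$, i.e.\ $G_i = D_i = \emptyset$ in the notation of the third clause of $\strsp$. Since $p_1 p_2$ is not a matchable, the first two clauses of $\strsp$ cannot fire, so its third clause applies and produces $\pair{1\emptyset \cup 2\emptyset}{1\emptyset \cup 2\emptyset} = \emptypair$. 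The remaining shapes of $p$ --- a plain variable or an abstraction --- are vacuous, as in those cases $\cmatchOpth{u}{p}$ is never positive (it is $\wait$ whenever $p \notin \MForms$, and $\fail$ whenever $p$ and $u$ are both matchable forms).

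I do not expect a genuine obstacle; the only care required is bookkeeping about clause ordering. In each case I must confirm both that the hypothesis ``positive'' indeed selects the anticipated clause of $\cmatchOpth{\cdot}{\cdot}$ (ruling out the clauses that would return $\fail$ or $\wait$) and that no earlier clause of $\strsp$ pre-empts the intended one. These side-conditions are all immediate from the syntactic guards, but stating them explicitly is what keeps the clause-by-clause correspondence between the two definitions honest.
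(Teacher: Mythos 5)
Your proof is correct and takes essentially the same route as the paper's: a structural induction on the pattern $p$, where positivity of the compound match rules out all shapes of $p$ except matchables and applications (the paper phrases this as $\cmatchOpth{u}{p}$ positive implying $p \in \DStructs$), and the application case follows from the fact that a disjoint union is a substitution only if both components are, so the induction hypothesis applies to each $\cmatchOpth{u_i}{p_i}$. Your additional bookkeeping about clause ordering in the definitions of $\cmatchOpth{\cdot}{\cdot}$ and \strsp\ just makes explicit what the paper leaves implicit.
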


\begin{proof}
Observe that $\cmatchOpth{u}{p}$ positive implies $p \in \DStructs$. 
Then a simple induction on $p$ suffices. Particularly, if $p = p_1 p_2$, then $\cmatchOpth{u}{p}$ positive implies $u = u_1 u_2$ and both $\cmatchOpth{u_i}{p_i}$ positive, so that the \ih\ on each $p_i$ allows to conclude.
\end{proof}

\begin{lemma}
\label{rsl:ppc-necessary-strsp} 
Let $t, u$ be terms and $p$ be a pattern. 
\hspace*{1mm} 
\begin{enumerate}[(i)]
\minitem 
\label{it:ppc-necessary-mf}
Let $t \mred{\mreda} t'$ where $t \notin \MForms$%
, $t' \in \MForms$. Then $\mreda$ uses $\strs(t)$ and $\residus{\strs(t)}{\mreda} = \emptyset$.
\minitem 
\label{it:ppc-necessary-strsp-cmatch}
Let $p \mred{\mredb} p'$ and $u \mred{\mredc} u'$, where $\cmatchOpth{u}{p} = \wait$ and $\cmatchOpth{u'}{p'}$ is decided. Let $\pair{G}{D} = \fnsmth{u}{p}$.
Then $\mredb$ uses $\ppcsetred{G}$ or $\mredc$ uses $\ppcsetred{D}$. 
Moreover, $\cmatchOpth{u'}{p'}$ positive implies $\residus{\ppcsetred{G}}{\mredb} = \residus{\ppcsetred{D}}{\mredc} = \emptyset$.
\minitem 
\label{it:ppc-necessary-strsp-match}
Let $p \mred{\mredb} p'$ and $u \mred{\mredc} u'$, where $\matchOpth{u}{p} = \wait$ and $\matchOpth{u'}{p'}$ is decided. Let $\pair{G}{D} = \fnsmth{u}{p}$.
Then $\mredb$ uses $\ppcsetred{G}$ or $\mredc$ uses $\ppcsetred{D}$. 
Moreover, $\matchOpth{u'}{p'}$ positive implies $\residus{\ppcsetred{G}}{\mredb} = \residus{\ppcsetred{D}}{\mredc} = \emptyset$.
\end{enumerate}
\end{lemma}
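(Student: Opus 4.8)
The plan is to prove the three items \emph{simultaneously}, by well-founded induction on the combined syntactic size: to an instance of \parteuno\ with term $t$ we assign $\tsize{t}$, and to an instance of \partedos\ or \partetres\ with $(p,u)$ we assign $\tsize{p}+\tsize{u}$. This is forced by the mutual recursion between $\strspos$ and $\strsp$, and it works because every recursive call descends strictly: the prestep clause of $\strspos$ on $(\lpth p.s)u$ invokes $\strsp$ on $(p,u)$, while $\strsp$ invokes $\strspos$ only on a strict subterm ($p$ or the argument). Item \partetres\ is a thin wrapper around \partedos: since $\matchOpth{\cdot}{\cdot}$ is the \emph{check} of $\cmatchOpth{\cdot}{\cdot}$ on $\theta$, one checks directly that $\matchOpth{u}{p}=\wait$ iff $\cmatchOpth{u}{p}=\wait$, that $\matchOpth{u'}{p'}$ is decided iff $\cmatchOpth{u'}{p'}$ is decided, and that $\matchOpth{u'}{p'}$ positive implies $\cmatchOpth{u'}{p'}$ positive; as $\strsp$ is insensitive to the check, \partetres\ follows from \partedos\ at once.

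For \partedos\ I would do a case analysis on the clause of $\strsp$ computing $\fnsmth{u}{p}$, under the standing hypothesis $\cmatchOpth{u}{p}=\wait$. The two clauses producing $\emptypair$ both require $\cmatchOpth{u}{p}$ to be a substitution, contradicting $\wait$, so they are vacuous. The clause $\fnsmth{u}{p}=\pair{\strspos(p)}{\emptyset}$ (for $p\notin\MForms$) and the clause $\fnsmth{u}{p}=\pair{\emptyset}{\strspos(u)}$ (for $p\in\MForms$, $u\notin\MForms$, $\neg\isbm{p}{\theta}$) both reduce to \parteuno: $\cmatchOpth{u'}{p'}$ decided forces $p'\in\MForms$ (and, in the second clause, $u'\in\MForms$), so the induction hypothesis of \parteuno\ applied to $p\mred{\mredb}p'$ (resp.\ $u\mred{\mredc}u'$) gives that $\mredb$ uses $\ppcsetred{G}$ with $G=\strspos(p)$ (resp.\ $\mredc$ uses $\ppcsetred{D}$ with $D=\strspos(u)$), with empty residuals. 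The delicate clause is the compound one, $p=p_1p_2$ and $u=u_1u_2$ both in $\MForms$, where $\cmatchOpth{u}{p}=\cmatchOpth{u_1}{p_1}\uplus\cmatchOpth{u_2}{p_2}$. Here the non-sequential equation $\fail\uplus\wait=\fail$ must be handled with care: from the union passing from $\wait$ to decided I would first locate an index $i$ with $\cmatchOpth{u_i}{p_i}=\wait$ whose residual match is decided, using \thelem~\ref{rsl:reduction-mantains-decided-match} to exclude the alternatives (a positive component stays positive, so a component that turns the union into $\fail$ must itself have started as $\wait$), and then apply the induction hypothesis of \partedos\ to that index. For the ``moreover'' part, a positive union forces both components positive, and each component is either already positive (whence $\fnsm$ is $\emptypair$ by \thelem~\ref{rsl:strsp-empty-if-positive-match}) or turns positive from $\wait$ (whence the inductive ``moreover'' applies); either way the residuals of $G_i$ and $D_i$ vanish.

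For \parteuno\ I would analyse the clause of $\strspos$ selected by the shape of $t$, using the projection lemma \thelem~\ref{rsl:mred-projection-good-behavior} to push the statement into the relevant subterm. The governing structural fact is that a term whose head is an abstraction is neither a data-structure nor (unless its head redex fires) an abstraction, so it can enter $\MForms$ only by contracting its root \protoredex, which requires the head match to be decided; dually, a variable-headed term, or a normal form not in $\MForms$, can never reach $\MForms$, rendering the corresponding clauses vacuous. A plain application $t_1t_2$ with $t_1$ not an abstraction reduces to \parteuno\ on $t_1$ when $t_1\notin\NForms$ (observing that $t_1$ never becomes an abstraction along a reduction ending in a data-structure, so the root is never a redex and $\mreda$ preserves position $1$), and is vacuous when $t_1\in\NForms$. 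The prestep clauses are the core. If $\matchOpth{u}{p}$ is decided then $\strs(t)$ is the root step, which by the fact above must be fired to reach $\MForms$ — so $\mreda$ uses it — and which has empty residual afterwards by \axiomuse{\axSelfReduction}. If $\matchOpth{u}{p}=\wait$, I would split $\mreda$ at the first contraction of (a residual of) the root; the prefix preserves the pattern and argument positions and hence projects, via \thelem~\ref{rsl:mred-projection-good-behavior}, to reductions $p\mred{}p^{*}$ and $u\mred{}u^{*}$ with $\matchOpth{u^{*}}{p^{*}}$ decided, so \partetres\ applies. When $\fnsmth{u}{p}=\emptypair$ the property stated just after the definition of $\strsp$ shows the match can never become decided, so no root contraction occurs and the subclause is vacuous; when $\fnsmth{u}{p}=\pair{G}{D}\neq\emptypair$, \partetres\ yields precisely that the projected reductions use $\ppcsetred{G}$ or $\ppcsetred{D}$, i.e.\ $\mreda$ uses $\strs(t)$ (whose positions are $11G\cup 2D$), while positivity — or, in the failure case, the collapse of the whole redex to $\Id$, which erases pattern and argument — gives the emptiness of residuals.

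The main obstacle I anticipate is exactly the $\matchOpth{u}{p}=\wait$ prestep case of \parteuno: one must decompose a \emph{multireduction}, rather than a linear one, at the moment the root fires, verify that the prefix preserves the pattern and argument positions so that \thelem~\ref{rsl:mred-projection-good-behavior} is applicable, and then transport the ``uses'' and residual information back through the projection and the position lifting $G\mapsto 11G$, $D\mapsto 2D$. Closely coupled to this is the non-sequential accounting in the compound clause of \partedos, where $\fail\uplus\wait=\fail$ lets the union be decided while one component is still $\wait$; identifying the component that becomes decided and applying the induction hypothesis to a genuinely smaller instance is where most of the care lies. The remaining cases become routine once the projection machinery and the equivalence between the checked and unchecked matching operations are in place.
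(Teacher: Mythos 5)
Your proof follows the same route as the paper's: simultaneous induction on syntactic size, item \partetres\ reduced to \partedos\ by unfolding the check, a case analysis on the clauses of $\strsp$ for \partedos\ and of $\strspos$ for \parteuno, with \thelem~\ref{rsl:mred-projection-good-behavior} transporting reductions, residuals and ``uses'' into subterms, and \thelem~\ref{rsl:reduction-mantains-decided-match} and \thelem~\ref{rsl:strsp-empty-if-positive-match} doing the matching bookkeeping. Items \partedos, \partetres\ and the prestep cases of \parteuno\ are argued essentially as in the paper and are sound.

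There is, however, a genuine gap in the plain-application case of \parteuno. For $t = t_1 t_2$ with $t_1$ not an abstraction and $t_1 \notin \NForms$, you justify applying the induction hypothesis to $t_1$ by claiming that ``$t_1$ never becomes an abstraction along a reduction ending in a data-structure, so the root is never a redex and $\mreda$ preserves position $1$''. This claim is false, and the subcase $t' \in \Abstract$ is not considered at all. Take $t_1 = \Id\,A$ with $A = \lpset{y}\ \wid{y}.\cc$ and $t = (\Id\,A)\,t_2$: then $t \notin \MForms$, $t_1$ is a redex but not an abstraction, and $t$ reduces to $\cc \in \DStructs$ by a multireduction $\mreda$ that first contracts $t_1$, producing the abstraction $A$ at the head, and then contracts the created root step of $A\,t_2$; hence $\mreda$ does \emph{not} preserve position $1$, the projection $\subtat{\mreda}{1}$ is undefined, and your appeal to \thelem~\ref{rsl:mred-projection-good-behavior} breaks down (taking $A = \lpset{y}\ \wid{y}.\Id$ instead yields a reduction ending in the abstraction $\Id$, showing the missing subcase is nonempty). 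The repair is exactly the splitting device you already use in the prestep case with $\matchOpth{u}{p} = \wait$, and it is what the paper does: write $\mreda = \mreda' ; \mreda''$ where $\mreda'$ is the prefix up to the first moment the head becomes a matchable form; this prefix does preserve $1$ and $2$, so the induction hypothesis \parteuno\ applies to $t_1 \mred{\subtat{\mreda'}{1}} s'$ (legitimate since $t_1 \notin \MForms$, as $t_1 \in \DStructs$ would contradict $t \notin \MForms$), giving that $\strs(t_1)$ is used \emph{within the prefix} with empty residuals; both conclusions then lift to all of $\mreda$ regardless of what $\mreda''$ does. A secondary remark: your dismissal of the $\wait$-plus-$\emptypair$ subcase of \parteuno\ invokes the unproven ``property'' stated after the definition of $\strsp$; since that property is in effect a corollary of item \partetres\ of the present lemma, you should instead invoke the induction hypothesis \partetres\ on $(p,u)$ directly, which is legitimate because $\tsize{p} + \tsize{u} < \tsize{t}$ — this is how the paper closes that subcase.
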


\onlyPaper{
\begin{proof}
For items~(\ref{it:ppc-necessary-mf}) and (\ref{it:ppc-necessary-strsp-cmatch}), we proceed by simultaneous induction on $\tsize{t}+\tsize{u} +\tsize{p}$.
The proof analyses the different cases in the definitions of \strs\ and \strsp; it resorts to \thelem~\ref{rsl:mred-projection-good-behavior} and \thelem~\ref{rsl:strsp-empty-if-positive-match}. Item~(\ref{it:ppc-necessary-strsp-match}) follows easily from item~(\ref{it:ppc-necessary-strsp-cmatch}). \CompleteInReport.
\end{proof}
}

\onlyReport{
\begin{proof}
  Item~(\ref{it:ppc-necessary-strsp-match}) follows from item~(\ref{it:ppc-necessary-strsp-cmatch}) since $\matchOpth{u}{p} = \wait$ implies $\cmatchOpth{u}{p} = \wait$, and $\matchOpth{u'}{p'}$ decided or positive implies $\cmatchOpth{u'}{p'}$ decided and positive respectively. 
	We prove items~(\ref{it:ppc-necessary-mf}) and (\ref{it:ppc-necessary-strsp-cmatch}), by simultaneous induction on $\tsize{t}+\tsize{u} +\tsize{p}$.

\medskip
Item~(\ref{it:ppc-necessary-mf}).
Observe that $t \notin \MForms$ implies that $t$ is either a variable or an application. In the former case $t' = t \notin \MForms$\ contradicting the hypothesis. So we consider the latter one.

Assume $t = (\lpth p.s) u$ where $\matchOpth{u}{p}$ is decided, so that $\strs(t) = \set{\pair{t}{\epsilon}}$. 
If there is some $i \leq \rlength{\mreda}$ such that $\pair{t_{i}}{\epsilon} \in \mredel{\mreda}{i}$, where $t_{i} \mstep{\mredel{\mreda}{i}} t_{i+1}$, taking the minimal such $i$ yields $\residus{\strs(t)}{\mredunt{\mreda}{i-1}} = \set{\pair{t_i}{\epsilon}}$, so that $\mreda$ uses $\strs(t)$, and moreover $\residus{\strs(t)}{\mredunt{\mreda}{i}} = \epsilon$. Otherwise $t' = (\lpth p'.s') u'$, contradicting $t' \in \MForms$. Thus we conclude.

Assume $t = (\lpth p.s) u$ where $\matchOpth{u}{p} = \wait$. Then $t' \in
\MForms$ implies $t \mred{\mreda'} t'' \mred{\mreda''} t'$ where $t'' = (\lpth p''.s'') u''$ and $\matchOpth{u''}{p''}$ is decided.  
Moreover $\mreda'$ preserves 11 and 2, implying $p \mred{\subtatwide{\mreda'}{11}} p''$ and $u \mred{\subtatwide{\mreda'}{2}} u''$ by \thelem~\ref{rsl:mred-projection-good-behavior}:(\ref{it:mred-projection-good-tgt}).  
Let $\fnsmth{u}{p} = \pair{G}{D}$.  
The \ih:(\ref{it:ppc-necessary-strsp-match}) can be applied, yielding that $\subtat{\mreda'}{11}$ uses $\ppcsetred{G}$ or $\subtat{\mreda'}{2}$ uses $\ppcsetred{D}$. Therefore $\pair{G}{D} \neq \pair{\emptyset}{\emptyset}$, implying $\strspos(t) = 11G \cup 2D$. Furthermore, \thelem~\ref{rsl:mred-projection-good-behavior}:(\ref{it:mred-projection-good-uses}) implies that $\mreda'$ uses $\strs(t)$. 
On the other hand, if $\matchOpth{u''}{p''}$ is positive, then \ih:(\ref{it:ppc-necessary-strsp-match}) also implies $\residus{\ppcsetred{G}}{\subtat{\mreda'}{11}} = \residus{\ppcsetred{D}}{\subtat{\mreda'}{2}} = \emptyset$, and $\matchOpth{u''}{p''} = \fail$, along with $t' \in \MForms$, implies $t' = \kid$. In both cases we obtain $\residus{\strs(t)}{\mreda} = \emptyset$.

Assume $t = su$ where $s \notin \MForms$.  Then, $t' \in \MForms\,$ implies $t = su \mred{\mreda'} s'u' \mred{\mreda''} t'$, where $\mreda'$ preserves 1 and 2, and either $s' \in \DStructs$ or $s'$ is an abstraction, \ie\ $s' \in \MForms$.  In turn, \thelem~\ref{rsl:mred-projection-good-behavior}:(\ref{it:mred-projection-good-tgt}) implies $s \mred{\subtatwide{\mreda'}{1}} s'$.  
Therefore, the \ih:(\ref{it:ppc-necessary-mf}) applies, yielding that $\subtat{\mreda'}{1}$ uses $\strs(s)$ and $\residu{\strs(s)}{\subtat{\mreda'}{1}} = \emptyset$. 
Observe that $s \notin \MForms$ and $s' \in \MForms$ imply $s \neq s'$, then $s \notin \NForms$, hence $\strspos(t) = 1 \strspos(s)$.
Hence \theLem~\ref{rsl:mred-projection-good-behavior}:(\ref{it:mred-projection-good-uses}) and \theLem~\ref{rsl:mred-projection-good-behavior}:(\ref{it:mred-projection-good-residuals}) implies that $\mreda'$ uses $\strs(t)$ and $\residus{\strs(t)}{\mreda'} = \emptyset$ respectively. Thus we conclude.

Finally, the remaining case  $t = s u$ where $s \in \DStructs$ contradicts $t \notin \MForms$.

\medskip
Item~(\ref{it:ppc-necessary-strsp-cmatch}).
Observe that $\cmatchOpth{u'}{p'}$ decided implies $p' \in \MForms$, and also $u' \in \MForms$ unless $\isbm{p'}{\theta}$.  We consider the following cases depending on whether $p$ is in $\MForms$ or not and likewise for $u$.

Assume $p \notin \MForms$, so that $G = \strspos(p)$ and $D = \emptyset$. 
In this case, $p' \in \MForms$ implies that the \ih:(\ref{it:ppc-necessary-mf}) can be applied on $p \mred{\mredb} p'$. We obtain that $\mredb$ uses $\ppcsetred{G}$ and $\residus{\ppcsetred{G}}{\mredb} = \emptyset$, which suffices to conclude. 

Assume $p \in \MForms$ and $u \notin \MForms$, so that $\cmatchOpth{u}{p} = \wait$ implies $\neg \isbm{p}{\theta}$, and therefore $G = \emptyset$ and $D = \strspos(u)$. Observe that $p \in \MForms$, $\cmatchOpth{u}{p} = \wait$ and $p \mred{\mredb} p'$ imply $\neg \isbm{p'}{\theta}$, so that $u' \in \MForms$. Therefore, the \ih:(\ref{it:ppc-necessary-mf}) can be applied on $u \mred{\mredc} u'$. We conclude like in the previous case.

Assume $p, u \in \MForms$, so that $\cmatchOpth{u}{p} = \wait$ implies $p = p_1 p_2$ and $u = u_1 u_2$.
Then $G = 1 G_1 \cup 2 G_2$ and $D = 1 D_1 \cup 2 D_2$, where $\fnsmth{u_i}{p_i} = \pair{G_i}{D_i}$ for $i = 1,2$.
Moreover, it is straightforward to verify that both $\mredb$ and $\mredc$ preserve 1 and 2, so that \thelem~\ref{rsl:mred-projection-good-behavior}:(i) implies $p' = p'_1 p'_2$, $u' = u'_1 u'_2$, and $p_i \mred{\subtatwide{\mredb}{i}} p'_i$ and $u_i \mred{\subtatwide{\mredc}{i}} u'_i$ for $i = 1,2$.
On the other hand, the hypotheses imply the existence of some $k \in \set{1,2}$ verifying $\cmatchOpth{u_k}{p_k} = \wait$ and $\cmatchOpth{u'_k}{p'_k}$ decided. Therefore, the \ih:(\ref{it:ppc-necessary-strsp-cmatch}) can be applied yielding that $\subtat{\mredb}{k}$ uses $\ppcsetred{(G_k)}$ or $\subtat{\mredc}{k}$ uses $\ppcsetred{(D_k)}$. Hence, \thelem~\ref{rsl:mred-projection-good-behavior}:(\ref{it:mred-projection-good-uses}) implies that $\mredb$ uses $\ppcsetred{G}$ or $\mredc$ uses $\ppcsetred{D}$.

Moreover, $\cmatchOpth{u'}{p'}$ positive implies $\cmatchOpth{u'_i}{p'_i}$ positive for $i = 1,2$.
For each $i$, observe that $\cmatchOpth{u}{p} = \wait$ implies $\cmatchOpth{u_i}{p_i} \neq \fail$. 
If $\cmatchOpth{u_i}{p_i} = \wait$, then the \ih:(\ref{it:ppc-necessary-strsp-cmatch}) implies $\residus{(\ppcsetred{G}_i)}{\subtat{\mredb}{i}} = \residus{(\ppcsetred{D}_i)}{\subtat{\mredc}{i}} = \emptyset$; 
if $\cmatchOpth{u_i}{p_i}$ is positive, then \thelem~\ref{rsl:strsp-empty-if-positive-match} implies $G_i = D_i = \emptyset$. 
Hence \thelem~\ref{rsl:mred-projection-good-behavior}:(\ref{it:mred-projection-good-residuals}) yields $\residus{\ppcsetred{G}}{\mredb} = \residus{\ppcsetred{D}}{\mredc} = \emptyset$.
\end{proof}
}  


\begin{proposition}
\label{rsl:ppc-strategy-necessary}
Let $t \mred{\mreda} t'$ where $t \notin \NForms$ and $t' \in \NForms$. Then $\mreda$ uses $\strs(t)$.
\end{proposition}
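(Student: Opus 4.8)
The plan is to prove the statement by induction on the size $\tsize{t}$, with a case analysis that follows the nine clauses defining $\strspos$ (and hence $\strs$). Throughout, the projection machinery of Lemma~\ref{rsl:mred-projection-good-behavior} is the workhorse: whenever the strategy descends into a component of $t$ (the pattern, body, function, or argument), the reduction $\mreda$ preserves the corresponding position, so by Lemma~\ref{rsl:mred-projection-good-behavior}:(\ref{it:mred-projection-good-tgt}) it projects to a reduction of that component to a normal form, the induction hypothesis applies there, and Lemma~\ref{rsl:mred-projection-good-behavior}:(\ref{it:mred-projection-good-uses}) lifts ``uses'' back to $t$.

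First I would dispose of the trivial cases. If $t$ is a variable or a matchable it is already a normal form, contradicting $t \notin \NForms$, so these cannot occur. If $t = \lpth p.s$ is an abstraction, then $t' \in \NForms$ forces $t' = \lpth p'.s'$ with $p', s' \in \NForms$, and since abstractions are never redexes $\mreda$ preserves positions $1$ and $2$; depending on whether $p \in \NForms$ (i.e.\ which clause applies) I project onto the pattern or the body — which is non-normal because $t$ is not — and invoke the induction hypothesis. The same projection pattern handles the two general-application clauses ($t = su$ with $s$ not an abstraction) and the three ``$\matchOpth{u}{p} = \wait$, $\fnsmth{u}{p} = \emptypair$'' clauses, where the strategy merely looks for the leftmost-outermost prestep inside $11$, $12$, or $2$.

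The genuinely substantive cases are the two in which the root of $t = (\lpth p.s)u$ is itself a prestep on which the strategy acts. When $\matchOpth{u}{p}$ is decided, $\strs(t) = \set{\pair{t}{\epsilon}}$ is the root step, and I argue directly: since the root is outermost, \axiomuse{\axLinearity} guarantees it keeps a unique residual at the root under contraction of any other (necessarily disjoint or nested) step, and Lemma~\ref{rsl:reduction-mantains-decided-match} shows this residual stays a redex; hence if $\mreda$ never contracted it, $t'$ would contain a root redex, contradicting $t' \in \NForms$. When $\matchOpth{u}{p} = \wait$ with $\fnsmth{u}{p} = \pair{G}{D} \neq \emptypair$, so that $\strspos(t) = 11G \cup 2D$, I distinguish whether the root match ever becomes decided along $\mreda$. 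If it does, I split $\mreda$ at that point, project onto the pattern and the argument, and apply Lemma~\ref{rsl:ppc-necessary-strsp}:(\ref{it:ppc-necessary-strsp-match}) to conclude that $\mreda$ uses $\ppcsetred{G}$ or $\ppcsetred{D}$, hence $\strs(t)$, again via Lemma~\ref{rsl:mred-projection-good-behavior}. If the root match stays $\wait$, then $\mreda$ projects entirely inside $p$, $s$, $u$, reaching a ``stuck'' normal form $(\lpth p'.s')u'$; here I fall back on the structure of $\fnsmth$, whose base clauses set $\ppcsetred{G} = \strs(p)$ (when $p \notin \MForms$) or $\ppcsetred{D} = \strs(u)$ (when $p \in \MForms$, $u \notin \MForms$), so that the induction hypothesis on the relevant strictly smaller component supplies the needed ``uses'', while the compound-matchable clause recurses in lock-step with $\fnsmth$ on the components.

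I expect this last subcase — a $\wait$-terminal reduction to a stuck normal form with $\fnsmth{u}{p} \neq \emptypair$ — to be the main obstacle, since Lemma~\ref{rsl:ppc-necessary-strsp}:(\ref{it:ppc-necessary-strsp-match}) as stated assumes the \emph{final} match is decided and therefore does not apply verbatim. Resolving it cleanly will likely require either strengthening that auxiliary lemma to a normal-form target (proved by the same simultaneous induction on $\tsize{t} + \tsize{u} + \tsize{p}$ that establishes Lemma~\ref{rsl:ppc-necessary-strsp}), or interleaving the present induction with it, so that the identities $\ppcsetred{G} = \strs(p)$ and $\ppcsetred{D} = \strs(u)$ can be exploited to reduce ``$\mreda$ uses $\ppcsetred{G}$ or $\ppcsetred{D}$'' to the induction hypothesis on a strictly smaller subterm.
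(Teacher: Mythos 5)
Your proposal is correct and takes essentially the same route as the paper: the paper proves the proposition by simultaneous induction on $\tsize{t}+\tsize{u}+\tsize{p}$, interleaving it with exactly the two strengthened, normal-form-target (``wait-terminal'') variants of Lemma~\ref{rsl:ppc-necessary-strsp} that you anticipate in your final paragraph, and uses Lemma~\ref{rsl:mred-projection-good-behavior} to project onto components and lift ``uses'' back, just as you describe. The obstacle you flag is genuine, and your proposed resolution (proving the auxiliary statement simultaneously by the same induction) is precisely the paper's.
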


\onlyPaper{
\begin{proof}
We prove the proposition simultaneously with the following two statements.

Let $p \mred{\mredb} p'$ and $u \mred{\mredc} u'$ where $p', u' \in \NForms$, $\pair{G}{D} = \fnsmth{u}{p} \neq \pair{\emptyset}{\emptyset}$, and $\cmatchOpth{u}{p} = \cmatchOpth{u'}{p'} = \wait$ (resp. $\matchOpth{u}{p} = \matchOpth{u'}{p'} = \wait$). Then $\mredb$ uses $\ppcsetred{G}$ or $\mredc$ uses $\ppcsetred{D}$.

The structure of the proof is similar to that of \thelem~\ref{rsl:ppc-necessary-strsp}, resorting to \thelem~\ref{rsl:mred-projection-good-behavior}, \thelem~\ref{rsl:ppc-necessary-strsp}, and \thelem~\ref{rsl:strsp-empty-if-positive-match}. \CompleteInReport .
\end{proof}
}

\onlyReport{
\begin{proof}
We prove the following three statements simultaneously, where $t, u, p$ are terms.
\begin{enumerate}[(i)]
\minitem 
\label{it:ppc-strategy-necessary}
The statement of the proposition. 
\minitem
\label{it:ppc-strategy-strsp-cmatch-wait}
Let $p \mred{\mredb} p'$ and $u \mred{\mredc} u'$ where $p', u' \in \NForms$, $\pair{G}{D} = \fnsmth{u}{p} \neq \pair{\emptyset}{\emptyset}$, and $\cmatchOpth{u}{p} = \cmatchOpth{u'}{p'} = \wait$. 
Then $\mredb$ uses $\ppcsetred{G}$ or $\mredc$ uses $\ppcsetred{D}$.
\minitem 
\label{it:ppc-strategy-strsp-match-wait}
Let $p \mred{\mredb} p'$ and $u \mred{\mredc} u'$ where $p', u' \in \NForms$, $\pair{G}{D} = \fnsmth{u}{p} \neq \pair{\emptyset}{\emptyset}$, and $\matchOpth{u}{p} = \matchOpth{u'}{p'} = \wait$. Then $\mredb$ uses $\ppcsetred{G}$, or $\mredc$ uses $\ppcsetred{D}$. 
\end{enumerate}
As in Lem.~\ref{rsl:ppc-necessary-strsp}, item~(\ref{it:ppc-strategy-strsp-match-wait}) follows from item~(\ref{it:ppc-strategy-strsp-cmatch-wait}). So we prove the others, by induction on $\tsize{t}+ \tsize{u} +\tsize{p}$.

\medskip
Item~(\ref{it:ppc-strategy-necessary}). If $t$ is either a matchable or a variable, then $t$ is a normal form, contradicting the hypotheses so that let consider that $t$ is an application or an abstraction.

Assume $t = (\lpth p.s) u$ and $\matchOpth{u}{p}$ decided, so that $\strs(t) = \set{\pair{t}{\epsilon}}$.  
Suppose $\mreda$ does not use $\strs(t)$, so that $t' = (\lpth p'.s') u'$, and $\mreda$ preserves 11, 12 and 2. This implies $p \mred{} p'$ and $u \mred{} u'$, \confer\ \thelem~\ref{rsl:mred-projection-good-behavior}:(\ref{it:mred-projection-good-tgt}), so that \thelem~\ref{rsl:reduction-mantains-decided-match} implies $\matchOpth{u'}{p'}$ decided, contradicting $t'$ being a normal form. Thus we conclude.

Assume $t = (\lpth p.s) u$, $\matchOpth{u}{p} = \wait$ and $\pair{G}{D} = \fnsmth{u}{p} \neq \pair{\emptyset}{\emptyset}$.
We define $\mreda'$ as follows. 
If $\mreda$ includes the contraction of, at least, one head step, \ie\ if there exists some $n \leq \rlength{\mreda}$ verifying $\pair{\rtgt{\mredunt{\mreda}{n-1}}}{\epsilon} \in \mredel{\mreda}{n}$, we consider the minimum such $n$ and define $\mreda' \eqdef \mredunt{\mreda}{n-1}$. 
Otherwise, $\mreda' \eqdef \mreda$.
In both cases $t \mred{\mreda'} (\lpth p'.s') u'$ and $\mreda'$ preserves 11 and 2, so that \thelem~\ref{rsl:mred-projection-good-behavior}:(\ref{it:mred-projection-good-tgt}) implies $p \mred{\subtatwide{\mreda'}{11}} p'$ and $u \mred{\subtatwide{\mreda'}{2}} u'$. Notice that in the latter case, $p', u' \in \NForms$.
In both cases we obtain that $\subtat{\mreda'}{11}$ uses $\ppcsetred{G}$ or $\subtat{\mreda'}{2}$ uses $\ppcsetred{D}$, if $\matchOpth{u'}{p'}$ decided by \thelem~\ref{rsl:ppc-necessary-strsp}:(\ref{it:ppc-necessary-strsp-match}), otherwise by the  \ih\ (\ref{it:ppc-strategy-strsp-match-wait}). 
Recalling that in this case, $\strspos(t) = 11G \cup 2D$, we conclude by applying \thelem~\ref{rsl:mred-projection-good-behavior}:(\ref{it:mred-projection-good-uses}).

Assume $t = (\lpth p.s)u$, $\matchOpth{u}{p} = \wait$, and $\fnsmth{u}{p} = \pair{\emptyset}{\emptyset}$.
A simple argument by contradiction based on \thelem~\ref{rsl:ppc-necessary-strsp}:(\ref{it:ppc-necessary-strsp-match}) implies that $t' = (\lpth p'.s') u'$ and $\mreda$ preserves 11, 12 and 2. Therefore, \thelem~\ref{rsl:mred-projection-good-behavior}:(\ref{it:mred-projection-good-tgt}) implies $p \mred{\subtatwide{\mreda}{11}} p'$ and similarly for $s$ and $u$. 
If $p \notin \NForms$, so that $\strspos(t) = 11 \strspos(p)$, then the \ih\ (\ref{it:ppc-strategy-necessary}) can be applied to obtain that $\subtat{\mreda}{11}$ uses $\strs(p)$, so that \thelem~\ref{rsl:mred-projection-good-behavior}:(\ref{it:mred-projection-good-uses}) allows to conclude. 
The remaining cases, \ie\ $p \in \NForms, s \notin \NForms$ and $p, s \in \NForms$ respectively, can be handled similarly.

Assume $t = s u$ and $s \notin \Abstract$. 
If there exists some $n$ such that $\rtgt{\mredel{\mreda}{n}} = s' u'$ and $s' \in \Abstract$, then we consider the minimal such $n$, and let $\mreda' = \mredunt{\mreda}{n}$. It is easy to obtain that $\mreda'$ preserves 1 and 2, so that \thelem~\ref{rsl:mred-projection-good-behavior}:(\ref{it:mred-projection-good-tgt}) implies $s \mred{\subtatwide{\mreda'}{1}} s'$. Observe that $s \notin \NForms$, implying $\strspos(t) = 1 \strspos(s)$. Moreover, $s \in \DStructs$ would imply $s' \in \DStructs$, so that $s \notin \MForms$. Hence, a projection argument similar to that used in previous cases, based on \thelem~\ref{rsl:ppc-necessary-strsp}:(\ref{it:ppc-necessary-mf}), allows to conclude.
Otherwise $s$ does not reduce to an abstraction, implying $t = s' u'$, $\mreda$ preserves 1 and 2, and $s', u' \in \NForms$. Again, a projection argument applies, to $s \mred{\subtatwide{\mreda}{1}} s'$ if $s \notin \NForms$, to $u \mred{\subtatwide{\mreda}{2}} u'$ otherwise, based on \ih\ (\ref{it:ppc-strategy-necessary}).

Assume $t = \lpth p.s$.
Then,  $t' = (\lpth p'.s')$, $\mreda$ preserves 1 and 2, and $p', s' \in \NForms$. A projection argument based on \ih\ (\ref{it:ppc-strategy-necessary}) applies to $p \mred{\subtatwide{\mreda}{1}} p'$ or $s \mred{\subtatwide{\mreda}{2}} s'$, depending on whether $p \in \NForms$.

\medskip
Item~(\ref{it:ppc-strategy-strsp-cmatch-wait}).

Assume $p \notin \MForms$, so that $G = \strspos(p)$ and $D = \emptyset$. 
The hypotheses imply $\strspos(p) \neq \emptyset$, and then $p$ is not a normal form.
Therefore, item~(\ref{it:ppc-strategy-necessary}) just proved applies to $p \mred{\mredb} p'$, which suffices to conclude.

Assume $p \in \MForms$, $\neg \isbm{p}{\theta}$, $u \notin \MForms$. In this case, $G = \emptyset$ and $D = \strspos(u)$. Hence, an argument similar to that of the previous case applies on $u \mred{\mredc} u'$.

Assume $p, u \in \MForms$. 
In this case, $\cmatchOpth{u}{p} = \wait$ implies $p = p_1 p_2$ and $u = u_1 u_2$, so that $G = 1 G_1 \cup 2 G_2$ and $D = 1 D_1 \cup 2 D_2$, where $\fnsmth{u_i}{p_i} = \pair{G_i}{D_i}$ for $i = 1,2$.
The assumption $p, u \in \MForms$ also implies $p' = p'_1 p'_2$, $u' = u'_1 u'_2$, and both $\mredb$ and $\mredc$ preserve 1 and 2. 
Then \thelem~\ref{rsl:mred-projection-good-behavior}:(\ref{it:mred-projection-good-tgt}) implies $p_i \mred{\subtatwide{\mredb}{i}} p'_i$ and $u_i \mred{\subtatwide{\mredc}{i}} u'_i$ for $i = 1,2$.
Moreover, $\pair{G}{D} \neq \pair{\emptyset}{\emptyset}$ implies $\pair{G_k}{D_k} \neq \pair{\emptyset}{\emptyset}$ for some $k \in \set{1,2}$.
Notice that $\cmatchOpth{u_k}{p_k}$ being positive (resp. $\fail$) contradicts \thelem~\ref{rsl:strsp-empty-if-positive-match} (resp. $\cmatchOpth{u}{p} = \wait$).
Then $\cmatchOpth{u_k}{p_k} = \wait$, so that either the \ih\ (\ref{it:ppc-strategy-strsp-cmatch-wait}) or \thelem~\ref{rsl:ppc-necessary-strsp}:(\ref{it:ppc-necessary-strsp-cmatch}) applies, depending on whether $\cmatchOpth{u'_k}{p'_k}$ is $\wait$ or positive.
In either case, we obtain that $\subtat{\mredb}{k}$ uses $\ppcsetred{G}_k$, or $\subtat{\mredc}{k}$ uses $\ppcsetred{D}_k$.
Thus \thelem~\ref{rsl:mred-projection-good-behavior}:(\ref{it:mred-projection-good-uses}) allows to conclude.
\end{proof}
}

\begin{lemma}
\label{rsl:above-strs-then-fail}
Let $t \mred{\mreda} t'$, $\ppcstep{b} \in \residus{\strs(t)}{\mreda}$, and $\ppcstep{a}$ verifying $\ppcstep{a} < \ppcstep{b}$. Then $\ppcstep{a}$ is a matching failure.
\end{lemma}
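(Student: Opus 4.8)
The plan is to prove a stronger, inductive statement, since the set $\residus{\strs(t)}{\mreda}$ cannot be described as $\strs(s)$ for the intermediate objects $s$ of $\mreda$, so a naive induction on $\mreda$ has no hypothesis to invoke. Concretely, I would isolate an invariant on coinitial \tredexsets: a set $\ppcsetred{B} \subseteq \ROccur{s}$ is \emph{protected} iff for every $\ppcstep{b} \in \ppcsetred{B}$ and every \tredex\ $\ppcstep{a}$ with $\ppcstep{a} < \ppcstep{b}$, the step $\ppcstep{a}$ is a matching failure; equivalently, no \emph{positive} prestep strictly embeds any element of $\ppcsetred{B}$. The lemma is then the instance $\ppcsetred{B} = \residus{\strs(t)}{\mreda}$, and everything reduces to: (i) $\strs(t)$ is protected, and (ii) protectedness is preserved by residuals, i.e. if $\ppcsetred{B} \subseteq \ROccur{s}$ is protected and $s \sstep{\ppcstep{c}} s'$, then $\residus{\ppcsetred{B}}{\ppcstep{c}}$ is protected. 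Fact (i) is immediate: the steps of $\strs(t)$ are outermost, so no $\ppcstep{a}$ satisfies $\ppcstep{a} < \ppcstep{b}$ for $\ppcstep{b} \in \strs(t)$, and the condition holds vacuously. Lifting (ii) from single steps to multisteps through developments, and then to $\mreda$ by induction on its length, is routine.

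For (ii) I take $\ppcstep{b'} \in \residus{\ppcsetred{B}}{\ppcstep{c}}$ with ancestor $\ppcstep{b} \in \ppcsetred{B}$ and a \emph{positive} step $\ppcstep{a'} < \ppcstep{b'}$, and derive a contradiction. First assume $\ppcstep{a'}$ has an ancestor $\ppcstep{a}$, so $\residu{\ppcstep{a}}{\ppcstep{c}}{\ppcstep{a'}}$. Since the pattern and argument of a residual redex are reducts or substitution instances of those of its ancestor, \thelem~\ref{rsl:reduction-mantains-decided-match}:(\ref{it:reduction-mantains-fail-match}) (together with preservation of failing matches under substitution) shows that residuals of matching failures are matching failures; as $\ppcstep{a'}$ is positive, $\ppcstep{a}$ is therefore itself positive. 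Now \thelem~\ref{rsl:ppc-embedding-invariance-exceptions}, applied with $\ppcstep{c}$ contracted and $\ppcstep{a'} < \ppcstep{b'}$, yields either $\ppcstep{a} < \ppcstep{b}$ or the exceptional configuration $\ppcstep{c} \grip \ppcstep{a} \,\land\, c2 \leq b$. In the first case $\ppcstep{a}$ is a positive step with $\ppcstep{a} < \ppcstep{b}$, contradicting protectedness of $\ppcsetred{B}$. In the second case $\ppcstep{c} \grip \ppcstep{a}$ forces $\ppcstep{c}$ to have a positive match and $c < a$, while $c2 \leq b$ gives $c < b$; thus $\ppcstep{c}$ is a positive step with $\ppcstep{c} < \ppcstep{b}$, again contradicting protectedness.

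The remaining, and genuinely delicate, subcase is when $\ppcstep{a'}$ is \emph{created} by $\ppcstep{c}$, where I appeal to the creation analysis of \thelem~\ref{rsl:ppc-creation}. The danger is Case~I (creation ``from below''), where contracting $\ppcstep{c} = \pair{s}{a'1}$ turns the application $\subtat{s}{a'} = ((\lpth p.s_0)u_c)\,u'$ into a positive redex $\ppcstep{a'}$ whose argument $u'$ may contain the residual $\ppcstep{b'}$; protectedness as stated above is \emph{not} by itself preserved in this configuration. The fix, and the main obstacle of the proof, is to strengthen the invariant so that it records the leftmost--outermost nature of the steps chosen by $\strs$: that a protected step never sits strictly inside the argument of an application whose function is not already in $\NForms$. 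This stronger invariant excludes exactly the Case~I configuration (there the function $\subtat{s}{a'1}$ is the redex $\ppcstep{c}$, hence not normal), and it is plausibly preserved because $\strspos$ descends into an argument only when the corresponding function is normal (clause $\strspos(tu) := 2\strspos(u)$ with $t \in \NForms$) or while refining the still-undecided match of an enclosing prestep (clause $11G \cup 2D$), and never into the argument of a redex. Discharging this strengthened invariant through all the cases of \thelem~\ref{rsl:ppc-creation}, using \thelem~\ref{rsl:reduction-mantains-decided-match} to propagate (non)positivity of matches and \thelem~\ref{rsl:mred-projection-good-behavior} to reason positionally, is where the bulk of the work lies; the desired conclusion that $\ppcstep{a}$ is a matching failure then follows, since by the invariant any decided prestep strictly embedding a protected step is non-positive and hence a failure.
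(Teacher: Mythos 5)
Your overall plan---isolate an invariant on coinitial \tredexsets\ that implies the conclusion, holds for $\strs(t)$, and is preserved by residuals---is reasonable, and your ancestor case (via \thelem~\ref{rsl:ppc-embedding-invariance-exceptions} together with preservation of failing matches under reduction and substitution) is sound. The gap lies in the created case, and it is twofold. First, the strengthened invariant you propose is already false for $\strs(t)$ itself, so step (i) of your plan fails and the induction cannot start: when the head match is $\wait$ and $\fnsmth{u}{p} = \pair{G}{D}$ with $D \neq \emptyset$, the steps selected below $2D$ sit inside the argument of the head application, whose function is an abstraction that need \emph{not} be in $\NForms$. The paper's own example $(\lp{\set{x,y}}\ \ca\ (\cb\ \wid{x})\ r_1.r_2)\,(\ca\ r_3\ (\cd\ r_4))$ is a witness: $\strs$ selects $\set{r_1, r_3}$, and $r_3$ lies strictly inside the argument of the root application whose function contains the redexes $r_1$ and $r_2$.

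Second, and more fundamentally, even where your invariant does hold it is not preserved, because Case~I of \thelem~\ref{rsl:ppc-creation} is not the only dangerous creation case. In Cases~III.(\ref{it:create-decided-argument})/(\ref{it:create-decided-pattern}), a prestep whose match is $\wait$ becomes a step when a contraction inside its \emph{argument} (or pattern) decides the match; if the match becomes positive, the created step embeds every residual located anywhere inside that prestep. Concretely, take $t = (\lp{\ems}\ \ca\,\cb\,\cb\,.\,\Id\,z)\,(\ca\,\cb\,(\Id\,\cb))$ and $\ppcsetred{B} = \set{\ppcstep{b}}$ where $\ppcstep{b} = \Id\,z$ at position $12$. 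This set is protected (the root match is $\wait$, so no step embeds $\ppcstep{b}$) and satisfies your invariant ($\ppcstep{b}$ sits in the function of the root, inside no argument at all); yet contracting $\Id\,\cb$ at position $22$ yields $(\lp{\ems}\ \ca\,\cb\,\cb\,.\,\Id\,z)\,(\ca\,\cb\,\cb)$, whose root is a \emph{created positive} redex embedding the residual of $\ppcstep{b}$, still at $12$. So ``protected plus your invariant'' is not closed under residuals; it is both too weak to be preserved and too strong to hold initially. What makes the lemma true is a property of the \emph{specific} sets computed by $\strsp$, not a structural property of arbitrary coinitial sets: by \thelem~\ref{rsl:ppc-necessary-strsp}, $\pair{G}{D}$ covers every position blocking the match, so the match can turn positive only after all residuals of $\ppcsetred{G}$ and $\ppcsetred{D}$ have been consumed; if it turns to $\fail$, the subterm collapses to $\Id$ and erases them; and $\fnsmth{u}{p} = \emptypair$ guarantees the match stays $\wait$ forever, which is what licenses the strategy's descent into pattern and body. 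Any invariant strong enough to survive Cases~III must carry this match-dynamics information, and this is exactly why the paper proves the lemma by a simultaneous structural induction on terms, generalising the statement over $\pair{G}{D}$ (as in \thelem~\ref{rsl:ppc-necessary-strsp} and \theprop~\ref{rsl:ppc-strategy-necessary}), rather than by step-wise preservation of a structural invariant.
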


\onlyPaper{
\begin{proof}
The structure of the proof is similar to that of
  \theprop~\ref{rsl:ppc-strategy-necessary}. The two statements which are proved simultaneously follow.

Let $p \mred{\mredb} p'$ and $u \mred{\mredc} u'$ such that $\cmatchOpth{u}{p} = \wait$ (resp. $\matchOpth{u}{p} = \wait$), $\ppcstep{b} \in \residus{\ppcsetred{G}}{\mredb}$ or $\ppcstep{b} \in \residus{\ppcsetred{D}}{\mredc}$ where $\fnsmth{u}{p} = \pair{G}{D}$, and $\ppcstep{a}$ verifying $\ppcstep{a} < \ppcstep{b}$. 
Then $\ppcstep{a}$ is a matching failure.

\CompleteInReport .
\end{proof}
}

\onlyReport{
\begin{proof}
We prove the following, more general statement.
\begin{enumerate}[(i)]
\item 
\label{it:above-strs-then-fail}
The lemma statement. 
\minitem 
\label{it:above-strsp-then-fail-cmatch}
Let $p \mred{\mredb} p'$ and $u \mred{\mredc} u'$ such that $\cmatchOpth{u}{p} = \wait$, $\ppcstep{b} \in \residus{\ppcsetred{G}}{\mredb}$ or $\ppcstep{b} \in \residus{\ppcsetred{D}}{\mredc}$ where $\fnsmth{u}{p} = \pair{G}{D}$, and $\ppcstep{a}$ verifying $\ppcstep{a} < \ppcstep{b}$. 
Then $\ppcstep{a}$ is a matching failure.
\minitem 
\label{it:above-strsp-then-fail-match}
Let $p \mred{\mredb} p'$ and $u \mred{\mredc} u'$ such that $\matchOpth{u}{p} = \wait$, $\ppcstep{b} \in \residus{\ppcsetred{G}}{\mredb}$ or $\ppcstep{b} \in \residus{\ppcsetred{D}}{\mredc}$ where $\fnsmth{u}{p} = \pair{G}{D}$, and $\ppcstep{a}$ verifying $\ppcstep{a} < \ppcstep{b}$. 
Then $\ppcstep{a}$ is a matching failure.
\end{enumerate}

As in Lem.~\ref{rsl:ppc-necessary-strsp}, item~(\ref{it:ppc-strategy-strsp-match-wait}) follows from item~(\ref{it:ppc-strategy-strsp-cmatch-wait}). So we prove the others, by induction on $\tsize{t}+ \tsize{u} +\tsize{p}$.

\medskip
We prove item~(\ref{it:above-strs-then-fail}).
If $t$ is either a variable or a matchable, then $t$ is a normal form, contradicting the existence of $\ppcstep{b}$.

Assume $t = (\lpth p.s) u$ and $\matchOpth{u}{p}$ decided, implying $\strs(t) = \set{\pair{t}{\epsilon}}$. 
Then, a straightforward inductive argument on $\rlength{\mreda}$ yields that $\residus{\strs(t)}{\mreda} = \emptyset$ or $\ppcstep{b} = \pair{t'}{\epsilon}$, contradicting in both cases the existence of $\ppcstep{a}$. Thus we conclude.

Assume $t = (\lpth p.s) u$, $\matchOpth{u}{p} = \wait$, and $\pair{G}{D} = \fnsmth{u}{p} \neq \pair{\emptyset}{\emptyset}$.
Then $\strspos(t) = 11G \cup 2D$.
Consider $\mreda', \mreda''$ such that $\mreda = \mreda' ; \mreda''$, $t \mred{\mreda'} t'' = (\lpth p'.s') u' \mred{\mreda''} t'$, $\mreda'$ preserves 11 and 2, and either $\mreda'' = \emptyred{t'}$ or $\pair{t''}{\epsilon} \in \mredel{\mreda''}{1}$.
\theLem~\ref{rsl:mred-projection-good-behavior}:(\ref{it:mred-projection-good-tgt}) implies $p \mred{\subtatwide{\mreda'}{11}} p'$ and $u \mred{\subtatwide{\mreda'}{2}} u'$.
If $\matchOpth{u'}{p'}$ is positive, then \thelem~\ref{rsl:ppc-necessary-strsp}:(\ref{it:ppc-necessary-strsp-match}) implies $\residus{\ppcsetred{G}}{\subtat{\mreda'}{11}} = \residus{\ppcsetred{D}}{\subtat{\mreda'}{2}} = \emptyset$, and therefore \thelem~\ref{rsl:mred-projection-good-behavior}:(\ref{it:mred-projection-good-residuals}) yields $\residus{\strs(t)}{\mreda'} = \emptyset$.
If $\matchOpth{u'}{p'} = \fail$ and $\mreda'' \neq \emptyred{t''}$, then it is immediate to obtain $t' = \kid$, a normal form, contradicting the existence of $\ppcstep{b}$.
Therefore, assume $\matchOpth{u'}{p'} \in \set{\wait,\fail}$ and $\mreda'' = \emptyred{t''}$, so that $\mreda = \mreda'$ and $t' = (\lpth p'.s') u'$.
An analysis of the ancestor of $\ppcstep{b}$, which is some $\ppcstep{b_0} \in \strs(t)$, along with \thelem~\ref{rsl:mred-projection-good-behavior}:(\ref{it:mred-projection-good-residuals}), yields that $b = 11b'$ where $\ppcstep{b'} \in \residus{\ppcsetred{G}}{\subtat{\mreda}{11}}$ or $b = 2b'$ where  $\ppcstep{b'} \in \residus{\ppcsetred{D}}{\subtat{\mreda}{2}}$, implying respectively that $\ppcstep{b'} \in \ROccur{p'}$ or $\ppcstep{b'} \in \ROccur{u'}$. Let $\ppcstep{a}$ verifying $\ppcstep{a} < \ppcstep{b}$. 
If $a = \epsilon$, then $\matchOpth{u'}{p'} = \fail$, \ie\ $\ppcstep{a}$ is a matching failure.
Otherwise, $a = 11a'$ or $a = 2a'$, so that $\ppcstep{a'} \in \ROccur{p'}$ or $\ppcstep{a'} \in \ROccur{u'}$ respectively, and $\ppcstep{a'} < \ppcstep{b'}$. Therefore \ih\ (\ref{it:above-strsp-then-fail-match}) implies that $\ppcstep{a'}$ is a matching failure, which suffices to conclude.

Assume $t = (\lpth p.s) u$, $\matchOpth{u}{p} = \wait$, and $\fnsmth{u}{p} = \pair{\emptyset}{\emptyset}$.
Observe that $t \mred{\mredb} (\lpth p''.s'') u''$ such that $\matchOpth{u''}{p''}$ is decided would contradict $\fnsmth{u}{p} = \pair{\emptyset}{\emptyset}$; \confer\ \thelem~\ref{rsl:mred-projection-good-behavior}:(\ref{it:mred-projection-good-tgt}) and \thelem~\ref{rsl:ppc-necessary-strsp}:(\ref{it:ppc-necessary-strsp-match}) considering a minimal such $\mredb$. 
Therefore, $t' = (\lpth p'.s') u'$, $\mreda$ preserves 11, 12 and 2, and $\matchOpth{u'}{p'} = \wait$.
If $p' \notin \NForms$, so that $\strspos(t) = 11 \strspos(p)$, then \thelem~\ref{rsl:mred-projection-good-behavior}:(\ref{it:mred-projection-good-tgt}) and \thelem~\ref{rsl:mred-projection-good-behavior}:(\ref{it:mred-projection-good-residuals}) imply $p \mred{\subtatwide{\mreda}{11}} p'$ and $b = 11b'$ where $\ppcstep{b'} \in \residus{\strs(p)}{\subtat{\mreda}{11}}$ respectively.
Observe $\matchOpth{u'}{p'} = \wait$ implies that $\pair{t'}{\epsilon} \notin \ROccur{t'}$. Then $\ppcstep{a} < \ppcstep{b}$ implies $a = 11a'$, so that $\ppcstep{a'} \in \ROccur{p'}$, and $\ppcstep{a'} < \ppcstep{b'}$. Hence the  \ih\ (\ref{it:above-strs-then-fail}) applies, which suffices to conclude.
The other cases ($p' \in \NForms$ and $s' \notin \NForms$, and $p', s' \in \NForms$) admit analogous arguments. 

Assume $t = su$ where $s \notin \Abstract$.
Let $\mreda', \mreda''$ such that $\mreda = \mreda' ; \mreda''$, $t \mred{\mreda'} s'u' \mred{\mreda''} t'$, $\mreda'$ preserves 1 and 2, and either $\mreda'' = \emptyred{t'}$ or $\pair{s' u'}{\epsilon} \in \mredel{\mreda''}{1}$.
\theLem~\ref{rsl:mred-projection-good-behavior}:(\ref{it:mred-projection-good-tgt}) implies $s \mred{\subtatwide{\mreda'}{1}} s'$ and $u \mred{\subtatwide{\mreda'}{2}} u'$.
\begin{itemize}
\minitem 
If $s' \in \Abstract$, then $s \neq s'$ implying that $s$ is not a normal form, and therefore $\strspos(t) = 1 \strspos(s)$.
Moreover, $s \notin \MForms$; notice that $s \in \DStructs$ would imply $s' \in \DStructs$. 
Therefore, \thelem~\ref{rsl:ppc-necessary-strsp}:(\ref{it:ppc-necessary-strsp-match}) implies $\residus{\strs(s)}{\subtat{\mreda'}{1}} = \emptyset$, so that \thelem~\ref{rsl:mred-projection-good-behavior}:(\ref{it:mred-projection-good-residuals}) contradicts the existence of $\ppcstep{b}$. Thus we conclude.
\minitem
If $s' \notin \Abstract$, then $\mreda'' = \emptyred{s'u'}$, so that $\mreda = \mreda'$ and $t' = s'u'$. Moreover, 
$\pair{t'}{\epsilon} \notin \ROccur{t'}$.
If $s$ is not a normal form, so that $\strspos(t) = 1 \strspos(s)$, then \thelem~\ref{rsl:mred-projection-good-behavior}:(\ref{it:mred-projection-good-residuals}) implies $b = 1 b'$ where $\ppcstep{b'} \in \residus{\strs(s)}{\subtat{\mreda}{1}}$. On the other hand, $\ppcstep{a} < \ppcstep{b}$ implies $a = 1 a'$ where $\ppcstep{a'} \in \ROccur{s'}$. 
Then the  \ih\ (\ref{it:above-strs-then-fail}) applies, which suffices to conclude.
If $s$ is a normal form, so that $\strspos(t) = 2 \strspos(u)$, then a similar argument applies.
\end{itemize}

Assume $t = \lpth p.s$. Then $\mreda$ preserves 1 and 2, so that $t' = \lpth p'.s'$ and \thelem~\ref{rsl:mred-projection-good-behavior}:(\ref{it:mred-projection-good-tgt}) implies $p \mred{\subtatwide{\mreda}{1}} p'$ and $s \mred{\subtatwide{\mreda}{2}} s'$. 
A projection argument based on \ih\ (\ref{it:above-strs-then-fail}) analogous to those used in previous cases, on $p \mred{\subtatwide{\mreda}{1}} p'$ or $s \mred{\subtatwide{\mreda}{2}} s'$ depending whether $p \in \NForms$, allows to conclude.

\medskip
We prove item~(\ref{it:above-strsp-then-fail-cmatch}). There are three cases to analyse, given $\cmatchOpth{u}{p} = \wait$.

If $p \notin \MForms$, then $\ppcsetred{G} = \strs(p)$ and $\ppcsetred{D} = \emptyset$, so that $\ppcstep{b} \in \residus{\ppcsetred{G}}{\mredb} = \residus{\strs(p)}{\mredb}$. Ih\ (\ref{it:above-strs-then-fail}) on $p \mred{\mredb} p'$ suffices to conclude.

If $p \in \MForms$ and $u \notin \MForms$, so that $\ppcsetred{G} = \emptyset$ and $\ppcsetred{D} = \strs(u)$, then an analogous argument applies.

If $p = p_1 p_2$, $u = u_1 u_2$, and $p, u \in \MForms$, then $G = 1 G_1 \cup 2 G_2$ and $D = 1 D_1 \cup 2 D_2$, where $\pair{G_i}{D_i} = \fnsmth{u_i}{p_i}$ for $i = 1,2$. 
Moreover, $p, u \in \MForms$ implies that $\mredb$ and $\mredc$ preserve 1 and 2, $p' = p'_1 p'_2$ and $u' = u'_1 u'_2$.
\theLem~\ref{rsl:mred-projection-good-behavior}:(\ref{it:mred-projection-good-tgt}) yields $p_i \mred{\subtatwide{\mredb}{i}} p'_i$ and $u_i \mred{\subtatwide{\mredc}{i}} u'_i$ for $i = 1,2$.
In turn, \theLem~\ref{rsl:mred-projection-good-behavior}:(\ref{it:mred-projection-good-residuals}) implies $b = i b'$ where $\ppcstep{b'} \in \residus{\ppcsetred{G}_i}{\subtat{\mredb}{i}}$ or $\residus{\ppcsetred{D}_i}{\subtat{\mredc}{i}}$, for some $i \in \set{1,2}$.
Observe that $\cmatchOpth{u_i}{p_i} = \fail$ would contradict $\cmatchOpth{u}{p} = \wait$, and $\cmatchOpth{u_i}{p_i}$ positive would imply $G_i = D_i = \emptyset$ by \thelem~\ref{rsl:strsp-empty-if-positive-match}. Therefore $\cmatchOpth{u_i}{p_i} = \wait$.
Observe that neither $\pair{p'}{\epsilon}$ nor $\pair{u'}{\epsilon}$ are \tredexes, so that $\ppcstep{a} < \ppcstep{b}$ implies $a = i a'$. 
Hence the \ih\ (\ref{it:above-strsp-then-fail-cmatch}), applied on $p_i \mred{\subtatwide{\mredb}{i}} p'_i$ and $u_i \mred{\subtatwide{\mredc}{i}} u'_i$, allows to conclude.
\end{proof}
}

\begin{proposition}
\label{rsl:strs-non-gripping}
Let $t$ be a term not in normal form. Then $\strs(t)$ is a non-gripping set.
\end{proposition}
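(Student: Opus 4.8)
The plan is to unfold the definition of \ngrip\ and reduce the claim to Lemma~\ref{rsl:above-strs-then-fail}. By definition, $\strs(t)$ is \ngrip\ iff for every finite multireduction $\mredd$ coinitial with $\strs(t)$ — that is, with $\rsrc{\mredd} = t$ — one has $\ROccur{\rtgt{\mredd}} \notgrip \residus{\strs(t)}{\mredd}$. So I would fix such a $\mredd$, write $t' \eqdef \rtgt{\mredd}$, and argue by contradiction. Suppose $\ROccur{t'} \grip \residus{\strs(t)}{\mredd}$. Unfolding the extension of gripping to multisteps, this produces a step $\ppcstep{a} \in \ROccur{t'}$ together with a residual $\ppcstep{b} \in \residus{\strs(t)}{\mredd}$ such that $\ppcstep{a} \grip \ppcstep{b}$; note that both are coinitial with source $t'$.

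Next I would read off, from the concrete definition of the gripping relation for \theppc, the two facts that $\ppcstep{a} \grip \ppcstep{b}$ supplies. Writing $\subtat{t'}{a} = (\lpth p.s)\,u$: first, the requirement $b = a12n$ forces the position $a$ to be a strict prefix of $b$, hence $\ppcstep{a} < \ppcstep{b}$ in the (tree) embedding order; second, the relation demands $\matchOpth{u}{p} \neq \fail$, i.e.\ $\ppcstep{a}$ is \emph{not} a matching failure.

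These two facts then collide through Lemma~\ref{rsl:above-strs-then-fail}. Indeed we have $t \mred{\mredd} t'$, $\ppcstep{b} \in \residus{\strs(t)}{\mredd}$, and $\ppcstep{a} < \ppcstep{b}$, so that lemma asserts that $\ppcstep{a}$ \emph{is} a matching failure, namely $\matchOpth{u}{p} = \fail$. This contradicts the second fact extracted above. Therefore no such pair $\ppcstep{a}, \ppcstep{b}$ can exist, whence $\ROccur{t'} \notgrip \residus{\strs(t)}{\mredd}$; since $\mredd$ was an arbitrary finite multireduction from $t$, this shows $\strs(t)$ is \ngrip.

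The main obstacle does not lie in this proposition at all, but is entirely packaged inside Lemma~\ref{rsl:above-strs-then-fail}: the substantive content — that \emph{every} redex embedding a residual of a strategy-selected step must be a matching failure — is precisely the invariant driving the contradiction, and it is established there by a delicate simultaneous induction following the clause structure of $\strspos$ and \strsp, mirroring the necessity argument. Granting that lemma, the present proof is essentially a two-line unfolding; the only care needed is to verify that $\ppcstep{a}$ and $\ppcstep{b}$ are genuinely coinitial with source $t'$ and that the \theppc\ gripping relation does deliver both the embedding $\ppcstep{a} < \ppcstep{b}$ and the positivity $\matchOpth{u}{p} \neq \fail$ of the match at $a$.
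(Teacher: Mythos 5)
Your proof is correct and takes essentially the same route as the paper's own proof: both unfold the definition of \ngrip\ and reduce the claim to Lemma~\ref{rsl:above-strs-then-fail}, exploiting that gripping in \theppc\ requires both the embedding $\ppcstep{a} < \ppcstep{b}$ and a non-failing match at $\ppcstep{a}$, which that lemma contradicts. The paper states it as a direct two-case argument ($\ppcstep{a} \not< \ppcstep{b}$ versus $\ppcstep{a} < \ppcstep{b}$) rather than by contradiction, but this difference is purely cosmetic.
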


\begin{proof}
Let $t \mred{\mredd} u$, $\ppcstep{a} \in \ROccur{u}$, $\ppcstep{b} \in
\residus{\strs(t)}{\mredd}$; it suffices to deduce that $\ppcstep{b}$
does not grip $\ppcstep{a}$. If $\ppcstep{a} \not < \ppcstep{b}$, then 
we immediately conclude. 
 If $\ppcstep{a} < \ppcstep{b}$, then \thelem~\ref{rsl:above-strs-then-fail} entails that
$\ppcstep{a}$ is a matching failure so $\ppcstep{b}$
cannot  grip $\ppcstep{a}$.
\end{proof}


\subsection{\l-Calculus with Parallel-Or}

The lambda calculus extended with parallel-or also falls within the scope of our abstract proof.
Its terms are given by the grammar:
\begin{center}
$\begin{array}{rcl}
	t & ::= & x \ | \ \l x.t \ |\  t \, t \ |\  \kor(t,t) \ |\  \kt 
\end{array}$
\end{center}
The reduction rules are
\begin{center}
$\begin{array}{rcl}
	(\l x.s) u & \to & s \{x \leftarrow u \}  \\
	\kor(t,\kt) & \to & \kt \\
	\kor(\kt,t) & \to & \kt
\end{array}$
\end{center}
It may be seen as an ARS under the standard reading of each of its elements. Two comments on
this. First the notion of gripping. A step $\pair{s}{p}$ grips a
step $\pair{s}{q}$ where $\subtat{s}{q}=(\l y.u')\, v'$, if $q1\leq p$ and $\subtat{s}{p}$ has a free ocurrence of
$y$ (we assume the standard variable convention). Second, the fact that although this is an
almost-orthogonal higher-order rewrite system, from the point of view of the underlying ARS it enjoys
semantic orthogonality since the critical pair is trivial.  

The reduction strategy $\strs$ is defined by means of an auxiliary function $\strspos$ that gives the positions of the steps to be selected, as described for \theppc\ in Sec.~\ref{sec:ppc-strategy}. In turn, $\strspos$ is defined as follows
\begin{center}
$
\begin{array}{rcll}
\strspos((\l x.s)\,u) & \eqdef & \set{\epsilon} \\
\strspos(\kor(\kt,u)) & \eqdef & \set{\epsilon} \\
\strspos(\kor(u,\kt)) & \eqdef & \set{\epsilon} \\
\strspos(s\,u) & \eqdef & 1 \strspos(s) & \textif s \neq \l x.s' \textand s \notin \NForms \\
\strspos(s\,u) & \eqdef & 2 \strspos(u) & \textif s \neq \l x.s' \textand s \in \NForms \\
\strspos(\l x.t) & \eqdef & 1 S(t) \\
\strspos(\kor(s,u)) & \eqdef & 1 \strspos(s) \cup 2 \strspos(u) & 
\textif s \neq \kt \textand u \neq \kt \\
\strspos(\kt) & \eqdef & \emptyset
\end{array}
$
\end{center}

This strategy may be proved to produce necessary and \ngrip\ sets of redexes following the
lines of the (more complicated) proofs developed for \theppc.  As a consequence,
Thm.~\ref{rsl:necessary-and-ngrip-then-normalises} is applicable and allows us to infer that $\strs$ is normalising.


\section{Conclusions}
\label{sec:conclusions}
Relying on an axiomatic presentation of rewriting~\cite{thesis-mellies}, we 
study normalisation for a
wide class of rewriting systems.  The main result of this paper states that
multistep strategies that contract {\it sets} of \emph{necessary} and
\emph{\ngrip} steps are normalising, \ie\ they  reach a normal form, if it
exists. 

This is particularly appealing for non-sequential rewrite systems,
in which terms that are not in normal form may not have any needed
redex, where strategies that contract only a {\it single step} rather than a {\it set of steps}
and rely only on the term itself to decide which redex to reduce,
 cannot be normalising.

We give a concrete example of such a phenomenon by means of the
pattern calculus \theppc, that fails to be sequential, and hence
includes reducible terms without any needed redex.  More precisely,
this behavior is manifested by the failure mechanism of \theppc.
Consider for example the term $ t = (\lpset{x} \ka \,\kb \, \kc. \kb
\, \kd) (\ka \, r_1 r_2) $ where $r_1$ and $r_2$ are redexes.  If
$r_1$ rewrites to $\kd$, then $t$ can be reduced to $ t' = (\lpset{x}
\ka \,\kb \, \kc. \kb \, \kd) (\ka \, \kd \, r_2) $ which rewrites to
the normal-form $\Id$ in one step, because the match of the pattern
$\ka \, \kb \, \kc$ against the argument $\ka \, \kd \, r_2$ yields
$\fail$.  A similar situation holds if $r_2$ rewrites to $\kd$.
Consequently, either $r_1$ or $r_2$ could be selected to yield a
normal form from $t$.  But choosing always $r_1$ would be a bad
decision for another terms, as for example $u = (\lpset{x} \ka \,\kb
\, \kc. \kb \, \kd) (\ka \, r'_1 r'_2)$, where $r'_1$ leads to an
infinite reduction, whilst $r'_2$ rewrites to $\kd$.  An analogous
reasoning invalidates the selection of $r_2$.

Since the reduction strategy \strs\ (\confer\ Sec.~\ref{sec:ppc-strategy}) for \theppc\ chooses a set of redexes
(both \tredexes\ $r_1$ and $r_2$ are selected in our example $t$), it
is then a \emph{multistep} reduction strategy. We prove that
\strs\ computes necessary and \ngrip\ sets of steps. Following
the above mentioned abstract normalisation result, this implies that
the multistep strategy is normalising for \theppc. This result can then be seen as 
an extension of needed normalising strategies to non-sequential rewrite systems.
Moreover, our strategy \strs\ coincides with the leftmost-outermost strategy when restricting
\theppc\ to the $\lambda$-calculus. 

\delia{Another interesting remark concerns the recent
  embedding~\cite{vanRaamsdonkvanOostrom:2014} of \theppc\ into higher-order pattern rewriting
  systems, which was motivated by the fact that one can understand
  some properties of \theppc\ by just looking at the corresponding
  properties of the image of the embedding. However, as explained in
  Section~\ref{sec:ppc-strategy}, the strategy \strs\ is not outermost-fair, so that
  no available normalisation result for higher-order rewriting can be
  applied in our case. More importantly, the results developed in this
  paper can be applied to other higher-order rewriting systems for
  which outermost-fair strategies are, in general, difficult to
  compute or to express inductively. }

\delia{
\medskip
This work also shows that the notion of \emph{gripping} can be a
useful tool to study fine properties of reduction in $\l$-calculi.
We already noted, in Sec.~\ref{sec:gripping}, that gripping is used in an abstract proof of the finiteness of developments.
We cite other links between gripping and $\l$-calculi.
\begin{enumerate}
\item 
Gripping explains the size-exploding phenomenon described in \cite{DBLP:conf/csl/AccattoliL14}. Let $t_0 = y x x$ and $t_{n+1} = (\l x.t_n) (y x x)$. The term $t_n$ reduces in $n$ steps to a term whose size is \emph{exponential} in $n$, while the size of $t_n$ is lineal in $n$.
We observe that the $n$ redexes present in $t_n$ are all linked by gripping. \Eg, in 
$t_3 = (\l x_3.(\l x_2.(\l x_1.y x_1 x_1) (y x_2 x_2)) (y x_3 x_3)) (y x x)$
we have $a_3 \grip a_2 \grip a_1$, where $a_i$ is the redex corresponding to the bound $x_i$. The successive gripping between redexes produces the multiplication of variable occurrences, and thus the explosion in the size of the normal form.
\item
A link also exists between gripping and the box order on redexes in the linear substitution calculus \cite{DBLP:conf/popl/AccattoliBKL14}. In this calculus, the term $x [x / y] [y / z]$ has two substitution redexes, corresponding to the bound occurrences of the variables $x$ and $y$. In the box order, the $x$-redex precedes the $y$-redex. Beta-expansion of this term yields $(\l y. (\l x. x) y) z$, where the $x$-redex grips the $y$-redex.
\item
Finally, we observe that a
variant of gripping is used  in~\cite{DBLP:journals/tcs/EndrullisGKO11} to characterise the cases in which
$\alpha$-conversion is unavoidable in 
\cdelia{a calculus having as}{calculi containing the} rewrite
rule $\mu x. M \to M [x := \mu x.M]$.  \Eg, in the term $t = \mu
x.F(y,\mu y.x)$, the inner redex $\mu y.x$ grips the outer one. Observe that the step $t \sstep{} F(y,\mu y' . (\mu x.F(y,\mu y.x)))$ forces the renaming of the bound variable associated to the (residual of the) gripping redex.
\end{enumerate}

}

\medskip
The scope of our work could be expanded in
several ways.  First, we believe that the main ideas underlying
the definition of \strs\ for \theppc\ can lead to reduction strategies
for other abstract rewriting formats, such
as HRS, CRS or ERS.  These strategies could be proved to be
normalising by resorting to the abstract normalisation proof given in
this paper.  This would give a powerful extension of the results
in~\cite{sekar-rama} to higher-order rewriting.

A second research direction is to broaden the scope of the
normalisation proof presented in
Section~\ref{sec:reduction-strategies}.  More precisely, the abstract
proof has been instantiated for \theppc\ with the strategy $\strs$, which always
selects a subset of the \emph{outermost} \tredexes\ in a term.  On the
other hand, the proof does not apply to the
\emph{parallel-outermost} reduction strategy, which simultaneously
selects  \emph{all} the outermost \tredexes\ in a term.  This is due
to the fact that ${\cal A} \subseteq {\cal B}$ and ${\cal A}$
\ngrip\ does not imply ${\cal B}$ \ngrip.  For example, consider: 
$$ t = (\lpset{x} \ca \, x.\underbrace{D x}_{\ppcstep{b}}) (\underbrace{\Id (\ca \, \cb)}_{\ppcstep{a}}) $$
whose only \tredexes\ are $\ppcstep{a}$ and $\ppcstep{b}$. 
Let $\strs(t) = \set{\ppcstep{a}} \subseteq \set{\ppcstep{a}, \ppcstep{b}} = 
\mathcal{O}(t)$. Remark that
$\strs(t)$ is the set of redexes selected by the strategy
 $\strs$ and $\mathcal{O}(t)$ 
is the  set of all outermost \tredexes\ of $t$. 
The set $\strs(t)$ is indeed \ngrip. However, 
the set $\mathcal{O}(t)$ does not satisfies the \ngrip\ property in
the general case.  Indeed, contracting $\ppcstep{a}$ results in 
$$      \overbrace{(\lpset{x} \ca \, x.\underbrace{D x}_{\ppcstep{b'}}) (\ca \, \cb)}^{\ppcstep{c'}} $$
where $\residu{\ppcstep{b}}{\ppcstep{a}}{\ppcstep{b'}}$ and $\ppcstep{c'} \grip \ppcstep{b'}$.
Hence $\mathcal{O}(t)$ does not enjoy the \ngrip\ property.

We conjecture that some variation of the given proof could apply to the parallel-outermost
strategy in some cases, for example for \theppc.
In this perspective, it could be possible that the property of always selecting \emph{necessary}
sets of \tredexes\ could suffice to guarantee that a reduction strategy is normalising. A proof of
this conjecture, or a counterexample falsifying it, would be an interesting result in this
direction.

\subsection*{Acknowledgements: } 
To Vincent van Oostrom for having pointed out a mistake in a previous
version of this work.  To Yann R\'egis-Gianas for discussions
  on coinduction. To Beniamino Accattoli who provided valuable
  comments.  This work was partially supported by LIA INFINIS, the
  ECOS-Sud cooperation program between France and Argentina,
  and by the grants PUNQ of the Universidad Nacional de
  Quilmes and UBACyT of the Universidad de Buenos
  Aires, Argentina.



\onlyReport{
\section{Appendix -- Projection of a step--\tredexset--\mredseq}
\label{sec:projection}

In this section, we give precise definions for the notions of \emph{projection} and \emph{preservations}. We also prove \thelem~\ref{rsl:mred-projection-good-behavior}, along with the needed auxiliary results.

\begin{notation}
\label{not:leq-mstep-redseq}
Let $\setredb \subseteq \ROccur{t}$ and $a \in \Pos{t}$. 
We write $\stepa \leq \setredb$ iff $a \leq b$ for all $\ppcstep{b} \in \setredb$.
Analogously, for every \redseq\ $\reda$ and $a \in \Pos{\rsrc{\reda}}$, we write $a \leq \reda$ iff for any $i \leq \rlength{\reda}$, $a \leq b_i$ where $\mredel{\reda}{i} = \pair{t_i}{b_i}$.
\end{notation}

\begin{definition}
\label{dfn:preserves-mstep-mred}
Let $\setredb$ be a \tredexset, and $a \in \Pos{\rsrc{\setredb}}$.
We say that $\setredb$ \textbf{preserves} $a$ iff all $\ppcstep{b} \in \setredb$ verify $b \not < a$, or equivalently, $a \leq b$ or $a \disj b$.
In turn, a \emph{\mredseq} $\mreda$ \textbf{preserves} $a$ iff all its elements do.
\end{definition}

\begin{definition}
\label{dfn:free-domin-parts}
If $\setredb$ preserves $a$, then we define the \textbf{free part} and the \textbf{\domintext\ part} of $\setredb$ \wrt\ $a$, written $\thefreewrt{\setredb}{a}$ and $\thedominwrt{\setredb}{a}$ respectively, as follows:
$\thefreewrt{\setredb}{a} \eqdef \set{\ppcstep{b} \in \setredb \sthat a \disj b}$ and 
$\thedominwrt{\setredb}{a} \eqdef \set{\ppcstep{b} \in \setredb \sthat a \leq b}$.%
\footnote{A remark about the names ``free'' and ``\domintext\'' given to $\thefreewrt{\setredb}{a}$ and $\thedominwrt{\setredb}{a}$ follows. 
We recall that $\stepb$ is free from $\stepa$ (that is, $\stepb \freefrom \stepa$) iff $\stepa \not\leq \stepb$, \ie\ $\stepb < \stepa$ or $\stepb \disj \stepa$. The former possibility cannot occur since $\setredb$ preserves $\stepa$, hence the name given to $\thefreewrt{\setredb}{a}$.
In turn, it is not true in general that $\stepb \in \thedominwrt{\setredb}{a}$ implies that $\stepb$ is \dominbytext\ $\set{\stepa}$, the exception being the case $\stepb = \stepa$; hence, the name ``\domintext'' is in fact approximate.} %
Observe $\setredb = \thefreewrt{\setredb}{a} \uplus \thedominwrt{\setredb}{a}$, and 
$\ppcstep{b}_1 \in \thefreewrt{\setredb}{a}$ and $\ppcstep{b}_2 \in \thedominwrt{\setredb}{a}$ imply $b_1 \disj b_2$.
\end{definition}

\begin{definition}
\label{dfn:projection-redseq}
Let $\reda$ be a \redseq, and $a \in \Pos{t}$ where $t = \rsrc{\reda}$, such that $a \leq \reda$.
We define the \textbf{projection} of $\reda$ \wrt\ $a$, notation $\proj{\reda}{a}$, as follows:
if \emph{$\reda = \emptyred{t}$}, then \emph{$\proj{\reda}{a} = \emptyred{\subtatwide{t}{a}}$}, otherwise $\rlength{\subtatnarrow{\reda}{a}} = \rlength{\reda}$ and 
$\redel{\subtatnarrow{\reda}{a}}{i} = \pair{\subtat{t_i}{a}}{b}$ where $\redel{\reda}{i} = \pair{t_i}{ab}$, 
for all $i \leq \rlength{\reda}$.
\end{definition}

\begin{definition}
\label{dfn:projection-mstep}
If $\setredb \subseteq \ROccur{t}$ preserves $a \in \Pos{t}$, then we define the \textbf{projection of $\setredb$ \wrt\ $a$}, notation $\proj{\setredb}{a}$, as $\set{\pair{\subtat{t}{a}}{b'} \sthat \ppcstep{ab'} \in \setredb}$; if this set is empty, then $\proj{\setredb}{a} = \emptyset_{\subtatwide{t}{a}}$. Notice that $\proj{\setredb}{a} = \proj{\thedominwrt{\setredb}{a}}{a}$.
\end{definition}

\begin{definition}
\label{dfn:projection-mred}
If a \mredseq\ $\mreda$ preserves $a \in \Pos{\rsrc{\mreda}}$, then we define the \textbf{projection of $\mreda$ \wrt\ $a$}, notation $\proj{\mreda}{a}$, as follows:
\emph{$\proj{\emptyred{t}}{a} = \emptyred{\subtatwide{t}{a}}$}, and in any other case, 
$\proj{\mreda}{a} = \langle \proj{\mredel{\mreda}{1}}{a}; \ldots; \proj{\mredel{\mreda}{n}}{a}; \ldots \rangle$.
\end{definition}

\medskip
We prove that $\subtatnarrow{\reda}{a}$ is a well-defined \redseq\ (\theLem~\ref{rsl:proj-step-compatible-target}, along with a straightforward induction on $\rlength{\reda}$, suffices), and that targets (\thelem~\ref{rsl:proj-redseq-compatible-target}) and residuals (\thelem~\ref{rsl:proj-redseq-compatible-residuals}) are compatible with the projection of \redseqs.

\begin{lemma}
\label{rsl:proj-step-compatible-target}
Let $t \sstepx{\ppcstep{ab}} t'$. 
Then $\subtat{t}{a} \sstepx{\ppcstep{b}} \subtat{t'}{a}$.
\end{lemma}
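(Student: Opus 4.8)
The plan is to reduce the statement to two elementary facts about positions: that taking a subterm composes along prefixes, and that replacement of a subterm commutes with restriction to a prefix. Here the redex position is written $ab$, the concatenation of the prefix $a$ with the suffix $b$. By the definition of \theppc\ as an ARS, the hypothesis $t \sstepx{\ppcstep{ab}} t'$ unfolds into: $ab \in \Pos{t}$, $\subtat{t}{ab} = (\lpth p.s)u$ for some $\theta, p, s, u$ with $\matchOpth{u}{p}$ decided, and $t' = t[\matchOpth{u}{p}\,s]_{ab}$. My goal is to exhibit exactly this data for the step $\ppcstep{b} = \pair{\subtat{t}{a}}{b}$ inside the subterm $\subtat{t}{a}$, and then to check that its target is precisely $\subtat{t'}{a}$.

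First I would establish that $\ppcstep{b}$ is a genuine step of $\subtat{t}{a}$. Since $a \leq ab$ we have $a \in \Pos{t}$ and $b \in \Pos{\subtat{t}{a}}$, and the composition property of subterms gives $\subtat{(\subtat{t}{a})}{b} = \subtat{t}{ab} = (\lpth p.s)u$. Thus the subterm of $\subtat{t}{a}$ at $b$ is the very same prestep $(\lpth p.s)u$, whose match $\matchOpth{u}{p}$ is decided by hypothesis; hence $\pair{\subtat{t}{a}}{b}$ satisfies the definition of a step, and its contractum is again $w \eqdef \matchOpth{u}{p}\,s$. The crucial observation is that the contractum depends only on the local subterm $(\lpth p.s)u$, so it is literally the same term $w$ whether the redex is read inside $t$ or inside $\subtat{t}{a}$.

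It then remains to compute the target. By definition $\rtgt{\pair{\subtat{t}{a}}{b}} = (\subtat{t}{a})[w]_b$. On the other hand $t' = t[w]_{ab}$, and since the rewritten position $ab$ lies below $a$, the commutation of replacement with subterm extraction along the prefix $a$ yields $\subtat{t'}{a} = \subtat{(t[w]_{ab})}{a} = (\subtat{t}{a})[w]_b$. Comparing the two expressions gives $\subtat{t'}{a} = \rtgt{\pair{\subtat{t}{a}}{b}}$, which is exactly the asserted step $\subtat{t}{a} \sstepx{\ppcstep{b}} \subtat{t'}{a}$.

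I expect no real obstacle: the argument is pure tree surgery, and the two position lemmas (composition of $\subtat{\cdot}{\cdot}$ along prefixes, and commutation of $[\cdot]_{\cdot}$ with $\subtat{\cdot}{\cdot}$ at a prefix position) are standard and independent of the binding structure. The one point worth a remark is that, although replacement in \theppc\ may capture variables, this causes no trouble here: the contractum $w$ is fixed once and for all from the local redex, and both $t[w]_{ab}$ and $(\subtat{t}{a})[w]_b$ graft this same $w$ at corresponding nodes, so whatever capture occurs along the common path below $a$ is identical on the two sides.
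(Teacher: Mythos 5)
Your proof is correct and follows essentially the same route as the paper's: both arguments rest on the two position facts $\subtat{(\subtat{t}{a})}{b} = \subtat{t}{ab}$ and $t[w]_{ab} = t[(\subtat{t}{a})[w]_b]_a$ (hence $\subtat{t'}{a} = (\subtat{t}{a})[w]_b$), with the same locally-computed contractum $w = \matchOpth{u}{p}\,s$. Your additional explicit check that $\pair{\subtat{t}{a}}{b}$ is a genuine step and the remark on variable capture are fine but add nothing beyond what the paper's proof leaves implicit.
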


\begin{proof}
Let $\subtat{t}{ab} = (\lpth p.s)u$ and $s' = \matchOpth{u}{p} s$. Then $t' = \repl{t}{s'}{ab}$. 
Observe 
$\subtat{(\subtat{t}{a})}{b} = \subtat{t}{ab}$ and $t' = \repl{t}{\repl{(\subtat{t}{a})}{s'}{b}}{a}$ implying $\subtat{t'}{a} = \repl{(\subtat{t}{a})}{s'}{b}$.
Thus we conclude.
\end{proof}

\begin{lemma}
\label{rsl:proj-redseq-compatible-target}
Let $a$ be a position and $t \sredx{\reda} t'$, such that $a \leq \reda$. 
Then $\subtat{t}{a} \sredx{\subtatwide{\reda}{a}} \subtat{t'}{a}$.
\end{lemma}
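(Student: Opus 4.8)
The plan is to prove this by induction on the length $\rlength{\reda}$ of the \redseq\ $\reda$, lifting the single-step result of \thelem~\ref{rsl:proj-step-compatible-target} to an entire reduction sequence. Recall from \theDfn~\ref{dfn:projection-redseq} that the projection $\subtatwide{\reda}{a}$ is constructed stepwise, by projecting each step of $\reda$ at the position $a$; the hypothesis $a \leq \reda$ (Notation~\ref{not:leq-mstep-redseq}) guarantees that every step of $\reda$ sits at a position of which $a$ is a prefix, so that each of these pointwise projections is defined and the projected labels are meaningful.

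For the base case, suppose $\reda = \emptyred{t}$. Then $t' = t$, and \theDfn~\ref{dfn:projection-redseq} gives $\subtatwide{\reda}{a} = \emptyred{\subtat{t}{a}}$. The empty \redseq\ indexed by $\subtat{t}{a}$ reaches $\subtat{t}{a} = \subtat{t'}{a}$, which is exactly the desired conclusion.

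For the inductive step, I would decompose $\reda = \ppcstep{ab}; \reda'$, where $\ppcstep{ab} = \mredel{\reda}{1} = \pair{t}{ab}$ is the first step; here $a \leq \reda$ forces $a$ to be a prefix of the position of this step, so that position has the form $ab$. Letting $t \sstepx{\ppcstep{ab}} t_1$, I would first apply \thelem~\ref{rsl:proj-step-compatible-target} to obtain $\subtat{t}{a} \sstepx{\ppcstep{b}} \subtat{t_1}{a}$. Since $\reda'$ is a suffix of $\reda$, the hypothesis $a \leq \reda$ immediately yields $a \leq \reda'$, so the induction hypothesis applies to $t_1 \sredx{\reda'} t'$, giving $\subtat{t_1}{a} \sredx{\subtatwide{\reda'}{a}} \subtat{t'}{a}$. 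Concatenating these two reductions, and invoking \theDfn~\ref{dfn:projection-redseq} to identify $\subtatwide{\reda}{a}$ with $\ppcstep{b}; \subtatwide{\reda'}{a}$, delivers $\subtat{t}{a} \sredx{\subtatwide{\reda}{a}} \subtat{t'}{a}$, completing the induction.

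I do not expect a genuine obstacle: all the real work is carried out by \thelem~\ref{rsl:proj-step-compatible-target}, which handles a single step. The two points deserving a line of care are, first, that the projected steps actually compose into a well-formed \redseq\ — the target $\subtat{t_1}{a}$ produced by the projected first step is precisely the source of the projected tail, so the stepwise application of \thelem~\ref{rsl:proj-step-compatible-target} simultaneously certifies well-definedness and threads the targets correctly — and second, that the prefix condition is inherited by the suffix $\reda'$, which is immediate from Notation~\ref{not:leq-mstep-redseq}. Thus the lemma reduces cleanly to the one-step case.
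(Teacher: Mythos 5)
Your proof is correct and follows exactly the paper's own argument: induction on $\rlength{\reda}$, with the base case handled by the definition of projection and the inductive step combining \thelem~\ref{rsl:proj-step-compatible-target} on the head step $\ppcstep{ab}$ with the induction hypothesis on the tail $\reda'$. No gaps; the extra remarks about well-formedness and inheritance of the prefix condition are sound and match what the paper leaves implicit.
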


\begin{proof}
We proceed by induction on $\rlength{\reda}$. 
If $\reda = \nil_{t}$, then $t' = t$ and $\subtat{\reda}{a} = \nil_{\subtatwide{t}{a}}$, so we conclude.
Otherwise, $a \leq \reda$ implies $\reda = \ppcstep{ab} ; \reda'$, say $t \sstepx{\ppcstep{ab}} t'' \sredx{\reda'} t'$. 
Then \thelem~\ref{rsl:proj-step-compatible-target} and \ih\ imply $\subtat{t}{a} \sstepx{\ppcstep{b}} \subtat{t''}{a} \sredx{\subtatwide{\reda'}{a}} \subtat{t'}{a}$. Thus we conclude.
\end{proof}

\begin{lemma}
\label{rsl:proj-step-compatible-residuals}
Let $\ppcstep{ab}, \ppcstep{ac} \in \ROccur{t}$, so that $\ppcstep{b}, \ppcstep{c} \in \ROccur{\subtat{t}{a}}$. 
Then $\residu{\ppcstep{ac}}{\ppcstep{ab}}{\ppcstep{d}}$ iff $d = ad'$ and $\residu{\ppcstep{c}}{\ppcstep{b}}{\ppcstep{d'}}$.
\end{lemma}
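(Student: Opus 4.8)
The plan is to prove both implications by a single case analysis on the three clauses defining the residual relation of \theppc, exploiting the fact that the entire residual computation is local to the subterm $\subtat{t}{a}$. First I would fix the contraction $t \sstep{\ppcstep{ab}} t'$ and write $\subtat{t}{ab} = (\lpth p.s)u$ for the contracted redex. The two enabling observations are: (1) by composition of positions, $\subtat{(\subtat{t}{a})}{b} = \subtat{t}{ab} = (\lpth p.s)u$, so the redex contracted by $\ppcstep{b}$ inside $\subtat{t}{a}$ is syntactically the same, with the same binding symbols $\theta$, pattern $p$, body $s$, argument $u$, and hence the same match $\matchOpth{u}{p}$; and (2) prefixing by $a$ is order-preserving and injective on positions, so $ab \leq ac \iff b \leq c$, and each positional equation occurring in the residual definition ($ac = ab12n$, $d = abn$, $ac = ab2mn$, $d = abkn$) is equivalent, after cancelling the common prefix $a$, to the corresponding equation $c = b12n$, $d' = bn$, $c = b2mn$, $d' = bkn$ with $d = ad'$. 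I would also invoke \thelem~\ref{rsl:proj-step-compatible-target} to obtain $\subtat{t}{a} \sstep{\ppcstep{b}} \subtat{t'}{a}$, which guarantees that the source and target objects of the residual steps match up correctly under projection, so that the reduced step $\ppcstep{d} = \pair{t'}{d}$ on one side and $\pair{\subtat{t'}{a}}{d'}$ on the other are indeed projections of one another.

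For the forward direction, assuming $\residu{\ppcstep{ac}}{\ppcstep{ab}}{\ppcstep{d}}$, I would split on which of the three clauses applies. In the first clause ($ab \not\leq ac$, $d = ac$), observation (2) gives $b \not\leq c$ and $d = ac$, hence $d' = c$, and the first clause applied inside $\subtat{t}{a}$ yields $\residu{\ppcstep{c}}{\ppcstep{b}}{\ppcstep{c}}$. In the second clause ($ac = ab12n$, $d = abn$, $\matchOpth{u}{p} \neq \fail$), cancelling $a$ gives $c = b12n$ and $d' = bn$, and the match condition is unchanged by observation (1), so the second clause inside $\subtat{t}{a}$ gives $\residu{\ppcstep{c}}{\ppcstep{b}}{\ppcstep{bn}}$. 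In the third clause, the matchable/variable side conditions $\subtat{t}{ab11m} = \subtat{p}{m} = \wid{x}$ and $\subtat{t}{ab12k} = \subtat{s}{k} = x$ transfer verbatim, because $\subtat{t}{ab11m} = \subtat{(\subtat{t}{a})}{b11m}$ and likewise for the body position; stripping $a$ then yields $c = b2mn$, $d' = bkn$ and the third clause inside $\subtat{t}{a}$. In every case $d = ad'$ and $\residu{\ppcstep{c}}{\ppcstep{b}}{\ppcstep{d'}}$, as required.

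The backward direction is symmetric: assuming $d = ad'$ and $\residu{\ppcstep{c}}{\ppcstep{b}}{\ppcstep{d'}}$, the same three-way split, now on the clauses governing the residual inside $\subtat{t}{a}$, reinserts the prefix $a$ and reconstructs $\residu{\ppcstep{ac}}{\ppcstep{ab}}{\ppcstep{d}}$, using the same two observations in the opposite direction. I do not expect any deep obstacle; the only point demanding care is the third clause, where one must check that the binding symbol $x \in \theta$ witnessing the residual, together with its matchable occurrence in $p$ and its variable occurrence in $s$, is genuinely the same witness in both views. This is exactly what observation (1) guarantees, since $\theta$, $p$, $s$ and the absolute positions $ab11m$, $ab12k$ are untouched by the projection. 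Establishing the two observations cleanly at the outset is what reduces the rest to a routine, purely positional verification.
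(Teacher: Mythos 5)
Your proposal is correct and follows essentially the same route as the paper's own proof: both rest on the observation that $\subtat{t}{ab} = \subtat{(\subtat{t}{a})}{b}$, so the same clause of the residual definition applies with the same arguments ($p$, $s$, $u$, $\theta$, $m$, $n$, $k$) in both views, with positions related by cancelling the common prefix $a$. The paper states this tersely and leaves the clause-by-clause check implicit, whereas you spell it out; no difference in substance.
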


\begin{proof}
Let $\subtat{t}{ab} = \subtat{(\subtat{t}{a})}{b} = (\lpth p.s) u$. 
In the analysis of $\residu{\ppcstep{ac}}{\ppcstep{ab}}{\ppcstep{d}}$ and $\residu{\ppcstep{c}}{\ppcstep{b}}{\ppcstep{d'}}$, \confer\ the definition of residuals for \theppc\ in page~\pageref{dfn:ppc-residual}, always the case applying is the same, and moreover with the same arguments. 
\Eg\ if $ab = ac2mn$, then $b = c2mn$, the values for $m$ and $n$ coincide. In this case, the subterms $p$ and $s$ also coincide. 
These observations suffice to conclude.
\end{proof}

\begin{lemma}
\label{rsl:proj-redseq-compatible-residuals}
Let $a$ be a position, $\ppcstep{ab} \in \ROccur{t}$, so that $\ppcstep{b} \in \ROccur{\subtat{t}{a}}$, and $\reda$ a \redseq\ verifying $\rsrc{\reda} = t$ and $a \leq \reda$. 
Then $\residu{\ppcstep{ab}}{\reda}{\ppcstep{d}}$ iff $d = ad'$ and $\residu{\ppcstep{b}}{\subtat{\reda}{a}}{\ppcstep{d'}}$.
\end{lemma}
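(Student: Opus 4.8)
The plan is to prove both directions at once by induction on $\rlength{\reda}$, reducing the reduction-sequence statement to the already-established single-step version \thelem~\ref{rsl:proj-step-compatible-residuals} together with the definition of residuals after a reduction sequence (residuals compose: $\residu{\ppcstep{b}}{\ppcstep{e};\reda'}{\ppcstep{d}}$ iff $\residu{\ppcstep{b}}{\ppcstep{e}}{\ppcstep{g}}$ and $\residu{\ppcstep{g}}{\reda'}{\ppcstep{d}}$ for some $\ppcstep{g}$). Each step of the induction will be a genuine ``iff'', so chaining them yields the equivalence in both directions without separate arguments.

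For the base case $\reda = \emptyred{t}$ the projection is $\proj{\reda}{a} = \emptyred{\subtat{t}{a}}$, and residuals after an empty reduction sequence are the identity: $\residu{\ppcstep{ab}}{\emptyred{t}}{\ppcstep{d}}$ holds exactly when $d = ab$, while $\residu{\ppcstep{b}}{\emptyred{\subtat{t}{a}}}{\ppcstep{d'}}$ holds exactly when $d' = b$. Taking $d' = b$ gives $d = ab = ad'$, so the equivalence is immediate. For the inductive step, since $a \leq \reda$ the first step of $\reda$ sits at a position with prefix $a$, so I would write $\reda = \ppcstep{ae}; \reda'$ with $t \sstep{\ppcstep{ae}} t'' \sred{\reda'} t'$, where $\ppcstep{e}$ is a genuine step of $\subtat{t}{a}$ by \thelem~\ref{rsl:proj-step-compatible-target}. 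By the composition law, $\residu{\ppcstep{ab}}{\reda}{\ppcstep{d}}$ holds iff there is $\ppcstep{g}$ with $\residu{\ppcstep{ab}}{\ppcstep{ae}}{\ppcstep{g}}$ and $\residu{\ppcstep{g}}{\reda'}{\ppcstep{d}}$. Now \thelem~\ref{rsl:proj-step-compatible-residuals} rewrites the first conjunct: it holds iff $g = a g_0$ for some $g_0$ with $\residu{\ppcstep{b}}{\ppcstep{e}}{\ppcstep{g_0}}$, and moreover $\ppcstep{g_0} \in \ROccur{\subtat{t''}{a}}$. This last fact is precisely what licenses applying the induction hypothesis to the shorter $\reda'$ (which still satisfies $a \leq \reda'$, inherited from $a \leq \reda$), tracking the step $\ppcstep{a g_0}$: the hypothesis gives $\residu{\ppcstep{a g_0}}{\reda'}{\ppcstep{d}}$ iff $d = ad'$ and $\residu{\ppcstep{g_0}}{\subtat{\reda'}{a}}{\ppcstep{d'}}$.

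Chaining these equivalences and recombining $\residu{\ppcstep{b}}{\ppcstep{e}}{\ppcstep{g_0}}$ with $\residu{\ppcstep{g_0}}{\subtat{\reda'}{a}}{\ppcstep{d'}}$ via the composition law yields $\residu{\ppcstep{ab}}{\reda}{\ppcstep{d}}$ iff $d = ad'$ and $\residu{\ppcstep{b}}{\ppcstep{e}; \subtat{\reda'}{a}}{\ppcstep{d'}}$, and I would close by noting that projection distributes over concatenation, $\subtat{\reda}{a} = \ppcstep{e}; \subtat{\reda'}{a}$, which is immediate from the pointwise definition of the projection of a reduction sequence. The only delicate point — and the one worth stating explicitly — is the bookkeeping of positions: one must check at every stage that each intermediate residual retains $a$ as a prefix of its position, so that the induction hypothesis and all projections remain well defined. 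This is exactly guaranteed by the shape $g = a g_0$ produced by \thelem~\ref{rsl:proj-step-compatible-residuals}, so no genuine obstacle arises beyond keeping the relative and absolute positions consistently aligned.
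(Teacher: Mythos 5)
Your proposal is correct and follows essentially the same route as the paper's proof: induction on $\rlength{\reda}$, peeling off the first step $\ppcstep{ac}$ (your $\ppcstep{ae}$) using $a \leq \reda$, rewriting via \thelem~\ref{rsl:proj-step-compatible-residuals}, and closing with the induction hypothesis on $\reda'$ and the identity $\subtat{\reda}{a} = \ppcstep{c};\subtat{\reda'}{a}$. The only difference is presentational — the paper argues the two implications separately while you chain biconditionals (with the same key bookkeeping observation that residuals of $\ppcstep{ab}$ keep $a$ as a prefix, so the intermediate step lifts/projects correctly) — which does not change the substance of the argument.
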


\begin{proof}
We proceed by induction on $\rlength{\reda}$. 
If $\reda = \nil_{t}$, so that $\subtat{\reda}{a} = \nil_{\subtatwide{t}{a}}$, then $\residu{\ppcstep{ab}}{\reda}{\ppcstep{d}}$ implies $d = ab$, and $\residu{\ppcstep{b}}{\subtat{\reda}{a}}{\ppcstep{d'}}$ implies $d' = b$, thus we conclude.
Otherwise, $a \leq \reda$ implies $\reda = \ppcstep{ac} ; \reda'$, $a \leq \reda'$, and $\subtat{\reda}{a} = \ppcstep{c} ; \subtat{\reda'}{a}$. We proceed by double implication. Let us define $t' = \rsrc{\reda'}$.

\noindent
$\mathbf{\Longrightarrow )} \ \ $ 
$\residu{\ppcstep{ab}}{\reda}{\ppcstep{d}}$ implies $\residu{\ppcstep{ab}}{\ppcstep{ac}}{\ppcstep{e}}$ and $\residu{\ppcstep{e}}{\reda'}{\ppcstep{d}}$ for some $\ppcstep{e}$. 
\theLem~\ref{rsl:proj-step-compatible-residuals} implies $e = ae'$ and $\residu{\ppcstep{b}}{\ppcstep{c}}{\ppcstep{e'}}$. Observe that $\ppcstep{e} = \ppcstep{ae'} \in \ROccur{t'}$. Therefore \ih\ yields $d = ad'$ and $\residu{\ppcstep{e'}}{\subtat{\reda'}{a}}{\ppcstep{d'}}$, hence $\residu{\ppcstep{b}}{\subtat{\reda}{a}}{\ppcstep{d'}}$.

\noindent
$\mathbf{\Longleftarrow )} \ \ $ 
$\residu{\ppcstep{b}}{\subtat{\reda}{a}}{\ppcstep{d'}}$ implies $\residu{\ppcstep{b}}{\ppcstep{c}}{\ppcstep{e'}}$ and $\residu{\ppcstep{e'}}{\subtat{\reda'}{a}}{\ppcstep{d'}}$ for some $\ppcstep{e'}$.  
Let us call $e = ae'$ and $d = ad'$. Observe $\ppcstep{e'} \in \ROccur{\subtat{t'}{a}}$, \confer\ \thelem~\ref{rsl:proj-step-compatible-target}, then $\ppcstep{e} \in \ROccur{t'}$.
\theLem~\ref{rsl:proj-step-compatible-residuals} implies $\residu{\ppcstep{ab}}{\ppcstep{ac}}{\ppcstep{e}}$. In turn, \ih\ implies $\residu{\ppcstep{e}}{\reda'}{\ppcstep{d}}$. Thus we conclude.
\end{proof}

\medskip
We verify that if $a \leq \setredb$, then residuals (\thelem~\ref{rsl:proj-redexset-compatible-residuals}) and complete developments (\thelem~\ref{rsl:proj-redexset-compatible-development}) are compatible with the projection $\proj{\setredb}{a}$.

\begin{lemma}
\label{rsl:setred-residual-preserves-leq}
Let $a \leq \setredb$ and $\ppcstep{b} \in \setredb$. Then $a \leq \residus{\setredb}{\ppcstep{b}}$.
\end{lemma}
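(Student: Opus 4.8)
The plan is to unfold the definition of $a \leq \residus{\setredb}{\ppcstep{b}}$ and reduce the claim to a statement about a single residual. By Notation~\ref{not:leq-mstep-redseq}, $a \leq \residus{\setredb}{\ppcstep{b}}$ means $a \leq d$ for every step $\ppcstep{d} = \pair{u}{d}$ in $\residus{\setredb}{\ppcstep{b}}$. By the definition of residuals of a multistep, any such $\ppcstep{d}$ arises as $\residu{\ppcstep{c}}{\ppcstep{b}}{\ppcstep{d}}$ for some $\ppcstep{c} = \pair{t}{c} \in \setredb$. Since both the contracted step $\ppcstep{b}$ and the traced step $\ppcstep{c}$ lie in $\setredb$ and $a \leq \setredb$, we have both $a \leq b$ and $a \leq c$. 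Hence the whole lemma follows once we establish $a \leq d$ for a single triple $\residu{\ppcstep{c}}{\ppcstep{b}}{\ppcstep{d}}$ under the hypotheses $a \leq b$ and $a \leq c$.

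The core of the argument is then a routine case analysis following the three clauses of the residual relation for \theppc\ (page~\pageref{dfn:ppc-residual}), where $\subtat{t}{b} = (\lpth p.s)u$ is the decomposition furnished by $\ppcstep{b}$ being a step. First I would take the clause with $b \not\leq c$ and $d = c$, where $a \leq c = d$ is immediate. Next, the clause $c = b12n$, $d = bn$: here $b$ is a prefix of $d$ by construction, so $a \leq b$ together with transitivity of the prefix order yields $a \leq d$. Finally, the clause $c = b2mn$, $d = bkn$: again $b$ is a prefix of $d$, whence $a \leq b \leq d$. In all three cases $a \leq d$, completing the proof.

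The essential observation underlying the argument is that contraction in \theppc\ never relocates a residual to a position lying strictly above the contracted redex: a residual is either left untouched, or it sits at a position that still has the contracted redex's position $b$ as a prefix. Since $a$ is assumed to be a prefix of $b$, the prefix property is preserved. Consequently there is no genuine obstacle; the only point requiring care is to read off from the second and third clauses that $d$ indeed extends $b$, which is immediate from the explicit forms $d = bn$ and $d = bkn$. Note that the side condition $\matchOpth{u}{p} \neq \fail$ appearing in those clauses is irrelevant to the position bookkeeping and can be ignored for this statement.
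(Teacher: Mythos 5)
Your proof is correct. The reduction to a single residual triple is exactly what is needed (residuals of a multistep after a step are, by definition, the union of the per-step residuals, and both the contracted step and any ancestor lie in $\setredb$, so both positions extend $a$), and your three cases exhaust the clauses of the \theppc\ residual relation; in each one the residual position either equals $c$ or has $b$ as a prefix, so $a \leq d$ follows by transitivity of the prefix order. The paper, however, takes a shorter and more modular route: it writes $b = ab'$, notes that any ancestor in $\setredb$ has position of the form $ac''$, and then simply invokes the forward direction of Lemma~\ref{rsl:proj-step-compatible-residuals} (compatibility of single-step residuals with projection at $a$), which says precisely that the residual position has the form $ad'$. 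That lemma is itself established by the same case analysis you carried out, so the mathematical content is identical; the difference is organizational. The paper's version buys reuse --- the same projection lemma also drives Lemmas~\ref{rsl:proj-redseq-compatible-residuals} and \ref{rsl:proj-redexset-compatible-residuals} and, ultimately, Lemma~\ref{rsl:mred-projection-good-behavior} --- whereas your version is self-contained and makes the key bookkeeping fact (a residual never moves strictly above the contracted redex, hence prefixes of $b$ remain prefixes) explicit at the point of use, at the cost of duplicating an argument the paper had already packaged once.
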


\begin{proof}
Hypotheses imply $b = ab'$.
For all $\ppcstep{c} \in \residus{\setredb}{\ppcstep{ab'}}$, \thelem~\ref{rsl:proj-step-compatible-residuals} implies $c = ac'$. Thus we conclude.
\end{proof}

\begin{lemma}
\label{rsl:development-preserves-leq}
Let $a \leq \setredb$ and $\reda \develops \setredb$. Then $a \leq \reda$.
\end{lemma}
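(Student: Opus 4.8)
The plan is to proceed by induction on $\rlength{\reda}$, which is well-defined since complete developments are finite (by \axiomuse{\axFD}). The base case is $\reda = \emptyred{t}$, where $\rlength{\reda} = 0$ and the condition $a \leq \reda$ holds vacuously.

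For the inductive step I would write $\reda = \ppcstep{b} ; \reda'$ with $\ppcstep{b} = \mredel{\reda}{1}$. First, the development condition at index $1$ gives $\mredel{\reda}{1} \in \residus{\setredb}{\mredunt{\reda}{0}} = \residus{\setredb}{\emptyred{t}} = \setredb$, since residuals after the empty reduction sequence act as the identity. Hence $\ppcstep{b} \in \setredb$, and the hypothesis $a \leq \setredb$ yields directly $a \leq b$, which settles the prefix property for the first step of $\reda$.

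Next, I would invoke the remark from \thesec~\ref{subsec:ReductionSequencesAndDevelopments} that, because $\ppcstep{b} \in \setredb$ and $\ppcstep{b};\reda'$ is a (complete) development of $\setredb$, the tail $\reda'$ is a (complete) development of $\residus{\setredb}{\ppcstep{b}}$. By \thelem~\ref{rsl:setred-residual-preserves-leq}, contracting $\ppcstep{b}$ preserves the prefix property, i.e.\ $a \leq \residus{\setredb}{\ppcstep{b}}$. The induction hypothesis applied to $\reda'$, which develops $\residus{\setredb}{\ppcstep{b}}$, then gives $a \leq \reda'$. Combining $a \leq b$ with $a \leq \reda'$ yields $a \leq \ppcstep{b};\reda' = \reda$, as required.

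This argument is essentially routine and I do not expect a substantial obstacle; completeness of the development plays no essential role, the same induction working for arbitrary developments. The only points requiring care are the identification of the first step of the development as a member of $\setredb$ (via the empty-prefix residual $\residus{\setredb}{\emptyred{t}} = \setredb$) and the appeal to \thelem~\ref{rsl:setred-residual-preserves-leq} to propagate the prefix property $a \leq \setredb$ to the residual set $\residus{\setredb}{\ppcstep{b}}$ that the tail develops.
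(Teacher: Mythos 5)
Your proof is correct and takes essentially the same route as the paper's: the identical decomposition $\reda = \ppcstep{b};\reda'$ with $\ppcstep{b}\in\setredb$ (hence $a \leq b$), the same appeal to \thelem~\ref{rsl:setred-residual-preserves-leq} to get $a \leq \residus{\setredb}{\ppcstep{b}}$, and the induction hypothesis applied to the tail $\reda'$, which develops $\residus{\setredb}{\ppcstep{b}}$. The only difference is the induction measure --- you use $\rlength{\reda}$ where the paper uses $\depth{\setredb}$ --- and since both decrease under the same decomposition, this is immaterial.
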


\begin{proof}
We proceed by induction on $\depth{\setredb}$. Let $t \mstep{\setredb} t'$.
If $\setredb = \emptyset_{t}$ then $\reda = \nil_{t}$ and we conclude immediately.
Otherwise $\setredb = \ppcstep{b} ; \reda'$ where $\ppcstep{b} \in \setredb$, implying $a \leq b$, and $\reda' \develops \residus{\setredb}{\ppcstep{b}}$.  \theLem~\ref{rsl:setred-residual-preserves-leq} implies $a \leq \residus{\setredb}{\ppcstep{b}}$. Hence \ih\ yields $a \leq \reda'$, which suffices to conclude.
\end{proof}

\begin{lemma}
\label{rsl:proj-redexset-compatible-residuals}
Let $a \leq \setredb$ and $\ppcstep{ab} \in \setredb$. Then $\subtat{(\residus{\setredb}{\ppcstep{ab}})}{a} = \residus{\subtatnarrow{\setredb}{a}}{\ppcstep{b}}$.
\end{lemma}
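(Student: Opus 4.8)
The plan is to prove the claimed equality of multisteps by a direct membership argument, reducing everything to the single-step compatibility already established in \thelem~\ref{rsl:proj-step-compatible-residuals}. First I would dispatch the two bookkeeping obligations. Since $a \leq \setredb$ and $\ppcstep{ab} \in \setredb$, \thelem~\ref{rsl:setred-residual-preserves-leq} yields $a \leq \residus{\setredb}{\ppcstep{ab}}$, so the projection $\subtat{(\residus{\setredb}{\ppcstep{ab}})}{a}$ on the left-hand side is well-defined. Writing $t = \rsrc{\setredb}$ and $t \sstep{\ppcstep{ab}} t'$, \thelem~\ref{rsl:proj-step-compatible-target} gives $\subtat{t}{a} \sstep{\ppcstep{b}} \subtat{t'}{a}$, so both sides are multisteps with source object $\subtat{t'}{a}$; in particular, should either side be empty, it is indexed by $\subtat{t'}{a}$, and so the two agree in that degenerate case.

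The core of the argument is then a chain of equivalences on positions. Fixing an arbitrary position $d'$, I would show $\ppcstep{d'} \in \subtat{(\residus{\setredb}{\ppcstep{ab}})}{a}$ iff $\ppcstep{d'} \in \residus{\subtatnarrow{\setredb}{a}}{\ppcstep{b}}$. By the definition of the projection of a multistep, $\ppcstep{d'}$ lies in the left-hand side iff $\ppcstep{ad'} \in \residus{\setredb}{\ppcstep{ab}}$, which by the definition of residuals of a multistep means $\residu{\ppcstep{ac'}}{\ppcstep{ab}}{\ppcstep{ad'}}$ for some $\ppcstep{ac'} \in \setredb$ (every element of $\setredb$ has the form $\ppcstep{ac'}$, since $a \leq \setredb$ forces $a$ to be a prefix of each of its positions). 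Now \thelem~\ref{rsl:proj-step-compatible-residuals}, applied to the coinitial pair $\ppcstep{ab}, \ppcstep{ac'} \in \ROccur{t}$, states exactly that $\residu{\ppcstep{ac'}}{\ppcstep{ab}}{\ppcstep{ad'}}$ is equivalent to $\residu{\ppcstep{c'}}{\ppcstep{b}}{\ppcstep{d'}}$. Finally, $\ppcstep{ac'} \in \setredb$ is equivalent to $\ppcstep{c'} \in \subtatnarrow{\setredb}{a}$ by the definition of projection, so the resulting condition says precisely that $\ppcstep{d'} \in \residus{\subtatnarrow{\setredb}{a}}{\ppcstep{b}}$. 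Concatenating these biconditionals gives the coincidence of the two sets of positions.

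To conclude, I would note that each step occurring in either multistep is determined by its position together with the common source $\subtat{t'}{a}$ identified in the first paragraph; hence the equality of position-sets just obtained upgrades to equality of the multisteps themselves. I do not expect a genuine obstacle: the statement is essentially ``projection commutes with single-step residuals'', and all the real content is packaged in \thelem~\ref{rsl:proj-step-compatible-residuals}. The only points demanding care are the well-definedness of the left-hand projection (supplied by \thelem~\ref{rsl:setred-residual-preserves-leq}) and the indexing of the empty multistep (supplied by \thelem~\ref{rsl:proj-step-compatible-target}), both of which are routine.
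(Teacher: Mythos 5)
Your proposal is correct and follows essentially the same route as the paper: both proofs reduce the multistep statement to the single-step compatibility result (\thelem~\ref{rsl:proj-step-compatible-residuals}) by unfolding membership in the projection and in the residual set, the paper phrasing this as a double inclusion and you as a chain of biconditionals. Your extra remarks on well-definedness of the left-hand projection and the indexing of empty multisteps are harmless additions that the paper leaves implicit.
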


\begin{proof}
By double inclusion. 

\noindent
$\mathbf{\supseteq )} \ \ $ 
Let $\ppcstep{c} \in \subtat{(\residus{\setredb}{\ppcstep{ab}})}{a}$, so that $\ppcstep{ac} \in \residus{\setredb}{\ppcstep{ab}}$. 
Let $\ppcstep{ad} \in \setredb$ such that $\residus{\ppcstep{ad}}{\ppcstep{ab}}{\ppcstep{ac}}$, observe $\ppcstep{d} \in \subtat{\setredb}{a}$. \theLem~\ref{rsl:proj-step-compatible-residuals} implies $\residus{\ppcstep{d}}{\ppcstep{b}}{\ppcstep{c}}$. Hence $\ppcstep{c} \in \residus{\subtatnarrow{\setredb}{a}}{\ppcstep{b}}$.

\noindent
$\mathbf{\subseteq )} \ \ $ 
Let $\ppcstep{c} \in \residus{\subtatnarrow{\setredb}{a}}{\ppcstep{b}}$, let $\ppcstep{d} \in \subtat{\setredb}{a}$ such that $\residu{\ppcstep{d}}{\ppcstep{b}}{\ppcstep{c}}$, observe that $\ppcstep{ad} \in \setredb$. 
\theLem~\ref{rsl:proj-step-compatible-residuals} implies $\residu{\ppcstep{ad}}{\ppcstep{ab}}{\ppcstep{ac}}$. Then $\ppcstep{ac} \in \residus{\setredb}{\ppcstep{ab}}$, implying $\ppcstep{c} \in \subtat{(\residus{\setredb}{\ppcstep{ab}})}{a}$.
\end{proof}

\begin{lemma}
\label{rsl:proj-redexset-compatible-development}
Let $a \leq \setredb$ and $\reda \develops \setredb$. 
Then $\subtat{\reda}{a} \develops \subtat{\setredb}{a}$.
\end{lemma}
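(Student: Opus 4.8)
The plan is to argue by induction on $\rlength{\reda}$, the length of the complete development; since $\reda \develops \setredb$ is by definition finite, this is a sound measure. First I would note that \thelem~\ref{rsl:development-preserves-leq} guarantees $a \leq \reda$, so the projection $\subtat{\reda}{a}$ is well-defined in the sense of Definition~\ref{dfn:projection-redseq}. In the base case $\reda = \emptyred{t}$: completeness forces $\residus{\setredb}{\emptyred{t}} = \setredb = \emptyset_t$, hence $\subtat{\setredb}{a} = \emptyset_{\subtat{t}{a}}$ and $\subtat{\reda}{a} = \emptyred{\subtat{t}{a}}$, which is trivially a complete development of the empty multistep.

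For the inductive step, suppose $\setredb \neq \emptyset_t$ and write $\reda = \ppcstep{c}; \reda'$. Since $\reda$ is a development, its first step lies in $\residus{\setredb}{\emptyred{t}} = \setredb$, so $\ppcstep{c} \in \setredb$; by $a \leq \setredb$ its position has the form $ab$, and I write $\ppcstep{c} = \ppcstep{ab}$. By the characterisation of complete developments recalled in Sec.~\ref{subsec:ReductionSequencesAndDevelopments}, $\reda' \develops \residus{\setredb}{\ppcstep{ab}}$. Now \thelem~\ref{rsl:setred-residual-preserves-leq} yields $a \leq \residus{\setredb}{\ppcstep{ab}}$, so the induction hypothesis applies to $\reda'$, giving $\subtat{\reda'}{a} \develops \subtat{(\residus{\setredb}{\ppcstep{ab}})}{a}$.

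The two remaining ingredients are both essentially definitional. By \thelem~\ref{rsl:proj-redexset-compatible-residuals} the target multistep is rewritten as $\subtat{(\residus{\setredb}{\ppcstep{ab}})}{a} = \residus{\subtatnarrow{\setredb}{a}}{\ppcstep{b}}$, so in fact $\subtat{\reda'}{a} \develops \residus{\subtatnarrow{\setredb}{a}}{\ppcstep{b}}$. Moreover $\ppcstep{ab} \in \setredb$ gives $\ppcstep{b} \in \subtat{\setredb}{a}$ by Definition~\ref{dfn:projection-mstep}, and the definition of the projection of a reduction sequence gives $\subtat{\reda}{a} = \ppcstep{b}; \subtat{\reda'}{a}$. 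Appealing to the prepend principle stated in Sec.~\ref{subsec:ReductionSequencesAndDevelopments} (that $a \in \setreda$ together with $\reda \develops \residus{\setreda}{a}$ implies $a; \reda \develops \setreda$), I conclude $\ppcstep{b}; \subtat{\reda'}{a} \develops \subtat{\setredb}{a}$, i.e. $\subtat{\reda}{a} \develops \subtat{\setredb}{a}$.

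I expect no deep obstacle here: the real content has been isolated into the auxiliary projection lemmas, and this statement merely threads them together. The one point demanding care is the bookkeeping that glues the first projected step $\ppcstep{b}$ back onto the projected tail and matches it against $\subtat{\setredb}{a}$, that is, verifying that the residual-of-projection coincides with the projection-of-residual via \thelem~\ref{rsl:proj-redexset-compatible-residuals}, so that the inductive hypothesis about a complete development of $\subtat{(\residus{\setredb}{\ppcstep{ab}})}{a}$ may be read as one about $\residus{\subtatnarrow{\setredb}{a}}{\ppcstep{b}}$.
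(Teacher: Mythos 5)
Your proof is correct and takes essentially the same route as the paper's: decompose $\reda$ as $\ppcstep{ab};\reda'$ with $\reda' \develops \residus{\setredb}{\ppcstep{ab}}$, apply the induction hypothesis to the tail, and use \thelem~\ref{rsl:proj-redexset-compatible-residuals} to identify $\subtat{(\residus{\setredb}{\ppcstep{ab}})}{a}$ with $\residus{\subtat{\setredb}{a}}{\ppcstep{b}}$ before reassembling. The only differences are cosmetic: you induct on $\rlength{\reda}$ where the paper inducts on $\depth{\setredb}$, and you make explicit the appeals to \thelem~\ref{rsl:development-preserves-leq} and \thelem~\ref{rsl:setred-residual-preserves-leq} (well-definedness of the projection, and preservation of the prefix condition under residuals, needed for the induction hypothesis) that the paper's proof leaves implicit.
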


\begin{proof}
By induction on $\depth{\setredb}$. Let $t = \rsrc{\setredb}$. 
If $\setredb = \emptyset_{t}$ then observing $\reda = \nil_t$ suffices to conclude.
Otherwise $\reda = \ppcstep{ab} ; \reda'$ where $\reda' \develops \residus{\setredb}{\ppcstep{ab}}$.
In this case, $\subtat{\reda}{a} = \ppcstep{b} ; \subtat{\reda'}{a}$. \ih\ yields $\subtat{\reda'}{a} \,\develops \subtat{(\residus{\setredb}{\ppcstep{ab}})}{a}$. In turn, \thelem~\ref{rsl:proj-redexset-compatible-residuals} implies $\subtat{(\residus{\setredb}{\ppcstep{ab}})}{a} = \residus{\subtat{\setredb}{a}}{\ppcstep{b}}$. Hence $\subtat{\reda}{a} \develops \subtat{\setredb}{a}$.
\end{proof}

\medskip
We verify that given a \tredexset\ $t \mstep{\setredb} t'$ s.t. $\setredb$ preserves $a$, it is only the \domintext\ part of $\setredb$ that actually modifies $\subtat{t}{a}$; \confer\ \thelem~\ref{rsl:subtat-depends-only-on-dominated-part}. 

\begin{lemma}
\label{rsl:free-domin-invariant-residuals}
Let $a, \setredb$ such that $\setredb$ preserves $a$, and $\ppcstep{b} \in \setredb$. Then $\residus{\setredb}{\ppcstep{b}}$ preserves $a$. Moreover $\thefreewrt{\residus{\setredb}{\ppcstep{b}}}{a} = \residus{\thefreewrt{\setredb}{a}}{\ppcstep{b}}$ and $\thedominwrt{\residus{\setredb}{\ppcstep{b}}}{a} = \residus{\thedominwrt{\setredb}{a}}{\ppcstep{b}}$.
\end{lemma}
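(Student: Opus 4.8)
The plan is to reduce the whole statement to a simple dichotomy about how the residual relation of \theppc\ acts on positions. First I would record the two elementary position facts that the argument rests on: (a) if $a \disj b$ and $b \leq c$, then $a \disj c$ (disjointness from a position propagates to all its extensions); and (b) the residual relation never moves a step to a position unrelated to where it started. Concretely, inspecting the three clauses of the residual definition on page~\pageref{dfn:ppc-residual} with $\ppcstepb$ playing the role of the contracted step, I claim that for any $\ppcstepc \in \setredb$ and any residual $\ppcstepc' = \pair{t'}{c'} \in \residus{\ppcstepc}{\ppcstepb}$, exactly one of two situations occurs: either $b \not\leq c$, in which case $c' = c$; or $b \leq c$, in which case $c'$ has the form $bn$ or $bkn$, so that $b \leq c'$. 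The remaining configurations ($c = b$, $c$ inside the pattern of the contracted redex, or $b \leq c$ together with $\matchOpth{u}{p} = \fail$) yield an empty residual set, so they contribute nothing and need not be considered.

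With this dichotomy in hand, the lemma follows from a position-only case analysis. Since $\setredb$ preserves $a$ and $\ppcstepb \in \setredb$, we have $b \not< a$, hence $a \leq b$ or $a \disj b$. Fix $\ppcstepc \in \setredb$ and a residual $\ppcstepc'$; preservation also gives $c \not< a$, so $a \leq c$ or $a \disj c$. If $b \not\leq c$, then $c' = c$, so the relative position of $c'$ and $a$ is literally that of $c$ and $a$: in particular $c' \not< a$, and $\ppcstepc'$ is free (\ie\ $a \disj c'$) exactly when $\ppcstepc$ is. If instead $b \leq c$, then $b \leq c'$; combining with the two cases for $a$ versus $b$ gives: when $a \leq b$ we get $a \leq b \leq c'$, hence $a \leq c'$, and likewise $a \leq c$, so both $\ppcstepc$ and $\ppcstepc'$ are \domintext; when $a \disj b$, fact (a) yields $a \disj c'$ and $a \disj c$, so both are free. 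In every case $c' \not< a$, which proves that $\residus{\setredb}{\ppcstepb}$ preserves $a$, and moreover that the residual of a free (resp. \domintext) step is again free (resp. \domintext).

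The two displayed equalities then follow from this classification-preservation by a routine double inclusion, using that every residual in $\residus{\setredb}{\ppcstepb}$ has an ancestor in $\setredb$ (\confer\ the definition of residuals of a \tredexset). For the free part: if $\ppcstepc' \in \thefreewrt{\residus{\setredb}{\ppcstep{b}}}{a}$, then $\ppcstepc' \in \residus{\ppcstepc}{\ppcstepb}$ for some $\ppcstepc \in \setredb$ with $a \disj c'$; by the preservation just shown, $\ppcstepc$ cannot be \domintext, hence $\ppcstepc \in \thefreewrt{\setredb}{a}$ and $\ppcstepc' \in \residus{\thefreewrt{\setredb}{a}}{\ppcstep{b}}$. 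Conversely, any residual of a free step is free, which gives the reverse inclusion. The argument for $\thedominwrt{\cdot}{a}$ is symmetric. Here \axiomuse{\axAncestorUniqueness}, already established for \theppc, guarantees that the ancestor is unique, though only its existence is actually used.

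I expect no serious obstacle: the entire content is the position bookkeeping of step (b), namely verifying that the residual clauses either fix a step's position or push it below $b$. The one point requiring genuine care is to enumerate the ``erasing'' configurations explicitly and confirm they produce no residual, so that the clean dichotomy really does cover every residual that exists; once that is checked, the rest is a mechanical propagation of the prefix and disjointness relations.
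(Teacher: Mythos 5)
Your proof is correct and follows essentially the same route as the paper's: the same dichotomy read off the residual clauses (a residual either keeps its position when $b \not\leq c$, or lies below $b$), the same position case analysis showing that the free/\domintext\ classification is preserved, and the same double inclusion via ancestor existence. The only differences are cosmetic — you branch on $b$ versus $c$ first where the paper branches on $a$ versus $b$, and, as you correctly note, \axiomuse{\axAncestorUniqueness} is not actually needed, only the existence of an ancestor.
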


\begin{proof}
Take $\ppcstep{b'_1} \in \residus{\setredb}{\ppcstep{b}}$ and let $\ppcstep{b_1} \in \setredb$ such that $\residu{\ppcstep{b_1}}{\ppcstep{b}}{\ppcstep{b'_1}}$. 
Observe that either $b \leq b'_1$ (if $b < b_1$), or $b'_1 = b_1$ (if $b \not\leq b_1$).
We verify that $b'_1 \not < a$. $\setredb$ preserves $a$ implies $a \leq b$ or $a \disj b$, and analogously for $b_1$.
\begin{itemize}
\minitem 
Assume $a \leq b$. 
If $a \disj b_1$ then $b'_1 = b_1$ implying $a \disj b'_1$.
If $a \leq b_1$, then either $b'_1 = b_1$ or $b \leq b'_1$ imply $a \leq b'_1$.
\minitem
Assume $a \disj b$.
If $a \disj b_1$ then either $b'_1 = b_1$ or $b \leq b'_1$ imply $a \disj b'_1$.
If $a \leq b_1$, so that $b \disj b_1$, then $b'_1 = b_1$, implying $a \leq b'_1$.
\end{itemize}
Consequently, $\residus{\setredb}{\ppcstep{b}}$ preserves $a$. 
Furthermore, $a \disj b_1$ implies $a \disj b'_1$ and $a \leq b_1$ implies $a \leq b'_1$.
The former assertion implies $\residus{\thefreewrt{\setredb}{a}}{\ppcstep{b}} \subseteq \thefreewrt{\residus{\setredb}{\ppcstep{b}}}{a}$. 
Moreover, let $\ppcstep{b'_2} \in \thefreewrt{\residus{\setredb}{\ppcstep{b}}}{a}$ and $\ppcstep{b_2} \in \setredb$ such that $\residu{\ppcstep{b_2}}{\ppcstep{b}}{\ppcstep{b'_2}}$. Observe that $a \leq b_2$ would imply $a \leq b'_2$, therefore $\setredb$ preserves $a$ implies $a \disj b_2$, \ie\ $\ppcstep{b_2} \in \thefreewrt{\setredb}{a}$.
Therefore $\thefreewrt{\residus{\setredb}{\ppcstep{b}}}{a} \subseteq \residus{\thefreewrt{\setredb}{a}}{\ppcstep{b}}$, so that we obtain $\thefreewrt{\residus{\setredb}{\ppcstep{b}}}{a} = \residus{\thefreewrt{\setredb}{a}}{\ppcstep{b}}$. An analogous argument on the \domintext\ parts allows to conclude.
\end{proof}

\begin{lemma}
\label{rsl:subtat-depends-only-on-dominated-part}
Let $\setredb \in \ROccur{t}$ and assume  $\setredb$ preserves $a$ and $t \mstep{\thedominwrt{\setredb}{a}} t'' \mstep{\residus{\thefreewrt{\setredb}{a}}{\thedominwrt{\setredb}{a}}} t'$.
Then $\subtat{t'}{a} = \subtat{t''}{a}$.
\end{lemma}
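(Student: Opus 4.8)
The key observation is that the residual set $\residus{\thefreewrt{\setredb}{a}}{\thedominwrt{\setredb}{a}}$ consists exclusively of steps whose position is disjoint from $a$, and that contracting a multistep all of whose steps lie disjoint from $a$ leaves the subterm at $a$ untouched. The plan is to isolate these two facts and then combine them.

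First I would record the elementary fact that if $t \sstepx{\ppcstep{c}} t'$ with $c \disj a$, then $\subtat{t'}{a} = \subtat{t}{a}$: the contraction replaces the subterm at position $c$, and since $c \disj a$ this replacement does not affect the subterm at $a$, a standard property of replacement at disjoint positions. I would then lift this to multisteps: if $\setredc$ is a multistep with $c \disj a$ for every $\ppcstep{c} \in \setredc$ and $t \mstep{\setredc} t'$, then $\subtat{t'}{a} = \subtat{t}{a}$. This is proved by induction on $\depth{\setredc}$. For the inductive step, pick $\ppcstep{c} \in \setredc$, write $t \sstepx{\ppcstep{c}} t_0 \mstep{\residus{\setredc}{\ppcstep{c}}} t'$; the single-step fact gives $\subtat{t_0}{a} = \subtat{t}{a}$, and it remains to check that every step in $\residus{\setredc}{\ppcstep{c}}$ is again disjoint from $a$ so that the induction hypothesis applies. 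Inspecting the three residual cases for \theppc, a residual $\ppcstep{c_0'}$ of some $\ppcstep{c_0} \in \setredc$ after $\ppcstep{c}$ either keeps its position ($c_0' = c_0 \disj a$) or satisfies $c \leq c_0'$; in the latter case $c \disj a$ together with the observation ``$a \disj b$ and $a \leq c$ imply $c \disj b$'' yields $c_0' \disj a$. Hence $\residus{\setredc}{\ppcstep{c}}$ is disjoint from $a$ and the induction goes through.

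Second I would show that $\residus{\thefreewrt{\setredb}{a}}{\thedominwrt{\setredb}{a}}$ is disjoint from $a$. By definition every $\ppcstep{d} \in \thedominwrt{\setredb}{a}$ satisfies $a \leq d$, and iterating \thelem~\ref{rsl:free-domin-invariant-residuals} along a complete development of $\thedominwrt{\setredb}{a}$ shows that every step $\ppcstep{d}$ occurring in that development still satisfies $a \leq d$ (its free part with respect to $a$ stays empty). Now trace a step $\ppcstep{c} \in \thefreewrt{\setredb}{a}$, with $c \disj a$, through this development: at each single step $\ppcstep{d}$ of the development we have $a \leq d$, so $c \disj a$ and $a \leq d$ give $d \disj c$, whence $d \not\leq c$ and the first residual case applies, leaving $\ppcstep{c}$ unchanged and still disjoint from $a$. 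Thus every element of $\residus{\thefreewrt{\setredb}{a}}{\thedominwrt{\setredb}{a}}$ is disjoint from $a$.

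Combining the two facts, applying the multistep invariance lemma to $t'' \mstep{\residus{\thefreewrt{\setredb}{a}}{\thedominwrt{\setredb}{a}}} t'$ yields $\subtat{t'}{a} = \subtat{t''}{a}$, as required. The main obstacle is the bookkeeping in the first induction: one must verify that nesting among free steps (two steps both disjoint from $a$, one embedding the other) does not produce residuals escaping the ``disjoint from $a$'' region, which is exactly what the case analysis on the residual definition together with the prefix/disjointness observation secures. The complementary point, that the residuals of the dominated part stay with $a$ as a prefix (needed for the second fact), requires equal care and is handled by the repeated use of \thelem~\ref{rsl:free-domin-invariant-residuals}.
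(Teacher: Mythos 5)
Your proof is correct and takes essentially the same route as the paper's: the paper likewise argues, by induction based on \thelem~\ref{rsl:free-domin-invariant-residuals}, that every step in $\residus{\thefreewrt{\setredb}{a}}{\thedominwrt{\setredb}{a}}$ is disjoint from $a$, and then concludes from the fact that contracting steps disjoint from $a$ leaves $\subtat{t''}{a}$ untouched. Your write-up simply makes explicit the two inductions (the residual case analysis showing disjointness is preserved along a development, and the tracing of free steps through the development of $\thedominwrt{\setredb}{a}$) that the paper compresses into ``a simple induction'' and ``a straightforward analysis''.
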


\begin{proof}
A simple induction based on \thelem~\ref{rsl:free-domin-invariant-residuals} yields that $b \disj a$ if $\ppcstep{b} \in \residus{\thefreewrt{\setredb}{a}}{\thedominwrt{\setredb}{a}}$. Therefore, a straightforward analysis allows to conclude.
\end{proof}

\medskip
\theLem~\ref{rsl:subtat-depends-only-on-dominated-part} allows to verify that targets and residuals are compatible with the projection $\proj{\setredb}{a}$.

\begin{lemma}
\label{rsl:mstep-projection-good-behavior}
Let $t \mstep{\setredb} t'$ and assume $\setredb$ preserves $a$.
Then: 
\begin{enumerate}[(i)]
\minitem 
\label{it:mstep-projection-good-tgt}
$\subtat{t}{a} \mstep{\subtatwide{\setredb}{a}} \ \subtat{t'}{a}$.
\minitem 
\label{it:mstep-projection-good-residuals}
If $\ppcstep{ac} \in \ROccur{t}$, so that $\ppcstep{c} \in \ROccur{\subtat{t}{a}}$, then $\residu{\ppcstep{ac}}{\setredb}{\ppcstep{d}}$ iff $d = a d'$ and $\residu{\ppcstep{c}}{\subtat{\setredb}{a}}{\ppcstep{d'}}$.
\end{enumerate}
\end{lemma}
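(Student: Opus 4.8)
The plan is to reduce the hypothesis ``$\setredb$ preserves $a$'' to the already-treated situation in which every step lies at a position $\geq a$, by splitting $\setredb = \thedominwrt{\setredb}{a} \uplus \thefreewrt{\setredb}{a}$ (\thedfn~\ref{dfn:free-domin-parts}) and exploiting that, by \theprop~\ref{rsl:SOplus}, both the target $t'$ and the residual relation of $\setredb$ may be computed from \emph{any} complete development. I would therefore fix the complete development that contracts the dominated part first: write $\reda = \reda_E ; \reda_F$ with $\reda_E \develops \thedominwrt{\setredb}{a}$, $t \sred{\reda_E} t''$, and $\reda_F \develops \residus{\thefreewrt{\setredb}{a}}{\thedominwrt{\setredb}{a}}$, $t'' \sred{\reda_F} t'$. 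Since $a \leq \thedominwrt{\setredb}{a}$ by definition, \thelem~\ref{rsl:development-preserves-leq} gives $a \leq \reda_E$, while the disjointness fact established inside the proof of \thelem~\ref{rsl:subtat-depends-only-on-dominated-part} (a consequence of \thelem~\ref{rsl:free-domin-invariant-residuals}) guarantees that every step of $\reda_F$ is disjoint from $a$.

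For item~(\ref{it:mstep-projection-good-tgt}) I would argue as follows. Because $a \leq \thedominwrt{\setredb}{a}$, \thelem~\ref{rsl:proj-redexset-compatible-development} yields $\subtat{\reda_E}{a} \develops \subtat{\thedominwrt{\setredb}{a}}{a} = \subtatwide{\setredb}{a}$, the last equality being \thedfn~\ref{dfn:projection-mstep}; and \thelem~\ref{rsl:proj-redseq-compatible-target} (applicable since $a \leq \reda_E$) gives $\subtat{t}{a} \sred{\subtatwide{\reda_E}{a}} \subtat{t''}{a}$. Finally \thelem~\ref{rsl:subtat-depends-only-on-dominated-part} collapses $\subtat{t''}{a} = \subtat{t'}{a}$, so $\subtatwide{\reda_E}{a}$ is a complete development of $\subtatwide{\setredb}{a}$ running from $\subtat{t}{a}$ to $\subtat{t'}{a}$, which is exactly $\subtat{t}{a} \mstep{\subtatwide{\setredb}{a}} \subtat{t'}{a}$.

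For item~(\ref{it:mstep-projection-good-residuals}) I would trace $\ppcstep{ac}$ along the same split development, using $\residus{\ppcstep{ac}}{\setredb} = \residus{(\residus{\ppcstep{ac}}{\reda_E})}{\reda_F}$. The first phase is handled by \thelem~\ref{rsl:proj-redseq-compatible-residuals}: since $a \leq \reda_E$ and $\ppcstep{ac} \in \ROccur{t}$, one has $\residu{\ppcstep{ac}}{\reda_E}{\ppcstep{d}}$ iff $d = a d_1$ and $\residu{\ppcstep{c}}{\subtatwide{\reda_E}{a}}{\ppcstep{d_1}}$, and by \theprop~\ref{rsl:SOplus} (residuals after a multistep are development-independent) $\residus{\ppcstep{c}}{\subtatwide{\reda_E}{a}} = \residus{\ppcstep{c}}{\subtatwide{\setredb}{a}}$. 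In particular every residual in $\residus{\ppcstep{ac}}{\reda_E}$ sits at a position $a d_1 \geq a$. The remaining phase is to see that $\reda_F$ leaves these residuals untouched: each step $\ppcstep{f}$ of $\reda_F$ satisfies $f \disj a$, hence $a d_1 \disj f$ (using the prefix/disjointness remark of \thesec~\ref{sec:ppc-basic-elements}, that $a \disj f$ and $a \leq a d_1$ imply $a d_1 \disj f$), so $f \not\leq a d_1$ and the first clause of the residual definition for \theppc\ yields a unique residual at the \emph{same} position; iterating over $\reda_F$ preserves the position $a d_1$ while moving the step into $t'$. Combining the two phases, and reading the preserved positions back as steps of $\subtat{t'}{a} = \subtat{t''}{a}$, gives the desired equivalence $\residu{\ppcstep{ac}}{\setredb}{\ppcstep{d}}$ iff $d = a d'$ and $\residu{\ppcstep{c}}{\subtatwide{\setredb}{a}}{\ppcstep{d'}}$.

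The main obstacle I anticipate is the bookkeeping around the free part $\reda_F$: one must confirm that its steps are genuinely disjoint from $a$ (so that neither $\subtat{t'}{a}$ in (i) nor the residuals at positions $\geq a$ in (ii) are disturbed), and that tracing a dominated residual through a disjoint step reduces exactly to the trivial first clause of the residual relation. Both facts rest on \thelem~\ref{rsl:free-domin-invariant-residuals} and \thelem~\ref{rsl:subtat-depends-only-on-dominated-part}; once they are in hand, everything else is an invocation of the projection lemmas for the dominated part together with \theprop~\ref{rsl:SOplus} to license the convenient choice of development.
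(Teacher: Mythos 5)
Your proposal is correct and follows essentially the same route as the paper's proof: the same split of $\setredb$ into $\thedominwrt{\setredb}{a}$ and $\thefreewrt{\setredb}{a}$, the same ``dominated part first'' development, and the same invocations of \thelem~\ref{rsl:development-preserves-leq}, \thelem~\ref{rsl:proj-redexset-compatible-development}, \thelem~\ref{rsl:proj-redseq-compatible-target}, \thelem~\ref{rsl:proj-redseq-compatible-residuals} and \thelem~\ref{rsl:subtat-depends-only-on-dominated-part}, with the disjointness-from-$a$ of the postponed free part handling the second phase of item~(ii). The only cosmetic difference is that you cite \theprop~\ref{rsl:SOplus} explicitly to license the choice of development, which the paper leaves implicit.
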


\begin{proof}
Let $t \mstep{\thedominwrt{\setredb}{a}} t'' \mstep{\residus{\thefreewrt{\setredb}{a}}{\thedominwrt{\setredb}{a}}} t'$. 
Let $\reda$ such that $\reda \develops \thedominwrt{\setredb}{a}$, and $\redb \develops \residus{\thefreewrt{\setredb}{a}}{\thedominwrt{\setredb}{a}}$. Observe $t \sredx{\reda} t'' \sredx{\redb} t'$.
Moreover, $a \leq \reda$ and $\subtat{\reda}{a} \develops \subtat{\thedominwrt{\setredb}{a}}{a} = \subtat{\setredb}{a}$, by \thelem~\ref{rsl:development-preserves-leq} and \theLem~\ref{rsl:proj-redexset-compatible-development} respectively.  
On the other hand, $b \disj a$ for all $\ppcstep{b} \in \residus{\thefreewrt{\setredb}{a}}{\thedominwrt{\setredb}{a}}$ implies $a \disj \redel{\redb}{i}$ for all $i$. Notice that $a \disj b \land a \disj c$ implies $a \disj d$ whenever $\residu{\ppcstep{b}}{\ppcstep{c}}{\ppcstep{d}}$.

To prove item~(\ref{it:mstep-projection-good-tgt}), it suffices to observe that \thelem~\ref{rsl:proj-redseq-compatible-target} implies $\subtat{t}{a} \sredx{\subtatwide{\reda}{a}} \subtat{t''}{a} = \subtat{t'}{a}$; \confer\ \thelem~\ref{rsl:subtat-depends-only-on-dominated-part}.  

We prove item~(\ref{it:mstep-projection-good-residuals}), by double implication.

\noindent
$\mathbf{\Longrightarrow )} \ \ $ 
Let $\residu{\ppcstep{ac}}{\setredb}{\ppcstep{d}}$. Then $\residu{\ppcstep{ac}}{\reda}{\ppcstep{e}}$ and $\residu{\ppcstep{e}}{\redb}{\ppcstep{d}}$ for some $\ppcstep{e}$.
\theLem~\ref{rsl:proj-redseq-compatible-residuals} implies $e = ae'$ and $\residu{\ppcstep{c}}{\subtat{\reda}{a}}{\ppcstep{e'}}$. In turn, $a \disj \redel{\redb}{i}$ for all $i$ and $a \leq e$ imply $d = e$, \ie\ $d = ad'$ where $d' = e'$, and $\residu{\ppcstep{c}}{\subtat{\reda}{a}}{\ppcstep{d'}}$.
We conclude by recalling that $\subtat{\reda}{a} \develops \subtat{\setredb}{a}$.

\noindent
$\mathbf{\Longleftarrow )} \ \ $ 
Let $\residu{\ppcstep{c}}{\subtat{\setredb}{a}}{\ppcstep{d'}}$, and $d = ad'$. Then $\residu{\ppcstep{c}}{\subtat{\reda}{a}}{\ppcstep{d'}}$. 
\theLem~\ref{rsl:proj-redseq-compatible-residuals} implies 
$\residu{\ppcstep{ac}}{\reda}{\pair{t''}{d}}$. 
In turn, $a \disj \redel{\redb}{i}$ for all $i$ and $a \leq d$ imply $\residu{\pair{t''}{d}}{\redb}{\pair{t'}{d}}$.
Hence $\residu{\ppcstep{ac}}{\setredb}{\ppcstep{d}}$
\end{proof}

\medskip
Now consider a \emph{\mredseq} $\mreda$ which preserves some position $a$. For any $n < \rlength{\mreda}$, \thelem~\ref{rsl:mstep-projection-good-behavior}:(\ref{it:mstep-projection-good-tgt}) implies that 
$\rsrc{\subtat{\mredel{\mreda}{n+1}}{a}} = \subtat{\rsrc{\mredel{\mreda}{n+1}}}{a} = \rtgt{\subtat{\mredel{\mreda}{n}}{a}}$.
This implies that the definition of the projection of $\mreda$ over $a$ is well-defined.

\medskip
We finish this section by giving a proof of \thelem~\ref{rsl:mred-projection-good-behavior}. We recall the statement:

\medskip\noindent
Let $t \mred{\mreda} t'$ and assume $\mreda$ preserves $a$.
Then: 
\begin{enumerate}[(i)]
\minitem 
$\subtat{t}{a} \mred{\subtatwide{\mreda}{a}} \ \subtat{t'}{a}$.
\minitem 
If $\ppcstep{ac} \in \ROccur{t}$, then $\residu{\ppcstep{ac}}{\mreda}{\ppcstep{d}}$ iff $d = a d_1$ and $\residu{\ppcstep{c}}{\subtat{\mreda}{a}}{\ppcstep{d_1}}$.
\minitem 
If $\ppcstep{ac} \in \ROccur{t}$, then $\mreda$ uses $\ppcstep{ac}$ iff $\subtat{\mreda}{a}$ uses $\ppcstep{c}$.
\end{enumerate}

\begin{proof}
To prove item~(\ref{it:mred-projection-good-tgt}) a simple induction on $\rlength{\mreda}$, resorting on \thelem~\ref{rsl:mstep-projection-good-behavior}:(\ref{it:mstep-projection-good-tgt}), suffices.

Item~(\ref{it:mred-projection-good-residuals}) admits an argument similar to the one used to prove \thelem~\ref{rsl:proj-redseq-compatible-residuals}, resorting on \thelem~\ref{rsl:mstep-projection-good-behavior}:(\ref{it:mstep-projection-good-tgt}) instead of \thelem~\ref{rsl:proj-step-compatible-residuals}.

We prove item~(\ref{it:mred-projection-good-uses}).
Assume $\subtat{\mreda}{a}$ uses $\ppcstep{c}$, \ie\ $\mreda = \mreda_1 ; \setredd ; \mreda_2$ and there exists some $\ppcstep{d} \in \subtat{\setredd}{a} \cap \ \residus{\ppcstep{c}}{\subtat{\mreda_1}{a}}$.
Item~(\ref{it:mred-projection-good-residuals}) implies $\residu{\ppcstep{ac}}{\mreda_1}{\ppcstep{ad}}$, and moreover $\ppcstep{d} \in \subtat{\setredd}{a}$ implies $\ppcstep{ad} \in \setredd$. Hence $\mreda$ uses $\ppcstep{ac}$.

Assume $\mreda$ uses $\ppcstep{ac}$, \ie\ $\mreda = \mreda_1 ; \setredd ; \mreda_2$ and there exists some $\ppcstep{d} \in \setredd \cap \ \residus{\ppcstep{ac}}{\mreda_1}$.
Item~(\ref{it:mred-projection-good-residuals}) implies $d = ad'$, so that $\ppcstep{d'} \in \subtat{\setredd}{a}$, and $\residu{\ppcstep{c}}{\subtat{\mreda_1}{a}}{\ppcstep{d'}}$. On the other hand, $\subtat{\mreda}{a} = \subtat{\mreda_1}{a} ; \subtat{\setredd}{a} ; \subtat{\mreda_2}{a}$. Hence $\subtat{\mreda}{a}$ uses $\ppcstep{c}$.
\end{proof}

}

\ignore{
\section{Appendix -- extended strategies}
\label{sec:extended-strategies}
In this section we will state a result about normalisation of a given reduction strategy \strt, given the existence for the same ARS of a normalising strategy \strs\ such that \strt\ \emph{extends} \strs, \ie, the set of \tredexes\ selected by \strt\ is always a superset of that selected by \strs\ for the same term. Formally:

\begin{definition}
\label{dfn:extends-strategy}
Let \strs, \strt\ be two reduction strategies for an ARS $\arsa$. We will say that \strt\ \emph{extends} \strs, notation $\strs \subseteq \strt$, iff $\strs(t) \subseteq \strt(t)$ for any term $t$ not in normal form.
\end{definition}

Some technicalities about \redseqs\ which \emph{follow} a reduction strategy.

\begin{definition}
\label{dfn:redseq-follows-strategy}
Let $t \mred{\mreda} u$ be a \mredseq\ and \strs\ a reduction strategy. We will say that $\mreda$ \emph{follows} \strs\ iff either $\mreda = \nil_{t}$ or $\mreda = \strs(t); \mreda'$ where $\mreda'$ follows \strs.
\end{definition}

\begin{lemma}
\label{rsl:follows-basic-facts}
Let \strs\ be a reduction strategy. Then: 
\begin{enumerate}[a)]
\item \label{rsl:follows-unique}
for all $t$ term and $n \in \Nat$, there is at most one \mredseq\ $\mreda$ such that $\mreda$ follows \strs, $\rsrc{\mreda} = t$ and $\rlength{\mreda} = n$.
\item \label{rsl:follows-normal-form}
for all $t$ term and $n \in \Nat$, if there is no \mredseq\ $\mreda$ such that $\mreda$ follows \strs, $\rsrc{\mreda} = t$ and $\rlength{\mreda} = n$, then there exists a \mredseq\ $\mreda'$ such that $\rlength{\mreda'} < n$, $t \mred{\mreda'} u$ and $u$ is a normal form.
\item \label{rsl:follows-parts}
If $\mreda_1 ; \mreda_2$ follows \strs, then both $\mreda_1$ and $\mreda_2$ follow \strs.
\item \label{rsl:follows-completion}
If $t \mred{\mreda} s$ and $t \mred{\mredb} u$ follow \strs\ and $\rlength{\mreda} \leq \rlength{\mredb}$, then $s \mred{\mredc} u$ where $\mredb = \mreda ; \mredc$.
\end{enumerate}
\end{lemma}

\begin{proof}
Items (\ref{rsl:follows-unique}) and (\ref{rsl:follows-normal-form}) 
admit simple proofs by induction on $n$. 
Item (\ref{rsl:follows-parts}) can be easily proved by induction on $\rlength{\mreda_1}$.
We prove item (\ref{rsl:follows-completion}). Consider $\mreda', \mredc'$ such that $\mredb = \mreda' ; \mredc'$ and $\rlength{\mreda'} = \rlength{\mreda}$, say $t \mred{\mreda'} s' \mred{\mredc'} u$. Item (\ref{rsl:follows-parts}) implies that both $\mreda'$ and $\mredc'$ follow \strs, in turn item (\ref{rsl:follows-unique}) implies $\mreda' = \mreda$, and therefore $s' = s$. We conclude by taking $\mredc \eqdef \mredc'$.
\end{proof}

The aim of the following material is to prove that any strategy $\strt$ extending a normalising strategy $\strs$ is also normalising.
The forthcoming proof requires the strategy $\strs$ to enjoy an additional property, namely:

\begin{definition}
A reduction strategy $\strs$ is said \emph{\tRecatch} iff for any term $t$ and $\setreda \subseteq \ROccur{t}$, $t \mstep{\strs(t)} s$ and $t \mstep{\setreda} u \mstep{\strs(u)} u'$ imply $s \mred{\mredc} u'$ for some \mredseq\ $\mredc$. 
Graphically: \\[2mm]
\minicenter{$
\xymatrix@R=20pt@C=60pt{
  t \arMulti{d}_{\strs(t)} \arMulti{r}^{\setreda} & 
	u \arMulti{d}^{\strs(u)}  \\ 
  s \arMultiOp{r}{-->>}_{\mredc} & u'
}
$}
\end{definition}

Further ``recatching'' properties of \tRecatchStrs\ admit simple, abstract proofs.

\begin{lemma}
\label{rsl:recatch-mredseq}
Let \strs\ be a \tRecatchStr, $t \mred{\mreda} u$, $t \mstep{\strs(t)} s$ and $u \mstep{\strs(u)} u'$. Then $s \mred{\mredc} u'$ for some $\mredc$. 
Graphically: \\[2mm]
\minicenter{$
\xymatrix@R=20pt@C=60pt{
  t \arMulti{d}_{\strs(t)} \arMultiOp{r}{>>}^{\mreda} & 
	u \arMulti{d}^{\strs(u)}  \\ 
  s \arMultiOp{r}{-->>}_{\mredc} & u'
}
$}
\end{lemma}

\begin{proof}
We proceed by induction on $\rlength{\mreda}$.
If $\mreda = \nil_t$, then $u = t$ and therefore $u' = s$, thus it suffices to take $\mredc \eqdef \nil_s$.

Assume $\mreda = \setreda ; \mreda'$, say $t \mstep{\setreda} u_1 \mred{\mreda'} u$ and $u_1 \mstep{\strs(u_1)} u'_1$.
We can build the following diagram: \\[2mm]
\minicenter{$
\xymatrix@R=20pt@C=60pt{
  t \arMulti{d}_{\strs(t)} \arMulti{r}^{\setreda} & 
  u_1 \arMulti{d}_{\strs(u_1)} \arMultiOp{r}{>>}^{\mreda'} & 
	u \arMulti{d}^{\strs(u)}  \\ 
  s \arMultiOp{r}{>>}_{\mredc_1} &
  u'_1 \arMultiOp{r}{>>}_{\mredc_2} & u'
}
$} \\
where $\strs$ being \tRecatch\ and \ih\ entail the existence of $\mredc_1$ and $\mredc_2$ respectively. Thus it suffices to take $\mredc \eqdef \mredc_1 ; \mredc_2$.
\end{proof}

\begin{lemma}
\label{rsl:recatch-mredseq-mredseq}
Let \strs\ be \tRecatchStr, $t \mred{\mreda}s$ where $\mreda$ follows \strs, $t \mred{\mredc} u$, and $u \mred{\mredb} u'$ where $\mredb$ follows \strs, and either $\rlength{\mredb} = \rlength{\mreda}$ or $\rlength{\mredb} < \rlength{\mreda}$ and $u'$ is a normal form.
Then $s \mred{\mredc'} u'$ for some $\mredc'$. 
Graphically: \\[2mm]
\minicenter{$
\xymatrix@R=30pt@C=60pt{
  t \arMultiOp{d}{>>}_{\mreda} \arMultiOp{r}{>>}^{\mredc} & 
	u \arMultiOp{d}{>>}^{\ \mredb}  \\ 
  s \arMultiOp{r}{-->>}_{\mredc'} & u'
}
$}
\end{lemma}

\begin{proof}
We proceed by induction on $\rlength{\mreda}$.

Assume $\mreda = \nil_t$, implying $\mredb = \nil_u$ and therefore $s = t$ and $u' = u$. In this case it suffices to take $\mredc' \eqdef \mredc$.

Assume that $u'$ is a normal form. In this case, the property CR entails the existence of $\mredc'$, thus we conclude.

Assume that $\mreda = \strs(t) ; \mreda'$ and that $u'$ is not a normal form, implying that $\rlength{\mredb} = \rlength{\mreda}$. Observe that $\mredb = \strs(u) ; \mredb'$ where $\rlength{\mredb'} = \rlength{\mreda'}$. 
In this case, we can build the following diagram:\\[2mm]
\minicenter{$
\xymatrix@R=30pt@C=60pt{
  t \arMulti{d}_{\strs(t)} \arMultiOp{r}{>>}^{\mredc} & 
	u \arMulti{d}^{\ \strs(u)}  \\ 
  t_1 \arMultiOp{d}{>>}_{\mreda'} \arMultiOp{r}{>>}^{\mredc_1} & 
	u_1 \arMultiOp{d}{>>}^{\ \mredb'}  \\ 
  s \arMultiOp{r}{>>}_{\mredc'} & u'
}
$} \\
for some terms $t_1$ and $u_1$. \theLem~\ref{rsl:recatch-mredseq} and \ih\ entail the existence of $\mredc_1$ and $\mredc'$ respectively. Thus we conclude.
\end{proof}

Given the just stated properties, we can prove that any strategy extending a normalising \emph{and \tRecatch} strategy, is normalising as well.

\begin{lemma}
\label{rsl:extends-then-catch-nf}
Let $\strs$ and $\strt$ reduction strategies, such that $\strs$ is \tRecatch\ and $\strs \subseteq \strt$. Let $t \mred{\mreda} u$ where $\mreda$ follows $\strs$ and $u$ is a normal form. 
Then there exists a \mredseq\ $\mredc$ such that $\mredc$ follows $\strt$, $t \mred{\mredc} u$, and $\rlength{\mredc} \leq \rlength{\mreda}$.
\end{lemma}

\begin{proof}
We proceed by induction in $\rlength{\mreda}$.

Assume that $\mreda = \nil_t$, implying that $t = u$. In this case it suffices to take $\mredc \eqdef \nil_t$.

Assume that $\mreda = \strs(t) ; \mreda'$. 
In this case, $\strs(t) \subseteq \strt(t)$ implies that $\residus{\strs(t)}{\strt(t)} = \emptyset$, and therefore \theprop~\ref{rsl:SOredexset} implies that the targets of $\strs(t) ; \residus{\strt(t)}{\strs(t)}$ and $\strt(t)$ coincide. 
Hence we can build the following diagram: \\[4pt]
\minicenter{$
\xymatrix@R=20pt@C=60pt{
  & 
	u_1 \arMultiOp{rr}{>>}^{\mreda'} \arMulti{dd}^{\ \residus{\strt(t)}{\strs(t)}} & &
	u \\
	t \arMulti{ru}^{\strs(t)} \arMulti{rd}_{\strt(t)} \\
	&
	u'_1 \arMultiOp{rr}{>>}^{\mreda''} & &
	u'
}
$} \\[4pt]
where $\mreda''$ is the only (\confer\ \thelem~\ref{rsl:follows-basic-facts}) \mredseq\ verifying $\rsrc{\mreda''} = u'_1$, $\mreda''$ follows \strs, and either $\rlength{\mreda''} = \rlength{\mreda'}$ or $\rlength{\mreda''} < \rlength{\mreda'}$ and $u'$ is a normal form.
Then \thelem~\ref{rsl:recatch-mredseq-mredseq} implies that $u \mred{} u'$, implying that $u' = u$ since $u$ is a normal form.
In turn, \ih\ can be applied on $u'_1 \mred{\mreda''} u$, yielding the existence of a \mredseq\ $\mredc'$ such that $\mredc'$ follows $\strt$, $\rlength{\mredc'} \leq \rlength{\mreda''} \leq \rlength{\mreda'}$, and $u'_1 \mred{\mredc'} u$.
Therefore we can complete the preceding diagram as follows:
\\[4pt]
\minicenter{$
\xymatrix@R=20pt@C=60pt{
  & 
	u_1 \arMultiOp{rr}{>>}^{\mreda'} \arMulti{dd}^{\ \residus{\strt(t)}{\strs(t)}} & &
	u \ar@{=}[dd] \\
	t \arMulti{ru}^{\strs(t)} \arMulti{rd}_{\strt(t)} \\
	&
	u'_1 \arMultiOp{rr}{>>}^{\mreda''} \arMultiOp{rrd}{>>}_{\mredc'} & &
	u \ar@{=}[d] \\
	& & & u
}
$} \\[4pt]
We conclude by taking $\mredc \eqdef \strt(t) ; \mredc'$.
\end{proof}

\begin{proposition}
\label{rsl:extends-catching-normalising-then-normalising}
Let \strs, \strt\ be reduction strategies such that \strs\ is \tRecatch\ and normalising, and $\strs \subseteq \strt$. Then \strt\ is normalising.
\end{proposition}

\begin{proof}
Let $t$ be a term not being a normal form. Then \strs\ normalising implies that $t \mred{\mreda} u$ where $u$ is a normal form and $\mreda$ follows \strs.
In turn, \thelem~\ref{rsl:extends-then-catch-nf} implies $t \mred{\mredc} u$ where $\mredc$ follows \strt. Thus we conclude.
\end{proof}

\bigskip\noindent
\textbf{Why \theprop~\ref{rsl:extends-catching-normalising-then-normalising} could be interesting} \\
It gives an additional criterion for a reduction strategy to be normalising. 
Particularly, for the strategy \strs\ we proved normalising for the PPC, we know that $\strs \subseteq PO$ where $PO$ stands for parallel-outermost; \confer\ \thelem~2 in the RTA 2012 paper.
Therefore, if we can prove that \strs\ is \tRecatch, then we obtain using \theprop~\ref{rsl:extends-catching-normalising-then-normalising} that $PO$ is normalising for PPC, based on the fact that \strs\ is normalising as well.
In this way, we would provide a proof of normalisation of $PO$ for a non-sequential calculus using an approach different from the one described in~\cite{vanOostrom:1999}.

\bigskip
We notice that a generalisation of \thelem~\ref{rsl:extends-then-catch-nf} can be stated.

\begin{lemma}
\label{rsl:extends-then-catch}
Let $\strs$ and $\strt$ reduction strategies, such that $\strs$ is \tRecatch\ and $\strs \subseteq \strt$. Let $t \mred{\mreda} s$ where $\mreda$ follows $\strs$, and $t \mred{\mredc} u$ where $\mredc$ follows \strt\ and either $\rlength{\mredc} = \rlength{\mreda}$, or $\rlength{\mredc} < \rlength{\mreda}$ and $u$ is a normal form. 
Then there exists a \mredseq\ $\mredb$ such that $s \mred{\mredb} u$.
\end{lemma}

\begin{proof}
We proceed by induction on $\rlength{\mreda}$.

Assume $\mreda = \nil_t$, so that $s = t$. In this case $\mredc = \nil_t$, so that $u = t = s$. Therefore it is enoughto take $\mredb \eqdef \nil_t$.

Assume $\mreda = \strs(t) ; \mreda'$. In this case $t$ is not a normal form, implying $\mredc = \strt(t) ; \mredc'$. Observe that $\strs(t) \subseteq \strt(t)$ implies that $\residus{\strs(t)}{\strt(t)} = \emptyset$, and therefore \theprop~\ref{rsl:SOredexset} implies that the targets of $\strs(t) ; \residus{\strt(t)}{\strs(t)}$ and $\strt(t)$ coincide. 
Hence we can build the following diagram: \\[4pt]
\minicenter{$
\xymatrix@R=20pt@C=60pt{
  & 
	s_1 \arMultiOp{rr}{>>}^{\mreda'} \arMulti{dd}^{\ \residus{\strt(t)}{\strs(t)}} & &
	s \arMultiOp{dd}{>>}^{\ \mredb_1} \\
	t \arMulti{ru}^{\strs(t)} \arMulti{rd}_{\strt(t)} \\
	&
	u_1 \arMultiOp{rr}{>>}^{\mreda''} \arMultiOp{rrd}{>>}_{\mredc'} & &
	u' \\
	& & & u
}
$} \\[4pt]
where $\mreda''$ is the only (\confer\ \thelem~\ref{rsl:follows-basic-facts}) \mredseq\ verifying $\rsrc{\mreda''} = u_1$, $\mreda''$ follows \strs, and either $\rlength{\mreda''} = \rlength{\mreda'}$ or $\rlength{\mreda''} < \rlength{\mreda'}$ and $u'$ is a normal form. \theLem~\ref{rsl:recatch-mredseq-mredseq} entails the existence of $\mredb_1$.

If $u'$ is a normal form, then \thelem~\ref{rsl:extends-then-catch-nf} implies that $u_1 \mred{\mredc''} u'$ where $\mredc''$ follows \strt\ and $\rlength{\mredc''} \leq \rlength{\mreda''} \leq \rlength{\mreda'}$.
Observe that $\rlength{\mredc'} < \rlength{\mredc''} \leq \rlength{\mreda'}$ would imply $u$ to be a normal form by lemma hypotheses, and also that $u \mred{\mredd} u'$ where $\rlength{\mredd} > 0$ by \thelem~\ref{rsl:follows-basic-facts}, \ie\ a contradiction.
On the other hand, $\rlength{\mredc''} < \rlength{\mredc'}$ would imply $u' \mred{\mredd} u$ where $\rlength{\mredd} > 0$, again  by \thelem~\ref{rsl:follows-basic-facts}, contradicting the assumption that $u'$ is a normal form.
Hence $\rlength{\mredc''} = \rlength{\mredc'}$, so that \thelem~\ref{rsl:follows-basic-facts} implies $\mredc''= \mredc'$ and consequently $u' = u$. Thus we conclude by taking $\mredb \eqdef \mredb_1$.

If $u'$ is not a normal form, then $\rlength{\mreda''} = \rlength{\mreda'}$, implying that \ih\ can be applied on $u_1 \mred{\mreda''} u'$ and $u_1 \mred{\mredc'} u$, entailing $u' \mred{\mredb_2} u$ for some \mredseq\ $\mredb_2$. In this case, we conclude by taking $\mredb \eqdef \mredb_1 ; \mredb_2$.

The two cases considered, $u'$ being and not being a normal form respectively, can be depicted as follows: \\[4pt]
$\begin{array}{@{}l@{\quad}l}
\xymatrix@R=20pt@C=30pt{
  & 
	s_1 \arMultiOp{rr}{>>}^{\mreda'} \arMulti{dd}^{\ \residus{\strt(t)}{\strs(t)}} & &
	s \arMultiOp{dd}{>>}^
	{\ 
	\begin{array}{@{}l} 
	  \textnormal{\scriptsize{$\mredb_1$}} \\ 
	  \textnormal{\scriptsize{\thelem~\ref{rsl:recatch-mredseq-mredseq}}} 
	\end{array}} 
	\\
	t \arMulti{ru}^{\strs(t)} \arMulti{rd}_{\strt(t)} \\
	&
	u_1 \arMultiOp{rr}{>>}^{\mreda''} \arMultiOp{rrd}{>>}_{\mredc'} & &
	u' \ar@{=}[d]^ 
	{\ 
	\begin{array}{@{}l} 
	  \textnormal{\scriptsize{\thelem~\ref{rsl:extends-then-catch-nf}}} \\
	  \textnormal{\scriptsize{\thelem~\ref{rsl:follows-basic-facts}}} 
	\end{array}} 
	\\
	& & & u
}
& 
\xymatrix@R=20pt@C=30pt{
  & 
	s_1 \arMultiOp{rr}{>>}^{\mreda'} \arMulti{dd}^{\ \residus{\strt(t)}{\strs(t)}} & &
	s \arMultiOp{dd}{>>}^
	{\ 
	\begin{array}{@{}l} 
	  \textnormal{\scriptsize{$\mredb_1$}} \\ 
	  \textnormal{\scriptsize{\thelem~\ref{rsl:recatch-mredseq-mredseq}}} 
	\end{array}} 
	\\
	t \arMulti{ru}^{\strs(t)} \arMulti{rd}_{\strt(t)} \\
	&
	u_1 \arMultiOp{rr}{>>}^{\mreda''} \arMultiOp{rrd}{>>}_{\mredc'} & &
	u' \arMultiOp{d}{>>}^
	{\ 
	\begin{array}{@{}l} 
	  \textnormal{\scriptsize{$\mredb_2$}} \\ 
	  \textnormal{\scriptsize{\ih}} 
	\end{array}} 
	\\
	& & & u
}
\end{array}
$ \\
\end{proof}

} 



\bibliographystyle{alpha}
\bibliography{ppc-normalisation}

\begin{thebibliography}{EGKvO11}

\bibitem[ABKL14]{DBLP:conf/popl/AccattoliBKL14}
B.~Accattoli, E.~Bonelli, D.~Kesner, and C.~Lombardi.
\newblock A nonstandard standardization theorem.
\newblock In S.~Jagannathan and P.~Sewell, editors, {\em The 41st Annual {ACM}
  {SIGPLAN-SIGACT} Symposium on Principles of Programming Languages, {POPL}
  '14, San Diego, CA, USA, January 20-21, 2014}, pages 659--670. {ACM}, 2014.

\bibitem[AL14]{DBLP:conf/csl/AccattoliL14}
B.~Accattoli and U.~Dal Lago.
\newblock Beta reduction is invariant, indeed.
\newblock In T.~Henzinger and D.~Miller, editors, {\em Joint Meeting of the
  Twenty-Third {EACSL} Annual Conference on Computer Science Logic {(CSL)} and
  the Twenty-Ninth Annual {ACM/IEEE} Symposium on Logic in Computer Science
  (LICS), {CSL-LICS} '14, Vienna, Austria, July 14 - 18, 2014}, pages
  8:1--8:10. {ACM}, 2014.

\bibitem[AM96]{AntoyMiddeldorp:1996}
S.~Antoy and A.~Middeldorp.
\newblock A sequential reduction strategy.
\newblock {\em Theor. Comput. Sci.}, 165(1):75--95, 1996.

\bibitem[Bal10a]{Balhor10}
T.~Balabonski.
\newblock On the implementation of dynamic patterns.
\newblock In E.~Bonelli, editor, {\em Proceedings of the Fifth International
  Workshop on Higher-Order Rewriting (HOR)}, volume~49, pages 16--30.
  Electronic Proceedings in Theoretical Computer Science, July 2010.
\newblock \url{http://eptcs.org/content.cgi?HOR2010}.

\bibitem[Bal10b]{Balppdp10}
T.~Balabonski.
\newblock Optimality for dynamic patterns: Extended abstract.
\newblock In M.~Fern\'andez T.~Kutsia, W.~Schreiner, editor, {\em Proceedings
  of the 12th International Conference on Principles and Practice of
  Declarative Programming (PPDP)}, pages 16--30. ACM, July 2010.

\bibitem[Bar84]{barendregt}
H.P. Barendregt.
\newblock {\em The Lambda Calculus: Its Syntax and Semantics}.
\newblock Elsevier, Amsterdam, 1984.

\bibitem[Ber76]{gustave}
G.~Berry.
\newblock Bottom-up computations of recursive programs.
\newblock {\em R.A.I.R.O. Informatique Theorique}, 10(3):47--82, 1976.

\bibitem[BKLR12]{bklr-rta-2012}
E.~Bonelli, D.~Kesner, C.~Lombardi, and A.~R\'{\i}os.
\newblock Normalisation for dynamic pattern calculi.
\newblock In A.~Tiwari, editor, {\em RTA}, volume~15 of {\em LIPIcs}, pages
  117--132. Schloss Dagstuhl - Leibniz-Zentrum fuer Informatik, 2012.

\bibitem[BN98]{baader-nipkow}
F.~Baader and T.~Nipkow.
\newblock {\em Term Rewriting and All That}.
\newblock Cambridge University Press, Cambridge, 1998.

\bibitem[Bou85]{Boudol:1985}
G.~Boudol.
\newblock Computational semantics of term rewriting systems.
\newblock In M.~Nivat and J.C. Reynolds, editors, {\em Algebraic Methods in
  Semantics}, pages 169--236. Cambridge University Press, 1985.

\bibitem[CF58]{curry-feys}
H.~B. Curry and R.~Feys.
\newblock {\em Combinatory Logic}.
\newblock North-Holland Publishing Company, Amsterdam, 1958.

\bibitem[EGKvO11]{DBLP:journals/tcs/EndrullisGKO11}
J.~Endrullis, C.~Grabmayer, J-W. Klop, and V.~van Oostrom.
\newblock On equal {\(\mu\)}-terms.
\newblock {\em Theor. Comput. Sci.}, 412(28):3175--3202, 2011.

\bibitem[HL91]{HL:1991ab}
G.~P. Huet and J-J. L{\'e}vy.
\newblock Computations in orthogonal rewriting systems, {I} and {II}.
\newblock In {\em Computational Logic - Essays in Honor of A.~Robinson}, pages
  395--443, 1991.

\bibitem[Jay09]{Jaybook}
B.~Jay.
\newblock {\em Pattern Calculus: Computing with Functions and Structures}.
\newblock Springer Publishing Company, Incorporated, 2009.

\bibitem[JK06]{jk-ppc}
B.~Jay and D.~Kesner.
\newblock Pure pattern calculus.
\newblock In Peter Sestoft, editor, {\em European Symposium on Programming},
  number 3924 in LNCS, pages 100--114. Springer-Verlag, 2006.

\bibitem[JK09]{jk-jfp}
B.~Jay and D.~Kesner.
\newblock First-class patterns.
\newblock {\em Journal of Functional Programming}, 19(2):191--225, 2009.

\bibitem[Ken89]{DBLP:journals/apal/Kennaway89}
R.~Kennaway.
\newblock Sequential evaluation strategies for parallel-or and related
  reduction systems.
\newblock {\em Ann. Pure Appl. Logic}, 43(1):31--56, 1989.

\bibitem[Klo80]{thesis-klop}
J-W. Klop.
\newblock {\em Combinatory Reduction Systems}.
\newblock PhD thesis, Utrecht University, 1980.

\bibitem[Mel96]{thesis-mellies}
P-A. Melli\`es.
\newblock {\em Description abstraite des Syst\`emes de R\'e\'ecriture}.
\newblock PhD thesis, Universit\'e Paris VII, 1996.

\bibitem[O'D77]{ODonnell:1977}
M.~J. O'Donnell.
\newblock {\em Computing in Systems Described by Equations}, volume~58 of {\em
  LNCS}.
\newblock Springer-Verlag, 1977.

\bibitem[SR93]{sekar-rama}
R.~C. Sekar and I.~V. Ramakrishnan.
\newblock Programming in equational logic: Beyond strong sequentiality.
\newblock {\em Inf. Comput.}, 104(1):78--109, 1993.

\bibitem[vO99]{vanOostrom:1999}
V.~van Oostrom.
\newblock Normalisation in weakly orthogonal rewriting.
\newblock In P.~Narendran and M.~Rusinowitch, editors, {\em RTA}, volume 1631
  of {\em LNCS}, pages 60--74. Springer-Verlag, 1999.

\bibitem[vR96]{thesis-vanRaamsdonk}
F.~van Raamsdonk.
\newblock {\em Confluence and Normalisation for Higher-Order Rewriting}.
\newblock PhD thesis, Vrije University, 1996.

\bibitem[vR97]{vanRaamsdonk:1997}
F.~van Raamsdonk.
\newblock Outermost-fair rewriting.
\newblock In P.~de~Groote, editor, {\em TLCA}, volume 1210 of {\em LNCS}, pages
  284--299. Springer-Verlag, 1997.

\bibitem[vRvO14]{vanRaamsdonkvanOostrom:2014}
F.~van Raamsdonk and V.~van Oostrom.
\newblock The dynamic pattern calculus as a higher-order pattern rewriting
  system.
\newblock In K.~Rose, editor, {\em Proceedings of the Seventh International
  Workshop on Higher-Order Rewriting (HOR)}, July 2014.

\end{thebibliography}

\end{document}